\def\@biblabel#1{[#1]} %
\def\thebibliography#1{%
    \footnotesize
    \refsection*{{\refname}
        \@mkboth{\uppercase{\refname}}{\uppercase{\refname}}%
    }
    \list{\@biblabel{\@arabic\c@enumiv}}%
       {\settowidth\labelwidth{\@biblabel{#1}}%
        \leftmargin\labelwidth
        \advance\leftmargin\bibindent
        \itemindent-\bibindent
        \itemsep2pt
        \parsep \z@
        \usecounter{enumiv}%
        \let\p@enumiv\@empty
        \renewcommand\theenumiv{\@arabic\c@enumiv}%
    }%
    \let\newblock\@empty
    \sloppy
    \sfcode`\.=1000\relax
}
 \renewenvironment{thebibliography}[1]{%
   \begin{odlthebibliography}{#1}%
     \setlength{\parskip}{0ex}%
     \setlength{\itemsep}{3pt}%
     \fontsize{9.5}{9.5} %
     \selectfont
}%
 {%
   \end{odlthebibliography}%
 }
\tikzset{
    dharrow/.style={
        <->,
        postaction={decorate,-},
        }
}
\tikzset{
    dhdashedarrow/.style={
        <->,
        dashed,
        postaction={decorate,-},
        }
    }
\tikzset{
    lrharpoonarrow/.style={
        <[harpoon]->[harpoon],
        postaction={decorate,-},
        }
}
\tikzset{
    lrharpoondashedarrow/.style={
        <[harpoon]->[harpoon],
        dashed, %
        postaction={decorate,-},
        }
}
\newtheoremstyle{jamiestyle}%
  {4pt}%
  {0pt}%
  {\it}%
  {0pt}%
  {\sc}%
  {.}%
  { }%
  {}%
\theoremstyle{jamiestyle}
\newtheorem{thrm}{Theorem}[subsection]
\newtheorem{prop}[thrm]{Proposition}
\newtheorem{lemm}[thrm]{Lemma}
\newtheorem{corr}[thrm]{Corollary}
\newtheoremstyle{jamienfstyle}%
  {4pt}%
  {0pt}%
  {\normalfont}%
  {0pt}%
  {\sc}%
  {.}%
  { }%
  {}%
\theoremstyle{jamienfstyle}
\newtheorem{nttn}[thrm]{Notation}
\newtheorem{defn}[thrm]{Definition}
\newtheorem{xmpl}[thrm]{Example}
\newtheorem{rmrk}[thrm]{Remark}
\definecolor{mygreen}{rgb}{0,0.6,0}
\definecolor{mygray}{rgb}{0.5,0.5,0.5}
\definecolor{mymauve}{rgb}{0.58,0,0.82}
\definecolor{gray}{RGB}{128, 128, 128}
\definecolor{lightgray}{RGB}{200, 200, 200}
\definecolor{cyan}{RGB}{0, 255, 255}
\definecolor{blue}{RGB}{0, 0, 255}
\definecolor{red}{RGB}{255, 0, 0}
\definecolor{pink}{RGB}{255, 128, 128}
\definecolor{green}{RGB}{0, 128, 0}
\definecolor{lightyellow}{RGB}{255, 255, 200}
\definecolor{purple}{RGB}{128, 0, 128}
\lstdefinestyle{all}
    {basicstyle=\ttfamily\scriptsize,
     keywordstyle=\color{blue}\ttfamily\scriptsize,
     commentstyle=\color{green}\ttfamily\scriptsize,
     stringstyle=\color{red}\ttfamily\scriptsize}
\lstdefinelanguage{hask}{%
    frame=none,
    xleftmargin=2pt,
    belowcaptionskip=\bigskipamount,
    captionpos=b,
    tabsize=2,
    numbers=left,
    numberstyle=\tiny\color{gray},
    emphstyle={\bf},
	morecomment=[s][\color{green}]{\{-}{-\}},
    stringstyle=\mdseries\rmfamily,
    commentstyle=\color{green},
    keywords={},
    keywords=[1]{case, of, data, if, then, else, where, let, in, do},
    keywords=[2]{Chip, Config, CurrencySymbol, TokenName, PubKeyHash, Integer, Value, State, Action, Text, Maybe, Void, TxConstraints,  Contract},
    keywords=[3]{HasNative},
    keywords=[4]{=>},
    keywords=[5]{Just, Nothing, MkChip, MkConfig, SetPrice, Buy},
    keywordstyle=[1]\mdseries\sffamily\color{red},
    keywordstyle=[2]\mdseries\sffamily\color{blue},
    keywordstyle=[3]\mdseries\sffamily\color{green},
    keywordstyle=[4]\mdseries\sffamily,
    keywordstyle=[5]\mdseries\sffamily\color{purple},
    columns=flexible,
    basicstyle=\small\sffamily,
    showstringspaces=false,
    breaklines=false,
    showspaces=false,
    escapeinside={--}{\^^M},escapebegin={\color{green}--},escapeend={},
    literate= {+}{{$+$}}1 {/}{{$/$}}1 {*}{{$*$}}1 {=}{{$=$}}1
              {>}{{$>$}}1 {<}{{$<$}}1 {\\}{{$\lambda$}}1
              {\\\\}{{\char`\\\char`\\}}1
              {->}{{$\rightarrow$}}2 {>=}{{$\geq$}}2 {<-}{{$\leftarrow$}}2
              {<=}{{$\leq$}}2 {=>}{{$\Rightarrow$}}2
              {\ .}{{$\circ$}}2 {\ .\ }{{$\circ$}}2
              {>>}{{>>}}2 {>>=}{{>>=}}2
              {|}{{$\mid$}}1
              {\_}{{\underline{\hspace{2mm}}}}2
}
\lstdefinelanguage{solidity}{%
    frame=none,
    xleftmargin=2pt,
    belowcaptionskip=\bigskipamount,
    captionpos=b,
    tabsize=2,
    numbers=left,
    numberstyle=\tiny\color{gray},
    emphstyle={\bf},
	morecomment=[s][\color{green}]{\{-}{-\}},
    stringstyle=\mdseries\rmfamily,
    commentstyle=\color{green},
    keywords={},
    keywords=[1]{pragma, solidity, contract, event, constructor, require, function, return, emit},
    keywords=[2]{address, uint, mapping},
    keywords=[3]{public, payable, external, view, returns},
    keywords=[4]{=>, +=, -=, =, <=, ==},
    keywords=[5]{msg, sender, transfer, value},
    keywordstyle=[1]\mdseries\sffamily\color{red},
    keywordstyle=[2]\mdseries\sffamily\color{blue},
    keywordstyle=[3]\mdseries\sffamily\color{green},
    keywordstyle=[4]\mdseries\sffamily,
    keywordstyle=[5]\mdseries\sffamily\color{purple},
    columns=flexible,
    basicstyle=\small\sffamily,
    showstringspaces=false,
    breaklines=false,
    showspaces=false,
    escapeinside={--}{\^^M},escapebegin={\color{green}--},escapeend={},
    literate= {+}{{$+$}}1 {/}{{$/$}}1 {*}{{$*$}}1 {=}{{$=$}}1
              {>}{{$>$}}1 {<}{{$<$}}1 {\\}{{$\lambda$}}1
              {\\\\}{{\char`\\\char`\\}}1
              {->}{{$\rightarrow$}}2 {>=}{{$\geq$}}2 {<-}{{$\leftarrow$}}2
              {<=}{{$\leq$}}2 {=>}{{$\Rightarrow$}}2
              {\ .}{{$\circ$}}2 {\ .\ }{{$\circ$}}2
              {>>}{{>>}}2 {>>=}{{>>=}}2
              {|}{{$\mid$}}1
              {\_}{{\underline{\hspace{2mm}}}}2
}
\newcommand\hpn[2][]{%
  \ext@arrow 9999{\hpnfill@}{#1}{#2}}
\newcommand\hpnfill@{%
  \arrowfill@\leftharpoonup\relbar\rightharpoondown}
\newcommand\arrowlength{0.5}
\newcommand\arrowr{{\tikz \draw [semithick,->] (0,0) -- (\arrowlength*0.8,0);}}
\newcommand\arrowlr{{\tikz \draw [semithick,<->] (0,0) -- (\arrowlength*0.8,0);}}
\newcommand\arrowllrr{{\tikz \draw [semithick,<<->>] (0,0) -- (\arrowlength,0);}}
\newcommand\arrowrr{{\tikz \draw [semithick,->>] (0,0) -- (\arrowlength*0.8,0);}}
\newcommand\everyone{\modality{\Box}}
\newcommand\someone{\modality{\Diamond}}
\newcommand\someoneAll{{\mathring{\exists}_{\hspace{-0pt}\someone}}}
\newcommand\everyoneAll{{\mathring{\forall}_{\hspace{-1pt}\everyone}}}
\newcommand\Quorum{\modality{\mathring{\exists}}}
\newcommand\QuorumBox{\modality{\mathring{\exists}_{\Box}}} %
\newcommand\Coquorum{\modality{\mathring{\forall}}}
\newcommand\CoquorumDiamond{\modality{\mathring{\forall}_{\hspace{-1pt}\Diamond}}} %
\newcommand\mruarrow{\mathbin{\rlap{\scalebox{0.73}{$\recentsimple$}}\hspace{-2pt}\rightarrow}}
\newcommand{\mru}[2]{#1\mruarrow#2}
\newcommand{\mrup}[3]{#2[#1\mruarrow #3]}
\newcommand*{\myhash}{%
  \begin{tikzpicture}
    \pgfmathsetlengthmacro\myWidth{.8*width("=")}%
    \pgfmathsetlengthmacro\myHeight{height("H")}%
    \pgfmathsetlengthmacro\mySepY{.3333*\myWidth}%
    \pgfmathsetlengthmacro\mySideBearing{.1*\myWidth}%
    \def\myAngle{70}%
    \pgfmathsetlengthmacro\mySepX{\mySepY/sin(\myAngle)}%
    \pgfmathsetlengthmacro\mySlantX{\myHeight/tan(\myAngle)}%
    \draw[line cap=round]
      (0, {(\myHeight - \mySepY)/2}) -- ++(\myWidth, 0)
      (0, {(\myHeight + \mySepY)/2}) -- ++(\myWidth, 0)
      ({(\myWidth - \mySepX - \mySlantX)/2}, 0)
      -- ({(\myWidth - \mySepX + \mySlantX)/2}, \myHeight)
      ({(\myWidth + \mySepX - \mySlantX)/2}, 0)
      -- ({(\myWidth + \mySepX + \mySlantX)/2}, \myHeight)
    ;%
    \useasboundingbox
      (-\mySideBearing, 0)
      (\myWidth + \mySideBearing, \myHeight)
    ;%
  \end{tikzpicture}%
}
\newcommand\THREE{{\mathbf 3}}
\newcommand\threeValid{\THREE_{\f{valid}}}
\newcommand\threeCorrect{\THREE_{\f{correct}}}
\newcommand\threeTrue{\THREE_{\f{true}}}
\newcommand\threeFalse{\THREE_{\f{false}}}
\newcommand\Pnt{\ns{Point}}
\newcommand\Val{\ns{Value}}
\newcommand\Valuation{\ns{Valuation}}
\newcommand\PredSymb{\ns{PredSymb}}
\newcommand\tvT{{\mathbf t}}
\newcommand\tvF{{\mathbf f}}
\newcommand\tvB{{\mathbf b}}
\newcommand\recentsimple{{\reflectbox{$\mathsf R$}}} 
\newcommand\recent{\recentsimple} %
\newcommand\urecent{{\ubar\recentsimple}} %
\newcommand\figunderline[1]{\text{\rlap{\underline{\emph{#1}}}}}
\newcommand\ubar[1]{\stackunder[0.6pt]{$#1$}{\rule{1ex}{.18ex}\hspace{0.5pt}}}
\newcommand\figskip{\\[1ex]}
\newcommand\thea{a}
\newcommand\theb{b}
\newcommand\thec{c}
\newcommand\aval{v}
\newcommand\modality[1]{#1\hspace{1.5pt}}
\newcommand\modT{\modality{\tf T}}
\newcommand\modTB{\modality{\tf T\hspace{-2.5pt}\tf B}}
\newcommand\modFB{\modality{\tf F\hspace{-1.5pt}\tf B}}
\newcommand\modB{\modality{\tf B}}
\newcommand\modTF{\modality{\tf T\hspace{-2.5pt}\tf F}}
\newcommand\modF{\modality{\tf F}}
\newcommand\acontext{\f{ctx}}
\newcommand\avaluation{\varsigma} %
\newcommand\rulefont[1]{\ensuremath{{\mathsf{(#1)}}}}
\newcommand\opens{\ns{Open}}
\newcommand\opensne{\opens_{\hspace{-1pt}{\neq}\varnothing}}
\newcommand{\dotarrow}{%
   \mathrel{\ooalign{\hss\raise.85ex\hbox{\scalebox{1.25}{\normalfont .}}%
   \kern0.35ex\hss\cr$\rightarrow$}}}
\newcommand\mycard{\myhash}
\newcommand\narrowmath[1]{\scalebox{0.8}[1.0]{$\mathsf{#1}$}}
\newcommand\sometime{\modality{\narrowmath{sometime}}}
\newcommand\forever{\modality{\narrowmath{forever}}}
\newcommand\tomorrow{\modality{\narrowmath{tomorrow}}}
\newcommand\yesterday{\modality{\narrowmath{yesterday}}}
\newcommand\infinitely{\modality{\narrowmath{infinitely}}}
\newcommand\final{\modality{\narrowmath{finally}}}
\newcommand\arity{\f{arity}}
\newcommand\correct[1]{\tf{correct}[#1]}
\newcommand\pointwise[1]{\tf{pointwise}[#1]}
\newcommand\udfn{\tf{undef}}
\newcommand\gslt{\f{GSLT}}
\newcommand\dense{\f{dense}}
\newcommand\noi{\f{noi}}
\newcommand\intersectswith{\between}
\newcommand\xxtwined{$2$-twined\xspace}
\newcommand\xxxtwined{$3$-twined\xspace}
\newcommand\deffont[1]{{\bfseries #1}}
\newcommand\powerset{\f{pow}}
\newcommand\f[1]{\mathit{#1}}
\newcommand\tf[1]{\mathsf{#1}}
\DeclareMathAlphabet{\mathsb}{OT1}{cmss}{sbc}{n}
\newcommand\ns[1]{{\mathsb{#1}}}
\newcommand\theory[1]{\ensuremath{\text{\itshape\scshape #1}}\xspace}
\newcommand\ThyPax{\theory{TyPax}}
\newcommand\ThySPax{\theory{TySPax}}
\newcommand\liff{\Longleftrightarrow}
\newcommand\limp{\Longrightarrow}
\newcommand\ssm{{{:}\text{=}}}
\DeclareMathSymbol{\shortminus}{\mathbin}{AMSa}{"39}
\newcommand\minus{{\shortminus}}
\newcommand\plus{{+}}
\newcommand\Forall[1]{\forall #1.}
\newcommand\Exists[1]{\exists #1.}
\newcommand\cent{\vdash}
\newcommand\ment{\vDash}
\newcommand\nment{\not\vDash}
\newcommand\mentval{\ment_{\avaluation}}
\newcommand\nmentval{\nment_{\avaluation}}
\newcommand\lmodel{[\hspace{-0.2em}[}
\newcommand\rmodel{]\hspace{-0.2em}]}
\newcommand\model[1]{{\lmodel #1 \rmodel}}
\newcommand\modellabel[2]{{\lmodel #1 \rmodel}_{#2}}
\newcommand\Ngeqz{\mathbb N_{\scriptscriptstyle\geq 0}}
\newcommand\Time{\Ngeqz}
\newcommand\weakmodusponens{weak modus ponens (Proposition~\ref{prop.mp.for.tnotor}(\ref{item.mp.for.tnotor}))\xspace}
\newcommand\strongmodusponens{strong modus ponens (Proposition~\ref{prop.mp.for.tnotor}(\ref{item.mp.for.timpc}))\xspace}
\DeclareRobustCommand{\barcent}{\mathbin{\mathpalette\barcent@@\relax}}
\newcommand{\barcent@@}[2]{%
  \vbox{\offinterlineskip
    \sbox\z@{$\m@th#1\cent$}%
    \ialign{%
      \hfil##\hfil\cr
      $\m@th#1{}_{\minus}\kern-\scriptspace$\cr
      \noalign{\kern-.3\ht\z@}
      \box\z@\cr
    }%
  }%
}
\def\pmb@#1#2{\setbox8\hbox{$\m@th#1{#2}$}%
  \setboxz@h{$\m@th#1\mkern-.1mu$}\pmbraise@\wdz@
  \binrel@{#2}%
  \dimen@-\wd8 %
  \binrel@@{%
    \mkern-.1mu\copy8 %
    \kern\dimen@\mkern-.2mu\copy8 %
    \kern\dimen@\mkern-.3mu\copy8 %
    \kern\dimen@\mkern-.4mu\copy8 %
    \kern\dimen@\mkern.1mu\copy8 %
    \kern\dimen@\mkern.2mu\copy8 %
    \kern\dimen@\mkern.3mu\copy8 %
    \kern\dimen@\mkern.0mu\raise\pmbraise@\copy8 %
    \kern\dimen@\mkern.4mu\box8 %
           }%
}
\newcommand\compressthislight[1]{{\hspace{.8pt}\raisebox{.5pt}{\scalebox{.85}{$#1$}}\hspace{.2pt}}}
\newcommand\compressthis[1]{\pmb{\compressthislight{#1}}}
\newcommand\tneg{{\pmb\neg}}
\newcommand\ttop{{\pmb\top}}
\newcommand\tbot{{\pmb\bot}}
\newcommand\teq{{\pmb{\text{=}}}}
\newcommand\tneq{{\pmb{\neq}}}
\newcommand\tand{\mathrel{\pmb\wedge}}
\newcommand\tor{\mathbin{\pmb\vee}}
\newcommand\tleq{\compressthis{\leq}}
\newcommand\timp{{\compressthis{\Rightarrow}}}
\newcommand\tnotor{\mathrel{\arrowr}}
\newcommand\tlatticeiff{\mathrel{\arrowlr}}
\newcommand\timpc{\mathrel{\arrowrr}}
\newcommand\tiffcc{\mathbin{\arrowllrr}}
\newcommand\tall{{\compressthis{\forall}}}
\newcommand\texi{{\compressthis{\exists}}}
\newcommand\texiunique{{\compressthis{\exists\hspace{0.5pt}!}}}
\newcommand\texiaffine{{\compressthis{\exists_{01}}}}
\newcommand\AllBut[2]{\tf{AllBut}_{#2}(#1)}
\newcommand\Nset[1]{\{0\dots #1\minus 1\}}
\newcommand\QLogic{Coalition Logic\xspace}
\newcommand{\circlearrow}{}%
\DeclareRobustCommand{\circlearrow}{%
  \mathrel{\vphantom{\shortrightarrow}\mathpalette\circle@arrow\relax}%
}
\newcommand{\circle@arrow}[2]{%
  \m@th
  \ooalign{%
    \hidewidth$#1\circ\mkern1mu$\hidewidth\cr
    $#1\longrightarrow$\cr}%
}
\newcommand*\bigcdot{\mathpalette\bigcdot@{.5}}
\newcommand*\bigcdot@[2]{\mathbin{\vcenter{\hbox{\scalebox{#2}{$\m@th#1\bullet$}}}}}
\begin{document}
\title{A declarative approach to specifying distributed algorithms using three-valued modal logic}
\newcommand\titlerunning{\emph{Declarative distributed algorithms via modal logic}} %
\newcommand\authorrunning{\emph{Murdoch J. Gabbay and Luca Zanolini}}
\author{Murdoch J. Gabbay \affil{Heriot-Watt University, Edinburgh, UK}
Luca Zanolini \affil{Ethereum Foundation, London, UK}}

\begin{abstract}
  We present \emph{\QLogic}, a three-valued modal fixed-point logic designed for declaratively specifying and reasoning about distributed algorithms, such as the Paxos consensus algorithm.

  Our methodology represents a distributed algorithm as a logical theory, enabling correctness properties to be derived directly within the framework—or revealing logical errors in the algorithm's design when they exist.
  
  \QLogic adopts a declarative approach, specifying the overall logic of computation without prescribing control flow. Notably, message-passing is not explicitly modeled, distinguishing our framework from approaches like TLA+. This abstraction emphasises the logical essence of distributed algorithms, offering a novel perspective on their specification and reasoning.
  
  We define the syntax and semantics of \QLogic, explore its theoretical properties, and demonstrate its applicability through a detailed treatment of the Paxos consensus algorithm. By presenting Paxos as a logical theory and deriving its standard correctness properties, we showcase the framework’s capacity to handle non-trivial distributed systems.
  
  We envision \QLogic as a versatile tool for specifying and reasoning about distributed algorithms. The Paxos example highlights the framework’s ability to capture intricate details, offering a new lens through which distributed algorithms can be specified, studied, and checked.

\keywords{Modal logic, distributed consensus} 
\end{abstract}

\maketitle
\thispagestyle{empty}

\tableofcontents

\section{Introduction}
\label{sect.intro}

\subsection{Initial overview}

\subsubsection*{Introducing \QLogic}
\emph{\QLogic}is a logical framework for studying distributed algorithms in a \emph{declarative} manner.
In our framework an algorithm is presented as a logical theory (a signature and a set of axioms), and then its correctness properties are derived from those axioms.
We apply our framework to present and study an axiomatisation of Paxos, a canonical distributed consensus algorithm. 

Logic \emph{per se} is widely-applied in distributed algorithms: for example the logic TLA+ (based on Linear Temporal Logic) is an industry standard to model and verify distributed protocols~\cite{DBLP:journals/toplas/Lamport94,DBLP:books/aw/Lamport2002,DBLP:conf/charme/YuML99}. 
However, TLA+ specifications describe state changes in a sequential, step-by-step manner.
A \QLogic axiomatisation feels more abstract: we dispense with step-by-step descriptions of the algorithm and with explicit message-passing, to capture an abstract but still entirely identifiable essence of the algorithm.

Abstracting an algorithm to a logical theory \emph{per se} is also not new: \emph{declarative programming} is based on this.
However, turning a distributed algorithm (like Paxos) into an axiomatic theory, is new, and this paper shows that it can be done.
Examples follow below and in the body of the text.\footnote{Or, if the reader wants to skip the preliminaries and jump to a toy example that illustrates our method, then see Subsection~\ref{subsect.simple}.}

For reasons of length, we focus on the example of Paxos.
This is a canonical example with sufficient complexity to be interesting, but we have checked that \QLogic can be adapted for other algorithms; it is intended as a general logical tool that can be tailored to particular case studies.
In the long run, we hope this will help do all the things that declarative techniques are supposed to: help find errors, decrease development times, increase productivity, and lead to new mathematics.
 
\subsubsection*{What is declarative programming}
By `declarative' we mean that we specify the logic of a computation without specifying its control flow.

Thus our axiomatisation of Paxos (in Figure~\ref{fig.logical.paxos}) relates to a concrete Paxos implementation and/or its TLA+ specification, much as this axiomatic, algebraic specification of the factorial function\footnote{By `axiomatic' we mean that the function is specified just by asserting properties that are true of it.  By `algebraic' we mean that these assertions are purely equational, i.e. we just assert some equalities between terms.}
$$
0!=1
\qquad 
(n+1)!=(n+1)*n!
$$
relates to this concrete implementation of factorial in BASH
\begin{quote}
{\tt\small\hspace{-2em} value=1; while [ \$number -gt 1 ]; do value=\$((value * number)); number=\$((number - 1)); done}
\end{quote}
The reader may be familiar with programming languages based on declarative principles, such as Haskell, Erlang, or Prolog.
Their design philosophy is that we should assert things that we want to be true, and leave it to the compiler to figure out how best to make this happen.
Contrast with C, which is a canonical imperative language.\footnote{We are not interested here in taking sides.  With a variety of tools to choose from, we pick the one that is best for the job.  What matters is that the widest possible \emph{variety} of levels of abstraction is made available.}

This paper %
explores a methodology for specifying distributed algorithms which sits decidedly at the `Haskell/Erlang/Prolog' side of the spectrum, compared to TLA+ which (while still quite high-level) sits more at the C side of the spectrum. 

Consider a tiny concrete example of a simple voting system: we will briefly exhibit this in \QLogic, just to give a flavour of how it works.
Consider a voting system such that 
\begin{quote}
`if a quorum of voters vote for candidate $c$, then $c$ wins the election'.
\end{quote}
We write this axiomatically simply as
$$
(\QuorumBox\tf{vote}(c)) \timpc \tf{wins}(c) .
$$ 
The $\QuorumBox$ is a logical modality that is true when there exists a quorum of participants at which $\tf{vote}(v)$ holds.\footnote{Details of this modality in our framework are in Proposition~\ref{prop.pointwise.dense.char}(\ref{item.dense.char.pointwise.QuorumBox}) and Lemma~\ref{lemm.allbut.concretely}.}

Even without going into the details of what these symbols mean, it is clear that details of message-passing, of how votes are communicated, and of how they are counted up, are absent.
In fact in this case the predicate is just a direct rendering of the English statement above it, without specifying how it is concretely implemented.
Thus, in keeping with a declarative philosophy, we specify what we want the algorithm to do, but not how it should do it: it is the point of the axiom to express formally and precisely our idea --- and to add as little as possible to that, beyond what is necessary to make it logically precise.

\subsubsection*{Who should read this document}
We have in mind two communities: 
\begin{enumerate*}
\item
\emph{logicians}, who might be interested in exploring a logic with an interesting combination of features (three-valued modal fixedpoint logic) with specific practical motivations in programming distributed systems, and 
\item
\emph{researchers in distributed systems}, who might find \QLogic useful for analysing and communicating their designs.
\end{enumerate*}
We will be particularly interested in consensus algorithms as applied to blockchain, because this is an important and safety-critical use case (the second author is part of the core team developing consensus algorithms for Ethereum).

\subsubsection*{References to key points in the text}

The reader who might want to jump to where the action is and read backwards and forwards from there, is welcome to proceed as follows:
\begin{itemize*}
\item
The axioms of Declarative Paxos are in Figure~\ref{fig.logical.paxos}.
\item
The syntax and semantics of \QLogic are in Figures~\ref{fig.predicate.syntax}, \ref{fig.3.phi.f}, and~\ref{fig.3.derived}.
\item
Truth-tables for three-valued logic and its connectives are in Figure~\ref{fig.3}.
\item
An outline of the Paxos algorithm (for the reader who might like to be reminded what it is) is in Remark~\ref{rmrk.high-level.paxos}.
\item
See also toy axiomatisations for an imaginary `simple' protocol in Subsection~\ref{subsect.simple}, designed to be vaguely reminiscent of Paxos, but simplified almost to triviality for the sake of exposition.
\end{itemize*}

\subsection{Motivation and background (further details)}

Let us unpack what consensus in distributed systems (like blockchains) is, why it matters, and why it is hard:

\subsubsection*{Distributed systems}
A distributed computing system is is one that distributes its computation over several machines.
Distributed systems are essential to computing: they enable a network of parties to collaborate to achieve common goals~\cite{cachinbook}. 

Being distributed is attractive to software engineers because it can give us desirable properties, including some or all of: scalability, redundancy, lower latency, reliability, and resilience.
For this reason, almost any nontrivial commercial software system that the reader might care to name is implemented in a distributed manner.
Furthermore, blockchain systems (like Bitcoin and Ethereum\footnote{\url{https://bitcoin.org/en/} and \url{https://ethereum.org/en/}}) and peer-to-peer systems are inherently distributed by design; if they were centralised, then they would not even make sense.

\subsubsection*{The challenge of consensus}
Among the many challenges in distributed systems, \emph{consensus} represents a fundamental abstraction.
Consensus captures the problem of 
\begin{quote}
reaching agreement among multiple participants on a common value, in the face of possibly unreliable communication, and in the presence of possibly faulty participants.
\end{quote}
Consensus algorithms are an essential prerequisite for coordination in any distributed system, and they play a particularly critical role in blockchain networks.
In such networks, consensus is the mechanism that ensures all participants agree on the state of the ledger --- the blockchain's source of truth --- thereby solving problems like \emph{double spending} (manipulating the blockchain ledger to spend the same token twice).
Consensus protocols also enable the secure ordering of transactions, and they ensure the integrity and consistency of the blockchain, even in the face of adversarial actors.

Consensus protocols in blockchain systems are particularly challenging to design and implement correctly. 
Indeed, there are numerous examples in which incomplete or insufficiently-specified protocols have led to serious issues, such as network halts or vulnerabilities that adversaries could exploit.
For example:
\begin{enumerate*}
\item
Solana (\url{https://solana.com}) has experienced instances where its consensus mechanism contributed to temporary network halts~\cite{yakovenko2018solana,shoup2022poh}.
\item 
Ripple's (\url{https://ripple.com}) consensus protocol has been analysed for conditions in which safety and liveness may not be guaranteed~\cite{DBLP:conf/opodis/Amores-SesarCM20}. 
\item
Avalanche's (\url{https://www.avax.network}) consensus protocol has faced scrutiny regarding its assumptions and performance in adversarial scenarios~\cite{DBLP:conf/opodis/Amores-SesarCT22}.
\end{enumerate*}
Even mature blockchains like Ethereum are not immune to challenges.
Attacks such as \emph{balancing attacks}~\cite{DBLP:conf/sp/NeuTT21} and \emph{ex-ante reorganisation} (\emph{reorg}) attacks~\cite{DBLP:conf/fc/Schwarz-Schilling22} have highlighted vulnerabilities in Ethereum's protocol.

These examples underscore how subtle flaws or ambiguities in protocol design can have significant consequences, emphasising the need for rigorous analysis and well-defined specifications in this complex and safety-critical field.

Compounding the fact that consensus algorithms are foundational to the security and functionality of blockchain systems, is the fact that designing and implementing these algorithms is notoriously challenging.
Distributed algorithms --- including consensus protocols --- are difficult (at least for human brains) because of the inherent difficulty of coordinating multiple participants, which may be faulty, hostile, drifting in and out of network connectivity~\cite{sleepy}, and/or have different local clocks, states, agendas, and so forth.

In short, this is a notoriously difficult field which is full of `gotchas', and within this difficult field, consensus protocols are amongst the most intricate, demanding, and practically important objects of study. 

\subsubsection*{Introducing a logic-based declarative approach (applied to Paxos)}
In this paper, we develop a logic-based declarative approach to modeling distributed algorithms.
It stands in relation to existing techniques in roughly the same way as declarative programming languages stand to imperative ones. %

Technically speaking, \QLogic is a \emph{three-valued modal fixedpoint logic}. 
This means:
\begin{itemize*}
\item
There are three truth-values: $\tvT$ (true), $\tvB$ (no value returned, i.e. `crashed'), and $\tvF$ (false)~\cite{sep-logic-manyvalued}.
\item
Modalities let us express things like `some participant has property $\phi$' or `a quorum of participants has property $\psi$' or `$\phi$ has been true in the past' or `$\psi$ will be true in the future'.\footnote{In order, these are as follows in our logic: $\someone\phi$, $\QuorumBox\phi$, $\recent\phi$, and $\sometime\phi$.  Precise semantics are in Figures~\ref{fig.3.phi.f} and~\ref{fig.3.derived}.} 
\item
A fixedpoint operator $\mu$ imports the expressive power of iteration~\cite{wiki:Fixedpoint_logic}.
Consensus algorithms are often iterative, so it is natural to include this in our logic.
\end{itemize*} 
We will apply these constructs to give axiomatic models of the Paxos consensus protocol~\cite{DBLP:journals/tocs/Lamport98,lamport2001paxos,cachin2011paxos} and then we will state and derive standard correctness properties from the axioms.
Our example is elementary but nontrivial and strikes (we think) a good balance between being compact but still exhibiting enough richness of structure to illustrate the essential features of how our logic works.

The first half of this paper will focus on the logic, and the second half of this paper will apply the logic to the examples.
The reader from a logic background might like to start with the first half and refer to examples for motivation; the reader from a distributed systems background might like to do the opposite and start with the examples in the second half, referring back to the logic as required to understand what is being done.
 
\subsubsection*{Related work using logic}
Logic-based approaches have long been used to model and verify distributed protocols, with TLA+~\cite{DBLP:journals/toplas/Lamport94,DBLP:books/aw/Lamport2002,DBLP:conf/charme/YuML99} being one of the most prominent frameworks in the field. 
Introduced by Leslie Lamport in 1999, TLA+ uses Linear Temporal Logic (LTL)~\cite{DBLP:conf/focs/Pnueli77} to describe state changes in a sequential, step-by-step manner, which has become an industry standard for specifying and verifying distributed algorithms.

But TLA+ does things in a very particular way, which is not declarative.
We would argue that it is good for specifying \emph{algorithms}, and so is not particularly \emph{declarative}.
Specifically, TLA+ excels at specifying and verifying system behaviours through state transitions; it is operational and state-based.
This can sometimes distance TLA+ protocol specifications from the invariants that they are intended to uphold.

This separation arises because TLA+ primarily describes how states evolve \emph{over time}, which can make it challenging to directly reason about the essential properties that protocols are designed to maintain. 
So, although TLA+ provides powerful tools for specifying and checking invariants, the process of verifying that these invariants are preserved across all possible state transitions can be complex and unintuitive, particularly when dealing with the intricacies of distributed algorithms.

\subsubsection*{Our approach}
Our novel framework leverages the mathematics of semitopology and three-valued logic. 

Semitopology is from previous work~\cite{gabbay:semdca,gabbay:semtad}.
It provides a natural way to describe quorums --- sets of participants that work together to progress the system state --- by representing them as open sets.
This perspective is particularly suited to distributed systems, where the behaviour of the system emerges from the interactions of coalitions its components, rather than from a sequence of steps by a central controller.

Three-valued logic (a good overview is in \cite[Chapter~5]{bergmann:intmvf}) helps us to reason about the (possibility for) faults that is inherent in distributed systems.
Unlike classical logic with its binary truth-values $\tvT$ and $\tvF$, three-valued logic introduces a third truth value $\tvB$ that captures the indeterminate states often encountered when dealing with faulty parties.
This turns out to give particularly natural and compact descriptions of distributed protocols, distinguishing between normal operation and faulty behaviour. 
It also provides a formal foundation that would be amenable to verification through tools like SAT solvers and theorem provers (building tooling for what we see in this paper would be future work), enhancing both the expressiveness and verifiability of protocol specifications.

Our approach is particularly relevant in the context of blockchain consensus protocols, which are distributed and decentralised systems that require robust and clear specifications to function correctly.
It is not an exaggeration to say that blockchains in particular have put a rocket underneath the development of new consensus protocols~\cite{DBLP:conf/wdag/CachinV17}.
Many of these are not fully specified or formally verified. 
Often, they are even described informally in blog posts and other non-academic sources --- leading to confusion about their properties and behaviours. 
This lack of formalisation creates a fragmented landscape, where the guarantees and limitations of the protocols remain unclear, hindering their adoption and undermining trustworthiness.

We hope that by providing a logic-based declarative language, and the ability to axiomatise consensus protocols, our framework will offer a way to formally specify, reason about, and verify the properties of the many distributed systems under development --- thus lowering costs, increasing reliability, and reducing development time. 
We would further argue that effective formalisation is essential in reducing ambiguity in the design and evaluation of the many new consensus mechanisms being developed in the blockchain space in particular.

Our approach bridges the gap between the declarative specification of invariants and the practical verification of protocol correctness. 

\subsubsection*{We will explain everything, and that's a feature}

Different readers will find some parts of this paper easy, and other parts less easy --- but which parts these are depends on the reader and their background.\footnote{Not a hypothetical: the first author (with a logic background) took a long time to get his head around how Paxos works, whereas the second author (who knows Paxos well and is familiar with logic in general terms, but is not a logician) was not immediately familiar with modal fixedpoint logic and models of axiomatic logical theories.  The community of readers who approach this paper and are already well-versed in both fields may not be the largest.  We hope this paper will help ameliorate this lamentable state of affairs.}

For example:
\begin{itemize*}
\item
three-valued modal logic and its models, and technical results like the Knaster-Taski fixedpoint theorem used in Proposition~\ref{prop.mu.fixedpoint} may be known to the reader with a background in logic (like the first author), but perhaps not to a distributed systems theorist (like the second author); conversely, 
\item
the Paxos algorithm, and standard technical devices like the discussion motivating $n$-twined semitopologies in Remark~\ref{rmrk.motivate.n-twined}, are the bread-and-butter of a distributed systems expert's work --- but this application might be new to someone from a logic background. 
\end{itemize*}
We play safe and explain everything. 
If the reader finds we are going a bit slow in places, then better that than going too fast and skipping details that might help some other reader follow the argument.

\subsection{Structure of the paper}

\begin{enumerate*}
\item
Section~\ref{sect.intro} is the Introduction; You are Here.
The rest of the paper splits into two halves: Sections~\ref{sect.three.truth.values} to Section~\ref{sect.n-twined} have to do with logic and semitopologies; and Sections~\ref{sect.declarative.paxos} to~\ref{sect.simpler.paxos} use \QLogic to declaratively specify and study Paxos.
\ \\
\item
Section~\ref{sect.three.truth.values} introduces the basics of three-valued logic, including notable connectives ($\tnotor$ and $\timpc$) and modalities ($\modT$ and $\modTB$).
We also introduce a notion of validity for truth-values.
\item
Section~\ref{sect.quorum.logic} defines semitopologies (our model of the quorums that are ubiquitous in consensus algorithms), then it builds the syntax and semantics of \QLogic.
\item
Section~\ref{sect.predicates} studies \QLogic predicates; notably developing some logical equivalences, and well-behavedness criteria like being \emph{pointwise}.
\item 
Section~\ref{sect.n-twined} returns to semitopologies, and shows how antiseparation properties can be represented in logical form in \QLogic.
One particular antiseparation property corresponds to the \emph{quorum intersection property} which is used in the correctness proofs for Paxos (see Remark~\ref{rmrk.ntwined.and.the.literature}), but we will put this fact in a more general logical/semitopological context.
\ \\
\item
Section~\ref{sect.declarative.paxos} introduces Declarative Paxos; this is representative for the kind of algorithm that we built \QLogic to express.
We take our time to explain how Paxos works, state the axiomatisation, and discuss it in detail.
\item
Section~\ref{sect.paxos.correctness.properties} states the standard correctness properties for Paxos, phrases them in declarative form, and then proves them from the axioms of Declarative Paxos.
What all our work has accomplished is to convert proving correctness properties into a (relatively straightforward) argument of proving facts of an abstract model from its axioms.
\item
Section~\ref{sect.simpler.paxos} revisits the axioms of Declarative Paxos from Section~\ref{sect.declarative.paxos}, in the light of the proofs in Section~\ref{sect.paxos.correctness.properties} --- and then prunes them down to a minimal theory required to prove correctness.

We call this minimal theory \emph{Simpler Declarative Paxos}: it is simpler, and because further details have been elided, it is more abstract.
The reader may prefer Declarative Paxos to Simpler Declarative Paxos because it is more identifiably related to the Paxos algorithm, but if we want to prove correctness properties, then Simpler Declarative Paxos distils the logical essence of what makes those proofs work. 
\ \\
\item
Section~\ref{sect.conclusions} concludes with conclusions and reflections on future work.
\end{enumerate*}

\section{Three truth-values}
\label{sect.three.truth.values}

\subsection{Basic definition of $\THREE$, and some truth-tables}

\begin{defn}[Three-valued truth-values]
\label{defn.THREE}
Let $\THREE$ be the poset of \deffont{truth-values} with elements $\tvF$ (false), $\tvB$ (both), and $\tvT$ (true) ordered as 
$$
\tvF<\tvB<\tvT.
$$
\end{defn}

\begin{figure} %
$$
\begin{array}{c}
\begin{array}{c@{\qquad}c@{\qquad}c@{\qquad}c@{\qquad}c@{\qquad}c}
\begin{array}{c c c c}
p\tand q & \tvT & \tvB & \tvF
\\
\tvT &  \tvT & \tvB & \tvF
\\
\tvB &  \tvB & \tvB & \tvF 
\\
\tvF &  \tvF & \tvF & \tvF 
\end{array}
&
\begin{array}{c c c c}
p\tor q & \tvT & \tvB & \tvF
\\
\tvT &  \tvT & \tvT & \tvT
\\
\tvB &  \tvT & \tvB & \tvB
\\
\tvF &  \tvT & \tvB & \tvF
\end{array}
&
\begin{array}{c c c c}
p\tnotor q & \tvT & \tvB & \tvF
\\
\tvT &  \tvT & \tvB & \tvF
\\
\tvB &  \tvT & \tvB & \tvB
\\
\tvF &  \tvT & \tvT & \tvT
\end{array}
&
\begin{array}{c c c c}
p \timpc q & \tvT & \tvB & \tvF
\\
\tvT       & \tvT & \tvF & \tvF
\\
\tvB       & \tvT & \tvB & \tvB
\\
\tvF       & \tvT & \tvT & \tvT
\end{array}
\end{array}
\\[7ex]
\begin{array}{c@{\qquad}c@{\qquad}c@{\qquad}c@{\qquad}c@{\qquad}c@{\qquad}c}
\begin{array}{cc}
\tneg p
\\
\tvT & \tvF
\\
\tvB & \tvB
\\
\tvF & \tvT
\end{array}
&
\begin{array}{cc}
\modT p 
\\
\tvT & \tvT
\\
\tvB & \tvF
\\
\tvF & \tvF
\end{array}
&
\begin{array}{cc}
\modF p 
\\
\tvT & \tvF
\\
\tvB & \tvF
\\
\tvF & \tvT
\end{array}
&
\begin{array}{cc}
\modB p
\\
\tvT & \tvF
\\
\tvB & \tvT
\\
\tvF & \tvF
\end{array}
&
\begin{array}{cc}
\modTB p  
\\
\tvT & \tvT
\\
\tvB & \tvT
\\
\tvF & \tvF
\end{array}
&
\begin{array}{cc}
\modTF p  
\\
\tvT & \tvT
\\
\tvB & \tvF
\\
\tvF & \tvT
\end{array}
&
\begin{array}{cc}
\modFB p  
\\
\tvT & \tvF
\\
\tvB & \tvT
\\
\tvF & \tvT
\end{array}
\end{array}
\\[7ex]
\begin{array}{c@{\qquad}c@{\qquad}c@{\qquad}c@{\qquad}c@{\qquad}c}
\begin{array}{c c c c}
p\timp q & \tvT & \tvB & \tvF
\\
\tvT &  \tvT & \tvB & \tvF
\\
\tvB &  \tvT & \tvB & \tvF 
\\
\tvF &  \tvT & \tvT & \tvT 
\end{array}
&
\begin{array}{c c c c}
p\tleq q & \tvT & \tvB & \tvF
\\
\tvT &  \tvT & \tvF & \tvF
\\
\tvB &  \tvT & \tvB & \tvF 
\\
\tvF &  \tvT & \tvT & \tvT 
\end{array}
\end{array}
\end{array}
$$
\emph{Above, the vertical axis of a table indicates values for $p$; the horizontal axis (if nontrivial) denotes values for $q$.}
\caption{Truth-tables for some operations on $\THREE$ (Remark~\ref{rmrk.three.elementary})}
\label{fig.3}
\end{figure}

\begin{figure} %
$$
\begin{array}{c@{\qquad}c@{\qquad}c@{\qquad}c@{\qquad}c@{\qquad}c}
\begin{array}{c c c c}
p\tlatticeiff q & \tvT & \tvB & \tvF
\\
\tvT &  \tvT & \tvB & \tvF
\\
\tvB &  \tvB & \tvB & \tvB
\\
\tvF &  \tvF & \tvB & \tvT
\end{array}
&
\begin{array}{c c c c}
p \tiffcc q & \tvT & \tvB & \tvF
\\
\tvT       & \tvT & \tvF & \tvF
\\
\tvB       & \tvF & \tvB & \tvB
\\
\tvF       & \tvF & \tvB & \tvT
\end{array}
&
\begin{array}{c c c c}
p \equiv q & \tvT & \tvB & \tvF
\\
\tvT       & \tvT & \tvF & \tvF
\\
\tvB       & \tvF & \tvT & \tvF
\\
\tvF       & \tvF & \tvF & \tvT
\end{array}
\end{array}
$$
\caption{Truth-tables for notions of equivalence (Definition~\ref{defn.equivalent})}
\label{fig.3.iff}
\end{figure}

\begin{rmrk}
\label{rmrk.three.elementary}
We start with some elementary observations about $\THREE$.
\begin{enumerate*}
\item\label{item.three.elementary.bounded.lattice}
$\THREE$ %
has bottom element $\tvF$, top element $\tvT$, and meets $\tand$ and joins $\tor$.
This makes it a \emph{finite lattice}.
\item
$\THREE$ has a negation operation $\tneg$.
\item
$\THREE$ has several natural modalities that identify truth-values, including $\modT$, $\modF$, $\modB$, $\modTB$, and $\modTF$.

For example, the reader can examine the truth-table for $\modTF$ in Figure~\ref{fig.3} and see that it identifies elements of $\{\tvT,\tvF\}$.\footnote{One could also call these modalities \emph{characteristic functions} for various subsets of $\THREE$, by which we mean functions which (literally) return $\tvT$ on elements in the set and $\tvF$ on elements not in the set.}
\item
$\THREE$ also has multiple notions of implication and equivalence, some of which are illustrated in Figures~\ref{fig.3} and~\ref{fig.3.iff}.
We will discuss these implications in Subsection~\ref{subsect.3.imp}.
\end{enumerate*}
\end{rmrk}

\begin{rmrk}
We will use the third truth-value $\tvB$ to represent participants in a distributed algorithm \emph{crashing}.
The method is simple: the truth-value of a predicate describing the outcome of a crashed computation, is deemed to be $\tvB$.

This is similar in spirit to how in domain theory an additional value --- usually written $\bot$ --- is added to a denotation to represent nontermination; the function-value of a nonterminating computation is deemed to be $\bot$~\cite{abramsky:domt}.
The nonterminating computation does not return a value, but we represent this absence of a value in the denotation, using $\bot$. 

In our declarative axiomatisation of Paxos, we use $\tvB$ in much the same way, as a truth-value to describe the \emph{absence} of a return value (either $\tvT$ or $\tvF$) from participants that have crashed.
\end{rmrk}

\subsection{Valid and correct truth-values}

\begin{defn}
\label{defn.tv.ment}
\leavevmode
\begin{enumerate*}
\item\label{item.tv.ment.correct}
Define subsets $\threeValid,\threeCorrect,\threeTrue,\threeFalse\subseteq\THREE$ by
$$
\threeValid = \{\tvT,\tvB\}
\quad\text{and}\quad
\threeCorrect = \{\tvT,\tvF\} 
\quad\text{and}\quad
\threeTrue = \{\tvT\} 
\quad\text{and}\quad
\threeFalse = \{\tvF\} .
$$ 
\item
We refer to $\f{tv}\in\THREE$ as being \deffont{valid} / \deffont{invalid},\ \deffont{correct} / \deffont{incorrect},\ and \deffont{true} / \deffont{untrue},\ and \deffont{false} / \deffont{unfalse}\footnote{\dots `unfalse' is `$\tvT$-or-$\tvB$', not `$\tvT$'.  Similarly, `untrue' is $\tvB$-or-$\tvF$.  We cannot resist stating the following gem of three-valued clarity: not-$\tvT$, is not $\tneg\tvT$, and not-$\tvF$ is not $\tneg\tvF$, and not-$\tvB$ \emph{is} not-$\tneg\tvB$, because $\tvB$ is $\tneg\tvB$.} depending on whether it is / is not in $\threeValid$, $\threeCorrect$, $\threeTrue$, and $\threeFalse$ respectively.
\item\label{item.tv.ment}
Define $\ment\f{tv}$ \ (read `\deffont{$\f{tv}$ is valid}') by
$$
\ment\f{tv} 
\quad\text{when}\quad
\f{tv}\in\threeValid ,
\ \ \text{which just means}\ \ 
\f{tv}\in\{\tvT,\tvB\} .
$$
Thus, `$\f{tv}$ is valid' holds precisely when $\f{tv}$ is a valid truth-value.
\end{enumerate*}
\end{defn}

\begin{rmrk}
\label{rmrk.motivate.validity}
We take a moment to discuss Definition~\ref{defn.tv.ment}:
\begin{enumerate}
\item
We can think of $\tvT$ and $\tvF$ as specific `yes/no' results from the successful completion of some computation.
Therefore, we will call $\tvT$ and $\tvF$ \emph{correct}.
\item
We can think of $\tvB$ as representing failure of a computation to complete correctly --- e.g. because a participant has crashed and so can return neither $\tvT$ nor $\tvF$.
\item\label{item.tb.valid}
In Definition~\ref{defn.tv.ment} we set $\threeValid=\{\tvT,\tvB\}$ and define $\ment\f{tv}$ when $\f{tv}\in\threeValid$, and we call $\tvT$ and $\tvB$ \emph{valid}.
\end{enumerate}
It might seem strange at first to let $\tvB$ be valid; if $\tvB$ represents the (non)behaviour of a crashed participant, then surely crashing should \emph{not} be valid?

Not necessarily.
To see why, consider that many of assertions of interest that we will consider will have the form 
\begin{quote}
`if $p$ is not crashed, then $\ment\phi$ at $p$', 
\end{quote}
or equivalently 
\begin{quote}
`either $p$ is crashed, or $\ment\phi$ at $p$'.
\end{quote}
If we let $\tvB$ be valid (i.e. if we define $\ment\tvB$ to hold), then we can simplify the above to the conveniently succinct assertion 
\begin{quote}
`$\ment\phi$ at $p$'.
\end{quote}
The `if $p$ is not crashed' precondition gets swallowed up into the fact that if $p$ is crashed then $\phi$ returns $\tvB$, and $\ment\tvB$ so there is nothing to prove. 
\end{rmrk}

\begin{rmrk}
Later on in Definition~\ref{defn.validity.judgement}(\ref{item.ctx.ment.phi}) we will extend validity to predicates-in-context, just by evaluating a predicate-in-context to a truth-value using a denotation function, and checking that truth-value is valid in the sense we define here. 
\end{rmrk}

We make some elementary observations about $\THREE$ and Definition~\ref{defn.tv.ment}:
\begin{lemm}
\label{lemm.tand.tor}
Suppose $\f{tv}\in\THREE$ is a truth-value.
Then:
\begin{enumerate*}
\item
$\ment\tvT$ and $\ment\tvB$.
\item
$\nment\tvF$.
\item\label{item.tand.tor.tor}
$\ment\f{tv}\tor\f{tv}'$ if and only if ${\ment\f{tv}}\lor{\ment\f{tv}'}$.
\item
$\ment\f{tv}\tand\f{tv}'$ if and only if ${\ment\f{tv}}\land{\ment\f{tv}'}$.
\end{enumerate*}
\end{lemm}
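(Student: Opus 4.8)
The plan is to reduce all four parts to the single order-theoretic observation that, in the chain $\tvF<\tvB<\tvT$, the set $\threeValid=\{\tvT,\tvB\}$ is exactly the principal up-set of $\tvB$; that is, $\ment\f{tv}$ holds if and only if $\f{tv}\geq\tvB$. Once this reformulation is in hand, parts~1 and~2 are immediate: $\tvT\geq\tvB$ and $\tvB\geq\tvB$ give $\ment\tvT$ and $\ment\tvB$, while $\tvF<\tvB$ gives $\nment\tvF$.

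For parts~3 and~4 I would invoke the fact, recorded in Remark~\ref{rmrk.three.elementary}(\ref{item.three.elementary.bounded.lattice}), that $\tor$ and $\tand$ are the join and meet of the lattice $\THREE$. Part~4 then holds in any lattice: since $\f{tv}\tand\f{tv}'$ is the greatest lower bound of $\f{tv}$ and $\f{tv}'$, we have $\tvB\leq\f{tv}\tand\f{tv}'$ if and only if $\tvB\leq\f{tv}$ and $\tvB\leq\f{tv}'$, which is precisely the claim $\ment\f{tv}\tand\f{tv}'$ iff ${\ment\f{tv}}\land{\ment\f{tv}'}$. For part~3, the right-to-left direction is just monotonicity of join (if $\f{tv}\geq\tvB$ then $\f{tv}\tor\f{tv}'\geq\f{tv}\geq\tvB$), and the left-to-right direction uses that $\THREE$ is totally ordered, so that $\f{tv}\tor\f{tv}'$ equals $\max(\f{tv},\f{tv}')$, which is one of the two arguments; if that maximum is $\geq\tvB$ then the argument realising it already lies in $\threeValid$.

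There is no serious obstacle here; the only point requiring a little care is the left-to-right direction of part~3, which genuinely uses the linearity of the order (phrased abstractly, $\threeValid$ is a \emph{prime} filter precisely because $\THREE$ is a chain, whereas part~4 needs only that it is a filter). As a completely routine alternative I could instead verify parts~3 and~4 by inspecting the nine entries of the $\tor$ and $\tand$ truth-tables in Figure~\ref{fig.3} against the two-element set $\threeValid$; I would keep this in reserve as a sanity check but present the order-theoretic argument, since it makes transparent \emph{why} the equivalences hold.
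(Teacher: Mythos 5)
Your proof is correct. The paper's own proof is a one-line appeal to Definition~\ref{defn.tv.ment} and direct inspection of the truth-tables in Figure~\ref{fig.3}, i.e.\ exactly the ``routine alternative'' you keep in reserve; your main argument is a genuinely different, order-theoretic route. The key reformulation --- that $\ment\f{tv}$ iff $\f{tv}\geq\tvB$, so that $\threeValid$ is the principal up-set of $\tvB$ --- is accurate, and from there parts~1 and~2 are immediate, part~4 is the universal property of the meet (valid in any lattice, since $\threeValid$ is a filter), and part~3 correctly isolates where linearity of the chain $\tvF<\tvB<\tvT$ is actually needed (primeness of the filter, so that $\tvB\leq\f{tv}\tor\f{tv}'$ forces $\tvB\leq\f{tv}$ or $\tvB\leq\f{tv}'$). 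What your approach buys is an explanation of \emph{why} the equivalences hold and a clean separation of which parts generalise beyond $\THREE$; what the paper's approach buys is brevity, which is appropriate given that the lemma is a nine-entry finite check. Both are sound.
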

\begin{proof}
Fact of Definition~\ref{defn.tv.ment} and the truth-tables in Figure~\ref{fig.3}.
\end{proof} 

Negation is conspicuous by its absence in Lemma~\ref{lemm.tand.tor}.
We consider this next:
\begin{lemm}
\label{lemm.para}
Suppose $\f{tv}\in\THREE$ is a truth-value.
Then:
\begin{enumerate*}
\item\label{item.para.em}
$\ment\f{tv}\tor\tneg\f{tv}$, and as a corollary $\nment\f{tv}$ implies $\ment\tneg\f{tv}$.
\item\label{item.para.para}
$\ment\f{tv}\tand\tneg\f{tv}$ is possible, and as corollaries: $\ment\f{tv}$ does not necessarily imply $\nment\tneg\f{tv}$; and it is possible that $\ment\f{tv}$ and $\ment\tneg\f{tv}$ both hold.
\end{enumerate*}
\end{lemm}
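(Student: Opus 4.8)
The plan is to reduce both parts to a finite check over the three truth-values $\{\tvT,\tvB,\tvF\}$, leaning on the distributivity of $\ment$ over $\tor$ and $\tand$ established in Lemma~\ref{lemm.tand.tor}.

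For part~(\ref{item.para.em}), I would first invoke Lemma~\ref{lemm.tand.tor}(\ref{item.tand.tor.tor}), which reduces $\ment\f{tv}\tor\tneg\f{tv}$ to the disjunction ${\ment\f{tv}}\lor{\ment\tneg\f{tv}}$. Then I would run through the three cases: for $\f{tv}\in\{\tvT,\tvB\}$ we have $\ment\f{tv}$ directly, and for $\f{tv}=\tvF$ we have $\tneg\tvF=\tvT$, so $\ment\tneg\f{tv}$ holds. In every case at least one disjunct is valid, giving $\ment\f{tv}\tor\tneg\f{tv}$. The stated corollary is then immediate: if $\nment\f{tv}$, then since ${\ment\f{tv}}\lor{\ment\tneg\f{tv}}$ holds, the surviving disjunct forces $\ment\tneg\f{tv}$. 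Alternatively, one can simply compute $\f{tv}\tor\tneg\f{tv}$ from the truth-tables in Figure~\ref{fig.3} and observe that the result lies in $\threeValid$ each time.

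For part~(\ref{item.para.para}), the key observation is that $\tvB$ is a fixed point of negation, $\tneg\tvB=\tvB$, and is itself valid. I would exhibit $\f{tv}=\tvB$ as the witness: then $\f{tv}\tand\tneg\f{tv}=\tvB\tand\tvB=\tvB$, and $\ment\tvB$, so $\ment\f{tv}\tand\tneg\f{tv}$ holds, establishing that $\ment\f{tv}\tand\tneg\f{tv}$ is possible. The two corollaries fall out of the same witness: since $\ment\tvB$ together with $\tneg\tvB=\tvB$ give both $\ment\f{tv}$ and $\ment\tneg\f{tv}$, we see simultaneously that $\ment\f{tv}$ does not force $\nment\tneg\f{tv}$, and that $\ment\f{tv}$ and $\ment\tneg\f{tv}$ can both hold.

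There is no real obstacle here — everything is a finite verification. The only conceptual point worth flagging is the contrast between the two parts: excluded middle survives (in the validity reading of $\ment$) because no truth-value and its negation are ever both invalid, whereas non-contradiction fails precisely because the self-dual, valid value $\tvB$ makes the logic paraconsistent. Making this contrast explicit is, I think, the genuine content of the lemma, even though the computation itself is routine.
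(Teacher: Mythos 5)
Your proposal is correct and follows essentially the same route as the paper: part~(\ref{item.para.em}) via Lemma~\ref{lemm.tand.tor}(\ref{item.tand.tor.tor}) reducing to the disjunction ${\ment\f{tv}}\lor{\ment\tneg\f{tv}}$ and checking the case $\f{tv}=\tvF$, and part~(\ref{item.para.para}) via the witness $\f{tv}=\tvB$ with $\tneg\tvB=\tvB$ and Lemma~\ref{lemm.tand.tor}(1\&4). The only cosmetic difference is that the paper argues part~(\ref{item.para.em}) contrapositively (assume $\nment\f{tv}$, deduce $\f{tv}=\tvF$) rather than enumerating all three values, which is the same computation.
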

\begin{proof}
We consider each part in turn:
\begin{enumerate}
\item
Using Lemma~\ref{lemm.tand.tor}(3) it would suffice to show that $\ment\f{tv}$ or $\ment\tneg\f{tv}$.
So suppose $\nment\f{tv}$.
By Definition~\ref{defn.tv.ment}(\ref{item.tv.ment}) $\f{tv}\not\in\threeValid$, thus $\f{tv}=\tvF$ and by Figure~\ref{fig.3} $\tneg\f{tv}=\tvT\in\threeValid$ and so $\ment\tneg\f{tv}$.
\item
We just take $\f{tv}=\tvB$ and check from Figure~\ref{fig.3} that $\tneg\tvB=\tvB$.
We use Lemma~\ref{lemm.tand.tor}(1\&4).
\qedhere\end{enumerate}
\end{proof}

\begin{rmrk}
Some comments on Lemma~\ref{lemm.para}:
\begin{itemize}
\item
Lemma~\ref{lemm.para}(\ref{item.para.para}) notes a typical fact of validity in three-valued logic, that it is naturally \deffont{paraconsistent}, meaning that it is possible for a thing, and the negation of that thing, to both be valid.
This is a feature of $\tvB\in\THREE$, not a bug, and we will make good use of it later when we use $\tvB$ %
as a return value for `I return no correct truth-value'. 
\item
Lemma~\ref{lemm.para}(\ref{item.para.em}) asserts validity of the \emph{excluded middle} (`$\phi$-or-negation-of-$\phi$'), but its corollaries implicitly illustrate another fact of three-valued logic, by what they do \emph{not} say: 
looking at the proof of Lemma~\ref{lemm.para}(\ref{item.para.em}), it is clearly also true that $\ment\f{tv}\tor\modF\f{tv}$, and that $\ment\f{tv}\tor\modT\tneg\f{tv}$, and that $\nment\f{tv}$ implies $\ment\modF\f{tv}$, and also that $\nment\f{tv}$ implies $\ment\modT\tneg\f{tv}$.  

These are stronger assertions, so why did we not state them in the lemma instead?
Because we will not need the result in those forms later.  

In two-valued logic, if something is true then it is probably a useful lemma.
In contrast, in three-valued logic, there is so much more structure with the third truth-value that more things are \emph{true} than are necessarily \emph{useful}.
So, we have to pick and choose our lemmas, and it is not always the case that the strongest form of a result is the most useful.
\end{itemize}
\end{rmrk}

Lemmas~\ref{lemm.tv.ment.TF} and~\ref{lemm.TF.correct} are easy, but they spell out some useful facts about truth-values and validity judgements:
\begin{lemm}
\label{lemm.tv.ment.TF}
Suppose $\f{tv}\in\THREE$ is a truth-value.
Then:
\begin{enumerate*}
\item\label{item.tv.ment.TF.1}
$\threeCorrect\cap\threeValid=\threeTrue$.

In words: a truth-value is true if and only if it is correct and valid.
\item\label{item.tv.ment.TF.2b}
$\ment \modT\f{tv}$ if and only if $\f{tv}\in\threeTrue$. 

In words: `$\modT\f{tv}$ is valid' holds if and only if $\f{tv}$ is true.
\item\label{item.tv.ment.TF.2b.F}
$\ment \modF\f{tv}$ if and only if $\f{tv}\in\threeFalse$. 

In words: `$\modT\f{tv}$ is valid' holds if and only if $\f{tv}$ is true.
\item\label{item.tv.ment.TF.TB}
$\ment \modTB\f{tv}$ if and only if $\f{tv}\in\threeValid$ if and only if $\ment\f{tv}$. 

In words: `$\modTB\f{tv}$ is valid' if and only if $\f{tv}$ is valid.\footnote{Note that $\modTB\f{tv}$ is not necessarily \emph{literally equal} to $\f{tv}$, in the sense of $\modT\f{tv}\equiv\f{tv}$ from Definition~\ref{defn.equivalent}.  For example, $\modTB\tvB=\tvT\neq\tvB$.}
\item\label{item.tv.ment.TF.2}
$\ment \modTF\f{tv}$ if and only if $\f{tv}\in\threeCorrect$. 

In words: `$\modTF\f{tv}$ is valid' if and only if $\f{tv}$ is correct.
\end{enumerate*}
\end{lemm}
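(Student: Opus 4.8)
The plan is to read off each of the five equivalences directly from Definition~\ref{defn.tv.ment} and the modality truth-tables in Figure~\ref{fig.3}, organised around a single observation that makes all four validity statements uniform. The observation is that each of the modalities $\modT$, $\modF$, $\modTB$, and $\modTF$ is a \emph{characteristic function}: inspecting Figure~\ref{fig.3}, each takes values only in $\{\tvT,\tvF\}$ and never returns $\tvB$, outputting $\tvT$ exactly on its defining subset of $\THREE$ and $\tvF$ elsewhere. Since $\tvT\in\threeValid$ but $\tvF\notin\threeValid$ (Definition~\ref{defn.tv.ment}(\ref{item.tv.ment.correct})), validity of any such output collapses to the equation ``output equals $\tvT$'': concretely, for each of these four modalities, $\ment$ applied to its output holds if and only if that output is $\tvT$, which in turn holds if and only if $\f{tv}$ lies in the subset that the modality characterises.

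With that observation in hand, each part is immediate. For part~(\ref{item.tv.ment.TF.1}) I would simply compute the set intersection $\threeCorrect\cap\threeValid=\{\tvT,\tvF\}\cap\{\tvT,\tvB\}=\{\tvT\}=\threeTrue$. For parts~(\ref{item.tv.ment.TF.2b}), (\ref{item.tv.ment.TF.2b.F}), and~(\ref{item.tv.ment.TF.2}) I apply the characteristic-function observation to $\modT$, $\modF$, and $\modTF$, whose characterised subsets are $\threeTrue$, $\threeFalse$, and $\threeCorrect$ respectively. Part~(\ref{item.tv.ment.TF.TB}) works the same way for $\modTB$, with characterised subset $\threeValid$, giving $\ment\modTB\f{tv}$ if and only if $\f{tv}\in\threeValid$; the final link to $\ment\f{tv}$ is then just unfolding the definition of $\ment$ in Definition~\ref{defn.tv.ment}(\ref{item.tv.ment}).

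I expect no real obstacle: the statement is a finite check, and the author already flags it as easy. The one point worth stating carefully is in part~(\ref{item.tv.ment.TF.TB}): the equivalence asserts that $\modTB\f{tv}$ is \emph{valid} exactly when $\f{tv}$ is valid, and this must not be confused with a claim of literal equality $\modTB\f{tv}=\f{tv}$ --- indeed $\modTB\tvB=\tvT\neq\tvB$, as the footnote already notes. The equivalence nonetheless holds because $\modTB$ sends all of $\threeValid=\{\tvT,\tvB\}$ to the single valid value $\tvT$ and sends $\tvF$ to the invalid value $\tvF$, so validity is preserved even though the underlying truth-value need not be.
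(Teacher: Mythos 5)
Your proof is correct and matches the paper's own argument: part~(\ref{item.tv.ment.TF.1}) is a direct set computation, and the remaining parts are read off the truth-tables in Figure~\ref{fig.3} together with Definition~\ref{defn.tv.ment}. Your ``characteristic function'' observation is a slightly more explicit packaging of the same finite check (and you also cover the $\modF$ case, which the paper's one-line proof omits to mention but which is handled identically).
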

\begin{proof}
Part~\ref{item.tv.ment.TF.1} is a fact of sets.
Parts~\ref{item.tv.ment.TF.2b} and~\ref{item.tv.ment.TF.TB} and~\ref{item.tv.ment.TF.2} are just from the truth-table for $\modT$, $\modTB$, and $\modTF$ in Figure~\ref{fig.3} and Definition~\ref{defn.tv.ment}. 
\end{proof}

\begin{lemm}
\label{lemm.TF.correct}
Suppose $\f{tv}\in\THREE$ is a truth-value.
Then the following are all equivalent:
$$
\ment\modTF\f{tv}
\ \liff\ 
\modTB\f{tv}=\modT\f{tv}
\ \liff\ 
\modT\f{tv}=\f{tv} 
\ \liff\ 
\modFB\f{tv}=\modF\f{tv}
\ \liff\ 
\modF\f{tv}=\tneg\f{tv} 
$$ 
Informally --- we do not build the machinery to make this formal, but it may still be a helpful observation --- we can write that the modality $\modTF$ is equivalent to the modal equivalences $\modTB=\modT=\f{id}$ and $\modFB=\modF=\tneg$.
\end{lemm}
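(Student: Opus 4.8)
The plan is to show that all five statements are equivalent to the single condition $\f{tv}\in\threeCorrect$ (that is, $\f{tv}\in\{\tvT,\tvF\}$); since they would then share a common characterisation, they are equivalent to one another. The leftmost statement is already handled: by Lemma~\ref{lemm.tv.ment.TF}(\ref{item.tv.ment.TF.2}), $\ment\modTF\f{tv}$ holds if and only if $\f{tv}\in\threeCorrect$. So it remains to check the four equalities, and for each I would verify that it holds for $\f{tv}=\tvT$ and $\f{tv}=\tvF$ but fails for $\f{tv}=\tvB$.

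Because $\THREE$ has only three elements, each check is a short inspection of the truth-tables in Figure~\ref{fig.3}. The organising observation is that in every one of the four pairs the two operations agree exactly on the correct values and diverge only at $\tvB$: concretely, $\modTB$ and $\modT$ both fix $\tvT$ and $\tvF$ but disagree at $\tvB$ (where $\modTB\tvB=\tvT$ while $\modT\tvB=\tvF$); $\modT$ is the identity on $\{\tvT,\tvF\}$ but sends $\tvB$ to $\tvF\neq\tvB$; $\modFB$ and $\modF$ agree on $\{\tvT,\tvF\}$ but disagree at $\tvB$; and $\modF$ and $\tneg$ both send $\tvT$ to $\tvF$ and $\tvF$ to $\tvT$ but disagree at $\tvB$ (where $\modF\tvB=\tvF$ while $\tneg\tvB=\tvB$). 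Thus each equality holds precisely when $\f{tv}\in\threeCorrect$, as required, and chaining these characterisations through the common condition yields the full string of biconditionals.

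There is essentially no obstacle here beyond bookkeeping: the content lies entirely in the finite case analysis, and the only point requiring a moment's care is to read off the correct columns from Figure~\ref{fig.3} — in particular recalling that $\tneg\tvB=\tvB$, which is exactly what makes the last equality fail at $\tvB$. The closing informal remark, that $\modTF$ corresponds to the modal equivalences $\modTB=\modT=\f{id}$ and $\modFB=\modF=\tneg$, is then just a restatement of this pattern: conditioning on correctness removes $\tvB$ from consideration, at which point $\modT$ behaves as the identity and $\modF$ behaves as negation.
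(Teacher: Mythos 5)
Your proof is correct and follows essentially the same route as the paper's: both reduce all five statements to the common condition $\f{tv}\in\threeCorrect$ (the first via Lemma~\ref{lemm.tv.ment.TF}(\ref{item.tv.ment.TF.2}), the rest by inspecting the truth-tables in Figure~\ref{fig.3}). You merely spell out the table checks in more detail than the paper does, and all of your readings of the tables are accurate.
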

\begin{proof}
By Lemma~\ref{lemm.tv.ment.TF}(\ref{item.tv.ment.TF.2}) and Definition~\ref{defn.tv.ment}(\ref{item.tv.ment.correct}),
$\ment\modTF\f{tv}$ when 
$$
\f{tv}=\tvT \ \lor\ \f{tv}=\tvF
\quad\text{and when}\quad
\f{tv}\neq\tvB .
$$
We check the truth-tables in Figure~\ref{fig.3}, and see that all of the other statements above precisely characterise the same condition.
\end{proof}

\subsection{The rich implication and equivalence structure of $\THREE$}
\label{subsect.3.imp}

\subsubsection{Weak and strong implications, and others}

Three-valued logic supports sixteen different kinds of implication~\cite[note~5, page~22]{arieli:idepl}.
We discuss two of them in Definition~\ref{defn.three.imp}, and mention two more in Remark~\ref{rmrk.rich.implication}:
\begin{defn}
\label{defn.three.imp}
\leavevmode
\begin{enumerate*}
\item\label{item.three.elementary.weak.implication}
Define \deffont{weak implication} $\tnotor$, using $\tneg$ and $\tor$, as follows:
$$
\f{tv}\tnotor\f{tv}'=(\tneg\f{tv})\tor\f{tv}'.
$$
\item\label{item.three.elementary.strong.implication}
Define \deffont{strong implication} $\timpc$, using $\tnotor$ and $\modT$, as follows:
$$
\f{tv}\timpc\f{tv}'=\f{tv}\tnotor \modT\f{tv}'. 
$$
\end{enumerate*}
These and other encodings are assembled in Figure~\ref{fig.3.derived}.
\end{defn}

We motivate the weak and strong implications from Definition~\ref{defn.three.imp}(\ref{item.three.elementary.weak.implication}\&\ref{item.three.elementary.strong.implication}):
\begin{rmrk}[How to interpret $\tnotor$ and $\timpc$]
\label{rmrk.interpret.implication}
\leavevmode
For this Remark, let us interpret $\f{tv}\in\THREE$ as the result of a computation process which either succeeds and returns true (truth-value $\tvT$), or it succeeds and returns false (truth-value $\tvF$), or the process does not succeed (e.g. it crashes) and it returns an invalid value / no value (represented by the truth-value $\tvB$).
Recall also from Definition~\ref{defn.tv.ment}(\ref{item.tv.ment}) that $\ment\f{tv}$ means $\f{tv}\in\{\tvT,\tvB\}$.
Then by this reading:
\begin{enumerate*}
\item
$\ment\f{tv}$ means \emph{``the process $\f{tv}$ returns $\tvT$, if valid''}. 
\item
$\ment\f{tv}\tnotor\f{tv}'$ means \emph{``\emph{if} process~$\f{tv}$ returns true \emph{then} process~$\f{tv}'$ returns $\tvT$, if valid''}.

Implication does not imply sequentiality: we are just making a logical assertion about outcomes. 
\item
$\ment\f{tv}\timpc\f{tv}'$ means \emph{``\emph{if} process~$\f{tv}$ returns true \emph{then} process~$\f{tv}'$ is \emph{also} valid, and it returns $\tvT$''}.
\end{enumerate*}
\end{rmrk}

We can sum this up as a \emph{modus ponens} result, which relates the validity $\ment$ from Definition~\ref{defn.tv.ment}, the weak implication $\tnotor$ and strong implication $\timpc$ from Figure~\ref{fig.3}, to real-world implication:
\begin{prop}[Weak and strong modus ponens]
\label{prop.mp.for.tnotor}
Suppose $\f{tv},\f{tv}'\in\THREE$ are truth-values.
Then:
\begin{enumerate*}
\item\label{item.mp.for.tnotor}
$\ment\f{tv}\tnotor\f{tv}'$ if and only if\ \ $\ment\modT\f{tv}$ implies $\ment\f{tv}'$. 

In words: `$\f{tv}$ weakly implies $\f{tv}'$' is valid, if and only if $\f{tv}$ is true implies $\f{tv}'$ is valid (but $\f{tv}'$ need not be true). 

We may call this logical equivalence %
\deffont{weak modus ponens}.
\item\label{item.mp.for.timpc}
$\ment\f{tv}\timpc \f{tv}'$ if and only if $\ment\modT\f{tv}$ implies $\ment\modT\f{tv}'$. 

In words: `$\f{tv}$ strongly implies $\f{tv}'$' is valid if and only if $\f{tv}$ is true implies $\f{tv}'$ is true. 

We may call this logical equivalence \deffont{strong modus ponens}. 
\end{enumerate*}
\end{prop}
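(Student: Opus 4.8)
The plan is to reduce both parts to the key translation fact, established in Lemma~\ref{lemm.tv.ment.TF}(\ref{item.tv.ment.TF.2b}), that $\ment\modT\f{tv}$ holds precisely when $\f{tv}=\tvT$. Given this, the right-hand sides of both equivalences become assertions purely about whether $\f{tv}$ and $\f{tv}'$ equal $\tvT$, while the left-hand sides can be unfolded using the definitions of $\tnotor$ and $\timpc$ from Definition~\ref{defn.three.imp} together with the validity lemmas of this subsection.

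For part~(\ref{item.mp.for.tnotor}), I would first unfold $\f{tv}\tnotor\f{tv}'=(\tneg\f{tv})\tor\f{tv}'$ and apply Lemma~\ref{lemm.tand.tor}(\ref{item.tand.tor.tor}) to obtain that $\ment\f{tv}\tnotor\f{tv}'$ holds if and only if $\ment\tneg\f{tv}$ or $\ment\f{tv}'$. Then I read off from Figure~\ref{fig.3} that $\ment\tneg\f{tv}$ holds exactly when $\f{tv}\neq\tvT$, since $\tneg\tvT=\tvF$ is the only case making $\tneg\f{tv}$ invalid. Hence $\ment\f{tv}\tnotor\f{tv}'$ holds if and only if $\f{tv}\neq\tvT$ or $\ment\f{tv}'$, which is logically equivalent to ``$\f{tv}=\tvT$ implies $\ment\f{tv}'$''. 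Rewriting $\f{tv}=\tvT$ as $\ment\modT\f{tv}$ via Lemma~\ref{lemm.tv.ment.TF}(\ref{item.tv.ment.TF.2b}) then yields exactly the claimed equivalence.

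For part~(\ref{item.mp.for.timpc}), the cleanest route is to reuse part~(\ref{item.mp.for.tnotor}) rather than repeat a case analysis. Since $\f{tv}\timpc\f{tv}'$ is defined as $\f{tv}\tnotor\modT\f{tv}'$, instantiating part~(\ref{item.mp.for.tnotor}) with $\modT\f{tv}'$ in place of $\f{tv}'$ gives immediately that $\ment\f{tv}\timpc\f{tv}'$ holds if and only if $\ment\modT\f{tv}$ implies $\ment\modT\f{tv}'$, which is precisely the desired statement.

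There is no serious obstacle here; the result is essentially a truth-table verification organised through the definitions. The one point requiring care is the meta-logical manipulation in part~(\ref{item.mp.for.tnotor}), namely recognising that ``$\f{tv}\neq\tvT$ or $\ment\f{tv}'$'' is equivalent to ``$\f{tv}=\tvT$ implies $\ment\f{tv}'$'', and applying the $\modT$-translation of Lemma~\ref{lemm.tv.ment.TF}(\ref{item.tv.ment.TF.2b}) on the correct side. As a sanity check one can confirm the answer directly against Figure~\ref{fig.3}: the unique invalid entry of $\tnotor$ occurs at $\f{tv}=\tvT$, $\f{tv}'=\tvF$, matching the failure of ``$\f{tv}=\tvT$ implies $\ment\f{tv}'$''; and the invalid entries of $\timpc$ occur exactly at $\f{tv}=\tvT$ with $\f{tv}'\in\{\tvB,\tvF\}$, matching the failure of ``$\f{tv}=\tvT$ implies $\f{tv}'=\tvT$''.
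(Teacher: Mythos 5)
Your proof is correct. The paper's own proof is simply ``by direct inspection of the truth-table for $\tnotor$ in Figure~\ref{fig.3}'' for part~(\ref{item.mp.for.tnotor}) and ``a similar direct inspection'' for part~(\ref{item.mp.for.timpc}), so you are establishing the same elementary fact; the only (welcome) difference is organisational, in that you unfold $\tnotor$ via Definition~\ref{defn.three.imp} and Lemma~\ref{lemm.tand.tor}(\ref{item.tand.tor.tor}) rather than reading the table, and you obtain part~(\ref{item.mp.for.timpc}) as an instance of part~(\ref{item.mp.for.tnotor}) via $\f{tv}\timpc\f{tv}'=\f{tv}\tnotor\modT\f{tv}'$ instead of repeating the inspection.
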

\begin{proof}
We consider each part in turn:
\begin{enumerate}
\item
By direct inspection of the truth-table for $\tnotor$ in Figure~\ref{fig.3}, using Definition~\ref{defn.tv.ment}(\ref{item.tv.ment}).
\item
By a similar direct inspection.
\qedhere\end{enumerate}
\end{proof}

\begin{rmrk}[Explicit and implicit implications]
\label{rmrk.rich.implication}
So far, the structure we have seen (the lattice structure, the negation, the modalities, and the two implications $\tnotor$ and $\timpc$) is reflected explicitly in the syntax of our logic in Figures~\ref{fig.predicate.syntax} and~\ref{fig.3.derived}.
For instance, Figure~\ref{fig.predicate.syntax} has a unary negation connective $\tneg$ whose denotation as per Figure~\ref{fig.3.phi.f} is precisely the negation operator $\tneg$ in Figure~\ref{fig.3}.
So far, so explicit.

However, $\THREE$ has further implication structure, which we do not mention explicitly in our logic, but which is still there, is important, and which appears implicitly in the proofs.
Notably: 
\begin{enumerate*}
\item\label{item.rich.implication.timp}
An implication $\timp$ defined by $\f{tv}\timp \f{tv}'=\tneg\f{tv}'\timpc\tneg\f{tv}=(\modTB\f{tv})\tnotor\f{tv}'$ has a truth-table spelled out in Figure~\ref{fig.3} (bottom left).
This can be read as ``if $\f{tv}$ is valid then $\f{tv}'$ is valid''.

$\timp$ is clearly implicit in results like Proposition~\ref{prop.someone.implies.someoneAll} -- see all the parts whose statement uses logical implication $\limp$ --- and in Proposition~\ref{prop.tv.ment.TF.model}.
\item
Another implication is implicit in the use of $\leq$ in results like Lemma~\ref{lemm.positive.monotonicity} and Corollary~\ref{corr.recent.everyone.commute}.
That symbol refers to the lattice order on $\THREE$, but this could easily be internalised in our logic --- e.g. we can set $\f{tv}\tleq \f{tv'}$ to be $(\f{tv}\timpc\f{tv}')\tand(\f{tv}\timp\f{tv}')$, to obtain the connective $\tleq$ also illustrated in Figure~\ref{fig.3} (bottom right).
\end{enumerate*}
So in fact, we use all four of $\timp$ and $\tleq$, and $\tnotor$ and $\timpc$.

The reason that $\tnotor$ and $\timpc$ make it into our logic as (sugared) syntax, and $\timp$ and $\leq$ do not, is that $\tnotor$ and $\timpc$ naturally arise in our axioms (see most of the axioms in Figures~\ref{fig.logical.paxos}) %
whereas we do not need $\timp$ or $\tleq$ in the same way for the axiomatisations in this paper.
Different axiomatisations of different algorithms might make different demands on our logic, and it would be straightforward to include (sugar for) connectives accordingly. 
\end{rmrk}

\subsubsection{Equivalence $\equiv$}

Our three-valued domain of truth-values $\THREE$ has rich notions of equivalence, to match its rich notions of implication.
We mention three of them:
\begin{defn}[Notions of equivalence in $\THREE$]
\label{defn.equivalent}
Define $\phi\tlatticeiff\phi'$ and $\phi\tiffcc\phi'$ and $\phi\equiv\phi'$ by:
$$
\begin{array}{r@{\ }l}
\phi\tlatticeiff\phi' =& (\phi\tnotor\phi')\tand(\phi'\tnotor\phi)
\\
\phi\tiffcc\phi' =& (\phi\timpc\phi')\tand(\phi'\timpc\phi)
\\
\phi\equiv\phi' =& (\modT\phi\tand\modT\phi')\tor(\modB\phi\tand\modB\phi')\tor(\modF\phi\tand\modF\phi')
\end{array}
$$
Truth-tables for $\tlatticeiff$, $\tiffcc$, and $\equiv$ are given in Figure~\ref{fig.3.iff}, and we unpack this further in Remark~\ref{rmrk.equiv}.
\end{defn}

\begin{rmrk}
\label{rmrk.equiv}
We can read 
\begin{itemize*}
\item
$\f{tv}\tlatticeiff \f{tv}'$ as `$\f{tv}$ is valid if and only if $\f{tv}'$ is valid', 
and 
\item
$\f{tv}\tiffcc \f{tv}'$ as `$\f{tv}$ is true if and only if $\f{tv}'$ is true', 
and 
\item
$\f{tv}\equiv \f{tv}'$ as `$\f{tv}$ is literally equal to $\f{tv}'$'.
\end{itemize*}
Of these three notions of equivalence, $\equiv$ will be most useful to us later.
We will consider it in our predicate language in Subsection~\ref{subsect.equivalence.of.predicates}, and (for instance) it will help us to express notions of extensional equivalence in Figure~\ref{fig.easy.equivalences}.

We will call $\equiv$ \deffont{literal equivalence}, and $\equiv$ appears in the sugared syntax of Figure~\ref{fig.3.derived}.
As per the discussion in Remark~\ref{rmrk.rich.implication}, there would be no barrier to including $\tlatticeiff$ or $\tiffcc$ but we do not need to for now.
\end{rmrk}

\section{\QLogic}
\label{sect.quorum.logic}

\subsection{A preliminary: semitopologies}

\begin{rmrk}
A semitopology is like a topology, but generalised to not require that intersections of open sets necessarily be opens.

The intuition of an open set is an \emph{actionable coalition}; a set of participants with the voting power (or other resources) to make updates to their local state.
The reader might be familiar with the notion of \emph{quorum} from literature like~\cite{DBLP:journals/tocs/Lamport98,DBLP:journals/dc/MalkhiR98,naor:loacaq}, or the notion of \emph{winning coalition} in social choice theory (the theory of voting)~\cite[Item~5, page~40]{riker:thepc}.
Semitopologies are similar to this, except that open sets are always closed under arbitrary (possibly empty) unions, whereas 
\begin{itemize*}
\item
quorums may be closed under arbitrary nonempty unions (in which case the correspondence is precise, except that $\varnothing$ is an open set but is not a quorum), or 
\item
quorums are not closed under arbitrary unions (in which case a quorum corresponds to a minimal nonempty open set).
\end{itemize*}
Our presentation of semitopologies will be quite brief --- we just define the parts we need to give semantics to the modalities in \QLogic --- and the reader who wants to know more is referred to previous work~\cite{gabbay:semdca}. 
\end{rmrk} 

\begin{defn}
\label{defn.semitopology}
A \deffont{semitopology} is a pair $(\Pnt,\opens)$ where:
\begin{enumerate*}
\item
$\Pnt$ is a set of \deffont{points}, which following terminology from distributed systems one might also call \deffont{participants}, \deffont{processes}, \deffont{parties}, or \deffont{nodes}.
\item
$\opens\subseteq\powerset(\Pnt)$ is a set of \deffont{open sets} such that:
\begin{enumerate*}
\item
If $\mathcal O\subseteq\powerset(\Pnt)$ then $\bigcup\mathcal O\in\Pnt$.
In words: $\opens$ is closed under arbitrary (possibly empty) unions.
\item
$\Pnt\in\opens$.
\end{enumerate*} 
\end{enumerate*} 
We may write $\opensne$ for the set of nonempty open sets, so
$$
\opensne = \{O\in\opens \mid O\neq\varnothing\} .
$$
\end{defn}

\begin{nttn}
\label{nttn.empty.semitopology}
Suppose $(\Pnt,\opens)$ is a semitopology.
\begin{enumerate*}
\item
We call $(\Pnt,\opens)$ the \deffont{empty semitopology} when its set of points is empty --- i.e. when $\Pnt=\varnothing$ --- so that necessarily $\opens=\{\varnothing\}$.
\item
We call $(\Pnt,\opens)$ a \deffont{nonempty semitopology} when its set of points is nonempty --- i.e. when $\Pnt\neq\varnothing$.
\end{enumerate*}
\end{nttn}

\begin{defn}[All-but-$f$-many semitopology]
\label{defn.allbut}
\leavevmode
\begin{enumerate*}
\item
Suppose $X$ is a finite set and $N\in\Ngeqz$.
Define $\powerset_{\geq N}(X)$ to be the set of subsets of $X$ that contain at least $N$ elements.
In symbols:
$$
\powerset_{\geq N}(X) = \{O\subseteq X \mid \mycard O\geq N\} .
$$
\item\label{item.allbut}
Suppose $N,f\in\Ngeqz$ and $N\geq f$.
Define a semitopology $\AllBut{N}{f}$ by
$$
\AllBut{N}{f} = \bigl(\Nset{N}, \{\varnothing\}\cup\powerset_{\geq N\minus f}(\Nset{N}\bigr)\bigr).
$$
Thus: points are numbers $0\leq n<N$; and open sets are sets of points of cardinality at least $N\minus f$.
\end{enumerate*}
\end{defn}

\begin{lemm}
Suppose $N,f\in\Ngeqz$ and $N\geq f$.
Then:
\begin{enumerate*}
\item
$\AllBut{N}{f}$ is indeed a semitopology.
\item
$\AllBut{N}{f}$ is a topology precisely when $f=N$ or $f=0$.
\end{enumerate*}
\end{lemm}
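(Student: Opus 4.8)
The plan is to check the two semitopology axioms directly, and then to reduce the topology question to closure under binary intersections. For the semitopology part, the main thing to verify is closure under arbitrary unions. Given a family $\mathcal{O}$ of open sets, either every member equals $\varnothing$ (including the degenerate case $\mathcal{O}=\varnothing$), so $\bigcup\mathcal{O}=\varnothing\in\opens$; or some $O\in\mathcal{O}$ is nonempty, hence $\mycard O\geq N\minus f$, and since $\bigcup\mathcal{O}\supseteq O$ we get $\mycard\bigl(\bigcup\mathcal{O}\bigr)\geq N\minus f$, so $\bigcup\mathcal{O}\in\powerset_{\geq N\minus f}(\Nset{N})\subseteq\opens$. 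This is exactly where the ``all-but-$f$'' shape does the work: a union only grows, so once it contains one large-enough open it is itself open. Finally $\Pnt=\Nset{N}$ is open because $\mycard\Nset{N}=N\geq N\minus f$ (using $f\geq 0$). That settles the first part.

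Since I have already shown $\opens$ is closed under arbitrary unions and contains both $\varnothing$ and $\Pnt$, the space is a topology if and only if $\opens$ is closed under binary intersections, so the second part reduces entirely to deciding when $A,B\in\opens$ forces $A\cap B\in\opens$. The two ``yes'' cases are immediate. If $f=0$ the threshold is $N\minus f=N$, so the only nonempty open is $\Nset{N}$ itself and $\opens=\{\varnothing,\Nset{N}\}$ (indiscrete), trivially closed under intersection. If $f=N$ the threshold is $0$, so every subset is open, $\opens=\powerset(\Nset{N})$ (discrete), and again closure is automatic. For the converse direction I would assume $f$ lies strictly between the extremes and try to break intersection-closure by picking two opens $A,B$ of the minimal size $N\minus f$ overlapping in exactly $N\minus f\minus 1$ points; then $A\cap B$ is nonempty but has fewer than $N\minus f$ elements, so it is not open, and such $A,B$ fit inside $\Nset{N}$ because $\mycard(A\cup B)=N\minus f\plus 1\leq N$.

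The step I expect to be the real obstacle is precisely this last construction, because it needs $N\minus f\minus 1\geq 1$, i.e. $N\minus f\geq 2$, and it collapses exactly at the boundary $f=N\minus 1$. At that boundary the threshold is $N\minus f=1$, so $\opens$ consists of $\varnothing$ together with \emph{every} nonempty subset, i.e. $\opens=\powerset(\Nset{N})$ is once more discrete and therefore \emph{is} closed under intersection. Hence the honest conclusion of this argument is that $\AllBut{N}{f}$ is a topology precisely for $f\in\{0,\,N\minus 1,\,N\}$. For $N\leq 1$ these three values coincide and one recovers ``$f=0$ or $f=N$'', but for $N\geq 2$ the stated biconditional appears to omit the discrete case $f=N\minus 1$; so in writing the proof I would either add that case to the statement, or restrict the non-topology direction to the range $N\minus f\geq 2$ (equivalently $1\leq f\leq N\minus 2$), which is where the overlap counterexample genuinely applies.
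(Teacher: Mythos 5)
Your part~1 is essentially the paper's own argument: the paper's proof reduces everything to the observation that if $\mycard O\geq N\minus f$ and $\mycard{O'}\geq N\minus f$ then $\mycard(O\cup O')\geq N\minus f$, and your ``a union only grows'' reasoning is the same point, stated for arbitrary families and with the degenerate cases handled explicitly. No issues there.

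For part~2 your more careful analysis is not merely a stylistic improvement --- it exposes a genuine error in the lemma as stated. The paper's proof says only that it is ``routine to check'' that intersection-closure forces either the full powerset (``so $f=N$'') or the indiscrete collection (``so $f=0$''), but the parenthetical is wrong: the threshold $N\minus f=1$, i.e.\ $f=N\minus 1$, \emph{also} yields $\opens=\powerset(\Nset{N})$, since $\{\varnothing\}\cup\powerset_{\geq 1}(\Nset{N})$ is the whole powerset, and the discrete topology is of course closed under intersections. Concretely, $\AllBut{2}{1}$ is the discrete topology on two points, yet $f=1\notin\{0,2\}$, so the stated biconditional fails already at $N=2$. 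Your overlap construction (two opens of minimal size $N\minus f$ meeting in exactly $N\minus f\minus 1\geq 1$ points, which fits inside $\Nset{N}$ because $\mycard(A\cup B)=N\minus f\plus 1\leq N$) is exactly the right counterexample for the range $1\leq f\leq N\minus 2$, and your identification of where it breaks down ($N\minus f\geq 2$ is needed) is precisely the boundary the paper overlooks. Your corrected statement --- a topology precisely when $f\in\{0,\,N\minus 1,\,N\}$, collapsing to the paper's version only when $N\leq 1$ --- is the honest conclusion, and either of your proposed fixes (amend the statement, or restrict the negative direction to $1\leq f\leq N\minus 2$) would repair the lemma. This is a minor slip with no downstream consequences for the paper (nothing later relies on the exact characterisation of when $\AllBut{N}{f}$ is a topology), but you are right and the paper is wrong on this point.
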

\begin{proof}
We consider each part in turn:
\begin{enumerate}
\item
It is routine to check that the conditions on open sets in Definition~\ref{defn.semitopology} are satisfied.
The key observation is just the elementary fact that if $\mycard O\geq N\minus f$ and $\mycard O'\geq N\minus f$ then $\mycard(O\cup O')\geq N\minus f$.
\item
For $\AllBut{N}{f}$ to be a topology, open sets need to be closed under intersections.
It is routine to check that this can happen only when $\AllBut{N}{f}$ is the full powerset (so $f=N$) or when open sets are just $\varnothing$ and $\Pnt$ (so $f=0$).
\qedhere\end{enumerate}
\end{proof}

\begin{rmrk}
The original Paxos paper by Lamport~\cite{DBLP:journals/tocs/Lamport98} used semitopologies of the form $\AllBut{N}{f}$ where $N=2n$ and $f=n\minus 1$, for $n\geq 1$ (i.e. open sets are strict majorities of points).
This invites a question: why bother with semitopologies at all?
\begin{enumerate*}
\item
Because we can. 
More specifically, our axiomatisation is parametric over the underlying semitopology.
As good mathematicians, if we can generalise at no cost, then we should.\footnote{\dots and this applies not just in mathematics papers.  Software engineers follow a similar principle and have many names for it: e.g. encapsulation, APIs, modules, classes, typeclasses, and so on.} 
\item
Semitopology not only generalises; it simplifies.
The only properties of $\AllBut{N}{f}$ that we will need, are precisely that it is a semitopology.
If all we need about $\AllBut{N}{f}$ is its semitopological structure, then we should work with that structure because that is what actually matters.
\item
Viewing $\AllBut{N}{f}$ as a semitopology, rather than just as a collection of quorums, adds real meaning and understanding.
In particular, the notions of `dense subset' and `nonempty open interior' will be relevant to understanding the axioms, and these arise naturally in the predicates for declarative axiomatisations of distributed algorithms.
These are clearly native topological notions, so the mathematics is speaking to us here, and what it has to say is inherently topological. 
\end{enumerate*} 
\end{rmrk}

\subsection{Syntax of the logic}

We now develop our three-valued modal fixedpoint logic for reasoning about distributed systems.
We so this in the following order: \emph{signatures} $\Sigma$, \emph{models} $\mathcal M$, \emph{contexts} $\acontext$, and finally \emph{syntax}. 
We develop denotation for this syntax in Subsection~\ref{subsect.denotation}.
For quick reference, the reader can just look at the core syntax and denotation in Figures~\ref{fig.predicate.syntax} and~\ref{fig.3.phi.f}.

\begin{defn}[Signatures]
\label{defn.signature}
\leavevmode
\begin{enumerate*}
\item\label{item.signature.varsymb}
Fix $\tf{VarSymb}$ a countably infinite set of \deffont{variable symbols} $\thea,\theb,\thec\in\tf{VarSymb}$.
This set will never change and we may assume it henceforth.
\item
Fix a \deffont{fixedpoint variable symbol} $\tf X$ (we will only need one of these, though there is inherent obstacle to having more).
\item\label{item.predicate.syntax.signature}
A \deffont{signature} $\Sigma$ is a pair 
$$
\Sigma=(\PredSymb,\arity)
$$ 
where:
\begin{enumerate*}
\item\label{item.predsymb}
$\PredSymb$ is a (possibly finite) set of \deffont{predicate symbols} $\tf P,\tf P'\in\PredSymb$.
\item
$\arity$ is an \deffont{arity function} $\arity(\tf P)\in\Ngeqz$ mapping each predicate symbol in $\PredSymb$ to a nonnegative natural number (which is intuitively its number of arguments).
\end{enumerate*}
\item
We may write a signature $\Sigma$ in list form as 
$$
\Sigma=[\tf P,\tf P':i,\ \tf Q:j,\ \dots]
$$ 
meaning that $\tf P,\tf P',\tf Q\in\PredSymb$ and $\arity(\tf P)=\arity(\tf P')=i$ and $\arity(\tf Q)=j$ and so on.
\item\label{item.predicate.syntax.empty.signature}
If the signature in part~\ref{item.predsymb} above contains no predicate symbols then we call it \deffont{empty}; otherwise we call it non-empty.
There is only one empty signature, which we will call \deffont{the empty signature}.\footnote{It is not particularly useful.} 
\end{enumerate*}
\end{defn}

\begin{defn}[Models]
\label{defn.model}
A \deffont{model} $\mathcal M$ is a tuple $\mathcal M=(\Pnt,\opens,\Val)$ where:
\begin{enumerate*}
\item
$(\Pnt,\opens)$ is a semitopology (Definition~\ref{defn.semitopology}), which we may call the \deffont{underlying semitopology} of the model.

We may call $p\in\Pnt$ a \deffont{point} or a \deffont{participant}; also \deffont{party}, \deffont{process}, or \deffont{node} would be consistent with the literature.
For our purposes, these are all synonyms: the terminology \emph{point} is from topology; \emph{participant}, \emph{party}, and \emph{process} are from distributed systems; and \emph{node} is from graph theory (thinking of a network as a graph of communicating processes).
\item\label{item.model.val}
$\Val$ is a nonempty set of \deffont{values}.\footnote{We assume values are nonempty to avoid the degenerate case where a universal quantification over values holds just because there are no values on which to test it.  See in particular the argument in Lemma~\ref{lemm.intertwined.char}(\ref{item.intertwined.char.reverse}).}

For our purposes we will not care what values are, so long as some exist, so we do not put any type system on values or endow them with structure; but we could easily do so if required. 
\end{enumerate*} 
\end{defn}

\begin{defn}[Contexts]
\label{defn.contexts}
Suppose $\Sigma=(\PredSymb,\arity)$ is a signature, and $\mathcal M=(\Pnt,\opens,\Val)$ is a model. 
\begin{enumerate*}
\item\label{item.context.stage}
A \deffont{stage} is a nonnegative number $n\in\Time$.
\item\label{item.contexts.valuation}
A \deffont{valuation} $\avaluation$ inputs a predicate symbol $\tf P\in\PredSymb$ and a stage $n\in\Time$ and a point $p\in\Pnt$ and returns a function $\avaluation(\tf P)_{n,p}:\Val^{\arity(\tf P)}\to\THREE$.

We write the set of valuations $\Valuation(\Sigma,\Val)$ or $\Valuation(\Sigma,\mathcal M)$.
\item\label{item.contexts}
A \deffont{context} $\acontext$ is a tuple $(n,p,O,\avaluation)$ where $n\in\Time$ is a stage, $p\in\Pnt$ is a point, $O\in\opensne$ is a nonempty open set, and $\avaluation\in\tf{valuation}(\Sigma,\Val)$ is a valuation. 

We write the set of contexts $\tf{Ctx}(\Sigma,\mathcal M)$; if $\Sigma$ and $\mathcal M$ are understood, we may write this just $\tf{Ctx}$.
\item
A \deffont{context valuation} is a function $\varkappa\in \THREE^{\tf{Ctx}}$.
\end{enumerate*}
\end{defn}

\begin{figure} %
$$
\begin{array}{r@{\ }l}
t ::=& 
(a\in\tf{VarSymb}) \mid (v\in\Val) %
\\
\phi ::=&  
\tvB 
\mid t\teq t \mid (\tf P(\overbrace{t,\dots,t}^{\arity(\tf P)}) : \tf P\in\PredSymb)
\\
\mid &\tneg\phi \mid \modT\phi \mid \someone\phi \mid \Quorum\phi \mid \texi a.\phi
\\
\mid & \phi\tor\phi'  %
\\
\mid & \yesterday \phi \mid \tomorrow \phi 
\\
\mid & \tf X \mid \mu\tf X.\phi \mid (\varkappa \in \THREE^{\tf{Ctx}})
\end{array}
$$ 
\caption{Term and predicate syntax of \QLogic (Definition~\ref{defn.predicate.syntax})}
\label{fig.predicate.syntax}
\end{figure}

\begin{defn}[Syntax]
\label{defn.predicate.syntax}
Suppose $\Sigma$ is a signature (Definition~\ref{defn.signature}) and $\mathcal M$ is a model (Definition~\ref{defn.model}).
\begin{enumerate*}
\item 
Define the syntax of \deffont{\QLogic} over $\Sigma$ and $\mathcal M$ as in Figure~\ref{fig.predicate.syntax}.

In this syntax, $\texi\thea.\phi$ binds $\thea$ in $\phi$, and $\mu\tf X.\phi$ binds $\tf X$ in $\phi$.
We call a term or predicate with no free variable symbols or fixedpoint variable symbols \deffont{closed}.
\item
One special case will be of particular interest: we will write 
\begin{quote}
`suppose $\texi a.\phi$ is a closed predicate',
\end{quote}
and often we do not really care about $\texi a.\phi$ \emph{per se}, except that when this is a closed predicate it means that $a\in\tf{VarSymb}$ and $\phi$ is a predicate whose set of free variables is at most $\{a\}$.
See for example Remark~\ref{rmrk.unpack.texiunique} and Lemma~\ref{lemm.unique.affine.existence}.
\item
We define substitution of variable symbols / fixedpoint variable symbols for values / context valuations as usual, and write these $\phi[a\ssm v]$ and $\phi[\tf X\ssm \varkappa]$ respectively. %
\end{enumerate*}
\end{defn}

\begin{rmrk}
We call our logic \emph{\QLogic} to reflect that it is a logic for reasoning about how coalitions of participants can work together to carry out algorithms to achieve their goals.
In this document we will apply it to Paxos, but we see this logic as having independent interest and a broader relevance for specifying and reasoning about the kinds of algorithms that get written for distributed systems.
Their hallmark is a requirement for distributed collaborative action --- action by quorums, or actionable coalitions, or winning coalitions; these things have many names --- to progress an algorithm to its next step. 
\end{rmrk}

\begin{xmpl}
We give some example predicates and suggest intuitive meanings for them.
We do not yet have the denotation (Figure~\ref{fig.3.phi.f}) nor the rich derived syntax (Figure~\ref{fig.3.derived}) that we build later, so these examples are simple and informal, but we hope they will be helpful.

Assume our signature has a unary predicate symbol $\tf{output}$ (representing the output of some process), and assume $v,v'\in\Val$ are values.
\begin{enumerate}
\item
$\tf{output}(v)$ means `where we are, today, if the process succeeds, then it outputs $v$' and $\tneg\tf{output}(v')$ means `where we are, if the process succeeds, then it does not output $v'$'.
\item
$\modT\tf{output}(v)$ means `where we are, today, the process succeeds and outputs $v$'. 
\item
$\someone\tf{output}(v)$ means `there is a place where, today, if the process succeeds, then it outputs $v$'.
\item
$\tomorrow\modT\tf{output}(v)$ means `where we are, tomorrow, the process succeeds and outputs $v$'.
\item
$\yesterday\modT\tf{output}(v)$ means `where we are, yesterday, the process succeeded and output $v$'.
This assertion is subtly stronger than it might seem; our notion of time starts at day~0, at which by convention all assertions about yesterday are false.
See Remark~\ref{rmrk.further.comments.denotation}(\ref{item.interpret.yesterday}).
\item
$\tvB$ means `where we are, today, the process (which we have not named with a predicate symbol) crashes'.
\end{enumerate}
More sophisticated examples will appear later; notably in the derived syntax in Figure~\ref{fig.3.derived}, in Remark~\ref{rmrk.comments.on.semantics}, and of course when we apply our logic to axiomatise distributed algorithms.
\end{xmpl}

\begin{rmrk}[Three pedant points]
\label{rmrk.pedant.point}
The thing about being pedantic is that it does not matter, until it does.
We now discuss three pedant points, which may not matter \dots until perhaps a moment arrives then they do:
\begin{enumerate}
\item
We may treat $\Sigma$ and $\mathcal M$ as fixed parameters, and elide them.  
In practice this means that we may write `suppose $\phi$ is a closed predicate' instead of `suppose $\phi$ is a closed predicate in the syntax of \QLogic over $\Sigma$ and $\mathcal M$'.
But note next point.
\item\label{item.model.matters}
Our syntax from Definition~\ref{defn.predicate.syntax} and Figure~\ref{fig.predicate.syntax} contains denotational elements; $v\in\Val$ and $\varkappa\in\THREE^{\tf{Ctx}(\Sigma,\mathcal M)}$.

So when we write `suppose $\phi$ is a closed predicate' then this statement is perfectly correct and well-defined but it \emph{really does matter} what the signature and model are (even if the \emph{precise} choice of model will may not matter to the theorems we prove), because closed predicates $\phi$ can include elements from the denotation such as $\varkappa\in\THREE^{\tf{Ctx}(\Sigma,\mathcal M)}$ (this function-space is determined by both the signature $\Sigma$ and the model $\mathcal M$) and $v\teq v'$ for some $v,v'\in\Val$ ($\Val$ is determined by the choice of model $\mathcal M$).

Adding denotational elements to syntax avoids us having to put this information in the context (which is already a $4$-tuple, as per Definition~\ref{defn.contexts}(\ref{item.contexts})) and simplifies the presentation.
This is a technical device with a long pedigree.\footnote{Adding elements from a model to syntax, and conversely building a model out of syntax, is a standard technical device in logic, with a pedigree that goes back (at least) to the proofs of the upward and downward L\"owenheim-Skolem theorems from the 1920s.  A clear survey of these methods is in Chapter~2 (``Models constructed from constants'') of~\cite{chang:modt}.}

Note that (by design!) the choice of model makes a difference to validity.
For example: in Definition~\ref{defn.phi.equivalent} we define equivalence of predicates (and we explicitly mention $\Sigma$ and $\mathcal M$);
if $\mathcal M$ has only a single value $v$, then $\tall a,b.a\teq b$ and $\tall a.a\teq v$ are equivalent to $\tvT$; and if $\mathcal M$ has more than one value then $\tall a,b.a\teq b$ is equivalent to $\tvF$ --- $\tall a.\phi$ is shorthand for $\tneg\texi a.\tneg\phi$; see Figure~\ref{fig.3.derived}.
In contrast: in Lemma~\ref{lemm.equiv} we elide $\Sigma$ and $\mathcal M$ for brevity, and just write `suppose $\phi$ and $\phi'$ are closed predicates and $\acontext$ is a context', without specifying over what signature or model, because this is clear in context.

The reader can mentally add $\Sigma$ and $\mathcal M$ everywhere if they like.
We will always be precise and explicit where it counts.
\item\label{item.pedant.texi.v}
We may quantify over $v\in\Val$ in our formal syntax of predicates, writing things like $\texi v.\texi v'.v\teq v'$ as predicates.
Technically this is wrong and is not syntax! 
Figure~\ref{fig.predicate.syntax} permits e.g. $\texi a.\texi a'.a\teq a'$ where $a,a'\in\tf{VarSymb}$ (Definition~\ref{defn.signature}(\ref{item.signature.varsymb})); it does not permit $\texi v$ in the formal syntax of predicates where $v\in\Val$, so writing e.g. $\texi v.\texi v'.v\teq v'$ as if it were formal syntax of \QLogic is an abuse of notation.

However, $a$ and $a'$ get instantiated to values $v$ and $v'$ in our denotation (see the line for $\texi$ in Figure~\ref{fig.3.phi.f}).
It seems not worth the disruption of maintaining a distinction between $\texi a.\phi$ (which is syntax, assuming $\phi$ is) and $\texi v.\phi[a\ssm v]$ (which looks like syntax but technically is not).
So in practice this seems to us one example of an abuse of formal notation that is positively useful and clarifies meaning.

If this bothers the reader, then every time they see $\texi v.\phi$ (and $\tall v.\phi$, and similar) in our text, they can either declare that $v$ here is \emph{actually} a variable symbol in $\tf{VarSymb}$ which we happen to have called $v$ (and not a value in $\Val$ after all), or they can pick a fresh variable symbol $a$ and replace every unbound $v$ in $\phi$ with $a$.
Either way they will get well-formed syntax and everything will be as it should be.

Notation exists to serve clarity, and not the other way around, so we will simply ignore this henceforth and carry on writing $\texi v.\phi$, just whenever we feel like it, and this will denote just what it says: `there exists a $v$ such that $\phi$'.
Meaning should always be clear. 

Note that an unbound value $v$ in a predicate is legal syntax as per Figure~\ref{fig.predicate.syntax}; 
e.g. in `$\texi a.a\teq v$', $v$ is a value, and this is fine; the (constructive and useful) abuse of notation comes when we write $\texi v$. 
\end{enumerate}
\end{rmrk}

\subsection{Denotation (semantics) of the syntax}
\label{subsect.denotation}

\begin{figure} %
$$
\begin{array}{r@{\ }l@{\quad}l}
\modellabel{\tvB}{n,p,O,\avaluation} =&\tvB
\\
\modellabel{v\teq v'}{n,p,O,\avaluation} =&
\begin{cases}
\tvT & v=v'
\\
\tvF & v\neq v'
\end{cases}
\\
\modellabel{\tf P(v_1,\dots,v_{\arity(\tf P))}}{n,p,O,\avaluation} =&\avaluation(\tf P)_{n,p}(v_1,\dots,v_{\arity(\tf P)}) %
\\[2ex]
\modellabel{\phi\tor\phi'}{n,p,O,\avaluation} =& \modellabel{\phi}{n,p,O,\avaluation}\tor\modellabel{\phi'}{n,p,O,\avaluation}
\\
\modellabel{\tneg\phi}{n,p,O,\avaluation} =& \tneg\,\modellabel{\phi}{n,p,O,\avaluation}  
\\
\modellabel{\modT\phi}{n,p,O,\avaluation} =& \modT\modellabel{\phi}{n,p,O,\avaluation}  
\\
\modellabel{\texi\thea.\phi}{n,p,O,\avaluation} =& \bigvee_{v{\in}\Val} \modellabel{\phi[\thea\ssm v]}{n,p,O,\avaluation}
\\
\modellabel{\someone\phi}{n,p,O,\avaluation} =& \bigvee_{p'\in O} \modellabel{\phi}{n,p',O,\avaluation} 
\\
\modellabel{\Quorum\phi}{n,p,O,\avaluation} =& \bigvee_{O'\in\opensne} \modellabel{\phi}{n,p,O',\avaluation} 
\\[2ex]
\modellabel{\yesterday\phi}{n,p,O,\avaluation} =&
\bigvee_{\f{max}(0,n\minus 1)\leq n'<n}\modellabel{\phi}{n',p,O,\avaluation}
\\
\modellabel{\tomorrow\phi}{n,p,O,\avaluation} =&
\modellabel{\phi}{n\plus 1,p,O,\avaluation}
\\[2ex]
\modellabel{\mu\tf X.\phi}{n,p,O,\avaluation}=&
\bigl(\bigwedge\{ \varkappa\in\THREE^{\tf{Ctx}}\mid \model{\phi[\tf X\ssm\varkappa]}\leq\varkappa\}\bigr)(n,p,O,\avaluation)
\figskip
\modellabel{\varkappa}{n,p,O,\avaluation}=&
\varkappa(n,p,O,\avaluation)
\end{array}
$$ 
\caption{Denotational semantics of \QLogic (Definition~\ref{defn.denotation})}
\label{fig.3.phi.f}
\end{figure}

\begin{figure} %
$$
\begin{array}{r@{\ }l@{\ \ }r@{\ }l}
\tvF =& \modT\tvB
&
\modellabel{\tvF}{\acontext} =&\tvF
\figskip
\tvT =& \tneg\tvF
&
\modellabel{\tvT}{\acontext} =&\tvT
\\[2ex]
\modF\phi =& \modT\tneg\phi
&
\modellabel{\modF\phi}{\acontext} =&\modF\modellabel{\phi}{\acontext}
\figskip
\modB\phi =& \tneg\modT\phi\tand\tneg\modF\phi
&
\modellabel{\modB\phi}{\acontext} =&\modB\modellabel{\phi}{\acontext}
\figskip
\modTB\phi =& \tneg\modT\tneg\phi
&
\modellabel{\modTB\phi}{\acontext} =&\modTB\modellabel{\phi}{\acontext}
\figskip
\modTF\phi =& \modT\phi\tor\modF\phi
&
\modellabel{\modTF\phi}{\acontext} =&\modTF\modellabel{\phi}{\acontext}
\figskip
\modFB\phi =& \modF\phi\tor\modB\phi
&
\modellabel{\modFB\phi}{\acontext} =&\modFB\modellabel{\phi}{\acontext}
\\[2ex]
\phi\tand\phi' =& \tneg(\tneg\phi\tor \tneg\phi')
&
\modellabel{\phi\tand\phi'}{\acontext} =&\modellabel{\phi}{\acontext}\tand\modellabel{\phi'}{\acontext}
\figskip
\phi\tnotor\phi' =& \tneg\phi\tor\phi'
&
\modellabel{\phi\tnotor\phi'}{\acontext} =&\modellabel{\phi}{\acontext}\tnotor\modellabel{\phi'}{\acontext}
\figskip
\phi\timpc\phi' =& \phi\tnotor\modT\phi'
&
\modellabel{\phi\timpc\phi'}{\acontext} =&\modellabel{\phi}{\acontext}\timpc\modellabel{\phi'}{\acontext}
\figskip
\phi\equiv\phi' =& (\modT\phi\tand\modT\phi')\tor(\modB\phi\tand\modB\phi') \tor(\modF\phi\tand\modF\phi') \hspace{-5em}
\\
&&
\modellabel{\phi\equiv\phi'}{\acontext} =&\modellabel{\phi}{\acontext}\equiv\modellabel{\phi'}{\acontext}
\\[2ex]
\tall a.\phi =& \tneg\texi a.\tneg\phi
&
\modellabel{\tall a.\phi}{\acontext} =&\bigwedge_{v{\in}\Val}\modellabel{\phi[a\ssm v]}{\acontext}
\figskip
\everyone \phi =& \tneg\someone\tneg\phi
&
\modellabel{\everyone\phi}{n,p,O,\avaluation} =&\bigwedge_{p'{\in}\Pnt}\modellabel{\phi}{n,p',O,\avaluation}
\figskip
\Coquorum\phi =& \tneg\Quorum\tneg\phi
&
\modellabel{\Coquorum\phi}{n,p,O,\avaluation} =&\bigwedge_{O'{\in}\opensne}\modellabel{\phi}{n,p,O',\avaluation}
\\[2ex]
\QuorumBox\phi =& \Quorum\everyone\phi
&
\modellabel{\QuorumBox\phi}{n,p,O,\avaluation} =&\bigvee_{O'{\in}\opensne}\bigwedge_{p'\in O'}\modellabel{\phi}{n,p',O',\avaluation}
\figskip
\CoquorumDiamond\phi =& \Coquorum\someone\phi
&
\modellabel{\CoquorumDiamond\phi}{n,p,O,\avaluation} =&\bigwedge_{O'{\in}\opensne}\bigvee_{p'\in O'}\modellabel{\phi}{n,p',O',\avaluation}
\figskip
\someoneAll\phi =& \Quorum\someone\phi
&
\modellabel{\someoneAll\phi}{n,p,O,\avaluation} =&\bigvee_{O'{\in}\opensne}\bigvee_{p'\in O'}\modellabel{\phi}{n,p',O',\avaluation}
\figskip
\everyoneAll\phi =& \Coquorum\everyone\phi
&
\modellabel{\everyoneAll\phi}{n,p,O,\avaluation} =&\bigwedge_{O'{\in}\opensne}\bigwedge_{p'\in O'}\modellabel{\phi}{n,p',O',\avaluation}
\\[2ex]
\forever\phi =& \mu\tf X.\tomorrow(\phi\tand \tf X)
&
\modellabel{\forever\phi}{n,p,O,\avaluation} =&\bigwedge_{n'> n}\modellabel{\phi}{n',p,O,\avaluation}
\figskip
\sometime\phi =& \mu\tf X.\tomorrow(\phi\tor \tf X)
&
\modellabel{\sometime\phi}{n,p,O,\avaluation} =&\bigvee_{n'> n}\modellabel{\phi}{n',p,O,\avaluation}
\figskip
\infinitely\phi =& \forever\sometime\phi
&
\modellabel{\infinitely\phi}{n,p,O,\avaluation} =&\bigwedge_{n'> n}\bigvee_{n''>n'}\modellabel{\phi}{n'',p,O,\avaluation}
\figskip
\final\phi =& \sometime\forever\phi
&
\modellabel{\final\phi}{n,p,O,\avaluation} =&\bigvee_{n'> n}\bigwedge_{n''>n'}\modellabel{\phi}{n'',p,O,\avaluation}
\figskip
\recent\phi =& \mu\tf X.\yesterday(\phi\tor\tf X)
&
\modellabel{\recent\phi}{n,p,O,\avaluation} =&\bigvee_{0\leq n'<n}\modellabel{\phi}{n',p,O,\avaluation}
\figskip
\urecent\phi =& \mu\tf X.(\phi\tor\yesterday\tf X)
&
\modellabel{\urecent\phi}{n,p,O,\avaluation} =&\bigvee_{0\leq n'\leq n}\modellabel{\phi}{n',p,O,\avaluation}
\end{array}
$$
$$
\begin{array}{r@{\ }l}
\mrup{a}{\phi}{v} =&
\mu\tf X.(\yesterday(\modT\phi[a\ssm v] \tor ((\tneg\modT\texi a.\phi) \tand \tf X)))
\figskip
\modellabel{\mrup{a}{\phi}{v}}{n,p,O,\avaluation} =&
\begin{cases}
\tvF 
&\Forall{0\leq n'<n} \modellabel{\texi a.\phi}{n',p,O,\avaluation}\neq\tvT
\\
\modT\modellabel{\phi[\thea\ssm v]}{\f{max},p,O,\avaluation}
&\Exists{0\leq n'<n} \modellabel{\texi a.\phi}{n',p,O,\avaluation}=\tvT
\end{cases}
\\
&\qquad\text{where}\quad
\f{max}=\bigvee\{0\leq n'<n \mid \modellabel{\texi a.\phi}{n',p,O,\avaluation}=\tvT\}
\end{array}
$$ 
\caption{Derived syntactic sugar expressions (Definition~\ref{defn.3.derived})}
\label{fig.3.derived}
\end{figure}

Recall from Definition~\ref{defn.signature}(\ref{item.predicate.syntax.signature}) the notion of a \emph{signature} and from Definition~\ref{defn.model} the notion of a \emph{model}:
\begin{defn}[Denotation]
\label{defn.denotation}
Suppose $\Sigma=(\PredSymb,\arity)$ is a signature, and $\mathcal M$ is a model. 
Suppose $\phi$ is a closed predicate (Definition~\ref{defn.predicate.syntax}(1)).
Define the \deffont{denotation} of $\phi$ 
$$
\model{\phi}\in\THREE^{\tf{Ctx}(\Sigma,\mathcal M)}
\quad\text{such that}\quad 
\acontext \longmapsto \modellabel{\phi}{\acontext}
$$
by the rules in Figure~\ref{fig.3.phi.f}.

We will usually treat $\Sigma$ and $\mathcal M$ as fixed parameters, so we may write $\tf{Ctx}(\Sigma,\mathcal M)$ just as $\tf{Ctx}$.
\end{defn}

\begin{defn}[Derived expressions]
\label{defn.3.derived}
The syntax of \QLogic in Figure~\ref{fig.predicate.syntax} and Definition~\ref{defn.predicate.syntax} is very expressive.
In the left-hand column of Figure~\ref{fig.3.derived} we define some derived expressions --- i.e. \emph{syntactic sugar}.
For clarity and convenience we unpack their denotation in the right-hand column.

In principle we could fully unpack our sugar down to the primitive syntax from Figure~\ref{fig.3.phi.f}, but in practice we work directly with Figure~\ref{fig.3.derived} because this is more convenient.
We will discuss syntax further below, notably in Remarks~\ref{rmrk.initial.comments.denotation}, \ref{rmrk.further.comments.denotation}, and~\ref{rmrk.comments.on.semantics}.
\end{defn}

\begin{rmrk}
Usually when we talk about denotation as defined in Definition~\ref{defn.denotation} above, the signature $\Sigma$ and model $\mathcal M$ are either understood or do not matter, so we may not mention them explicitly.
Meaning will always be clear.
\end{rmrk}

\begin{rmrk}[Initial comments on the denotation]
\label{rmrk.initial.comments.denotation}
We make some initial comments on Figure~\ref{fig.3.phi.f}:
\begin{enumerate*}
\item\label{item.first.mu}
$\mu\tf X.\phi$ is a \emph{fixedpoint operator}.
This is a standard method to include the power of recursion / iteration in a logic; an instance of $\tf X$ inside $\mu\tf X.\phi$ can be thought of as a recursive call back to $\phi$.
$\mu\tf X.\phi$ is what would be written $\mathrm{LFP}_{\tf X}\phi$ in~\cite[Chapter~4]{immerman:desc} (a Wikipedia article is also accessible and clear~\cite{wiki:Fixedpoint_logic}).
We will discuss this in Subsection~\ref{subsection.muX}.
\item
In the clause $\model{\tvB}=\tvB$ in Figure~\ref{fig.3.phi.f}, the $\tvB$ on the left is the syntax symbol from Figure~\ref{fig.predicate.syntax}, and the $\tvB$ on the right is $\tvB$-the-truth-value from Definition~\ref{defn.THREE}.
\item
In the clause for $\model{v\teq v'}$ in Figure~\ref{fig.3.phi.f}, the $\teq$ on the left is the syntax symbol from Figure~\ref{fig.predicate.syntax}, and the $\tvT$, $\tvF$, and $=$ on the right are truth-values and (actual) equality.
\item
Similarly for the clauses for $\model{\tneg\phi}$ and $\model{\phi\tor\phi'}$ and $\model{\modT\phi}$; the symbols $\tneg$, $\tor$, and $\modT$ on the left are syntax, and the symbols $\tneg$, $\tor$, and $\modT$ on the right refer to functions on truth-values as per Figure~\ref{fig.3}.
\end{enumerate*} 
\end{rmrk}

\begin{rmrk}[Further comments] 
\label{rmrk.further.comments.denotation}
\leavevmode
\begin{enumerate}
\item\label{item.interpret.yesterday}
The clause 
$\modellabel{\yesterday\phi}{n,p,O,\avaluation} = \bigvee_{\f{max}(0,n\minus 1)\leq n'<n}\modellabel{\phi}{n',p,O,\avaluation}$ is just a one-line encoding of
$$
\modellabel{\yesterday\phi}{n,p,O,\avaluation}=
\begin{cases}
\tvF & n=0
\\
\modellabel{\phi}{n\minus 1,p,O,\avaluation} & n>0 . 
\end{cases}
$$
\item
We see from the right-hand column in Figure~\ref{fig.3.derived} that $\recent\phi$ returns the greatest truth-value that $\phi$ has taken in the past, and $\urecent\phi$ returns the greatest truth-value that $\phi$ has taken today or in the past.

The notation $\urecent$ (`$\recent$', underlined) is a pun\footnote{A bad one perhaps, and the reader may or may not find this pun-acceptable.} on $<$ and $\leq$ (`$<$', underlined).
\item
Informally, we see that the only naturally \emph{non-monotone} operator in the (primitive or derived) syntax is $\tneg$, which is anti-monotone.
This design feature is what powers Lemma~\ref{lemm.positive.monotonicity}.
\end{enumerate}
\end{rmrk}

\begin{nttn}
\label{nttn.logic.terminology}
We take a moment to recall some standard conventions and terminology:
\begin{enumerate*}
\item
Implications $\tnotor$ and $\timpc$ associate to the right, so that $\phi\timpc\psi\timpc\chi$ means $\phi\timpc(\psi\timpc\chi)$.
\item
Unary connectives (like $\tneg$, $\modT$, and $\someone$) bind more strongly than binary connectives (like $\tand$ and $\tnotor$), so that $\QuorumBox\tneg\phi\tand\psi$ means $(\QuorumBox\tneg\phi)\tand\psi$.
\item
The scope of a quantification $\texi a$ and $\tall a$ extends as far to the right as possible. 
Thus $\tall\thea.\phi\tand\psi$ means $\tall\thea.(\phi\tand\psi)$.
Note however that $\QuorumBox\texi\thea.\phi\tand\psi$ still means $(\QuorumBox(\texi\thea.\phi))\tand\psi$, because the $\texi$ quantifier is trapped in the scope of the $\QuorumBox$. 
\item
All the above said, we will (as is usual) use spacing and some common sense to convey scope.
For example 
$$
\texi\thea.\phi\mathbin{\tnotor}\psi
\quad\text{means}\quad
\texi\thea.(\phi\tnotor\psi), 
$$
and
$$
\texi\thea.\phi \ \tnotor \psi
\quad\text{means}\quad
(\texi\thea.\phi)\tnotor\psi. 
$$
In practice, meaning will always be clear.
\item
\begin{enumerate*}
\item\label{item.nullary.ps}
If $\arity(\tf P)=0$ then we may call $\tf P$ \deffont{nullary} or a \deffont{predicate constant symbol}.\footnote{$\tvT$, $\tvB$, and $\tvF$ are predicate constant symbols, but they are special so we give them a distinct identity in the syntax.}
\item\label{item.unary.ps}
If $\arity(\tf P)=1$ then we may call $\tf P$ a \deffont{unary} predicate symbol.
\item\label{item.atomic.predicate}
We may call a predicate of the form $\tf P(v_1,\dots,v_n)$ or $\tneg\tf P(v_1,\dots,v_n)$ an \deffont{atomic predicate}. 
\end{enumerate*}
\end{enumerate*} 
\end{nttn}

\subsection{The fixedpoint operator $\mu \tf X$}
\label{subsection.muX}

\begin{rmrk}
\label{rmrk.comments.on.semantics}
As per Remark~\ref{rmrk.initial.comments.denotation}(\ref{item.first.mu}), our fixedpoint operator $\mu\tf X.\phi$ is a method to include the power of recursion / iteration in the logic.

A mention of $\tf X$ inside $\phi$ in $\mu\tf X.\phi$ can be thought of as a recursive call back to $\phi$, and we can think of $\mu\tf X.\phi$ as encoding an iterative loop that repeatedly executes a computation encoded in $\phi$, and loops if it encounters $\tf X$ in $\phi$.

With this intuition, we can look at some of the derived syntax in Figure~\ref{fig.3.derived} that uses $\mu$:
\begin{enumerate}
\item
$\recent\phi=\mu\tf X.\yesterday(\phi\tor\tf X)$ can be imagined as the following loop:
\begin{enumerate*}
\item
Set an accumulator to $\tvF$.\footnote{We mean `accumulator' in the sense of an `accumulator variable' inside a machine that (we imagine) is running this pseudocode.  This is not a technical term from logic.}
\item\label{recent.step.0}
If we are at day~0 then terminate and return the value of the accumulator; otherwise go to yesterday.
\item
Compute the truth-value of $\phi$, and $\tor$ this truth-value with the accumulator.
\item
Loop back to step~\ref{recent.step.0}.
\end{enumerate*} 
\item
$\forever\phi=\mu\tf X.\tomorrow(\phi\tand\tf X)$ can be viewed as the following loop:
\begin{enumerate*}
\item
Set an accumulator to $\tvT$.
\item\label{forever.step.0}
Go to tomorrow.
\item
Compute the truth-value of $\phi$, and $\tand$ this truth-value with the accumulator.
\item
Loop back to step~\ref{forever.step.0}.
\item
After infinite time (see Remark~\ref{rmrk.oo} about why that makes sense) return the limit value of the accumulator.
\end{enumerate*} 
\item
$\sometime\phi=\mu\tf X.\tomorrow(\phi\tor\tf X)$ can be viewed as executing the following loop:
\begin{enumerate*}
\item
Set an accumulator to $\tvF$.
\item\label{sometime.step.0}
Go to tomorrow.
\item
Compute the truth-value of $\phi$ and $\tor$ with the accumulator.
\item
Loop back to step~\ref{sometime.step.0}.
\item
After infinite time, return the limit value of the accumulator.
\end{enumerate*}
\item\label{item.mru.program}
The clause 
$\mrup{a}{\phi}{v} = \mu\tf X.(\yesterday(\modT\phi[a\ssm v] \tor ((\tneg\modT\texi a.\phi) \tand \tf X)))$
looks more imposing, but the idea is simply to 
\begin{itemize*}
\item
iterate back in time looking for when $\texi a.\phi$ has truth-value $\tvT$, i.e. when \emph{some} value makes $\phi(a)$ be true, and 
\item
if we find such a time then we check if \emph{at that time} $\phi[a\ssm v]$ has truth-value $\tvT$ and 
\item
only if both steps above succeed, do we return $\tvT$.
\end{itemize*}
Thus, we return $\tvT$ when $v$ is one of the values that \emph{most recently} makes $\phi(a)$ be true.

In detail:
\begin{enumerate*}
\item\label{mu.step.0}
If we are at day~0, terminate with value $\tvF$; otherwise go to yesterday.
\item
If $\phi[a\ssm v]$ has truth-value $\tvT$ then terminate with value $\tvT$; otherwise continue to the next step.
\item
If $\texi a.\phi$ has truth-value $\tvT$ then terminate with value $\tvF$; otherwise go to step~\ref{mu.step.0}.
\end{enumerate*}
\end{enumerate}
\end{rmrk}

\begin{rmrk}[Comment on the omniscient observer]
\label{rmrk.oo}
The programs in Remark~\ref{rmrk.comments.on.semantics} above are executed by an omniscient observer who stands outside of the model's notion of time and space.
This is why (for example) $\forever$ can run an infinite loop --- the omniscient observer can observe it because the observer is looking at the model from outside.\footnote{Let's make this concrete.  Consider the natural numbers $\Ngeqz$.  We can easily observe this and note that $n\plus 1\gneq n$ for every $n\in\Ngeqz$; whereas a local observer trapped on the number line and constrained to perceive this line as the passage of time can make similar observation, but from the point of view of that observer, the calculation involved is neverending.  

Note that even if we could speak with this local observer, it would be no good to appeal to it to `just realise that it is on the number line', because in order to check this to a mathematical certainty the observer would have to \dots traverse all of its local notion of time to \emph{check}, amongst other things, that $n\plus 1\gneq n$.  After all, the observer might be on $\mathbb Z/p$ for some astronomically enormous $p$, i.e. their time might be on a large loop.  

The reader interested in a branch of mathematics explicitly devoted to studying these questions, and who is not afraid by the possibility of ending up seriously doubting the nature of their own reality, can start by reading up on nonstandard models of arithmetic.} 
\end{rmrk}

\begin{rmrk}
The rest of this Subsection will show how our denotation for $\mu$ from Figure~\ref{fig.3.phi.f} does indeed yield a fixedpoint operator.
The culminating result is Proposition~\ref{prop.mu.fixedpoint} (which just says that $\mu$ is indeed a fixedpoint operator) but the key technical lemma is Lemma~\ref{lemm.positive.monotonicity} (monotonicity of positive predicates).

The reader is welcome to read the rest of this Subsection, but this is all just a standard use of the Knaster-Tarski fixedpoint theorem~\cite[Theorem~2.35, page~50]{priestley:intlo}.

In fact, the real design effort has already happened: we designed the derived syntax of Figure~\ref{fig.3.derived} so that it would respect the positivity condition of Definition~\ref{defn.positive}, without having to fully expand it out.
For instance, we defined $\modTB\phi$ to be $\tneg\modT\tneg\phi$ (which clearly puts $\phi$ in a positive position) instead of $\modT\phi\tor\modB\phi$ (which is logically equivalent, but which does not).
\end{rmrk}

\begin{defn}
\label{defn.three.ctx.lattice}
Recall from Remark~\ref{rmrk.three.elementary}(\ref{item.three.elementary.bounded.lattice}) that $\THREE$ with ordering $\tvF<\tvB<\tvT$ is a finite lattice.
As is standard, the function space $\THREE^{\tf{Ctx}}$ (functions from $\tf{Ctx}$ to $\THREE$) inherits this ordering and lattice structure \deffont{pointwise}, as indicated in Figure~\ref{fig.chi.3.lattice}.

This makes $\THREE^{\tf{Ctx}}$ into a \deffont{bounded complete lattice}, meaning that it has a least and greatest element, and arbitrary joins (least upper bounds) and meets (greatest lower bounds).\footnote{Having a least and greatest element follows from having arbitrary joins and meets; the least element is the join of the empty set, and the greatest element is the meet of the empty set.  So in fact a complete lattice is already a bounded complete lattice.} 
\end{defn}

\begin{figure}
$$
\begin{array}{r@{\ }l}
\tbot =& \lambda\acontext.\tvF
\\
\ttop =& \lambda\acontext.\tvT
\\
\bigvee\mathcal X =& \lambda\acontext.\bigvee\{ \varkappa\,\acontext \mid \varkappa\in\mathcal X\}
\\
\bigwedge\mathcal X =& \lambda\acontext.\bigwedge\{ \varkappa\,\acontext \mid \varkappa\in\mathcal X\}
\\[2ex]
\varkappa\leq\varkappa' \liff& \Forall{\acontext} \varkappa\,\acontext\leq\varkappa'\,\acontext
\end{array}
$$
Above, $\mathcal X\subseteq \THREE^{\tf{Ctx}}$ and $\varkappa,\varkappa'\in\THREE^{\tf{Ctx}}$. 
\caption{Bounded complete lattice structure on the function space $\THREE^{\tf{Ctx}}$}
\label{fig.chi.3.lattice}
\end{figure}

\begin{rmrk}
\label{rmrk.leq.phi}
Our denotation $\model{\text{-}}$ from Definition~\ref{defn.denotation} and Figure~\ref{fig.3.phi.f} maps predicates to functions in $\THREE^{\tf{Ctx}}$.
One particular consequence of Definition~\ref{defn.three.ctx.lattice} is that (denotations of) predicates are naturally ordered, and we can write 
$$
\model{\phi}\leq\model{\psi}
\quad\text{which just means}\quad
\Forall{\acontext\in\tf{Ctx}}\modellabel{\phi}{\acontext}\leq\modellabel{\psi}{\acontext}.
$$
We use this fact for example in Lemma~\ref{lemm.positive.monotonicity}.
We can also use this fact, and the complete lattice structure on $\THREE^{\tf{Ctx}}$ from Definition~\ref{defn.three.ctx.lattice}, to simplify the presentation of the clause for $\mu\tf X.\phi$ in Figure~\ref{fig.3.phi.f} as follows:
$$
\model{\mu\tf X.\phi}=
\bigwedge\{ \varkappa\in\THREE^{\tf{Ctx}}\mid \model{\phi[\tf X\ssm\varkappa]}\leq\varkappa\} .
$$
\end{rmrk}

\begin{defn}
\label{defn.positive}
Recall the predicate syntax from Figure~\ref{fig.predicate.syntax} %
and suppose $\phi$ is a predicate.
Then:
\begin{enumerate*}
\item\label{item.positive.occurrence}
Call a free occurrence of $\tf X$ in $\phi$ 
\begin{itemize*}
\item
\deffont{positive} when the occurrence appears under an even number (which may be $0$) of negations $\tneg$, and 
\item
\deffont{negative} when the occurrence appears under an odd number of negations.
\end{itemize*}
For example: all free occurrences of $\tf X$ are positive in $\tf X$ and $\mu\tf X.\tneg\tf X$ and $\yesterday\tf X$; and the free occurrence of $\tf X$ in $\tneg\tf X$ is negative. 
\item
Call a predicate $\phi$ \deffont{positive} when every free occurence of $\tf X$ in $\phi$ is positive.
\end{enumerate*}
\end{defn}

\begin{lemm}
\label{lemm.positive.monotonicity}
Suppose $\mu\tf X.\phi$ is a closed predicate.
Then:
\begin{enumerate*}
\item
If $\phi$ is positive then $\model{\phi}$ is monotone as a function from $\THREE^{\tf{Ctx}}$ with the partial order $\leq$ from Figure~\ref{fig.chi.3.lattice}, to itself.
In symbols, for every $\varkappa,\varkappa'\in\THREE^{\tf{Ctx}}$:
$$
\varkappa\leq\varkappa'
\quad\text{implies}\quad
\model{\phi[\tf X\ssm\varkappa]}\leq\model{\phi[\tf X\ssm\varkappa']} . 
$$
\item
If $\tneg\phi$ is positive then $\model{\phi}$ is antimonotone as a function from $\THREE^{\tf{Ctx}}$ with the partial order $\leq$ from Figure~\ref{fig.chi.3.lattice}, to itself.
In symbols, for every $\varkappa,\varkappa'\in\THREE^{\tf{Ctx}}$:
$$
\varkappa\leq\varkappa'
\quad\text{implies}\quad
\model{\phi[\tf X\ssm\varkappa']}\leq\model{\phi[\tf X\ssm\varkappa]} . 
$$
\end{enumerate*}
(The meaning of $\model{\text{-}}\leq\model{\text{-}}$ is unpacked in Remark~\ref{rmrk.leq.phi}.)

\end{lemm}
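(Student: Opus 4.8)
The plan is to prove both parts at once, by a single structural induction on $\phi$ following the grammar of Figure~\ref{fig.predicate.syntax}. The two parts cannot be separated, because negation flips positivity: by Definition~\ref{defn.positive}, a positive free occurrence of $\tf X$ under $\tneg\phi$ is exactly a negative occurrence under $\phi$. Hence the inductive step for $\tneg$ in part~1 appeals to part~2 for the immediate subpredicate, and vice versa; so I carry both statements through the induction simultaneously.

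The engine of the argument is a single observation about the operations appearing in the denotation (Figure~\ref{fig.3.phi.f}), read against the pointwise lattice structure on $\THREE^{\tf{Ctx}}$ (Figure~\ref{fig.chi.3.lattice}): \emph{every primitive operation is monotone, with the sole exception of $\tneg$, which is antimonotone}. Concretely, inspecting the truth-tables in Figure~\ref{fig.3}: $\modT$ preserves the order $\tvF<\tvB<\tvT$ while $\tneg$ reverses it; $\tor$ is monotone in each argument (it is the binary join); and the clauses for $\someone$, $\Quorum$, $\texi a$, and $\yesterday$ are arbitrary joins $\bigvee$ (over $p'\in O$, over $O'\in\opensne$, over $v\in\Val$, and over past stages respectively), while $\tomorrow$ is a stage-shift --- all of which preserve monotonicity, since joins in a complete lattice are monotone and the substitution $[\tf X\ssm\varkappa]$ commutes with these binders.

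The cases then go as follows. The base predicates $\tvB$, $v\teq v'$, $\tf P(\dots)$ and a constant context valuation contain no free $\tf X$, so $\model{\phi[\tf X\ssm\varkappa]}$ does not depend on $\varkappa$ and both parts hold trivially; likewise any subterm of the form $\mu\tf X.\psi$ binds the unique fixedpoint variable, hence contains no free $\tf X$, so its denotation is again constant in $\varkappa$. For $\phi=\tf X$ we have $\model{\tf X[\tf X\ssm\varkappa]}=\varkappa$, so $\varkappa\leq\varkappa'$ gives part~1 directly, and $\tf X$ is positive so the hypothesis of part~2 is vacuous. For each monotone constructor ($\tor$, $\modT$, $\someone$, $\Quorum$, $\texi a$, $\yesterday$, $\tomorrow$): positivity of the whole forces positivity of each subpredicate, the inductive hypothesis (part~1) makes each subdenotation monotone, and applying the corresponding monotone operation preserves monotonicity; part~2 is the verbatim dual with ``antimonotone'' throughout. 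The crux is $\tneg\phi$: if $\tneg\phi$ is positive then every free occurrence of $\tf X$ in $\phi$ is negative, so part~2 gives $\model{\phi}$ antimonotone, and post-composing with the antimonotone $\tneg$ yields $\model{\tneg\phi}$ monotone (part~1); symmetrically, if $\tneg\tneg\phi$ is positive then $\phi$ is positive, part~1 gives $\model{\phi}$ monotone, and post-composing with $\tneg$ gives $\model{\tneg\phi}$ antimonotone (part~2).

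I expect no deep obstacle here: the difficulty is organisational rather than mathematical. The one place that needs genuine care is making the mutual induction watertight at the negation case --- correctly swapping between parts~1 and~2 while tracking the parity of negations per Definition~\ref{defn.positive} --- together with the routine check that $[\tf X\ssm\varkappa]$ commutes past the binders $\texi a$ and (vacuously) $\mu\tf X$, so that the inductive hypothesis is applied to the correct subpredicates.
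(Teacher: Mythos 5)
Your proof is correct and takes essentially the same route as the paper, which disposes of the lemma in one line as ``a routine simultaneous induction on the syntax of $\phi$, using the rules in Figure~\ref{fig.3.phi.f}''. Your write-up simply makes explicit what that induction involves --- the mutual recursion between the two parts at the $\tneg$ case, the monotonicity of all other clauses as lattice joins, and the vacuity of the bound-$\mu\tf X$ and base cases --- all of which is sound.
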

\begin{proof}
Both parts follow by a routine simultaneous induction on the syntax of $\phi$, using the rules in Figure~\ref{fig.3.phi.f}.
\end{proof}

\begin{prop}
\label{prop.mu.fixedpoint}
Suppose $\mu\tf X.\phi$ is a closed predicate and $\phi$ is positive.
Then:
\begin{enumerate*}
\item
$\model{\mu\tf X.\phi}$ is a least fixedpoint of the function $\varphi:\THREE^{\tf{Ctx}}\to\THREE^{\tf{Ctx}}$ given by
$$
\varphi(\varkappa)=\lambda\acontext.\modellabel{\phi[\tf X\ssm\varkappa]}{\acontext}. 
$$
\item
If $\acontext$ is a context then
$$
\modellabel{\mu\tf X.\phi}{\acontext}=\modellabel{\phi[\tf X\ssm \model{\mu\tf X.\phi}]}{\acontext}.
$$
\end{enumerate*}
\end{prop}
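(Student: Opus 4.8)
The plan is to recognise this as a direct instance of the Knaster--Tarski fixedpoint theorem, applied to the monotone map $\varphi$ on the bounded complete lattice $\THREE^{\tf{Ctx}}$ from Definition~\ref{defn.three.ctx.lattice}. First I would check that $\varphi$ is well-defined and monotone. It is well-defined because $\mu\tf X.\phi$ is closed, so $\tf X$ is the only free (fixedpoint) variable of $\phi$; hence for each $\varkappa\in\THREE^{\tf{Ctx}}$ the substituted predicate $\phi[\tf X\ssm\varkappa]$ is closed, and $\model{\phi[\tf X\ssm\varkappa]}\in\THREE^{\tf{Ctx}}$ is defined by the rules in Figure~\ref{fig.3.phi.f}. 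Monotonicity of $\varphi$ is exactly Lemma~\ref{lemm.positive.monotonicity}(1), using the hypothesis that $\phi$ is positive.

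For part~1, write $P=\{\varkappa\mid\varphi(\varkappa)\leq\varkappa\}$ for the set of pre-fixedpoints and $m=\bigwedge P$, which by Remark~\ref{rmrk.leq.phi} is exactly $\model{\mu\tf X.\phi}$. I would then run the standard two-step argument. \textbf{(a)} $m$ is a pre-fixedpoint: for any $\varkappa\in P$ we have $m\leq\varkappa$, so monotonicity gives $\varphi(m)\leq\varphi(\varkappa)\leq\varkappa$; thus $\varphi(m)$ is a lower bound of $P$ and therefore $\varphi(m)\leq\bigwedge P=m$. \textbf{(b)} $m$ is a fixedpoint: applying the monotone $\varphi$ to $\varphi(m)\leq m$ gives $\varphi(\varphi(m))\leq\varphi(m)$, so $\varphi(m)\in P$ and hence $m=\bigwedge P\leq\varphi(m)$; combined with \textbf{(a)} this yields $\varphi(m)=m$. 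Finally, leastness: any fixedpoint $\varkappa$ satisfies $\varphi(\varkappa)=\varkappa$, so in particular $\varphi(\varkappa)\leq\varkappa$, giving $\varkappa\in P$ and $m\leq\varkappa$. This shows $m=\model{\mu\tf X.\phi}$ is the least fixedpoint of $\varphi$.

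Part~2 is then just the pointwise reading of the fixedpoint equation $\varphi(m)=m$: evaluating at a context $\acontext$ and unfolding the definition $\varphi(\varkappa)=\lambda\acontext.\modellabel{\phi[\tf X\ssm\varkappa]}{\acontext}$ with $\varkappa=m=\model{\mu\tf X.\phi}$ gives $\modellabel{\mu\tf X.\phi}{\acontext}=\modellabel{\phi[\tf X\ssm\model{\mu\tf X.\phi}]}{\acontext}$, as required.

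I do not expect a genuine obstacle here, since all the real work has been front-loaded into Lemma~\ref{lemm.positive.monotonicity} (monotonicity) and Definition~\ref{defn.three.ctx.lattice} (completeness of the lattice). The only point requiring a little care is the bookkeeping that $\varphi$ is literally the map $\varkappa\mapsto\model{\phi[\tf X\ssm\varkappa]}$ --- i.e.\ that substituting the context valuation $\varkappa$ for $\tf X$ and then taking denotation agrees with the clause $\modellabel{\varkappa}{\acontext}=\varkappa(\acontext)$ in Figure~\ref{fig.3.phi.f} --- but this is immediate from the definitions, so the argument is essentially pure lattice theory once monotonicity is in hand.
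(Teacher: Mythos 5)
Your proposal is correct and takes essentially the same approach as the paper: the paper's proof simply invokes the Knaster--Tarski fixedpoint theorem for the bounded complete lattice $\THREE^{\tf{Ctx}}$ (citing Lemma~\ref{lemm.positive.monotonicity} implicitly for monotonicity), and you have merely unfolded that citation into the standard two-step argument. Everything you write checks out against Figure~\ref{fig.3.phi.f} and Remark~\ref{rmrk.leq.phi}.
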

\begin{proof}
By the Knaster-Tarski fixedpoint theorem for $\THREE^{\tf{Ctx}}$ as a bounded lattice as per Definition~\ref{defn.three.ctx.lattice}.
A concise but exceptionally clear presentation is in~\cite[Theorem~2.35, page~50]{priestley:intlo} (a Wikipedia article~\cite{wiki:Knaster-Tarski_theorem} is also accessible and clear). 
\end{proof}

\subsection{Validity}

We defined in Definition~\ref{defn.tv.ment} what it means for a truth-value to be \emph{valid}; now we extend this notion to predicates-in-context: 
\begin{defn}[Validity]
\label{defn.validity.judgement}
Suppose that:
\begin{itemize*}
\item
$\Sigma$ is a signature (Definition~\ref{defn.signature}(\ref{item.predicate.syntax.signature})).
We take $\Sigma$ to be fixed for the rest of this Definition, which means that we do not explicitly include it (e.g. as a subscript) in the notation.
\item
$\mathcal M$ is a model (Definition~\ref{defn.model}).
Again, we fix $\mathcal M$ as a parameter for the rest of this definition.
\end{itemize*}
Suppose further that:
\begin{itemize*}
\item
$\acontext=(n,p,O,\avaluation)$ is a context in $\tf{Ctx}(\Sigma,\mathcal M)$, so that by Definition~\ref{defn.contexts}(\ref{item.contexts}) $n\in\Time$ and $p\in\Pnt$ and $O\in\opensne$ and $\avaluation$ is a valuation to $\mathcal M$.
\item
$\phi$ is a closed predicate in $\Sigma$ and $\mathcal M$ (cf. Remark~\ref{rmrk.pedant.point}(\ref{item.model.matters})).
\end{itemize*}
Then we define: %
\begin{enumerate*}
\item\label{item.ctx.ment.phi}
$\acontext\ment\phi$ means $\ment\modellabel{\phi}{\acontext}$, which by Definition~\ref{defn.tv.ment}(\ref{item.tv.ment}) means $\modellabel{\phi}{\acontext}\in\threeValid$, which by Definition~\ref{defn.tv.ment}(\ref{item.tv.ment.correct}) means $\modellabel{\phi}{\acontext}=\tvT\lor\modellabel{\phi}{\acontext}=\tvB$.

Because this is a core definition, we sum this up for the reader's clear reference:
$$
\acontext\ment\phi
\quad\liff\quad
{\ment\modellabel{\phi}{\acontext}}
\quad\liff\quad
\modellabel{\phi}{\acontext}\in\threeValid
\quad\liff\quad
\modellabel{\phi}{\acontext}=\tvT\lor\modellabel{\phi}{\acontext}=\tvB.
$$
We read this as \deffont{$\phi$ is valid in the context $\acontext$.}
\item
$n,p\mentval\phi$ means $\Forall{O\in\opensne}(n,p,O\mentval\phi)$.
\item\label{item.ment.n}
$n\mentval\phi$ means $\Forall{p\in\Pnt}\Forall{O\in\opensne}(n,p,O\mentval\phi)$.
\item\label{item.mentval}
$\mentval\phi$ means $\Forall{n\in\Time}\Forall{p\in\Pnt}\Forall{O\in\opensne}(n,p,O\mentval\phi)$.
\item\label{item.ment.phi}
$\ment\phi$ means $\Forall{\acontext\in\tf{Ctx}(\Sigma,\mathcal M)}(\acontext\ment\phi)$.
\end{enumerate*}
\end{defn}

\begin{figure}
$$
\begin{array}{l@{\quad}l@{\quad}l}
\mentval\Theta & \text{means} & \Forall{\theta\in\Theta}({\mentval\theta})
\figskip
\Theta\ment\phi & \text{means} &
\Forall{\avaluation\in\Valuation(\Sigma,\mathcal M)}({\mentval\Theta})\limp ({\mentval\phi}) 
\end{array}
$$
\caption{Validity of $\phi$ in a theory $\Theta$ (Definition~\ref{defn.theory.axiom})}
\label{fig.validity}
\end{figure}

Later on, when we start to build declarative axiomatisations of consensus algorithms, we will express these as \emph{theories}.
We include the relevant definition here, since it follows on naturally from Definition~\ref{defn.validity.judgement}: 
\begin{defn}[Validity in a theory]
\label{defn.theory.axiom}
Suppose that:
\begin{itemize*}
\item
$\Sigma$ is a signature and $\mathcal M=(\Pnt,\opens,\Val)$ is a model.
\item
$\phi$ is a closed predicate in $\Sigma$ and $\mathcal M$.
\item
$\Theta=\{\theta_1,\dots,\theta_n\}$ is a set of closed predicates.\footnote{It would be unusual for $\Theta$, which is intended to be a collection of axioms, to mention specific elements $v\in\Val$ or $\varkappa\in\THREE^{\tf{Ctx}(\Sigma,\mathcal M)}$ from a model (cf. Remark~\ref{rmrk.pedant.point}), but the maths does not seem to mind, so we do not bother to exclude this possibility.}
\end{itemize*}
Then:
\begin{enumerate*}
\item\label{item.theory}
A \deffont{theory} is a pair $(\Sigma,\Theta)$ of 
\begin{itemize*}
\item
a signature $\Sigma$, and 
\item
a set $\Theta$ of \deffont{axioms}, which are closed predicates over $\Sigma$ and $\mathcal M$.
\end{itemize*}
\item\label{item.axioms.valid}
Given a valuation $\avaluation\in\Valuation(\Sigma,\mathcal M)$ (Definition~\ref{defn.contexts}(\ref{item.contexts.valuation})), we define 
$$
\mentval\Theta
\quad\text{and}\quad
\Theta\ment\phi
$$ 
as per Figure~\ref{fig.validity}. 
When $\Theta\ment\phi$ holds, we say that $\phi$ is \deffont{valid in the theory $\Theta$} (in $\Sigma$ and $\mathcal M$). 
\end{enumerate*}
Examples of theories are in Figures~\ref{fig.ThyPaxOne} and~\ref{fig.ThySPax}.
\end{defn}

\begin{rmrk}
\label{rmrk.thy.quantifier.scope}
Note the scope of the quantifiers in Definition~\ref{defn.theory.axiom}:
\begin{enumerate*}
\item
The assertion $\Theta\ment\phi$ does \emph{not} mean 
$$
\Forall{\avaluation}(\mentval(\theta_1\tand\dots\tand\theta_n) \timp \phi),
$$ 
which is equivalent to 
$$
\Forall{\acontext}(\acontext\ment(\theta_1\tand\dots\tand\theta_n)\limp \acontext\mentval\phi) .
$$ 
Above, the short double arrow `$\timp$' means `if the LHS is valid then so is the RHS'; see Remark~\ref{rmrk.rich.implication}(\ref{item.rich.implication.timp}).
The long double arrow `$\limp$' is just logical implication. 
\item
The assertion $\Theta\ment\phi$ means 
$$
\Forall{\avaluation}(\mentval\theta_1\tand\dots\tand\theta_n) \limp (\mentval\phi) .
$$
In words: if $\Theta$ holds everywhere in the model under the valuation $\avaluation$, then $\phi$ holds everywhere in the model under the valuation $\avaluation$.
\end{enumerate*}
We spell this out further.
The assertion $\Theta\ment\phi$ means:
$$
\Forall{\avaluation}(\Forall{n,p,O}n,p,O\mentval\theta_1\tand\dots\tand\theta_n) \limp (\Forall{n,p,O}n,P,O\mentval\phi),
$$
where above $n\in\Time$, $p\in\Pnt$, and $O\in\opensne$.

This pattern of quantifiers is what makes $\Theta$ a theory, rather than just a set of assumptions: as a theory, $\Theta$ is a restriction on the valuation $\avaluation$.
\end{rmrk}

\begin{rmrk}
One final pedant point: 
in Definition~\ref{defn.theory.axiom}(\ref{item.theory}) we defined a theory to be a pair $(\Sigma,\Theta)$ of a signature and a set of closed predicates over a signature $\Sigma$ \emph{and} over a model $\mathcal M$.
This technical convenience lets us write axioms that directly reference specific values in the model (instead of introducing explicit constant symbols to our signature, and interpreting them as \dots values in the model).

We will exploit to assume a distinguished `undefined' value $\udfn$ in axiom \rulefont{PaxPropose?} in Figure~\ref{fig.ThyPaxOne}, as per Definition~\ref{defn.ThyPaxOne}(\ref{item.ThyPaxOne.axioms}).
\end{rmrk}

\subsection{Comments on notation: eliding parameters, and functional notation}

\begin{nttn}[Eliding parameters]
\label{nttn.validity.irrelevant.context}
It will be quite common that we make validity judgements where it is clear that part of the context is irrelevant.
For example the reader can check the denotational semantics in Figure~\ref{fig.3.phi.f} and the definition of validity in Definition~\ref{defn.validity.judgement}(\ref{item.ctx.ment.phi}), to see that:
\begin{enumerate*}
\item
$\acontext \ment \tvB$ is valid; we do not need to know the values in $\acontext$. 
\item
The validity of $n,p,O\mentval \tf P(v)$ depends on $\avaluation$, $n$, and $p$ --- but not on $O$.
\item
The validity of $n,p,O\mentval\QuorumBox\phi$ depends on $n$ and $\avaluation$ (and of course on $\phi$), but it does not depend on the values of $p$ and $O$.
\end{enumerate*}
In such cases, we may omit parameters in the context if they do not matter --- we may even omit the entire context if appropriate.
Thus, we may write $\ment\tvT$, $\ment \thea\teq\thea$, and $n\mentval\QuorumBox\phi$.

This notation remains within the scope of Definition~\ref{defn.validity.judgement} --- which universally quantifies over any omitted parameter --- so omitting parameters in this way is not an abuse of notation.
However, if the reader sees us omit a value from a context, there may be two reasons for it:
\begin{enumerate*}
\item
It may indicate a universal quantification over all possible values.
For example, this is clearly what we intend when we write $\mentval\theta$ in Figure~\ref{fig.validity} (top-right).
\item
It may indicate that we have elided irrelevant parameters.
For example, this is clearly what we intend in the example of $\ment\tvT$ above.
\end{enumerate*}
In practice, it will always be clear which is intended and meaning will always be clear.\footnote{Actually, meaning will be \emph{clearer}.  

In early drafts of this paper, we dutifully included all parameters, whether relevant or not.  
But this turned out to be just too confusing: the notational overhead of carrying around values that we did not care about was burdensome; but furthermore, it was actually easier to understand the sequents once we have dropped irrelevant parameters, because this helped us to keep track of which parameters \emph{did} matter.}
\end{nttn}

\begin{figure}
$$
\begin{array}{l@{\quad}l@{\quad}l@{\quad}l@{\quad}l}
\texiunique  a.\tf{write}(a)
&
\QuorumBox\texi a.\tf{send}(a)
&
\texi a.\everyoneAll\tf{decide}(a)
&
\tall a.\modTF \tf{propose}(a)
&
\mrup{a}{(\someone\tf{accept}(a))}{v}
\\[1ex]
& \text{may be written as}
\\[1ex]
\texiunique \tf{write} 
&
\QuorumBox\texi\tf{send} 
&
\texi\everyoneAll\tf{decide}
&
\tall\modTF\tf{propose} 
&
\mru{(\someone\tf{accept})}{v}
\end{array}
$$
\caption{Abbreviations inspired by higher-order logic (Notation~\ref{nttn.hos})}
\label{fig.hos}
\end{figure}

\begin{nttn}[Functional notation]
\label{nttn.hos}
We will frequently apply quantifiers (like $\texi$, $\texiunique$, or $\tall$) to unary predicate symbols, possibly with modalities.
For example:
$$
\tall a.\QuorumBox\tf{decide}(a)
\quad\text{is}\quad\text{$\tall$ quantifying $a$ in\ $(\QuorumBox\tf{decide})$\ applied to $a$.}
$$ 
For clarity, we may use a \deffont{functional notation} inspired by higher-order logic (HOL) to elide the variable, as illustrated in Figure~\ref{fig.hos}.
Meaning will always be clear.

See also Remark~\ref{rmrk.functional.notation.in.paxos} for examples of this notation applied in an axiomatisation.
\end{nttn}

\section{Properties of predicates}
\label{sect.predicates}

This Section collects elementary technical results which we will need later.
Some of these results, and their proofs, are more technical than others, but they are not necessarily surprising if we look at what they mean in the model.
Mostly the statements are simple and it is obvious by common sense that they should work (simple is good, because it means greater productivity and fewer errors).

A typical example is Lemma~\ref{lemm.tomorrow.forever.recent}(\ref{item.tomorrow.forever.recent.1}): once we look at what it actually says about the model, this is that ``if $\phi$ is true today, then in the future it will have been true in the past''.
For saying this a mathematician earns a living? --- we might think.

What is less evident is \emph{which} elementary properties will be important later, nor is it obvious what definitions and structures make these properties all fit together nicely.
So the reader should note that the simplicity of this Section is a design feature, which was only attained in retrospect and after some work.

Broadly speaking, we start this Section with the most elementary results, and build complexity from there.
The reader is welcome to read this Section now, or to skip it and refer back to its results as they are used in later proofs.

\subsection{Equivalence of predicates, and some (de Morgan) equivalences}
\label{subsect.equivalence.of.predicates}

For the rest of this Section we make a fixed but arbitrary choice of signature $\Sigma$ (Definition~\ref{defn.signature}(\ref{item.predicate.syntax.signature})) and model $\mathcal M$ (Definition~\ref{defn.model}).
All predicates considered will be over $\Sigma$ and $\mathcal M$, unless otherwise stated.

We generalise equivalence $\equiv$ (Definition~\ref{defn.equivalent}) from truth-values, to predicates viewed as functions from contexts to truth-values:
\begin{defn}[Equivalence of predicates]
\label{defn.phi.equivalent}
Suppose $\phi$ and $\phi'$ are closed predicates.

Call $\phi$ and $\phi'$ \deffont{extensionally equivalent} when 
$$
\model{\phi}=\model{\phi'} .
$$
Unpacking Definition~\ref{defn.denotation}, this means 
$$
\Forall{\acontext\in\tf{Ctx}(\Sigma,\mathcal M)}(\modellabel{\phi}{\acontext}=\modellabel{\phi'}{\acontext}) .
$$
For brevity, we may write `extensionally equivalent' just as `equivalent' henceforth.
\end{defn}

\begin{lemm}
\label{lemm.equiv}
Suppose $\phi$ and $\phi'$ are closed predicates and $\acontext$ is a context.
Then: 
\begin{enumerate*}
\item\label{item.equiv.1}
$\acontext\ment\phi\equiv\phi'$ if and only if $\modellabel{\phi}{\acontext}=\modellabel{\phi'}{\acontext}$.
\item\label{item.equiv.2}
$\ment\phi\equiv\phi'$ if and only if $\model{\phi}=\model{\phi'}$ (i.e. $\phi$ and $\phi'$ are equivalent as per Definition~\ref{defn.phi.equivalent}).
\end{enumerate*}
\end{lemm}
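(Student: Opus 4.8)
The plan is to reduce everything to an inspection of the truth-table for $\equiv$ on $\THREE$ (Figure~\ref{fig.3.iff}), and then to obtain part~(\ref{item.equiv.2}) from part~(\ref{item.equiv.1}) by quantifying over contexts. The only substantive observation needed is that the truth-value operation $\equiv$ is \emph{two-valued} and detects literal equality: it returns $\tvT$ on equal arguments and $\tvF$ otherwise, never returning $\tvB$.

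First I would prove part~(\ref{item.equiv.1}). Fix the context $\acontext$ and abbreviate $\f{tv}=\modellabel{\phi}{\acontext}$ and $\f{tv}'=\modellabel{\phi'}{\acontext}$. By the denotation clause for $\equiv$ in Figure~\ref{fig.3.derived}, we have $\modellabel{\phi\equiv\phi'}{\acontext}=\f{tv}\equiv\f{tv}'$, where the $\equiv$ on the right is the truth-value operation of Figure~\ref{fig.3.iff}. Reading off that table, $\f{tv}\equiv\f{tv}'=\tvT$ when $\f{tv}=\f{tv}'$ and $\f{tv}\equiv\f{tv}'=\tvF$ otherwise; crucially, the output always lies in $\{\tvT,\tvF\}$ and is never $\tvB$. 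Now by Definition~\ref{defn.validity.judgement}(\ref{item.ctx.ment.phi}), $\acontext\ment\phi\equiv\phi'$ means $\modellabel{\phi\equiv\phi'}{\acontext}\in\threeValid=\{\tvT,\tvB\}$. Since the value in question lies in $\{\tvT,\tvF\}$, membership in $\{\tvT,\tvB\}$ collapses to the value being exactly $\tvT$, which by the table holds precisely when $\f{tv}=\f{tv}'$, i.e. precisely when $\modellabel{\phi}{\acontext}=\modellabel{\phi'}{\acontext}$. This establishes part~(\ref{item.equiv.1}).

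Part~(\ref{item.equiv.2}) is then immediate. By Definition~\ref{defn.validity.judgement}(\ref{item.ment.phi}), $\ment\phi\equiv\phi'$ unfolds to $\Forall{\acontext\in\tf{Ctx}(\Sigma,\mathcal M)}(\acontext\ment\phi\equiv\phi')$. Applying part~(\ref{item.equiv.1}) at each context, this is equivalent to $\Forall{\acontext\in\tf{Ctx}(\Sigma,\mathcal M)}(\modellabel{\phi}{\acontext}=\modellabel{\phi'}{\acontext})$, which by Definition~\ref{defn.phi.equivalent} (unfolding Definition~\ref{defn.denotation}) is precisely $\model{\phi}=\model{\phi'}$.

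There is no real obstacle here; the entire content is the remark that the truth-value operation $\equiv$ is two-valued and returns $\tvT$ exactly on equal arguments, so that the validity test ``$\in\{\tvT,\tvB\}$'' degenerates to an equality test. In particular the $\tvB$-branch of validity plays no role, because $\equiv$ itself never outputs $\tvB$; this is exactly what makes $\equiv$ capture \emph{literal} equivalence rather than mere co-validity (contrast $\tlatticeiff$ and $\tiffcc$ in Figure~\ref{fig.3.iff}, whose tables do take the value $\tvB$). Everything else is a routine unfolding of the definitions of denotation and validity.
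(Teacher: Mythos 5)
Your proof is correct and follows the same route as the paper's: unpack validity via Definition~\ref{defn.validity.judgement}(\ref{item.ctx.ment.phi}), read off the denotation clause and truth-table for $\equiv$ to see that it returns $\tvT$ exactly on equal arguments (and never $\tvB$), and then obtain part~(\ref{item.equiv.2}) by quantifying part~(\ref{item.equiv.1}) over all contexts. The only difference is that you make explicit the two-valuedness observation that the paper leaves implicit in its appeal to Figure~\ref{fig.3.derived}.
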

\begin{proof}
We consider each part in turn:
\begin{enumerate}
\item
By Definition~\ref{defn.validity.judgement}(\ref{item.ctx.ment.phi}) $\acontext\ment\phi\equiv\phi'$ means $\modellabel{\phi\equiv\phi'}{\acontext}\in\threeValid$, and by the clause for $\equiv$ in Figure~\ref{fig.3.derived} this is if and only if $\modellabel{\phi}{\acontext}=\modellabel{\phi'}{\acontext}$.
\item
By Definition~\ref{defn.validity.judgement}(\ref{item.ment.phi}), $\ment\phi\equiv\phi'$ means precisely $\Forall{\acontext}\acontext\ment\phi\equiv\phi'$.
We use part~\ref{item.equiv.1} of this result.
\qedhere\end{enumerate}
\end{proof}

We will see many equivalences in what follows.
We start by using Definition~\ref{defn.phi.equivalent} and Lemma~\ref{lemm.equiv} to express some standard equivalences in logic, having to do with implication being expressible using negation and disjunction, and with \emph{de Morgan} dualities between modalities and connectives:
\begin{lemm}
\label{lemm.minimal.syntax}
For each of the equivalences $\f{LHS}\equiv\f{RHS}$ listed in Figure~\ref{fig.easy.equivalences}, we have that $\f{LHS}$ is extensionally equal to $\f{RHS}$ in the sense of Definition~\ref{defn.phi.equivalent}.
\end{lemm}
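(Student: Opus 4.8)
The plan is to reduce each listed equivalence to a computation in the truth-value lattice $\THREE$. By Definition~\ref{defn.phi.equivalent}, showing that $\f{LHS}$ and $\f{RHS}$ are extensionally equal amounts to showing $\model{\f{LHS}}=\model{\f{RHS}}$ as elements of $\THREE^{\tf{Ctx}}$; by the pointwise definition of the denotation (Definition~\ref{defn.denotation}) this is the same as checking, for every context $\acontext=(n,p,O,\avaluation)$, that $\modellabel{\f{LHS}}{\acontext}=\modellabel{\f{RHS}}{\acontext}$. So first I would fix an arbitrary context and compute both sides using the clauses in Figures~\ref{fig.3.phi.f} and~\ref{fig.3.derived}, driving the equation down to the level of bare truth-values in $\THREE$, where it can be settled by a finite check.

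For the equivalences involving only the finitary connectives --- for instance implication expressed via negation and disjunction, $\phi\tnotor\phi'\equiv\tneg\phi\tor\phi'$, or conjunction as $\tneg(\tneg\phi\tor\tneg\phi')$ --- the denotation clauses are homomorphic: each syntactic connective denotes its namesake operation on $\THREE$ (e.g.\ $\modellabel{\tneg\phi}{\acontext}=\tneg\,\modellabel{\phi}{\acontext}$ and $\modellabel{\phi\tor\phi'}{\acontext}=\modellabel{\phi}{\acontext}\tor\modellabel{\phi'}{\acontext}$). After unfolding, both sides become the same composite truth-value function applied to $\modellabel{\phi}{\acontext}$ and $\modellabel{\phi'}{\acontext}$, so each identity is a small case analysis against the truth-tables in Figure~\ref{fig.3}.

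The de Morgan dualities between modalities and connectives are the only structural part, and the single fact that drives all of them is that $\tneg$ is an \emph{order-reversing involution} on $\THREE$: from its truth-table, $\tneg\tneg=\id$ and $\tneg$ reverses $\tvF<\tvB<\tvT$ (note $\tneg\tvB=\tvB$). Hence $\tneg$ is a lattice anti-isomorphism and interchanges arbitrary joins and meets, i.e.\ $\tneg\bigvee_i x_i=\bigwedge_i\tneg x_i$ and $\tneg\bigwedge_i x_i=\bigvee_i\tneg x_i$ for any indexed family in $\THREE$. To prove, say, $\everyone\phi\equiv\tneg\someone\tneg\phi$ I would unfold the right-hand side to $\tneg\bigvee_{p'}\tneg\,\modellabel{\phi}{n,p',O,\avaluation}$, push $\tneg$ inside the join to get $\bigwedge_{p'}\tneg\tneg\,\modellabel{\phi}{n,p',O,\avaluation}=\bigwedge_{p'}\modellabel{\phi}{n,p',O,\avaluation}$, and recognise this as $\modellabel{\everyone\phi}{\acontext}$. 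The dualities for $\tall/\texi$ and $\Coquorum/\Quorum$ have the same shape, differing only in the index set ($\Val$, or $\opensne$) over which the join/meet ranges.

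The one place to be careful is any equivalence whose sides use a fixedpoint modality (such as $\forever$, $\sometime$, or $\recent$). There I would not unfold $\mu$ directly; instead I would use the closed-form denotations already recorded in the right-hand column of Figure~\ref{fig.3.derived} --- these are exactly the meets/joins over future or past stages, and are justified by Proposition~\ref{prop.mu.fixedpoint} --- and then apply the same join/meet-swapping argument. Thus the only genuine obstacle is bookkeeping: ensuring each unfolding invokes the correct denotation clause and that the index set of the quantifier or modality passes through the $\tneg$-duality unchanged. No step requires anything beyond the three-element truth-tables and the involutivity of $\tneg$.
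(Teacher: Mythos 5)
Your proposal is correct and follows essentially the same route as the paper, whose proof is just the one-line remark that the equivalences follow ``by routine calculations using Figures~\ref{fig.3.phi.f} and~\ref{fig.3.derived} and facts of the lattice $\THREE$''; your write-up is a faithful elaboration of exactly those calculations, with the right key facts identified (compositionality of the denotation, $\tneg$ being an order-reversing involution hence swapping joins and meets, and the closed-form denotations for the $\mu$-defined modalities).
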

\begin{proof}
By routine calculations using Figures~\ref{fig.3.phi.f} and~\ref{fig.3.derived} and facts of the lattice $\THREE=(\tvF<\tvB<\tvT)$.
\end{proof}

\begin{figure}
$$
\begin{array}{r@{\ }l@{\quad}r@{\ }l@{\quad}r@{\ }l@{\quad}r@{\ }l}
\phi\tand\phi' \equiv& \tneg(\tneg\phi\tor\tneg\phi')
&
\phi\tor\phi' \equiv& \tneg(\tneg\phi\tand\tneg\phi')
&
\tneg\phi\equiv& \phi\timpc\tvF \equiv\phi\tnotor\tvF %
\\
\phi\timpc\phi' \equiv& \phi\tnotor\modT\phi'
&
\phi\tnotor\phi' \equiv& \tneg\phi\tor\phi'
&
\phi\timpc\phi' \equiv& \phi\tnotor\modT\phi'
\\
\everyone\phi\equiv&\tneg\someone\tneg\phi
&
\someone\phi\equiv&\tneg\everyone\tneg\phi
\\
\Coquorum\phi\equiv&\tneg\Quorum\tneg\phi
&
\Quorum\phi\equiv&\tneg\Coquorum\tneg\phi
\\
\forever\phi\equiv&\tneg\sometime\tneg\phi
&
\sometime\phi\equiv&\tneg\forever\tneg\phi
\\
\tall a.\phi\equiv&\tneg\texi a.\tneg\phi
&
\texi a.\phi\equiv&\tneg\tall a.\tneg\phi
&
\modTF\phi\equiv&\modT\phi\tor\modT\tneg\phi
\\
\modB\phi\equiv&\tneg\modTF\phi
&
\modT\phi\equiv&\tneg\modFB\phi
&
\modF\phi\equiv&\tneg\modTB\phi
\end{array}
$$
\caption{Some easy equivalences}
\label{fig.easy.equivalences}
\end{figure}

\subsection{Two commutation lemmas}
\label{subsect.two.commutation.results}

We note some simple commutation lemmas.
These results are all easy facts of our model, but it only takes a few lines to spell them out:

\begin{lemm}[Commuting connectives]
\label{lemm.commuting.connectives}
Suppose $\phi$ and $\phi'$ are closed predicates and suppose
$$
\bullet\in\{\tor,\tand\} 
\ \ \text{and}\ \ 
\tf M\in\{\modT,\modTB\} . 
$$
Then
$$
\ment\tf M(\phi\bullet\phi')\equiv (\tf M\phi\bullet\tf M\phi') .
$$
\end{lemm}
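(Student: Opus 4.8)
The plan is to reduce this statement about validity of an equivalence to a finite check on the three truth-values. First I would invoke Lemma~\ref{lemm.equiv}(\ref{item.equiv.2}), which says that $\ment \psi \equiv \psi'$ holds exactly when $\model{\psi} = \model{\psi'}$, i.e. when $\psi$ and $\psi'$ are extensionally equivalent. Applying this with $\psi = \tf M(\phi \bullet \phi')$ and $\psi' = (\tf M\phi \bullet \tf M\phi')$, the claim becomes: for every context $\acontext$,
$$
\modellabel{\tf M(\phi\bullet\phi')}{\acontext} = \modellabel{\tf M\phi\bullet\tf M\phi'}{\acontext}.
$$

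Next I would push the denotation through the connectives and modalities using the clauses in Figures~\ref{fig.3.phi.f} and~\ref{fig.3.derived}. Since denotation commutes pointwise with $\tor$, $\tand$, $\modT$, and $\modTB$ (the derived clauses for $\tand$ and $\modTB$ in Figure~\ref{fig.3.derived} are exactly of this commuting form), the left-hand side equals $\tf M\bigl(\modellabel{\phi}{\acontext}\bullet\modellabel{\phi'}{\acontext}\bigr)$ and the right-hand side equals $\bigl(\tf M\modellabel{\phi}{\acontext}\bigr)\bullet\bigl(\tf M\modellabel{\phi'}{\acontext}\bigr)$. Writing $a = \modellabel{\phi}{\acontext}$ and $b = \modellabel{\phi'}{\acontext}$, which range over $\THREE$, it therefore suffices to establish the purely truth-value identity
$$
\tf M(a\bullet b) = (\tf M a)\bullet(\tf M b)
\qquad\text{for all } a,b\in\THREE,\ \tf M\in\{\modT,\modTB\},\ \bullet\in\{\tor,\tand\}.
$$

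The cleanest way to finish is the observation that $\THREE$ is a chain ($\tvF<\tvB<\tvT$), so $\tand$ and $\tor$ are just binary $\min$ and $\max$; and any \emph{monotone} self-map of a chain automatically preserves $\min$ and $\max$ (if $a\le b$ then $f(\min(a,b)) = f(a) = \min(f(a),f(b))$, and dually for $\max$). From the truth-tables in Figure~\ref{fig.3} both $\modT$ (sending $\tvF,\tvB\mapsto\tvF$, $\tvT\mapsto\tvT$) and $\modTB$ (sending $\tvF\mapsto\tvF$, $\tvB,\tvT\mapsto\tvT$) are monotone, so both preserve $\tand$ and $\tor$, giving the identity. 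Alternatively one can simply verify the four $(\tf M,\bullet)$ combinations directly off the truth-tables — there is really no obstacle here, as each is a routine finite check.

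The only genuinely noteworthy point is \emph{why the list of modalities is restricted to $\{\modT,\modTB\}$}: the monotonicity argument is exactly what fails for $\modF$ and $\modB$, since neither is monotone on $\THREE$ (e.g. $\modB\tvF=\tvF$ but $\modB\tvB=\tvT$ and $\modB\tvT=\tvF$), so those modalities do not commute with $\tand$ and $\tor$. I would flag this so the reader sees that the hypothesis $\tf M\in\{\modT,\modTB\}$ is pulling real weight rather than being incidental.
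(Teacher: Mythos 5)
Your proposal is correct and follows essentially the same route as the paper, which proves this in one line by citing Lemma~\ref{lemm.equiv}(\ref{item.equiv.2}), Figures~\ref{fig.3.phi.f} and~\ref{fig.3.derived}, and lattice facts about $\THREE$; you have merely unpacked that "routine" check, and your monotone-map-on-a-chain observation is a clean way to package the lattice fact. The remark on why $\modF$ and $\modB$ are excluded is a nice bonus but not part of the paper's argument.
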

\begin{proof}
Routine from Lemma~\ref{lemm.equiv}(\ref{item.equiv.2}) and Figures~\ref{fig.3.phi.f} and~\ref{fig.3.derived}, and facts of $\THREE$ as a lattice.
\end{proof}

\begin{lemm}[Commuting modalities]
\label{lemm.someone.commutation}
Suppose $\phi$ is a closed predicate and suppose 
$$
\mathsf Q\in\{\texi,\tall,\someone,\everyone,\Quorum,\Coquorum,\recent,\urecent,\yesterday,\tomorrow,\forever,\sometime,\infinitely,\final\}.
$$
Then 
$$
\ment\mathsf Q\modT\phi \equiv \modT\mathsf Q\phi
\quad\text{and}\quad
\ment\mathsf Q\modTB\phi \equiv \modTB\mathsf Q\phi.
$$
\end{lemm}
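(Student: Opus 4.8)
The plan is to reduce the validity statement to extensional equality of denotations, and then push $\modT$ (respectively $\modTB$) through the joins and meets out of which each $\mathsf Q$ is built. By Lemma~\ref{lemm.equiv}(\ref{item.equiv.2}), the assertion $\ment \mathsf Q\modT\phi \equiv \modT\mathsf Q\phi$ holds exactly when $\model{\mathsf Q\modT\phi}=\model{\modT\mathsf Q\phi}$, i.e. when the two sides agree at every context $\acontext$ (and similarly for the $\modTB$ version). So it suffices to check this pointwise equality of denotations, working directly from the clauses in Figures~\ref{fig.3.phi.f} and~\ref{fig.3.derived}.

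The one fact that drives everything is that, as functions $\THREE\to\THREE$, both $\modT$ and $\modTB$ preserve arbitrary joins and meets:
$$
\modT\Bigl(\bigvee_{i}x_i\Bigr)=\bigvee_i\modT x_i
\qquad\text{and}\qquad
\modT\Bigl(\bigwedge_{i}x_i\Bigr)=\bigwedge_i\modT x_i,
$$
together with the same two identities with $\modTB$ in place of $\modT$, for any family $(x_i)$ in $\THREE$. These are immediate from the truth-tables, since $\modT$ is the characteristic function of the up-set $\{\tvT\}$ and $\modTB$ the characteristic function of the up-set $\threeValid=\{\tvT,\tvB\}$: for instance $\modTB(\bigvee_i x_i)=\tvT$ iff $\bigvee_i x_i\neq\tvF$ iff some $x_i\neq\tvF$ iff $\bigvee_i\modTB x_i=\tvT$, and the remaining three identities fall out the same way (both sides lie in $\{\tvF,\tvT\}$, so agreeing on when they equal $\tvT$ suffices). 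This is exactly the arbitrary-index generalisation of the binary commutations in Lemma~\ref{lemm.commuting.connectives}, which is what the infinitary modalities require.

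I would then run through the list uniformly. Each $\mathsf Q$ has a denotation that applies $\bigvee$ and/or $\bigwedge$, ranging over values, points, open sets, or stages, to the family $\modellabel{\phi}{\acontext'}$ of denotations of $\phi$ at related contexts $\acontext'$; pushing $\modT$ or $\modTB$ across using the displayed identities gives the result in one line. Concretely, $\modellabel{\someone\modT\phi}{n,p,O}=\bigvee_{p'\in O}\modT\modellabel{\phi}{n,p',O}$ while $\modellabel{\modT\someone\phi}{n,p,O}=\modT\bigl(\bigvee_{p'\in O}\modellabel{\phi}{n,p',O}\bigr)$, and these coincide; the identical computation handles $\texi,\tall,\everyone,\Quorum,\Coquorum,\recent,\urecent,\forever,\sometime$. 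The operator $\tomorrow$ merely relocates the stage and so commutes trivially, and for $\infinitely$ and $\final$ one applies the displayed identity twice (once for the outer, once for the inner quantifier).

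I do not expect a genuine obstacle here; once the four preservation identities are in hand the rest is bookkeeping. The only points that need care are: (i) using the \emph{explicit} denotations of $\recent,\urecent,\forever,\sometime,\infinitely,\final$ from Figure~\ref{fig.3.derived} rather than their $\mu$-encodings — legitimate because the claim is about extensional equality, so the computed denotations are precisely what we compare; and (ii) the boundary case of $\yesterday$ at stage $0$, where the indexing range is empty. That case is covered automatically, since $\modT$ and $\modTB$ send the empty join $\tvF$ to $\tvF$, which is the empty-join instance of the displayed identities.
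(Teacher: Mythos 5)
Your proposal is correct and is exactly the argument the paper intends: the paper's proof is the one-line remark ``Routine from Lemma~\ref{lemm.equiv}(\ref{item.equiv.2}) and Figures~\ref{fig.3.phi.f} and~\ref{fig.3.derived}, and facts of $\THREE$ as a lattice,'' and your write-up simply makes explicit which lattice facts are needed (that $\modT$ and $\modTB$, being characteristic functions of up-sets in the chain $\tvF<\tvB<\tvT$, preserve arbitrary joins and meets, including the empty ones arising from $\yesterday$ at stage $0$). No gaps.
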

\begin{proof}
Routine from Lemma~\ref{lemm.equiv}(\ref{item.equiv.2}) and Figures~\ref{fig.3.phi.f} and~\ref{fig.3.derived}, and facts of $\THREE$ as a lattice.
\end{proof}

\subsection{Lattice inequalities}

Corollary~\ref{corr.recent.everyone.commute} is in the same spirit as Lemmas~\ref{lemm.commuting.connectives} and~\ref{lemm.someone.commutation}. 
In practice we usually just use it a direct fact of the model, but we spell it out anyway.
It relies on this simple fact about lattices: 
\begin{lemm} 
\label{lemm.quantifier.swap}
Suppose $(\mathcal L,\tand,\tor)$ is a complete lattice and $X$ and $Y$ are sets and $f:(X\times Y)\to\mathcal L$ is any function. 
Then:
\begin{enumerate*}
\item\label{item.quantifier.swap.1}
$\bigvee_{x}\bigwedge_{y}f(x,y)\leq \bigwedge_{y}\bigvee_{x}f(x,y)$, 
\item\label{item.quantifier.swap.1b}
It is not in general the case that 
$\bigwedge_{y}\bigvee_{x}f(x,y) \leq \bigvee_{x}\bigwedge_{y}f(x,y)$.
\item\label{item.quantifier.swap.2}
$\bigvee_{x}\bigvee_{y}f(x,y) = \bigvee_{y}\bigvee_{x}f(x,y)$.
\end{enumerate*}
\end{lemm}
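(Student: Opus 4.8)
The plan is to treat the three parts separately. None is genuinely hard: parts~\ref{item.quantifier.swap.1} and~\ref{item.quantifier.swap.2} are standard facts about joins and meets in a complete lattice, while part~\ref{item.quantifier.swap.1b} only asks for a single counterexample. The only points requiring care are getting the direction of the inequality right in part~\ref{item.quantifier.swap.1}, and exhibiting a clean witness in part~\ref{item.quantifier.swap.1b}.

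For part~\ref{item.quantifier.swap.1}, I would fix $x \in X$ and $y_0 \in Y$. Since $\bigwedge_y f(x,y)$ is a lower bound of the family $\{f(x,y) \mid y \in Y\}$, in particular $\bigwedge_y f(x,y) \leq f(x,y_0)$. Taking the join over $x$ of both sides (joins are monotone) yields $\bigvee_x \bigwedge_y f(x,y) \leq \bigvee_x f(x,y_0)$. The left-hand side does not mention $y_0$, so it is a lower bound for $\{\bigvee_x f(x,y_0) \mid y_0 \in Y\}$ and hence lies below their meet, giving $\bigvee_x \bigwedge_y f(x,y) \leq \bigwedge_{y_0} \bigvee_x f(x,y_0)$ as required. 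For part~\ref{item.quantifier.swap.2}, I would observe that both sides equal the join of the whole family $\{f(x,y) \mid (x,y) \in X \times Y\}$, which exists by completeness of $\mathcal L$: $\bigvee_x \bigvee_y f(x,y)$ is an upper bound of this family (each $f(x',y') \leq \bigvee_y f(x',y) \leq \bigvee_x \bigvee_y f(x,y)$), and any upper bound of the family bounds every inner join $\bigvee_y f(x,y)$ and therefore the outer join; thus it is the least upper bound. By the symmetric argument $\bigvee_y \bigvee_x f(x,y)$ is the same least upper bound, so the two agree. This is just associativity and commutativity of arbitrary joins.

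For part~\ref{item.quantifier.swap.1b}, since the claim is a universally quantified inequality, a single counterexample refutes it. I would take $\mathcal L = \THREE$ (only $\tvF$ and $\tvT$ are needed), set $X = Y = \{0,1\}$, and define $f(x,y) = \tvT$ when $x = y$ and $f(x,y) = \tvF$ otherwise. Then each row $\{f(x,0), f(x,1)\}$ contains both $\tvT$ and $\tvF$, so $\bigwedge_y f(x,y) = \tvF$ for every $x$ and hence $\bigvee_x \bigwedge_y f(x,y) = \tvF$; whereas each column $\{f(0,y), f(1,y)\}$ contains a $\tvT$, so $\bigvee_x f(x,y) = \tvT$ for every $y$ and hence $\bigwedge_y \bigvee_x f(x,y) = \tvT$. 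The inequality in question would then read $\tvT \leq \tvF$, which is false since $\tvF < \tvT$ in $\THREE$. The main obstacle, such as it is, is simply to record this witness rather than to prove anything deep.
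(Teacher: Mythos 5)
Your proposal is correct, and since the paper disposes of this lemma with only ``by easy lattice calculations,'' your argument is simply the standard one spelled out in full: the lower-bound/upper-bound chase for part~1, the observation that both iterated joins equal the join of the whole doubly-indexed family for part~3, and an explicit $2\times 2$ counterexample for part~2. Nothing differs in approach; you have just supplied the details the paper omits.
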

\begin{proof}
By easy lattice calculations. 
\end{proof}

\begin{corr}
\label{corr.recent.everyone.commute}
Suppose $\mathsf Q\phi$ is a closed predicate, where 
$$
\mathsf Q\in\{\texi a,\tall a \mid a{\in}\tf{VarSymb}\}\cup\{\someone,\everyone,\Quorum,\Coquorum\},
$$
and suppose $\acontext$ is a context.
Then:
\begin{enumerate*}
\item\label{item.recent.everyone.commute.1}
$\modellabel{\recent\mathsf Q\phi}{\acontext}\leq \modellabel{\mathsf Q\recent\phi}{\acontext}$. 
\item
$\modellabel{\mathsf Q\recent\phi}{\acontext}\leq \modellabel{\recent\mathsf Q\phi}{\acontext}$ for 
$\mathsf Q\in\{\texi a\mid a{\in}\tf{VarSymb}\}\cup\{\someone,\Quorum\}$.
\item
It need not be the case that $\modellabel{\mathsf Q\recent\phi}{\acontext}\leq \modellabel{\recent\mathsf Q}{\acontext}$ for 
$\mathsf Q\in\{\tall a\mid a{\in}\tf{VarSymb}\}\cup\{\everyone,\Coquorum\}$.
\end{enumerate*}
\end{corr}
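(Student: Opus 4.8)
The plan is to unpack the denotations of both sides using Figures~\ref{fig.3.phi.f} and~\ref{fig.3.derived}, observe that $\recent$ acts only on the time-coordinate $n$ of the context while each $\mathsf Q$ acts on a different coordinate (the substituted value for $\texi a,\tall a$; the point for $\someone,\everyone$; the open set for $\Quorum,\Coquorum$), and then apply the purely lattice-theoretic Lemma~\ref{lemm.quantifier.swap}. The crucial structural observation is that the range of the $\recent$-join, namely $\{n'\mid 0\leq n'<n\}$, does not depend on the $\mathsf Q$-index, and conversely the range of $\mathsf Q$ (all of $\Val$, all of $\Pnt$, or all of $\opensne$) does not depend on $n'$; so composing $\recent$ with $\mathsf Q$ in either order yields a doubly-indexed expression to which Lemma~\ref{lemm.quantifier.swap} applies with $x$ ranging over time and $y$ over the $\mathsf Q$-index.

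For part~(\ref{item.recent.everyone.commute.1}) I would split on whether $\mathsf Q$ is a join-modality or a meet-modality. Since $\recent$ is itself a join over $n'$, for the join-modalities $\texi a,\someone,\Quorum$ (each a $\bigvee$) both $\modellabel{\recent\mathsf Q\phi}{\acontext}$ and $\modellabel{\mathsf Q\recent\phi}{\acontext}$ unpack to the same double join, up to the order of the two joins, so Lemma~\ref{lemm.quantifier.swap}(\ref{item.quantifier.swap.2}) gives outright equality, and in particular $\leq$. For the meet-modalities $\tall a,\everyone,\Coquorum$ (each a $\bigwedge$), the left-hand side is $\bigvee_{n'}\bigwedge$ and the right-hand side is $\bigwedge\bigvee_{n'}$, so taking $x=n'$ and $y$ the $\mathsf Q$-index in Lemma~\ref{lemm.quantifier.swap}(\ref{item.quantifier.swap.1}) gives exactly $\modellabel{\recent\mathsf Q\phi}{\acontext}\leq\modellabel{\mathsf Q\recent\phi}{\acontext}$. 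Part~(2) is then immediate: for $\mathsf Q\in\{\texi a,\someone,\Quorum\}$ we already obtained equality via Lemma~\ref{lemm.quantifier.swap}(\ref{item.quantifier.swap.2}), so the reverse inequality holds as well.

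For part~(3), which asserts that the reverse inequality can fail for the meet-modalities, the plan is to exhibit a small finite counterexample matching the failure recorded in Lemma~\ref{lemm.quantifier.swap}(\ref{item.quantifier.swap.1b}). I would take $\mathsf Q=\everyone$, a model with two points $\Pnt=\{p_1,p_2\}$, and a nullary predicate symbol $\tf P$ whose valuation (regarded as a truth-value at each stage and point) satisfies $\avaluation(\tf P)_{0,p_1}=\tvT$, $\avaluation(\tf P)_{0,p_2}=\tvF$, $\avaluation(\tf P)_{1,p_1}=\tvF$, and $\avaluation(\tf P)_{1,p_2}=\tvT$. Evaluating at stage $n=2$ and any $O\in\opensne$ (e.g.\ $O=\Pnt$) gives $\modellabel{\everyone\recent\tf P}{2,p,O,\avaluation}=\tvT$, since each point was true at some past stage, whereas $\modellabel{\recent\everyone\tf P}{2,p,O,\avaluation}=\tvF$, since at no single past stage were both points simultaneously true; hence $\modellabel{\everyone\recent\tf P}{\acontext}\not\leq\modellabel{\recent\everyone\tf P}{\acontext}$. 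Analogous two-point (respectively two-value) valuations give counterexamples for $\Coquorum$ and $\tall a$.

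I expect no serious obstacle here; the only real care needed is bookkeeping. One must check that $\recent$ and each $\mathsf Q$ genuinely act on independent coordinates of the context so that Lemma~\ref{lemm.quantifier.swap} applies verbatim (in particular for $\Quorum$ and $\Coquorum$, which do alter the open set, one verifies that $\recent$ leaves it untouched and the two ranges are mutually independent), and one must arrange the counterexample so that the meet-of-joins strictly exceeds the join-of-meets.
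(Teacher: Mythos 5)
Your proposal is correct and follows essentially the same route as the paper: unpack both sides via Figures~\ref{fig.3.phi.f} and~\ref{fig.3.derived} and apply Lemma~\ref{lemm.quantifier.swap}, with part~(\ref{item.quantifier.swap.2}) giving equality for the join-modalities and part~(\ref{item.quantifier.swap.1}) giving the one-way inequality for the meet-modalities. The only difference is in part~(3), where the paper merely cites Lemma~\ref{lemm.quantifier.swap}(\ref{item.quantifier.swap.1b}) while you exhibit an explicit two-point, two-stage countermodel realising the failure inside an actual valuation --- a small but genuine strengthening, and your computation ($\tvT\not\leq\tvF$ at stage $2$) checks out.
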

\begin{proof}
Suppose $n\in\Time$ and $p\in\Pnt$ and $O\in\opensne$ and $\avaluation$ is a valuation.
We freely use \strongmodusponens.
\begin{enumerate}
\item
We consider $\mathsf Q=\Coquorum$.
We use the clause for $\recent$ in Figure~\ref{fig.3.derived}, the clause for $\Coquorum$ in Figure~\ref{fig.3.phi.f}, and then Lemma~\ref{lemm.quantifier.swap}(\ref{item.quantifier.swap.1}): 
$$
\bigvee_{0{\leq}n'{<}n}\bigwedge_{O'{\in}\opensne}\modellabel{\phi}{n',p,O',\avaluation}
\stackrel{L\ref{lemm.quantifier.swap}(\ref{item.quantifier.swap.1})}{\leq}
\bigwedge_{O'{\in}\opensne}\bigvee_{0{\leq}n'{<}n}\modellabel{\phi}{n',p,O',\avaluation} .
$$
Now we consider $\mathsf Q=\tall a$:
$$
\bigvee_{0{\leq}n'{<}n}\bigwedge_{v{\in}\Val}\modellabel{\phi[a\ssm v]}{n',p,O,\avaluation}
\stackrel{L\ref{lemm.quantifier.swap}(\ref{item.quantifier.swap.1})}{\leq}
\bigwedge_{v{\in}\Val}\bigvee_{0{\leq}n'{<}n}\modellabel{\phi[a\ssm v]}{n',p,O,\avaluation} .
$$
The other cases are precisely similar.
\item
We consider just $\mathsf Q=\someone$; the other cases are precisely similar.
We just use Lemma~\ref{lemm.quantifier.swap}(\ref{item.quantifier.swap.2}): 
$$
\bigvee_{p'{\in}\Pnt}\bigvee_{0{\leq}n'{<}n}\modellabel{\phi}{n',p',O,\avaluation}
\stackrel{L\ref{lemm.quantifier.swap}(\ref{item.quantifier.swap.2})}{=}
\bigvee_{0{\leq}n'{<}n}\bigvee_{p'{\in}\Pnt}\modellabel{\phi}{n',p',O,\avaluation} .
$$
\item
We consider just $\mathsf Q=\tall$; the other cases are precisely similar.
We note by Lemma~\ref{lemm.quantifier.swap}(\ref{item.quantifier.swap.1b}) that 
$$
\bigwedge_{v{\in}\Val}\bigvee_{0{\leq}n'_v{<}n}\modellabel{\phi[a\ssm v]}{n'_v,p,O,\avaluation}
\leq
\bigvee_{0{\leq}n'{<}n}\bigwedge_{v\in\Val}\modellabel{\phi[a\ssm v]}{n',p,O,\avaluation} 
\ \ \ \text{does not hold in general}.
$$
\qedhere\end{enumerate}
\end{proof}

\subsection{The compound modalities $\someoneAll$, $\everyoneAll$, $\QuorumBox$, and $\CoquorumDiamond$}
\label{subsect.compound.modalities}

\begin{rmrk}
\label{rmrk.compound.modalities}
We will make much use of the modalities $\someoneAll$, $\everyoneAll$, $\QuorumBox$, and $\CoquorumDiamond$.

We call these four combinations \deffont{compound modalities}, because they are built out of $\Quorum$ and $\Coquorum$ (which quantify over open sets) and $\someone$ and $\everyone$ (which quantify over points in an open set). 
In this Subsection we study some of their properties.
\end{rmrk}

Lemma~\ref{lemm.dense.char} just unpacks definitions:
\begin{lemm}
\label{lemm.dense.char}
Suppose $\phi$ is a closed predicate and $n\in\Time$ and $\avaluation$ is a valuation.
Then: 
\begin{enumerate*}
\item\label{item.dense.char.someoneAll}
$n\mentval\someoneAll\phi$ if and only if $\Exists{O\in\opensne}O\between\{p\in\Pnt \mid n,p,O\mentval\phi\}$.
\item\label{item.dense.char.everyoneAll}
$n\mentval\everyoneAll\phi$ if and only if $\Forall{O\in\opensne}O\subseteq\{p\in\Pnt \mid n,p,O\mentval\phi\}$.
\item\label{item.dense.char.QuorumBox}
$n\mentval\QuorumBox\phi$ if and only if $\Exists{O\in\opensne}O\subseteq\{p\in\Pnt \mid n,p,O\mentval\phi\}$.
\item\label{item.dense.char.CoquorumDiamond}
$n\mentval\CoquorumDiamond\phi$ if and only if $\Forall{O\in\opensne}O\between\{p\in\Pnt \mid n,p,O\mentval\phi\}$.
\end{enumerate*} 
\end{lemm}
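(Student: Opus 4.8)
The plan is to unpack each of the four validity judgements through the denotations in Figure~\ref{fig.3.derived} and then push the validity predicate $\ment$ through the joins and meets, using the fact that $\threeValid=\{\tvT,\tvB\}$ is an up-set in the finite lattice $\THREE$ (indeed $\ment\f{tv}$ holds precisely when $\f{tv}\geq\tvB$). So my first step is to extend Lemma~\ref{lemm.tand.tor}(3\&4) from binary to arbitrary joins and meets: since a join in $\THREE$ is the greatest of its arguments and $\threeValid$ is upward-closed, $\ment\bigvee_i\f{tv}_i$ holds if and only if $\ment\f{tv}_i$ for some $i$, and dually $\ment\bigwedge_i\f{tv}_i$ holds if and only if $\ment\f{tv}_i$ for every $i$. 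Finiteness of $\THREE$ guarantees that these suprema and infima are attained, so no limiting argument is needed (and the empty cases behave correctly, since the empty join is $\tvF\notin\threeValid$, matching an empty existential, and the empty meet is $\tvT\in\threeValid$, matching a vacuous universal).

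Next I treat part~(\ref{item.dense.char.someoneAll}) as the template. By Definition~\ref{defn.validity.judgement}(\ref{item.ment.n}), $n\mentval\someoneAll\phi$ means $\Forall{p\in\Pnt}\Forall{O\in\opensne}\ment\modellabel{\someoneAll\phi}{n,p,O,\avaluation}$; but the clause for $\someoneAll$ in Figure~\ref{fig.3.derived} gives $\modellabel{\someoneAll\phi}{n,p,O,\avaluation}=\bigvee_{O'\in\opensne}\bigvee_{p'\in O'}\modellabel{\phi}{n,p',O',\avaluation}$, which does not mention the outer $p$ or $O$ (cf. Notation~\ref{nttn.validity.irrelevant.context}). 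Hence the two outer universal quantifiers are vacuous and the judgement reduces to $\ment\bigvee_{O'\in\opensne}\bigvee_{p'\in O'}\modellabel{\phi}{n,p',O',\avaluation}$. Applying the infinitary join rule from the first step twice turns this into $\Exists{O'\in\opensne}\Exists{p'\in O'}\ment\modellabel{\phi}{n,p',O',\avaluation}$, which by Definition~\ref{defn.validity.judgement}(\ref{item.ctx.ment.phi}) is $\Exists{O'\in\opensne}\Exists{p'\in O'}(n,p',O'\mentval\phi)$.

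Finally I read this back into set-theoretic language. For an open set $O'$ write $S_{O'}=\{p\in\Pnt\mid n,p,O'\mentval\phi\}$; then $\Exists{p'\in O'}(n,p',O'\mentval\phi)$ says exactly $O'\cap S_{O'}\neq\varnothing$, i.e.\ $O'\between S_{O'}$, which yields the claimed form $\Exists{O\in\opensne}O\between\{p\in\Pnt\mid n,p,O\mentval\phi\}$. Parts~(\ref{item.dense.char.everyoneAll}), (\ref{item.dense.char.QuorumBox}), and~(\ref{item.dense.char.CoquorumDiamond}) follow by this identical three-step recipe, reading off the $\bigvee/\bigwedge$ pattern of the corresponding clause in Figure~\ref{fig.3.derived}: an outer $\bigvee_{O'}$ (resp.\ $\bigwedge_{O'}$) becomes an existential (resp.\ universal) quantifier over $O$, and an inner $\bigvee_{p'\in O'}$ (resp.\ $\bigwedge_{p'\in O'}$) becomes $O\between S_O$ (resp.\ $O\subseteq S_O$), where $S_O=\{p\in\Pnt\mid n,p,O\mentval\phi\}$. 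I expect no genuine obstacle here; the one point demanding care is bookkeeping, since $S_O$ depends on $O$ through the third coordinate of the context, so throughout one must keep the bound open set in the outer quantifier identified with the open set appearing inside the defining predicate of the point-set.
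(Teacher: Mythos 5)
Your proposal is correct and follows essentially the same route as the paper's proof, which simply unpacks the denotational clauses from Figures~\ref{fig.3.phi.f} and~\ref{fig.3.derived} and then rewrites the resulting quantifier patterns in set-theoretic form. The only difference is that you make explicit the step the paper labels ``Fact'' --- that $\ment$ commutes with arbitrary joins and meets because $\threeValid$ is an up-set in the finite chain $\THREE$ --- which is a sound and welcome elaboration rather than a departure.
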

\begin{proof}
We consider each part in turn:
\begin{enumerate}
\item
We reason as follows: 
$$
\begin{array}{r@{\ }l@{\quad}l}
n\mentval\someoneAll\phi
\liff&
\Exists{O{\in}\opensne}\Exists{p{\in}O}n,p,O\mentval\phi
&\text{Figure~\ref{fig.3.phi.f}}
\\
\liff&
\Exists{O{\in}\opensne}O\between \{p\in\Pnt \mid n,p,O\mentval\phi\}
&\text{Fact}
\end{array}
$$
\item
We reason as follows: 
$$
\begin{array}{r@{\ }l@{\quad}l}
n\mentval\everyoneAll\phi
\liff&
\Forall{O{\in}\opensne}\Forall{p{\in}O}n,p,O\mentval\phi
&\text{Figure~\ref{fig.3.phi.f}}
\\
\liff&
\Forall{O{\in}\opensne}O\subseteq\{p\in\Pnt \mid n,p,O\mentval\phi\}
&\text{Fact}
\end{array}
$$
\item
We reason as follows:
$$
\begin{array}{r@{\ }l@{\quad}l}
n\mentval\QuorumBox\phi
\liff&
\Exists{O{\in}\opensne}\Forall{p{\in}O}n,p,O\mentval\phi
&\text{Figure~\ref{fig.3.phi.f}}
\\
\liff&
\Exists{O{\in}\opensne}O\subseteq\{p\in\Pnt \mid n,p,O\mentval\phi\}
&\text{Fact}
\end{array}
$$
\item
We reason as follows:
$$
\begin{array}{r@{\ }l@{\quad}l}
n\mentval\CoquorumDiamond\phi
\liff&
\Forall{O{\in}\opensne}\Exists{p{\in}O}n,p,O\mentval\phi
&\text{Figure~\ref{fig.3.phi.f}}
\\
\liff&
\Forall{O{\in}\opensne}O\between\{p\in\Pnt \mid n,p,O\mentval\phi\}
&\text{Fact}
\end{array}
$$
\qedhere\end{enumerate}
\end{proof}

Proposition~\ref{prop.someone.implies.someoneAll} packages up some simple corollaries of Lemma~\ref{lemm.dense.char}, in forms that will be useful for our later proofs: 
\begin{prop}
\label{prop.someone.implies.someoneAll}
Suppose $\phi$ is a closed predicate and $\acontext$ is a context.
Then:
\begin{enumerate*}
\item\label{item.someone.implies.someoneAll.4}
$\acontext\ment\QuorumBox\phi\tand\CoquorumDiamond\psi\limp\acontext\ment\someoneAll(\phi\tand\psi)$.
\item\label{item.someone.implies.someoneAll.4b}
$\acontext\ment\QuorumBox\phi\tand\everyoneAll\psi\limp\acontext\ment\QuorumBox(\phi\tand\psi)$.
\item\label{item.someone.implies.someoneAll.5}
$\acontext\ment\modT(\QuorumBox\phi\tand\CoquorumDiamond\psi)\limp\acontext\ment\modT\someoneAll(\phi\tand\psi)$,\ and
\\
$\acontext\ment(\QuorumBox\phi\tand\CoquorumDiamond\psi)\timpc \someoneAll(\phi\tand\psi)$.
\item\label{item.someone.implies.someoneAll.2}
$\acontext\ment\modT\QuorumBox\phi\limp\acontext\ment\modT\someoneAll\phi$.
\item\label{item.someone.implies.someoneAll.2b}
$\acontext\ment\modT\CoquorumDiamond\phi\limp\acontext\ment\modT\someoneAll\phi$.
\end{enumerate*}
\end{prop}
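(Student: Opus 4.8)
The plan is to fix an arbitrary context $\acontext=(n,p,O,\avaluation)$ and work entirely through the characterisations of the compound modalities supplied by Lemma~\ref{lemm.dense.char}, which recast each of $\QuorumBox,\CoquorumDiamond,\someoneAll,\everyoneAll$ as a quantifier over open sets $O'\in\opensne$ together with a subset ($\subseteq$) or intersection ($\between$) condition on the witness set $\{p'\in\Pnt\mid n,p',O'\mentval\phi\}$. Since none of these modalities depends on the $p$ or $O$ components of $\acontext$ (Notation~\ref{nttn.validity.irrelevant.context}), $\acontext\ment$ and $n\mentval$ coincide for them, so Lemma~\ref{lemm.dense.char} applies directly. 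Throughout I will use the elementary facts that in $\THREE$ a join lies in $\threeValid$ exactly when some summand does and a meet lies in $\threeValid$ exactly when every summand does (this is what powers Lemma~\ref{lemm.dense.char}), together with the sharper facts that $\bigvee S=\tvT$ exactly when $\tvT\in S$ and $\bigwedge S=\tvT$ exactly when $S\subseteq\{\tvT\}$, which I will need for the $\modT$-flavoured parts.

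For part~(\ref{item.someone.implies.someoneAll.4}) I would argue as follows. By Lemma~\ref{lemm.tand.tor}(4) the hypothesis gives $\acontext\ment\QuorumBox\phi$ and $\acontext\ment\CoquorumDiamond\psi$. By Lemma~\ref{lemm.dense.char}(\ref{item.dense.char.QuorumBox}) there is a witnessing open $O_1\in\opensne$ with $n,p',O_1\mentval\phi$ for every $p'\in O_1$; by Lemma~\ref{lemm.dense.char}(\ref{item.dense.char.CoquorumDiamond}), instantiated at this very $O_1$, there is some $q\in O_1$ with $n,q,O_1\mentval\psi$. For that $q$ we have both $n,q,O_1\mentval\phi$ and $n,q,O_1\mentval\psi$, hence $n,q,O_1\mentval\phi\tand\psi$ by Lemma~\ref{lemm.tand.tor}(4), so $O_1\between\{p'\mid n,p',O_1\mentval\phi\tand\psi\}$ and Lemma~\ref{lemm.dense.char}(\ref{item.dense.char.someoneAll}) delivers $\acontext\ment\someoneAll(\phi\tand\psi)$. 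Part~(\ref{item.someone.implies.someoneAll.4b}) is identical except that the second hypothesis $\everyoneAll\psi$ makes $n,p',O_1\mentval\psi$ hold, by Lemma~\ref{lemm.dense.char}(\ref{item.dense.char.everyoneAll}), for \emph{all} $p'\in O_1$; combining with $\phi$ on all of $O_1$ gives $O_1\subseteq\{p'\mid n,p',O_1\mentval\phi\tand\psi\}$, so Lemma~\ref{lemm.dense.char}(\ref{item.dense.char.QuorumBox}) yields $\acontext\ment\QuorumBox(\phi\tand\psi)$. The key move in both is the same: instantiate the universally-quantified open set coming from $\CoquorumDiamond$ or $\everyoneAll$ at the single open set witnessing $\QuorumBox$ --- no quorum-intersection hypothesis on the semitopology is needed, because both modalities range over the same open-set variable.

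For part~(\ref{item.someone.implies.someoneAll.5}) the cleanest route is to reduce the first statement to part~(\ref{item.someone.implies.someoneAll.4}). Using Lemma~\ref{lemm.commuting.connectives} (with $\tf M=\modT$) and Lemma~\ref{lemm.someone.commutation} (applied to the constituents $\Quorum,\Coquorum,\someone,\everyone$ of the compound modalities) one obtains the equivalences $\modT(\QuorumBox\phi\tand\CoquorumDiamond\psi)\equiv\QuorumBox\modT\phi\tand\CoquorumDiamond\modT\psi$ and $\modT\someoneAll(\phi\tand\psi)\equiv\someoneAll(\modT\phi\tand\modT\psi)$; since validity respects extensional equivalence (Lemma~\ref{lemm.equiv}), the first statement of part~(\ref{item.someone.implies.someoneAll.5}) is exactly part~(\ref{item.someone.implies.someoneAll.4}) read with $\modT\phi,\modT\psi$ in place of $\phi,\psi$. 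The second statement, $\acontext\ment(\QuorumBox\phi\tand\CoquorumDiamond\psi)\timpc\someoneAll(\phi\tand\psi)$, then follows immediately by \strongmodusponens, which says precisely that this strong implication is valid exactly when $\acontext\ment\modT(\QuorumBox\phi\tand\CoquorumDiamond\psi)$ implies $\acontext\ment\modT\someoneAll(\phi\tand\psi)$.

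Finally, parts~(\ref{item.someone.implies.someoneAll.2}) and~(\ref{item.someone.implies.someoneAll.2b}) are the easy degenerate cases. Reading $\acontext\ment\modT\QuorumBox\phi$ as $\modellabel{\QuorumBox\phi}{\acontext}=\tvT$, the join-of-meets is $\tvT$ exactly when some $O_1\in\opensne$ has $\modellabel{\phi}{n,p',O_1,\avaluation}=\tvT$ for all $p'\in O_1$; since $O_1\neq\varnothing$ we may pick any $p'\in O_1$, and that single witness already forces $\modellabel{\someoneAll\phi}{\acontext}=\tvT$, giving part~(\ref{item.someone.implies.someoneAll.2}). For part~(\ref{item.someone.implies.someoneAll.2b}), $\acontext\ment\modT\CoquorumDiamond\phi$ supplies a $\tvT$-witness $p'$ inside \emph{every} open set, and applying this to the open set $O$ coming with the context (which lies in $\opensne$, so that $\opensne\neq\varnothing$) again produces a single witness forcing $\modellabel{\someoneAll\phi}{\acontext}=\tvT$. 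I expect the only real subtlety to be bookkeeping the difference between ``valid'' ($\threeValid=\{\tvT,\tvB\}$) in parts~(\ref{item.someone.implies.someoneAll.4},\ref{item.someone.implies.someoneAll.4b}) and ``exactly $\tvT$'' ($\threeTrue$) in the $\modT$-parts, and to make sure the nonemptiness of $O_1$ and of $\opensne$ is invoked exactly where the single-witness step needs it.
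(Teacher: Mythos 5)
Your proposal is correct and follows essentially the same route as the paper: parts (1)--(3) are handled exactly as the paper does, via Lemma~\ref{lemm.dense.char} (instantiating the universally quantified open set at the $\QuorumBox$-witness) and then reducing part~(3) to part~(1) with the commutation lemmas and strong modus ponens. The only divergence is that for parts~(4) and~(5) you recompute the witnesses directly from the denotation, whereas the paper just specialises part~(3) at $\psi=\tvT$ (resp.\ $\phi=\tvT$); both are valid and the difference is cosmetic.
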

\begin{proof}
We consider each part in turn:
\begin{enumerate}
\item
Routine from Lemma~\ref{lemm.dense.char}(\ref{item.dense.char.QuorumBox}\&\ref{item.dense.char.CoquorumDiamond}).
\item
Routine from Lemma~\ref{lemm.dense.char}(\ref{item.dense.char.QuorumBox}\&\ref{item.dense.char.everyoneAll}).
\item
From part~\ref{item.someone.implies.someoneAll.4} of this result taking $\modT\phi$ and $\modT\psi$ and the commutation Lemmas~\ref{lemm.someone.commutation} and~\ref{lemm.commuting.connectives}.
Then we use \strongmodusponens.
\item
From part~\ref{item.someone.implies.someoneAll.5} of this result, taking $\psi=\tvT$.
\item
From part~\ref{item.someone.implies.someoneAll.5} of this result, taking $\phi=\tvT$.
\qedhere\end{enumerate}
\end{proof}

\subsection{Some comments on time}

\begin{rmrk}
This Subsection notes Lemmas~\ref{lemm.tomorrow.forever.recent} and~\ref{lemm.persist.char}.
They both have to do with time, and underneath all their complicated-looking symbols they are both straightforward:
\begin{itemize*}
\item
Lemma~\ref{lemm.tomorrow.forever.recent} reflects a fact that if something holds now, then in the future it will have held.
\item
Lemma~\ref{lemm.persist.char} reflects the fact that something repeats forever if and only if it happens infinitely often.
\end{itemize*}
\end{rmrk}

\begin{lemm}
\label{lemm.tomorrow.forever.recent}
Suppose $\acontext=(n,p,O,\avaluation)$ is a context and $\phi$ is a closed predicate.
Then:
\begin{enumerate*}
\item\label{item.tomorrow.forever.recent.1}
$n,p,O\mentval\modT\phi$ implies $\Forall{n'{\gneq}n}\,n',p,O\mentval\modT\recent\phi$.
\item\label{item.rur.r}
$n,p,O\mentval\modT\urecent\phi$ implies $\Forall{n'{\geq} n}\,n',p,O\mentval\modT\urecent\phi$.
\item
As corollaries, 
$$
\acontext\ment\phi\timpc \forever\recent\phi
\quad\text{and}\quad
\acontext\ment\urecent\phi\timpc\forever\urecent\phi.
$$
\end{enumerate*}
\end{lemm}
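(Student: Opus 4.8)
The plan is to translate every judgement of the form $n',p,O\mentval\modT\psi$ into the sharp equation $\modellabel{\psi}{n',p,O,\avaluation}=\tvT$, using Lemma~\ref{lemm.tv.ment.TF}(\ref{item.tv.ment.TF.2b}) (which says $\ment\modT\f{tv}$ holds iff $\f{tv}\in\threeTrue$, i.e. iff $\f{tv}=\tvT$). After this translation, parts~\ref{item.tomorrow.forever.recent.1} and~\ref{item.rur.r} become elementary statements about joins over index sets in the lattice $\THREE$, read off directly from the semantic clauses for $\recent$ and $\urecent$ in Figure~\ref{fig.3.derived}.

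For part~\ref{item.tomorrow.forever.recent.1}, the hypothesis is $\modellabel{\phi}{n,p,O,\avaluation}=\tvT$. Fix any $n'$ with $n'{\gneq}n$. The clause for $\recent$ gives $\modellabel{\recent\phi}{n',p,O,\avaluation}=\bigvee_{0\leq m<n'}\modellabel{\phi}{m,p,O,\avaluation}$, and since $0\leq n<n'$ the index $m=n$ occurs in this join, contributing the value $\tvT$; as $\tvT$ is the top of $\THREE$, the whole join equals $\tvT$. This is exactly $n',p,O\mentval\modT\recent\phi$.

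For part~\ref{item.rur.r}, I would argue directly by monotonicity of a join in its index set rather than extracting a witness. The clause for $\urecent$ gives $\modellabel{\urecent\phi}{n,p,O,\avaluation}=\bigvee_{0\leq m\leq n}\modellabel{\phi}{m,p,O,\avaluation}$, and for $n'\geq n$ the index range $\{0,\dots,n\}$ is contained in $\{0,\dots,n'\}$, so $\modellabel{\urecent\phi}{n',p,O,\avaluation}$ dominates $\modellabel{\urecent\phi}{n,p,O,\avaluation}$. The latter is $\tvT$ by hypothesis and $\tvT$ is the top element, so the former is $\tvT$ as well, giving $n',p,O\mentval\modT\urecent\phi$.

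For the corollaries, I would unpack $\acontext\ment\phi\timpc\forever\recent\phi$ using the clause for $\timpc$ in Figure~\ref{fig.3.derived} together with \strongmodusponens, which reduces the assertion to the implication ``$\modellabel{\phi}{\acontext}=\tvT$ implies $\modellabel{\forever\recent\phi}{\acontext}=\tvT$''. The clause for $\forever$ gives $\modellabel{\forever\recent\phi}{n,p,O,\avaluation}=\bigwedge_{n'>n}\modellabel{\recent\phi}{n',p,O,\avaluation}$, and a meet over an index set equals $\tvT$ iff every entry does; so this implication is precisely part~\ref{item.tomorrow.forever.recent.1}. The second corollary is identical, using part~\ref{item.rur.r}, whose conclusion ranges over all $n'\geq n$ and so in particular covers the $n'>n$ demanded by the $\forever$ meet. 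The only real care needed --- and the nearest thing to an obstacle --- is keeping these two reductions straight: that $\mentval\modT(\cdots)$ collapses to the literal equation $\cdots=\tvT$, and that the compound judgement $\ment\phi\timpc\psi$ must be routed through \strongmodusponensnoref so that both sides acquire a $\modT$; once both translations are in place the remaining content is a one-line join/meet computation.
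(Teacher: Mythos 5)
Your proof is correct and takes essentially the same route as the paper, which simply states that the result follows from the clauses for $\recent$ and $\urecent$ in Figure~\ref{fig.3.derived} with the corollary obtained via \strongmodusponensnoref; you have merely filled in the routine join/meet computations that the paper leaves implicit.
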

\begin{proof}
This is just from the clauses for $\recent$ and $\urecent$ in Figure~\ref{fig.3.derived}. 
The corollary follows using \strongmodusponens.
\end{proof}

$\infinitely=\forever\sometime$ has a characterisation as `infinitely often', with the convenient feature that its denotation is independent of the stage at which it is evaluated (called $n'$ in the statement of the Lemma below):
\begin{lemm}
\label{lemm.persist.char}
Suppose $\phi$ is a closed predicate and $p\in\Pnt$ and $O\in\opensne$ and $\acontext$ is a context.
Suppose further that $n'\in\Time$ and $\f{tvs}\in\{\threeTrue,\threeValid\}$.
Then we have
$$
\modellabel{\infinitely\phi}{n',p,O,\avaluation}\in\f{tvs}
\quad\liff\quad
\{n\in\Time \mid \modellabel{\phi}{n,p,O,\avaluation}\in\f{tvs}\}\text{ is infinite} . 
$$
\end{lemm}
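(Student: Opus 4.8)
The plan is to fix the parameters $p$, $O$, $\avaluation$ that the right-hand side holds constant, abbreviate $f(k) = \modellabel{\phi}{k,p,O,\avaluation} \in \THREE$, and read off from the clause for $\infinitely$ in Figure~\ref{fig.3.derived} that
$$
\modellabel{\infinitely\phi}{n',p,O,\avaluation} = \bigwedge_{m > n'}\ \bigvee_{k > m} f(k).
$$
The whole argument then reduces to elementary facts about how joins and meets in the finite chain $\THREE$ interact with the two target sets $\threeTrue$ and $\threeValid$.

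First I would observe that both target sets are \emph{principal up-sets} of the chain $\tvF < \tvB < \tvT$: we have $\threeTrue = \{x \in \THREE \mid x \geq \tvT\}$ and $\threeValid = \{x \in \THREE \mid x \geq \tvB\}$. Writing $\f{tvs} = \{x \mid x \geq t_0\}$ with $t_0 = \tvT$ or $t_0 = \tvB$ accordingly, it then suffices to prove, for a family $(a_i)_i$ of truth-values, the two lattice facts: (a) $\bigvee_i a_i \in \f{tvs}$ if and only if $a_i \in \f{tvs}$ for some $i$; and (b) $\bigwedge_i a_i \in \f{tvs}$ if and only if $a_i \in \f{tvs}$ for all $i$. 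Fact (b) is immediate since the meet is the greatest lower bound, so $\bigwedge_i a_i \geq t_0$ exactly when every $a_i \geq t_0$. Fact (a) is where I expect the only real (if mild) subtlety to lie, since in a general complete lattice an infinite join can lie in an up-set without any single joinand doing so; here it works because $\THREE$ is \emph{finite}, so the join $\bigvee_{k>m} f(k)$ is attained as a maximum, and an up-set contains the maximum if and only if it contains one of the elements.

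With (a) and (b) in hand I would chain the equivalences: applying (b) to the outer meet, $\modellabel{\infinitely\phi}{n',p,O,\avaluation} \in \f{tvs}$ holds if and only if $\bigvee_{k>m} f(k) \in \f{tvs}$ for every $m > n'$; applying (a) to each inner join, this holds if and only if for every $m > n'$ there is some $k > m$ with $f(k) \in \f{tvs}$. Finally I would note that, since $\Time = \Ngeqz$, a subset is cofinal (unbounded above) precisely when it is infinite, so ``for every $m > n'$ there is $k > m$ with $f(k) \in \f{tvs}$'' is exactly the assertion that $\{k \in \Time \mid f(k) \in \f{tvs}\}$ is infinite; the spurious lower bound $n'$ is harmless, because unboundedness does not depend on where one starts looking. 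This last point is also what makes the denotation independent of the evaluation stage $n'$, as advertised before the Lemma. The computation is uniform in the choice $\f{tvs} \in \{\threeTrue, \threeValid\}$ precisely because both are principal up-sets, which is the structural reason the statement holds for these two sets and would fail for, say, $\{\tvB\}$ or $\threeFalse$.
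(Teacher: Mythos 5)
Your proof is correct and takes essentially the same route as the paper, which simply dismisses the claim as following ``by routine arguments from the clause for $\infinitely$ in Figure~\ref{fig.3.derived}''. Your write-up supplies exactly the routine content the paper elides --- the up-set/finite-join observations and the cofinality-equals-infiniteness step over $\Ngeqz$ --- and correctly identifies why the restriction to $\threeTrue$ and $\threeValid$ matters.
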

\begin{proof}
By routine arguments from the clause for $\infinitely$ in Figure~\ref{fig.3.derived}.
\end{proof}

\subsection{Unique existence and affine existence in three-valued logic}
\label{subsect.unique.affine.existence}

\subsubsection{The definition, and main result}

\begin{rmrk}
Distributed algorithms often require an implementation to make (at most) one choice from (zero or more) possibilities.
In our axioms, we model this using unique existence and affine existence:
\begin{itemize*}
\item
\emph{Unique existence} means `there exists precisely one'; we write this $\exists!$, as standard.
\item
\emph{Affine existence} means `there exist zero or one'; we write this $\exists_{01}$.\footnote{We are not aware of a standard symbol for affine existence.}
\end{itemize*}
See for example \rulefont{PaxPropose01} and \rulefont{PaxWrite01} in Figure~\ref{fig.logical.paxos} (a discussion of which is in Remark~\ref{rmrk.paxos.axioms.discussion}(\ref{item.01.discussion})).

So we need to think about what $\exists!$ and $\exists_{01}$ even mean in a three-valued setting.
This turns out to be not completely straightforward, because we have a third truth-value that is neither true nor false.

For concreteness, suppose $\tf P$ is a unary predicate and $\acontext$ is a context, and suppose that 
$\modellabel{\phi[a\ssm v]}{\acontext}=\tvB$ for all $v\in\Val$.
Then what truth-values should $\modellabel{\texiaffine a.\phi}{\acontext}$ and $\modellabel{\texiunique a.\phi}{\acontext}$ have?
 
Now suppose $v,v',v''\in\Val$, and suppose that $\modellabel{\phi[a\ssm v''']}{\acontext}=\tvF$ for all $v'''\in\Val\setminus\{v,v',v''\}$ and
$$
\modellabel{\phi[a\ssm v]}{\acontext}=\tvT
\quad\land\quad
\modellabel{\phi[a\ssm v']}{\acontext}=\tvB
\quad\land\quad
\modellabel{\phi[a\ssm v'']}{\acontext}=\tvB .
$$
\noindent Then what truth-values should $\modellabel{\texiaffine a.\phi}{\acontext}$ and $\modellabel{\texiunique a.\phi}{\acontext}$ have?

Now suppose that $\modellabel{\phi[a\ssm v''']}{\acontext}=\tvF$ for all $v'''\in\Val\setminus\{v\}$ and
$$
\modellabel{\phi[a\ssm v]}{\acontext}=\tvB .
$$
\noindent Then what truth-values should $\modellabel{\texiaffine a.\phi}{\acontext}$ and $\modellabel{\texiunique a.\phi}{\acontext}$ have?

Our answers to all questions above will be: $\tvB$ for both.
This is not automatic: it is a real design decision that we have to make.
\end{rmrk}

\begin{defn}
\label{defn.unique.exist}
Suppose $\phi$ is a predicate.
Define \deffont{affine existence} $\texiaffine a.\phi$ and \deffont{unique existence} $\texiunique  a.\phi$ by:
$$
\begin{array}{r@{\ }l}
\texiaffine a.\phi =& \tall a,a'.(\phi\tand\phi[a\ssm a']) \tnotor a\teq a'
\\
\texiunique  a.\phi =& (\texi a.\phi) \tand (\texiaffine a.\phi)
\end{array}
$$
\end{defn}

\begin{rmrk}
\label{rmrk.unpack.texiunique}
Suppose $\texi a.\phi$ is a closed predicate (i.e. $a\in\tf{VarSymb}$ and $\phi$ is a predicate with free variable symbols at most $\{a\}$), and suppose $\acontext$ is a context.
We spell unpack the denotations of $\texiunique$ and $\texiaffine$ from Definition~\ref{defn.unique.exist}: 
\begin{enumerate*}
\item
If $\exists v,v'\in\Val.v\neq v'\land\modellabel{\phi[a\ssm v]}{\acontext}=\modellabel{\phi[a\ssm v']}{\acontext}=\tvT$, then 
$$
\modellabel{\texiunique a.\phi}{\acontext}=\tvF=\modellabel{\texiaffine a.\phi}{\acontext}.
$$
\item
If $\exists!v\in\Val.\modellabel{\phi[a\ssm v]}{\acontext}=\tvT$ and $\Forall{v\in\Val}\modellabel{\phi[a\ssm v]}{\acontext}\in\{\tvT,\tvF\}$, then 
$$
\modellabel{\texiunique a.\phi}{\acontext}=\tvT=\modellabel{\texiaffine a.\phi}{\acontext}.
$$
\item
If $\exists!v\in\Val.\modellabel{\phi[a\ssm v]}{\acontext}=\tvT$ and $\Exists{v\in\Val}\modellabel{\phi[a\ssm v]}{\acontext}=\tvB$, then 
$$
\modellabel{\texiunique a.\phi}{\acontext}=\tvB=\modellabel{\texiaffine a.\phi}{\acontext}.
$$
\item\label{item.explain.existsunique.4}
If $\neg\Exists{v\in\Val}\modellabel{\phi[a\ssm v]}{\acontext}=\tvT$ and $\exists! v'\in\Val.\modellabel{\phi[a\ssm v]}{\acontext}=\tvB$, then 
$$
\modellabel{\texiunique a.\phi}{\acontext}=\tvB
= \modellabel{\texiaffine a.\phi}{\acontext} . %
$$
(We discuss this in Remark~\ref{rmrk.discuss.clause.4} below.)
\item\label{item.two.b}
If $\neg\Exists{v\in\Val}\modellabel{\phi[a\ssm v]}{\acontext}=\tvT$ and $\exists v,v'\in\Val.v\neq v'\land \modellabel{\phi[a\ssm v]}{\acontext}=\modellabel{\phi[a\ssm v']}{\acontext}=\tvB$, then 
$$
\modellabel{\texiunique a.\phi}{\acontext}
=\tvB=
\modellabel{\texiaffine a.\phi}{\acontext} .
$$
\item
If $\Forall{v\in\Val}\modellabel{\phi[a\ssm v]}{\acontext}=\tvF$, then
$$
\modellabel{\texiunique a.\phi}{\acontext}=\tvF
\quad\text{and}\quad
\modellabel{\texiaffine a.\phi}{\acontext}=\tvT.
$$
\end{enumerate*}
\end{rmrk}

\begin{rmrk}
\label{rmrk.discuss.texiunique}
We make two elementary observations on Remark~\ref{rmrk.unpack.texiunique}:
\begin{enumerate} 
\item
The equality $\teq$ in our logic is two-valued: $\model{v\teq v'}$ returns $\tvT$ or $\tvF$, and never $\tvB$.\footnote{In the terminology of Definition~\ref{defn.correct}(\ref{item.correct.phi}) below, equality predicates are always \emph{correct}.}
\item
Two notions of `unique existence' are in play in this discussion:
\begin{itemize*}
\item
`$\exists!v\in\Val$' refers to the actual unique existence quantifier in the real world (whatever that is).
This computes actual truth-values `true' and `false' and returns true when there actually exists a unique value.
\item
`$\texiunique a$' is part of the (derived) syntax of \QLogic, and is defined in Definition~\ref{defn.unique.exist}.
Its denotation computes elements of $\THREE=\{\tvF,\tvB,\tvT\}$.  
As is standard (though confusingly) we \emph{also} call elements of $\THREE$ truth-values; but they are truth-values of \QLogic, rather than actual truth-values in the real world (whatever that is).
\end{itemize*}
So: $\exists!$ and $\texiunique$ are both unique existence operators, but they live at different levels and they are not the same thing.\footnote{This kind of thing is far from unusual and appears also outside of papers with fancy logics.  For example in programming, a programming language may have multiple versions of the number $1$ (8-bit, floating point, 64-bit, and infinite precision) all of which are called `one', but all of which are different from one other, and also different from the actual $1\in\mathbb N$ (whatever that is).}
\end{enumerate}
\end{rmrk}

We will use unique/affine existence in our proofs via Lemma~\ref{lemm.unique.affine.existence}: 
\begin{lemm}
\label{lemm.unique.affine.existence}
Suppose $\acontext$ is a context and $\texi a.\phi$ is a closed predicate and $v,v'\in\Val$ are values.
Then:
\begin{enumerate*}
\item\label{item.unique.affine.existence.01}
$\acontext\ment\texiaffine a.\phi$ if and only if 
$\exists_{01} v\in\Val.\modellabel{\phi[a\ssm v]}{\acontext}=\tvT$.
\item\label{item.unique.affine.existence.11}
$\acontext\ment\texiunique a.\phi$ if and only if 
$\exists v\in\Val.\modellabel{\phi[a\ssm v]}{\acontext}\in\threeValid \land \exists_{01} v\in\Val.\modellabel{\phi[a\ssm v]}{\acontext}=\tvT$.
\item\label{item.unique.affine.existence.01.implies}
If $\acontext\ment\texiaffine a.\phi$ and $\acontext\ment\modT\phi[a\ssm v]$ and $\acontext\ment\modT\phi[a\ssm v']$, then $v=v'$.
\item\label{item.unique.affine.existence.11.implies}
If $\acontext\ment\texiunique a.\phi$ and $\acontext\ment\modT\phi[a\ssm v]$ and $\acontext\ment\modT\phi[a\ssm v']$, then $v=v'$.
\end{enumerate*}
\end{lemm}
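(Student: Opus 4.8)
The plan is to prove all four parts by unpacking Definition~\ref{defn.unique.exist} into the denotational semantics of Figures~\ref{fig.3.phi.f} and~\ref{fig.3.derived}, and then applying the elementary validity facts for the connectives. First I would record a small extension of Lemma~\ref{lemm.tand.tor}(\ref{item.tand.tor.tor}) and Lemma~\ref{lemm.tand.tor}(4) from binary operations to arbitrary meets and joins over $\Val$: since a meet $\bigwedge_v x_v$ lies in $\threeValid$ exactly when every $x_v$ does, and a join $\bigvee_v x_v$ lies in $\threeValid$ exactly when some $x_v$ does, we obtain that $\acontext\ment\tall a.\psi$ iff $\acontext\ment\psi[a\ssm v]$ for all $v$, and $\acontext\ment\texi a.\psi$ iff $\acontext\ment\psi[a\ssm v]$ for some $v$. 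This is immediate from the order $\tvF<\tvB<\tvT$ on $\THREE$.

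For part~(\ref{item.unique.affine.existence.01}), I would unpack $\texiaffine a.\phi=\tall a,a'.((\phi\tand\phi[a\ssm a'])\tnotor a\teq a')$, so that by the meet fact above $\acontext\ment\texiaffine a.\phi$ reduces to: for all $v,v'\in\Val$, $\acontext\ment(\modellabel{\phi[a\ssm v]}{\acontext}\tand\modellabel{\phi[a\ssm v']}{\acontext})\tnotor(v\teq v')$. Applying \weakmodusponens to each such conjunct, this holds iff $\ment\modT(\modellabel{\phi[a\ssm v]}{\acontext}\tand\modellabel{\phi[a\ssm v']}{\acontext})$ implies $\ment\modellabel{v\teq v'}{\acontext}$. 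Now $\ment\modT(x\tand y)$ holds iff $x\tand y=\tvT$ iff $x=y=\tvT$ by Lemma~\ref{lemm.tv.ment.TF}(\ref{item.tv.ment.TF.2b}), and since equality is two-valued (Remark~\ref{rmrk.discuss.texiunique}(1)), $\ment\modellabel{v\teq v'}{\acontext}$ holds iff $v=v'$. Hence the condition becomes: for all $v,v'$, if $\modellabel{\phi[a\ssm v]}{\acontext}=\modellabel{\phi[a\ssm v']}{\acontext}=\tvT$ then $v=v'$, which is exactly $\exists_{01}v\in\Val.\modellabel{\phi[a\ssm v]}{\acontext}=\tvT$.

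For part~(\ref{item.unique.affine.existence.11}), I would use $\texiunique a.\phi=(\texi a.\phi)\tand(\texiaffine a.\phi)$ together with Lemma~\ref{lemm.tand.tor}(4), so that $\acontext\ment\texiunique a.\phi$ iff $\acontext\ment\texi a.\phi$ and $\acontext\ment\texiaffine a.\phi$. The join fact gives $\acontext\ment\texi a.\phi$ iff $\exists v\in\Val.\modellabel{\phi[a\ssm v]}{\acontext}\in\threeValid$, and part~(\ref{item.unique.affine.existence.01}) disposes of the second conjunct, yielding the stated conjunction. Parts~(\ref{item.unique.affine.existence.01.implies}) and~(\ref{item.unique.affine.existence.11.implies}) are then corollaries: $\acontext\ment\modT\phi[a\ssm v]$ means $\modellabel{\phi[a\ssm v]}{\acontext}=\tvT$, so if two values both satisfy this while the affine-existence condition from part~(\ref{item.unique.affine.existence.01}) holds, they must coincide; for part~(\ref{item.unique.affine.existence.11.implies}) I first extract $\acontext\ment\texiaffine a.\phi$ from $\acontext\ment\texiunique a.\phi$ via the $\tand$ decomposition, then invoke part~(\ref{item.unique.affine.existence.01.implies}).

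The only genuinely delicate step is the bookkeeping in part~(\ref{item.unique.affine.existence.01}): one must check that \emph{weak} modus ponens is the correct tool, since the premise is guarded by $\modT$, so only the $\tvT$ case of the conjunction forces $v=v'$ while the $\tvB$ cases impose no constraint (this is precisely why the $\tvB$-only scenarios in Remark~\ref{rmrk.unpack.texiunique} return $\tvB$ rather than $\tvF$). One must also use that the two-valuedness of $\teq$ is what makes $\ment\modellabel{v\teq v'}{\acontext}$ collapse to genuine equality $v=v'$ rather than to mere validity. Everything else is routine arithmetic in $\THREE$.
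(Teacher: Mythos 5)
Your proof is correct, and it takes a somewhat different route from the paper's. The paper's own proof is a one-liner: it unpacks Definition~\ref{defn.validity.judgement}(\ref{item.ctx.ment.phi}) and then checks against the six-case enumeration already tabulated in Remark~\ref{rmrk.unpack.texiunique}. You instead compute directly from Definition~\ref{defn.unique.exist}: you reduce $\acontext\ment\texiaffine a.\phi$ to a family of instances of \weakmodusponensnoref, one per pair $(v,v')$, and observe that the $\modT$ guard in weak modus ponens means only the all-$\tvT$ case forces $v=v'$ (the $\tvB$ cases impose nothing), while the two-valuedness of $\teq$ collapses validity of the conclusion to genuine equality. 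This is a uniform symbolic argument that avoids the case split entirely, and it has the incidental merit of not leaning on Remark~\ref{rmrk.unpack.texiunique}, which the paper states without proof; your argument in effect verifies the validity-relevant content of that Remark as a by-product. The only extra ingredient you need is the generalisation of Lemma~\ref{lemm.tand.tor}(\ref{item.tand.tor.tor}\&4) from binary $\tor$ and $\tand$ to the arbitrary joins and meets appearing in the clauses for $\texi$ and $\tall$; you correctly flag this, and it holds because $\threeValid$ is an up-set in the chain $\tvF<\tvB<\tvT$. Parts~(\ref{item.unique.affine.existence.01.implies}) and~(\ref{item.unique.affine.existence.11.implies}) then follow from part~(\ref{item.unique.affine.existence.01}) exactly as you say. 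What the paper's approach buys is brevity given that the Remark is already in place; what yours buys is a self-contained derivation whose mechanism (weak modus ponens under a $\modT$ guard) makes it transparent \emph{why} the all-$\tvB$ edge cases come out valid.
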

\begin{proof}
We unpack Definition~\ref{defn.validity.judgement}(\ref{item.ctx.ment.phi}) %
and check the possibilities in Remark~\ref{rmrk.unpack.texiunique}.
\end{proof}

\subsubsection{Comments on an edge case}

Lemma~\ref{lemm.texiunique.all.b} notes a (perhaps surprising, at first glance) fact about our denotation, that 
if a unary predicate symbol $\tf P$ takes truth-value $\tvB$ when applied to all values (as would model a participant that has entirely crashed), then $\texiunique\tf P$ and $\texiaffine\tf P$ are valid --- even though $\tf P(v)$ is valid on every value.
This reflects the common side-condition in the literature `if the participant is correct, then \emph{stuff}', which is trivially true if the participant is \emph{not} correct: 
\begin{lemm}
\label{lemm.texiunique.all.b}
Suppose $\acontext$ is a context and $\tf P$ is a unary predicate symbol:\footnote{There is nothing mathematically special about using a \emph{unary} predicate symbol here, but in practice this is the case that we will use.}
Then
$$
\acontext\ment \tall\modB\tf P \limp (\acontext\ment \texiunique \tf P \land \acontext\ment \texiaffine \tf P).  
$$
The above statement uses Notation~\ref{nttn.hos}.
Unpacking this, we obtain:
$$
\acontext\ment \tall a.\modB\tf P(a) \limp (\acontext\ment \texiunique a.\tf P(a)\land \acontext\ment \texiaffine a.\tf P(a)) . 
$$  
\end{lemm}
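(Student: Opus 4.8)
The plan is to unpack the hypothesis into a concrete statement about the denotation of $\tf P$, and then feed this into Lemma~\ref{lemm.unique.affine.existence}. First I would observe that $\modB$ never returns $\tvB$: inspecting its truth-table in Figure~\ref{fig.3} we have $\modB\tvT=\tvF$, $\modB\tvB=\tvT$, and $\modB\tvF=\tvF$, so $\ment\modB\f{tv}$ holds if and only if $\f{tv}=\tvB$. Combining this with the denotation of $\tall$ from Figure~\ref{fig.3.derived} and Lemma~\ref{lemm.tand.tor} (a conjunction, hence by iteration a meet, is valid exactly when each of its conjuncts is), the assumption $\acontext\ment\tall a.\modB\tf P(a)$ unpacks to the clean statement
$$\Forall{v\in\Val}\modellabel{\tf P(v)}{\acontext}=\tvB.$$

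Given this, I would discharge the two conjuncts of the conclusion using Lemma~\ref{lemm.unique.affine.existence}. For affine existence, Lemma~\ref{lemm.unique.affine.existence}(\ref{item.unique.affine.existence.01}) reduces $\acontext\ment\texiaffine a.\tf P(a)$ to the real-world statement $\exists_{01} v\in\Val.\modellabel{\tf P(v)}{\acontext}=\tvT$; since no value sends $\tf P$ to $\tvT$ (they all send it to $\tvB$), there are zero such values and the $\exists_{01}$ holds vacuously. For unique existence, Lemma~\ref{lemm.unique.affine.existence}(\ref{item.unique.affine.existence.11}) additionally requires a value $v$ with $\modellabel{\tf P(v)}{\acontext}\in\threeValid$; here I would invoke the standing assumption that $\Val$ is nonempty (Definition~\ref{defn.model}(\ref{item.model.val})) to produce some $v\in\Val$, which automatically satisfies $\modellabel{\tf P(v)}{\acontext}=\tvB\in\threeValid$.

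The only genuinely delicate point is conceptual rather than computational: the statement looks paradoxical, since $\tf P(v)$ is valid for every $v$ yet we assert \emph{unique} and \emph{affine} existence. The resolution is precisely the design decision stressed in Remark~\ref{rmrk.unpack.texiunique}, namely that validity includes $\tvB$, so the $\tvB$-witness counts towards existence while $\tvB\neq\tvT$ leaves the ``at most one true value'' requirement vacuously satisfied. A quicker but less modular alternative would be to read the answer straight off Remark~\ref{rmrk.unpack.texiunique}(\ref{item.explain.existsunique.4},\ref{item.two.b}): once all values give $\tvB$, both $\modellabel{\texiunique a.\tf P(a)}{\acontext}$ and $\modellabel{\texiaffine a.\tf P(a)}{\acontext}$ evaluate to $\tvB$, which is valid. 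I expect the nonemptiness of $\Val$ to be the single hypothesis most easily overlooked, since it is exactly what prevents the unique-existence clause from failing for want of any witness at all.
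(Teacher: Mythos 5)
Your proof is correct and follows essentially the same route as the paper: both unpack the hypothesis to $\Forall{v\in\Val}\modellabel{\tf P(v)}{\acontext}=\tvB$ and then check this against the case analysis underlying $\texiunique$ and $\texiaffine$ (you via Lemma~\ref{lemm.unique.affine.existence}(\ref{item.unique.affine.existence.01}\&\ref{item.unique.affine.existence.11}), the paper directly via Remark~\ref{rmrk.unpack.texiunique}(\ref{item.two.b})). Your explicit appeal to the nonemptiness of $\Val$ is a point in your favour: the paper's proof cites only Remark~\ref{rmrk.unpack.texiunique}(\ref{item.two.b}), whose hypothesis requires two \emph{distinct} $\tvB$-witnesses, so strictly speaking it also needs clause~(\ref{item.explain.existsunique.4}) of that Remark to cover the edge case $\mycard\,\Val=1$ --- a case your route handles uniformly.
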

\begin{proof}
Suppose $\acontext\ment \tall\modB\tf P$.
Unpacking Definition~\ref{defn.validity.judgement}(\ref{item.ctx.ment.phi}) this means $\modellabel{\tall\modB\tf P}{\acontext}\in\threeValid$, and by Figure~\ref{fig.3.derived} it follows that $\Forall{v\in\Val}\modellabel{\tf P(v)}{\acontext}=\tvB$.

By Remark~\ref{rmrk.unpack.texiunique}(\ref{item.two.b}) it follows that $\modellabel{\texiunique\tf P}{\acontext}=\tvB$.
The truth-value $\tvB$ is not true, but it is \emph{valid} by Definition~\ref{defn.validity.judgement}(\ref{item.ctx.ment.phi}), and thus $\acontext\ment\texiaffine\tf P$ and $\acontext\ment\texiunique\tf P$ as required. 
\end{proof}

\begin{rmrk}
\label{rmrk.discuss.clause.4}
Arguably, we would expect in clause~\ref{item.explain.existsunique.4} above 
(where $\neg\Exists{v\in\Val}\modellabel{\phi[a\ssm v]}{\acontext}=\tvT$ and $\exists! v'\in\Val.\modellabel{\phi[a\ssm v]}{\acontext}=\tvB$)
that $\modellabel{\texiaffine a.\phi}{\acontext}$ should equal $\tvT$, not $\tvB$. 
This is not hard to arrange, e.g. as
$$
\texiaffine a.\phi \tor \texiaffine a.\modT\phi .
$$
Either choice works with our later proofs; we never encounter an edge case where it matters.
\end{rmrk}

\subsection{Pointwise predicates (a well-behavedness property)}

For this subsection we fix a signature $\Sigma$ and a model $\mathcal M=(\Pnt,\opens,\Val)$.

\subsubsection{The definition, and basic properties}

\begin{rmrk}
\emph{Being pointwise} is a well-behavedness property on predicates, that their denotation in a context $\acontext=(n,p,O,\avaluation)$ %
does not depend on the open set parameter $O$ of the context.

Many important syntactic classes of predicates are naturally pointwise, including any predicate of the form $\tf P(v_1,\dots,v_n)$ (Lemma~\ref{lemm.atomic.O}).

Being pointwise is desirable for us because it simplifies denotation to holding `at a point';\footnote{Actually: at a stage, point, and valuation, but intuitively we tend to view the stage and valuation as fixed, and the point $p$ and open set $O$ as varying.} and also because several nice properties depend on this (see for example Proposition~\ref{prop.pointwise.dense.char}); and finally also because there is an easy syntactic condition (given in Lemma~\ref{lemm.atomic.O}(\ref{item.atomic.pointwise.2})) that guarantees that a predicate is pointwise.
\end{rmrk}

\begin{defn}[Pointwise predicates]
\label{defn.pointwise.meaning}
Suppose $\phi$ is a closed predicate.
\begin{enumerate*}
\item
If $n\in\Time$ and $\avaluation\in\Valuation(\Sigma,\Val)$ then call $\phi$ \deffont{pointwise in $(n,\avaluation)$} when
$$
\Forall{p\in\Pnt}\Forall{O,O'{\in}\opensne}\modellabel{\phi}{n,p,O,\avaluation}=\modellabel{\phi}{n,p,O',\avaluation} .
$$ 
In words, $\phi$ is pointwise in $(n,\avaluation)$ when at every point $p\in\Pnt$, the open set parameter $O\in\opensne$ does not affect the value of the denotation $\modellabel{\phi}{n,p,O,\avaluation}$. 
\item
Call $\phi$ \deffont{absolutely pointwise} when it is pointwise for every $n\in\Time$ and $\avaluation\in\Valuation(\Sigma,\Val)$. 
\end{enumerate*}
\end{defn}

\begin{defn}
\label{defn.pointwise.pred.nttn}
Suppose $\phi$ is closed, and $n\in\Time$ and $p\in\Pnt$ and $\avaluation$ is a valuation.
Then define $\modellabel{\phi}{n,p,\avaluation}$ by
$$
\modellabel{\phi}{n,p,\avaluation}=\bigwedge_{O\in\opensne}\modellabel{\phi}{n,p,O,\avaluation}.
$$
\end{defn}

In a manner of speaking we have already seen $\modellabel{\phi}{n,p,\avaluation}$ from Definition~\ref{defn.pointwise.pred.nttn} in our logic:
\begin{lemm}
\label{lemm.Coquorum.denote}
Suppose $n\in\Time$ and $p\in\Pnt$ and $\avaluation$ is a valuation.
Then
$$
\modellabel{\phi}{n,p,\avaluation}=\modellabel{\Coquorum\phi}{n,p,O,\avaluation}.
$$
\end{lemm}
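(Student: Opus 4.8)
The plan is to observe that both sides of the claimed equality unfold, by definition, to the very same meet over nonempty open sets, so that the statement reduces to comparing Definition~\ref{defn.pointwise.pred.nttn} with the denotation clause for $\Coquorum$ recorded in Figure~\ref{fig.3.derived}. There is no real computation to perform; the lemma simply records that the notation $\modellabel{\phi}{n,p,\avaluation}$ just introduced coincides with a denotation we have already defined.

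First I would expand the left-hand side directly from Definition~\ref{defn.pointwise.pred.nttn}, giving
$$
\modellabel{\phi}{n,p,\avaluation}=\bigwedge_{O'\in\opensne}\modellabel{\phi}{n,p,O',\avaluation}.
$$
Then I would expand the right-hand side using the clause for $\Coquorum$ in Figure~\ref{fig.3.derived} (equivalently, the defining equation $\Coquorum\phi = \tneg\Quorum\tneg\phi$ together with the clause for $\Quorum$ in Figure~\ref{fig.3.phi.f} and the de Morgan dualities of Figure~\ref{fig.easy.equivalences}), giving
$$
\modellabel{\Coquorum\phi}{n,p,O,\avaluation}=\bigwedge_{O'\in\opensne}\modellabel{\phi}{n,p,O',\avaluation}.
$$
The two right-hand expressions are identical, so the two sides are equal.

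The only point worth flagging — and it is a point of bookkeeping rather than a genuine obstacle — is that the context parameter $O$ appears on the right of the claimed equation but not on the left. This is harmless precisely because, in the denotation of $\Coquorum\phi$, the open-set slot of the context is quantified away by the bound variable $O'$ ranging over $\opensne$; thus $\modellabel{\Coquorum\phi}{n,p,O,\avaluation}$ does not in fact depend on $O$, and the equality holds for every such $O$. I would note this explicitly so that the reader sees why it is legitimate to drop $O$ from the notation on the left.
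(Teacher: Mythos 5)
Your proof is correct and takes essentially the same route as the paper, which simply cites the clause for $\Coquorum$ in Figure~\ref{fig.3.derived} and leaves the unfolding implicit; you spell out the same unfolding and add a useful remark on why the open-set parameter $O$ may be dropped on the left.
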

\begin{proof}
Direct from the clause for $\Coquorum$ from Figure~\ref{fig.3.derived}.
\end{proof}

\begin{lemm}
\label{lemm.pointwise.pred}
Suppose $\phi$ is a closed predicate and $n\in\Time$ and $\avaluation$ is a valuation.
Then $\phi$ is pointwise in $(n,\avaluation)$ in the sense of Definition~\ref{defn.pointwise.meaning} 
if and only if
$$
\Forall{p\in\Pnt}\Forall{O\in\opensne}\modellabel{\phi}{n,p,O,\avaluation}=\modellabel{\phi}{n,p,\avaluation}.
$$
\end{lemm}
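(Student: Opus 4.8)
The plan is to reduce both conditions to a single elementary fact about meets in the lattice $\THREE$, after first peeling off the outer quantifier over points. I would begin by observing that the defining condition for ``pointwise in $(n,\avaluation)$'' (Definition~\ref{defn.pointwise.meaning}) and the condition in the statement are \emph{both} of the form $\Forall{p\in\Pnt}(\dots)$, where the bracketed part concerns only the family of truth-values $\{\modellabel{\phi}{n,p,O,\avaluation}\mid O\in\opensne\}$ attached to that fixed $p$. Since the stage $n$ and valuation $\avaluation$ are held fixed throughout, it suffices to fix $p\in\Pnt$ and prove the equivalence of the two per-point conditions; the outer $\Forall{p\in\Pnt}$ then threads through both sides identically.

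Before invoking the lattice fact I would dispose of the one edge case that actually requires care. If $\Pnt=\varnothing$ then there are no points and both conditions hold vacuously, so I assume $\Pnt\neq\varnothing$. Then by Definition~\ref{defn.semitopology} we have $\Pnt\in\opens$ with $\Pnt\neq\varnothing$, hence $\Pnt\in\opensne$ and in particular $\opensne\neq\varnothing$. This nonemptiness of the index set is exactly what makes the argument go through, so it is worth flagging explicitly.

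The core is then a general observation about the finite lattice $\THREE$ (Definition~\ref{defn.THREE}): for any \emph{nonempty} index set $I$ and family $\{t_O\}_{O\in I}$ of elements of $\THREE$, the family is constant (i.e. $t_O=t_{O'}$ for all $O,O'\in I$) if and only if $t_O=\bigwedge_{O'\in I}t_{O'}$ for every $O\in I$. Taking $I=\opensne$ and $t_O=\modellabel{\phi}{n,p,O,\avaluation}$, and recalling from Definition~\ref{defn.pointwise.pred.nttn} that $\modellabel{\phi}{n,p,\avaluation}=\bigwedge_{O\in\opensne}\modellabel{\phi}{n,p,O,\avaluation}$, this is precisely the claimed equivalence at the fixed $p$. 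For the forward direction, a constant family has meet equal to its common value, so each member equals $\modellabel{\phi}{n,p,\avaluation}$; for the reverse, if every member equals the single value $\modellabel{\phi}{n,p,\avaluation}$, then any two members are equal, recovering the pointwise condition.

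I do not expect a genuine obstacle here: the statement is essentially a bookkeeping identity, and the only subtlety is the nonemptiness of $\opensne$. Without it the meet over an empty family would be $\tvT$, and the reverse implication could fail for a predicate whose denotation is uniformly some value other than $\tvT$ at $p$; this is precisely why the $\Pnt\neq\varnothing$ reduction in the second step matters.
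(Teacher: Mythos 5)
Your proof is correct and takes the same route as the paper, which simply records this as a routine calculation from Definitions~\ref{defn.pointwise.meaning} and~\ref{defn.pointwise.pred.nttn}; fixing $p$ and reducing to the lattice fact that a family is constant iff each member equals its meet is exactly that calculation spelled out. One small quibble: the nonemptiness of $\opensne$ is not actually needed, since if $\opensne=\varnothing$ then both conditions are vacuously true universal statements over $O\in\opensne$ (the scenario you describe, a denotation ``uniformly some value other than $\tvT$'', cannot arise over an empty index set), so your edge-case detour is harmless but superfluous.
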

\begin{proof}
By a routine calculation from Definitions~\ref{defn.pointwise.meaning} and~\ref{defn.pointwise.pred.nttn}.
\end{proof}

\begin{corr}
\label{corr.pointwise.pred}
Suppose $\phi$ is a closed predicate and $n\in\Time$ and $p\in\Pnt$ and $\avaluation$ is a valuation.
Suppose $\phi$ is pointwise in $(n,\avaluation)$.
Then
$$
\Forall{O\in\opensne}n,p,O\mentval\phi\liff n,p\mentval\phi. 
$$
\end{corr}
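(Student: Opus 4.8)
The plan is to push both sides of the biconditional down to membership statements in $\threeValid$, where the pointwise hypothesis can be applied directly through Lemma~\ref{lemm.pointwise.pred}. By Definition~\ref{defn.validity.judgement}(2), the right-hand side $n,p\mentval\phi$ is by definition $\Forall{O\in\opensne}(n,p,O\mentval\phi)$, which is the left-hand side; so at the level of the raw definitions the two sides are already the same assertion. What makes the result worth recording, and a genuine corollary of Lemma~\ref{lemm.pointwise.pred}, is that under pointwiseness this common assertion collapses to a single check of the value $\modellabel{\phi}{n,p,\avaluation}$ of Definition~\ref{defn.pointwise.pred.nttn}.

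Concretely, I would first unfold each judgement $n,p,O\mentval\phi$ via Definition~\ref{defn.validity.judgement}(\ref{item.ctx.ment.phi}) to the truth-value condition $\modellabel{\phi}{n,p,O,\avaluation}\in\threeValid$. By the pointwise hypothesis and Lemma~\ref{lemm.pointwise.pred}, we have $\modellabel{\phi}{n,p,O,\avaluation}=\modellabel{\phi}{n,p,\avaluation}$ for every $O\in\opensne$, so every one of these conditions is the single condition $\modellabel{\phi}{n,p,\avaluation}\in\threeValid$. Hence the universally quantified statement $\Forall{O\in\opensne}n,p,O\mentval\phi$ holds if and only if $\modellabel{\phi}{n,p,\avaluation}\in\threeValid$ does, and this single truth-value judgement is precisely $n,p\mentval\phi$. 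Reading the chain of equivalences in both directions gives the biconditional.

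I do not expect any real obstacle here: the work is bookkeeping, and the one fact worth isolating is that validity commutes with meets. Even without pointwiseness one has $\Forall{O\in\opensne}(\modellabel{\phi}{n,p,O,\avaluation}\in\threeValid)$ if and only if $\bigwedge_{O\in\opensne}\modellabel{\phi}{n,p,O,\avaluation}\in\threeValid$, simply because $\threeValid=\{\tvT,\tvB\}$ is an upward-closed subset of the chain $\tvF<\tvB<\tvT$, so a meet lands in $\threeValid$ exactly when all of its arguments do (the degenerate case $\opensne=\varnothing$ is vacuously true on both sides). Pointwiseness is what upgrades this to the literal statement that each $O$-indexed denotation already equals the meet $\modellabel{\phi}{n,p,\avaluation}$, which is the convenient form in which the corollary will be reused downstream.
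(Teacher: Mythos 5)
Your proof is correct and uses exactly the two ingredients of the paper's own (one-line) proof: Definition~\ref{defn.validity.judgement} to unfold each judgement to $\modellabel{\phi}{n,p,O,\avaluation}\in\threeValid$, and Lemma~\ref{lemm.pointwise.pred} to collapse these to the single condition $\modellabel{\phi}{n,p,\avaluation}\in\threeValid$. One small caution on scoping: the way the corollary is actually invoked later (in the proof of Proposition~\ref{prop.pointwise.dense.char}) is with the quantifier over the whole biconditional --- ``$n,p,O\mentval\phi$ if and only if $n,p\mentval\phi$, for any $O\in\opensne$'' --- which is \emph{not} a definitional tautology and is where the pointwise hypothesis earns its keep; your middle paragraph, which shows that each individual condition $n,p,O\mentval\phi$ coincides with $\modellabel{\phi}{n,p,\avaluation}\in\threeValid$, already delivers this stronger per-$O$ reading, so nothing is missing.
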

\begin{proof}
Routine from Definition~\ref{defn.validity.judgement} and Lemma~\ref{lemm.pointwise.pred}. %
\end{proof}

Being pointwise is particularly nice because of two simple syntactic criteria that guarantee it:
\begin{lemm}[Conditions for being pointwise]
\label{lemm.atomic.O}
Suppose $\phi$ is a predicate.
Then:
\begin{enumerate*}
\item\label{item.atomic.pointwise}
If $\phi$ is atomic (Notation~\ref{nttn.logic.terminology}(\ref{item.atomic.predicate}); having the form $\tf P(v_1,\dots,v_n)$ or $\tneg\tf P(v_1,\dots,v_n)$) then it is absolutely pointwise (Definition~\ref{defn.pointwise.meaning}).
\item\label{item.atomic.pointwise.2}
If $\phi$ does not mention $\someone$ or $\everyone$ except within the scope of a $\Quorum$ or $\Coquorum$ modality, then $\phi$ is absolutely pointwise.\footnote{In particular this holds if $\Quorum$ and $\Coquorum$ and only ever mentioned in combination with $\someone$ and $\everyone$, as in the compound modalities $\someoneAll$, $\everyoneAll$, $\QuorumBox$, and $\CoquorumDiamond$.}
\end{enumerate*} 
\end{lemm}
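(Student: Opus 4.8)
The plan is to prove both parts by inspecting the denotation clauses in Figure~\ref{fig.3.phi.f} (and their sugar in Figure~\ref{fig.3.derived}) and tracking how the open-set parameter $O$ of a context flows through a derivation. Part~\ref{item.atomic.pointwise} is immediate: by Figure~\ref{fig.3.phi.f} we have $\modellabel{\tf P(v_1,\dots,v_n)}{n,p,O,\avaluation}=\avaluation(\tf P)_{n,p}(v_1,\dots,v_n)$, which does not mention $O$ at all, and the case $\tneg\tf P(v_1,\dots,v_n)$ follows since $\tneg$ acts pointwise on the resulting truth-value. As neither clause constrains $n$ or $\avaluation$, the predicate is absolutely pointwise in the sense of Definition~\ref{defn.pointwise.meaning}.

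For Part~\ref{item.atomic.pointwise.2} I would first reduce to primitive syntax: expanding $\everyone\phi=\tneg\someone\tneg\phi$ and $\Coquorum\phi=\tneg\Quorum\tneg\phi$ (Figure~\ref{fig.3.derived}) turns the hypothesis into the statement that every occurrence of $\someone$ in $\phi$ lies within the scope of some $\Quorum$. I then prove the following strengthened claim by induction on primitive syntax, general enough to survive passing under $\mu$: call $\varkappa\in\THREE^{\tf{Ctx}}$ \emph{$O$-independent} when $\varkappa(n,p,O,\avaluation)=\varkappa(n,p,O',\avaluation)$ for all $n,p,\avaluation$ and all $O,O'\in\opensne$; then for any predicate $\psi$ in which every $\someone$ is $\Quorum$-guarded, if the fixedpoint variable $\tf X$ is interpreted by an $O$-independent context valuation, then $\model{\psi}$ is $O$-independent. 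I assume here, as holds for every predicate written in this paper, that $\phi$ contains no free-standing context valuation $\varkappa$ as a subterm; a raw $\varkappa$ depending on $O$ would falsify the lemma, so this mild restriction to genuine source syntax is necessary.

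The induction itself is routine for most constructors. The atomic cases carry no $O$; the variable case $\tf X$ is $O$-independent by assumption; and the constructors $\tneg$, $\modT$, $\tor$, $\texi a$, $\yesterday$, and $\tomorrow$ all pass $O$ unchanged to their immediate subterms (the temporal operators merely shift the stage $n$), so since the $\Quorum$-guarding condition is inherited by those subterms, the induction hypothesis applies. A top-level $\someone\psi$ cannot occur, since its $\someone$ would be unguarded and violate the hypothesis, so that case is vacuous. The crucial observation is that the $\Quorum\psi$ case needs \emph{no} induction hypothesis at all: by Figure~\ref{fig.3.phi.f}, $\modellabel{\Quorum\psi}{n,p,O,\avaluation}=\bigvee_{O'\in\opensne}\modellabel{\psi}{n,p,O',\avaluation}$, and the right-hand side does not mention the incoming $O$, so it is $O$-independent outright. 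This is exactly what makes the \emph{global} guarding condition harmless under induction: we never need the subterm of a $\Quorum$ to satisfy the hypothesis.

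The main obstacle is the fixedpoint case $\mu\tf X.\psi$. Here I would use Proposition~\ref{prop.mu.fixedpoint} together with the constructive (transfinite-iteration) form of Knaster--Tarski on the complete lattice $\THREE^{\tf{Ctx}}$ (Definition~\ref{defn.three.ctx.lattice}): the least fixedpoint of the monotone map $\varphi(\varkappa)=\model{\psi[\tf X\ssm\varkappa]}$ is the join of its transfinite iterates $\varphi^{\alpha}(\tbot)$ from $\tbot=\lambda\acontext.\tvF$. Now $\tbot$ is $O$-independent; by the induction hypothesis applied to $\psi$ (whose $\someone$'s remain $\Quorum$-guarded, and with $\tf X$ interpreted by an $O$-independent iterate), $\varphi$ sends $O$-independent valuations to $O$-independent valuations; and $O$-independence is preserved by arbitrary joins, since $\bigl(\bigvee_i\varkappa_i\bigr)(n,p,O,\avaluation)=\bigvee_i\varkappa_i(n,p,O,\avaluation)$ is computed pointwise in the lattice (Figure~\ref{fig.chi.3.lattice}). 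Hence by transfinite induction every iterate, and therefore $\model{\mu\tf X.\psi}$, is $O$-independent, closing the induction. Specialising the strengthened claim to a closed $\phi$ (so $\tf X$ does not occur free) gives that $\phi$ is pointwise in every $(n,\avaluation)$, that is, absolutely pointwise.
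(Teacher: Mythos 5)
Your proof is correct and follows the same overall strategy as the paper's, which disposes of part~1 by inspecting the clause for $\tf P(v_1,\dots,v_n)$ in Figure~\ref{fig.3.phi.f} and then declares part~2 to be ``a routine induction on predicate syntax, using part~1 as the base case.'' What you add is the substance behind the word ``routine,'' and two of your additions are worth highlighting. First, the $\mu\tf X.\psi$ case is the one constructor where the induction is not merely a matter of passing $O$ unchanged to subterms: your strengthened hypothesis (that $\model{\psi}$ is $O$-independent whenever $\tf X$ is interpreted by an $O$-independent context valuation) together with the transfinite-iteration form of Knaster--Tarski from $\tbot$ is a clean way to close that case, and it is genuinely needed --- arguing directly from the defining meet $\bigwedge\{\varkappa \mid \model{\psi[\tf X\ssm\varkappa]}\leq\varkappa\}$ is less immediate, since restricting that meet to $O$-independent prefixed points only bounds the least fixedpoint from above. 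Your observation that the $\Quorum\psi$ case needs no inductive hypothesis, because its denotation discards the incoming $O$ outright, is exactly the right way to see why only a \emph{global} guarding condition is required. Second, your caveat about free-standing context valuations is a genuine (if minor) point the paper does not state: Figure~\ref{fig.predicate.syntax} admits an arbitrary $\varkappa\in\THREE^{\tf{Ctx}}$ as a predicate, and such a $\varkappa$ mentions neither $\someone$ nor $\everyone$ yet can depend on $O$, so the lemma as literally stated needs the mild restriction to predicates not containing a raw $O$-dependent $\varkappa$ --- harmless in practice, since the only $\varkappa$ that arise do so through the fixedpoint unfolding your strengthened hypothesis already controls.
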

\begin{proof}
For part~1, we just unpack the definition of $\modellabel{\phi}{}$ in Figure~\ref{fig.3.phi.f} and note that the `$\avaluation(\Pnt)_{n,p}$' used in the clause for a predicate symbol depends just on the values of $n$, $p$, and $\avaluation$, but not on the open sets $O$ and $O'$. 

Part~2 follows by a routine induction on predicate syntax, using part~1 as the base case.
\end{proof}

\subsubsection{Pointwise, as a modality in our logic}

In fact, we can express the property of being pointwise within our logic:
\begin{defn}
\label{defn.pointwise.pred}
Suppose $\phi$ is a predicate.
Then define a predicate $\pointwise{\phi}$ by
$$
\pointwise{\phi} \ =\ \everyoneAll(\Quorum\phi \equiv \Coquorum\phi) .
$$
\end{defn}

\begin{lemm}
\label{lemm.pointwise.heart}
Suppose $\phi$ is a closed predicate and $n\in\Time$ and $p\in\Pnt$ and $\avaluation$ is a valuation.
Then 
$$
n,p\mentval\Quorum\phi \equiv \Coquorum\phi
\quad\text{if and only if}\quad
\bigvee_{O'\in\opensne}\modellabel{\phi}{n,p,O',\avaluation}=\bigwedge_{O'\in\opensne}\modellabel{\phi}{n,p,O',\avaluation}.
$$
(The predicate $\Quorum\phi \equiv \Coquorum\phi$ itself satisfies the conditions for being pointwise from Lemma~\ref{lemm.atomic.O}(\ref{item.atomic.pointwise.2}), so we can sensibly write $n,p\mentval \Quorum\phi \equiv \Coquorum\phi$ above.)
\end{lemm}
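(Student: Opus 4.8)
The plan is to unfold both sides and exploit the fact that the $\Quorum$ and $\Coquorum$ clauses in the denotation simply discard the open-set parameter $O$, which collapses the universal quantifier hidden inside the judgement $n,p\mentval{-}$.

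First I would unpack the left-hand side using Definition~\ref{defn.validity.judgement}: the judgement $n,p\mentval\Quorum\phi\equiv\Coquorum\phi$ means exactly $\Forall{O\in\opensne}(n,p,O\mentval\Quorum\phi\equiv\Coquorum\phi)$ (and this quantifier is non-vacuous in any nonempty semitopology, since $\Pnt\in\opensne$). Fixing an arbitrary $O\in\opensne$, I would then apply Lemma~\ref{lemm.equiv}(\ref{item.equiv.1}), which reduces validity of an $\equiv$ in a context to literal equality of denotations: $n,p,O\mentval\Quorum\phi\equiv\Coquorum\phi$ holds if and only if $\modellabel{\Quorum\phi}{n,p,O,\avaluation}=\modellabel{\Coquorum\phi}{n,p,O,\avaluation}$.

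Next I would substitute the denotational clauses for $\Quorum$ (Figure~\ref{fig.3.phi.f}) and $\Coquorum$ (Figure~\ref{fig.3.derived}), namely
$$
\modellabel{\Quorum\phi}{n,p,O,\avaluation}=\bigvee_{O'\in\opensne}\modellabel{\phi}{n,p,O',\avaluation}
\quad\text{and}\quad
\modellabel{\Coquorum\phi}{n,p,O,\avaluation}=\bigwedge_{O'\in\opensne}\modellabel{\phi}{n,p,O',\avaluation}.
$$
The key observation is that both right-hand sides are independent of $O$. Hence the equality $\modellabel{\Quorum\phi}{n,p,O,\avaluation}=\modellabel{\Coquorum\phi}{n,p,O,\avaluation}$ is precisely the target condition $\bigvee_{O'\in\opensne}\modellabel{\phi}{n,p,O',\avaluation}=\bigwedge_{O'\in\opensne}\modellabel{\phi}{n,p,O',\avaluation}$, an equation that does not mention $O$ at all.

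Finally, since the per-$O$ condition obtained in the previous steps is literally the same statement for every $O\in\opensne$, the universal quantifier $\Forall{O\in\opensne}$ from the first step ranges over copies of one and the same equation and may therefore be dropped; so $n,p\mentval\Quorum\phi\equiv\Coquorum\phi$ holds if and only if that single $O$-free equation holds, which is the claimed biconditional. This same independence-from-$O$ is exactly what justifies the parenthetical remark: every occurrence of $\someone$ or $\everyone$ in $\Quorum\phi\equiv\Coquorum\phi$ lies within the scope of a $\Quorum$ or $\Coquorum$, so by Lemma~\ref{lemm.atomic.O}(\ref{item.atomic.pointwise.2}) the predicate is (absolutely) pointwise, validating the use of $n,p\mentval{-}$. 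There is no genuine obstacle here beyond bookkeeping; the only point requiring a moment's care is tracking the implicit $\Forall{O}$ inside $n,p\mentval{-}$ and confirming that the $\Quorum$ and $\Coquorum$ clauses truly erase the $O$-dependence, so that this quantifier can be eliminated.
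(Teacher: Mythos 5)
Your proof is correct and follows essentially the same route as the paper's: apply Lemma~\ref{lemm.equiv}(\ref{item.equiv.1}) to reduce the $\equiv$-judgement to an equality of denotations, then substitute the clauses for $\Quorum$ and $\Coquorum$ and observe that the result is independent of the open-set parameter. Your extra care about the implicit $\Forall{O}$ in $n,p\mentval{-}$ (non-vacuous since $p\in\Pnt$ forces $\Pnt\in\opensne$) is sound bookkeeping that the paper leaves tacit.
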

\begin{proof}
From Lemma~\ref{lemm.equiv}(\ref{item.equiv.1}) and the clauses for $\Quorum$ in Figure~\ref{fig.3.phi.f}, and $\Coquorum$ from Figure~\ref{fig.3.derived} (or Lemma~\ref{lemm.Coquorum.denote}). 
\end{proof}

\begin{lemm}
Suppose $\phi$ is a closed predicate and $n\in\Time$ and $p\in\Pnt$ and $\avaluation$ is a valuation.
Then $\phi$ is pointwise in $(n,\avaluation)$ if and only if $n,p\mentval\pointwise{\phi}$. 
\end{lemm}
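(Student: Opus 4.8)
The plan is to route everything through the pointwise characterisation of $\Quorum\phi\equiv\Coquorum\phi$ already established in Lemma~\ref{lemm.pointwise.heart}, and then to absorb the outer $\everyoneAll$ modality that appears in $\pointwise\phi$. Note first that both sides of the claimed equivalence are in fact independent of the chosen point $p$: the left-hand side ``$\phi$ is pointwise in $(n,\avaluation)$'' is universally quantified over all points by Definition~\ref{defn.pointwise.meaning}, and the right-hand side $n,p\mentval\pointwise\phi=n,p\mentval\everyoneAll(\Quorum\phi\equiv\Coquorum\phi)$ has a denotation that quantifies away $p$ and $O$ (Figure~\ref{fig.3.derived}), so this is consistent.

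First I would record a purely lattice-theoretic fact: for each fixed $p'\in\Pnt$, the value $\modellabel{\phi}{n,p',O',\avaluation}$ is constant as $O'$ ranges over the nonempty index set $\opensne$ if and only if $\bigvee_{O'\in\opensne}\modellabel{\phi}{n,p',O',\avaluation}=\bigwedge_{O'\in\opensne}\modellabel{\phi}{n,p',O',\avaluation}$ (in any lattice a nonempty family has equal join and meet exactly when all members coincide, each member being squeezed between the meet and the join; here $\opensne\neq\varnothing$ since $p\in\Pnt$ forces $\Pnt\in\opensne$). Combining this with Lemma~\ref{lemm.pointwise.heart} gives, for every $p'$, that $n,p'\mentval\Quorum\phi\equiv\Coquorum\phi$ holds if and only if $\modellabel{\phi}{n,p',O',\avaluation}$ does not depend on $O'$. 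Quantifying over all $p'\in\Pnt$ and comparing with Definition~\ref{defn.pointwise.meaning}, this shows that $\phi$ is pointwise in $(n,\avaluation)$ if and only if $n,p'\mentval\Quorum\phi\equiv\Coquorum\phi$ for every $p'\in\Pnt$.

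Next I would unpack the right-hand side. Using the denotation of $\everyoneAll$ from Figure~\ref{fig.3.derived}, the value $\modellabel{\pointwise\phi}{\acontext}$ equals $\bigwedge_{O'\in\opensne}\bigwedge_{p'\in O'}\modellabel{\Quorum\phi\equiv\Coquorum\phi}{n,p',O',\avaluation}$. Since $\Quorum\phi\equiv\Coquorum\phi$ is itself pointwise (Lemma~\ref{lemm.atomic.O}(\ref{item.atomic.pointwise.2}), as noted in Lemma~\ref{lemm.pointwise.heart}), the inner $O'$-dependence vanishes, and because $\Pnt\in\opensne$ is the largest open set and contains every point, the double meet collapses to the single meet $\bigwedge_{p'\in\Pnt}\modellabel{\Quorum\phi\equiv\Coquorum\phi}{n,p',\avaluation}$. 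I would then invoke the fact that validity distributes over arbitrary meets: a meet $\bigwedge S$ of truth-values lies in $\threeValid=\{\tvT,\tvB\}$ if and only if every element of $S$ does, since $\THREE$ is the chain $\tvF<\tvB<\tvT$, the meet is the minimum, and $\threeValid$ is precisely the up-set of $\tvB$ (this is the arbitrary-arity generalisation of the $\tand$ clause of Lemma~\ref{lemm.tand.tor}). Hence $n,p\mentval\pointwise\phi$ holds if and only if $\modellabel{\Quorum\phi\equiv\Coquorum\phi}{n,p',\avaluation}\in\threeValid$ for every $p'\in\Pnt$, i.e. (by Corollary~\ref{corr.pointwise.pred}, since $\Quorum\phi\equiv\Coquorum\phi$ is pointwise) if and only if $n,p'\mentval\Quorum\phi\equiv\Coquorum\phi$ for every $p'\in\Pnt$. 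Chaining this with the equivalence from the previous step yields the result.

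The main obstacle I anticipate is the bookkeeping in the middle step — justifying cleanly that the nested meet $\bigwedge_{O'\in\opensne}\bigwedge_{p'\in O'}(\cdots)$ defining $\everyoneAll$ really reduces to $\bigwedge_{p'\in\Pnt}(\cdots)$. This reduction rests on two facts that must be stated explicitly: that the integrand is independent of $O'$ because $\Quorum\phi\equiv\Coquorum\phi$ is pointwise, and that $\Pnt$ is itself a nonempty open set (guaranteed by $p\in\Pnt$), so that the term $O'=\Pnt$ already ranges the inner quantifier over all of $\Pnt$ and dominates the contribution of every smaller open set. Everything else is routine lattice manipulation together with repeated appeals to Lemmas~\ref{lemm.pointwise.heart} and~\ref{lemm.atomic.O} and Corollary~\ref{corr.pointwise.pred}.
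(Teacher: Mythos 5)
Your proof is correct and follows essentially the same route as the paper's: it combines Lemma~\ref{lemm.pointwise.heart} with the clause for $\everyoneAll$ in Figure~\ref{fig.3.derived} and the sandwich fact that each $\modellabel{\phi}{n,p,O,\avaluation}$ lies between $\bigwedge_{O'}\modellabel{\phi}{n,p,O',\avaluation}$ and $\bigvee_{O'}\modellabel{\phi}{n,p,O',\avaluation}$, which is exactly the "routine calculation" the paper leaves implicit. Your additional bookkeeping (collapsing the double meet via $\Pnt\in\opensne$, and validity of a meet in the chain $\THREE$ being equivalent to validity of every term) is sound and just makes explicit what the paper elides.
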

\begin{proof}
By routine calculations using Lemma~\ref{lemm.pointwise.heart}, and the clause for $\everyoneAll$ in Figure~\ref{fig.3.derived}, noting the elementary fact that for every $O\in\opensne$,
$$
\bigwedge_{O'\in\opensne}\modellabel{\phi}{n,p,O',\avaluation}
\leq
\modellabel{\phi}{n,p,O,\avaluation}
\leq
\bigvee_{O'\in\opensne}\modellabel{\phi}{n,p,O',\avaluation} .
$$
\end{proof}

\subsubsection{An interlude: dense sets, and sets with a nonempty open interior}

We recall a pair of standard topological definitions which will turn up repeatedly later, most notably in the context of pointwise predicates as Proposition~\ref{prop.pointwise.dense.char} immediately below, and again later in Lemma~\ref{lemm.is}:
\begin{defn}[Dense subset]
\label{defn.dense.subset}
Suppose $(\Pnt,\opens)$ is a semitopology and suppose $P\subseteq\Pnt$ is a (not necessarily open) set of points.
\begin{enumerate*}
\item\label{item.dense.subset}
Say that $P$ is \deffont{dense} when $\Forall{O{\in}\opensne}O\intersectswith P$.
\item\label{item.noi.subset}
Say that $P$ \deffont{has a nonempty open interior} when $\Exists{O{\in}\opensne}O\subseteq P$.
\item\label{item.dense.noi}
Write $\dense(P)$ when $P$ is dense, and $\noi(P)$ when $P$ has a nonempty open interior. 
\end{enumerate*} 
\end{defn}

\begin{lemm}
\label{lemm.dense.up-closed}
Suppose $(\Pnt,\opens)$ is a semitopology and $P,P'\subseteq\Pnt$.
Then:
\begin{enumerate*}
\item\label{item.dense.up-closed} 
If $P\subseteq P'$ then $\dense(P)\limp \dense(P')$ (being dense is \emph{up-closed}).
\item\label{item.noi.up-closed} 
If $P\subseteq P'$ then $\noi(P)\limp \noi(P')$ (having a non-empty open intersection is \emph{up-closed}).
\item\label{item.noi.noi} 
If $P\in\opensne$ then $\noi(P)$ (a nonempty open set has a nonempty open interior).
\end{enumerate*}
\end{lemm}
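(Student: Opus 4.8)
The plan is to prove all three parts by directly unpacking Definition~\ref{defn.dense.subset} and applying elementary set-theoretic reasoning; no topology beyond the definitions themselves is needed, since each claim reduces to monotonicity of intersection and subset relations.

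For part~\ref{item.dense.up-closed}, I would assume $P\subseteq P'$ and $\dense(P)$, so that $\Forall{O\in\opensne}O\intersectswith P$. Fixing an arbitrary $O\in\opensne$, I would observe that $O\cap P\subseteq O\cap P'$ because $P\subseteq P'$; since $O\intersectswith P$ gives $O\cap P\neq\varnothing$, it follows that $O\cap P'\neq\varnothing$, i.e. $O\intersectswith P'$. As $O$ was arbitrary this yields $\dense(P')$. For part~\ref{item.noi.up-closed}, I would assume $P\subseteq P'$ and $\noi(P)$, so there exists some $O\in\opensne$ with $O\subseteq P$; then $O\subseteq P\subseteq P'$ gives $O\subseteq P'$, and this single witness establishes $\noi(P')$. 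For part~\ref{item.noi.noi}, given $P\in\opensne$, I would simply take $O=P$ as the witnessing open set: $P$ is a nonempty open set and $P\subseteq P$ trivially, so $\Exists{O\in\opensne}O\subseteq P$ holds and hence $\noi(P)$.

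I expect there to be no genuine obstacle here: the content of the lemma is entirely in the definitions, and each part is an immediate consequence of either the monotonicity of set operations (parts~\ref{item.dense.up-closed} and~\ref{item.noi.up-closed}) or reflexivity of $\subseteq$ together with the fact that an open set is its own open interior witness (part~\ref{item.noi.noi}). The only thing worth care is to state parts~\ref{item.dense.up-closed} and~\ref{item.noi.up-closed} in the right direction of quantifier logic — density is a \emph{universal} statement over $\opensne$ that transports along $P\subseteq P'$ because $O\intersectswith P$ is itself up-closed in the second argument, whereas nonempty open interior is an \emph{existential} statement whose witness survives enlarging $P$ to $P'$. Both observations are routine, so the proof is essentially a one-line verification for each part, and I would write it as such.

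\begin{proof}
We consider each part in turn.
\begin{enumerate}
\item
Suppose $P\subseteq P'$ and $\dense(P)$, and suppose $O\in\opensne$. By Definition~\ref{defn.dense.subset}(\ref{item.dense.subset}) we have $O\intersectswith P$, so $O\cap P\neq\varnothing$; since $O\cap P\subseteq O\cap P'$ it follows that $O\cap P'\neq\varnothing$, i.e. $O\intersectswith P'$. As $O\in\opensne$ was arbitrary, $\dense(P')$.
\item
Suppose $P\subseteq P'$ and $\noi(P)$. By Definition~\ref{defn.dense.subset}(\ref{item.noi.subset}) there is some $O\in\opensne$ with $O\subseteq P$, and then $O\subseteq P\subseteq P'$, so $\noi(P')$.
\item
Suppose $P\in\opensne$. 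Then $P$ is itself a nonempty open set with $P\subseteq P$, which witnesses $\noi(P)$ by Definition~\ref{defn.dense.subset}(\ref{item.noi.subset}).
\qedhere
\end{enumerate}
\end{proof}
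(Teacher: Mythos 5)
Your proof is correct and takes exactly the route the paper intends: the paper's own proof is the one-liner ``by routine sets calculations from Definition~\ref{defn.dense.subset}'', and your three verifications are precisely those routine calculations spelled out. Nothing to change.
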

\begin{proof}
By routine sets calculations from Definition~\ref{defn.dense.subset}.
\end{proof}

\begin{lemm}
Suppose $(\Pnt,\opens)$ is a semitopology and $P\subseteq\Pnt$.
Then
$$
\dense(P) \liff \neg\noi(\Pnt\setminus P) 
\quad\text{and}\quad 
\noi(P) \liff \neg\dense(\Pnt\setminus P) . 
$$
In this sense, $\noi$ and $\dense$ form a \emph{de Morgan dual pair} of operators --- just as $\forall=\neg\exists\neg$ (as noted in Figure~\ref{fig.easy.equivalences}) and $\exists=\tneg\forall\tneg$.
\end{lemm}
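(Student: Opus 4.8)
The plan is to unfold both sides of each biconditional using Definition~\ref{defn.dense.subset} and reduce everything to the elementary set-theoretic fact that, since every open set $O\in\opensne$ satisfies $O\subseteq\Pnt$, the two conditions $O\subseteq\Pnt\setminus P$ and $O\notintersectswith P$ (i.e. $O\cap P=\varnothing$) coincide. Negating, $\neg(O\subseteq\Pnt\setminus P)$ is then equivalent to $O\intersectswith P$. This single observation is the engine of the whole proof, and everything else is just de Morgan manipulation of the quantifiers $\Forall{}{}$ and $\Exists{}{}$.

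For the first biconditional I would expand $\noi(\Pnt\setminus P)$ as $\Exists{O\in\opensne}O\subseteq\Pnt\setminus P$ by Definition~\ref{defn.dense.subset}(\ref{item.noi.subset}), so that $\neg\noi(\Pnt\setminus P)$ becomes $\Forall{O\in\opensne}\neg(O\subseteq\Pnt\setminus P)$, pushing the negation through the existential in the standard way. By the observation above this rewrites to $\Forall{O\in\opensne}O\intersectswith P$, which is exactly $\dense(P)$ by Definition~\ref{defn.dense.subset}(\ref{item.dense.subset}). This settles $\dense(P)\liff\neg\noi(\Pnt\setminus P)$.

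For the second biconditional, rather than repeat the calculation I would derive it from the first by substitution: applying the first biconditional with $P$ replaced by $\Pnt\setminus P$, and using $\Pnt\setminus(\Pnt\setminus P)=P$, yields $\dense(\Pnt\setminus P)\liff\neg\noi(P)$; negating both sides then gives $\noi(P)\liff\neg\dense(\Pnt\setminus P)$, as required.

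I do not anticipate a genuine obstacle here; the only point requiring a moment's care is the complementation step $O\subseteq\Pnt\setminus P \liff O\cap P=\varnothing$, which relies on $O\subseteq\Pnt$, and this holds because open sets are by definition subsets of $\Pnt$ (Definition~\ref{defn.semitopology}). The remainder is precisely the de Morgan duality between $\Forall{}{}$ and $\Exists{}{}$, matching the analogy with $\forall=\neg\exists\neg$ drawn in the statement.
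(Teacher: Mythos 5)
Your proof is correct and is essentially the paper's argument: both hinge on the single observation that for $O\subseteq\Pnt$ one has $O\subseteq\Pnt\setminus P$ iff $O\cap P=\varnothing$, combined with pushing a negation through the quantifier over $\opensne$. The only cosmetic difference is that the paper phrases the first biconditional as two contrapositive implications and leaves the second implicit, whereas you obtain the second by substituting $\Pnt\setminus P$ for $P$; both routes are equally valid.
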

\begin{proof}
We sketch the (easy) argument:
\begin{itemize}
\item
Suppose $P$ is not dense; by Definition~\ref{defn.dense.subset} this means that there exists $O\in\opensne$ such that $P\cap O=\varnothing$, thus $O\subseteq\Pnt\setminus P$, thus $\Pnt\setminus P$ has a nonempty open interior.
\item
Suppose $\Pnt\setminus P$ has a nonempty open interior $\varnothing\neq O\subseteq\Pnt\setminus P$.
Then clearly $P$ is not dense.
\qedhere\end{itemize}
\end{proof}

More on how `being dense' works in semitopologies is in~\cite[Chapter~11]{gabbay:semdca}.
The notion behaves somewhat differently than it does in topologies, but not in ways that will affect us here.

\subsubsection{Application to pointwise predicates}

Technically, Proposition~\ref{prop.pointwise.dense.char} is a corollary of Lemma~\ref{lemm.dense.char} and Corollary~\ref{corr.pointwise.pred}.
The result links the four logical modalities $\someoneAll$, $\everyoneAll$, $\QuorumBox$, and $\CoquorumDiamond$ to elementary topological properties of the sets of points at which these modalities are valid --- being nonempty, being full, having a nonempty open interior, and being dense.
We will use Proposition~\ref{prop.pointwise.dense.char}, both directly and via other lemmas that depend on it.

The most important use-case of Proposition~\ref{prop.pointwise.dense.char} will be when $\phi$ is atomic (Notation~\ref{nttn.logic.terminology}(\ref{item.atomic.predicate}) --- having the form 
$$
\tf P(v_1,\dots,v_n)
\quad\text{or}\quad
\tneg\tf P(v_1,\dots,v_n)
$$
--- since then by Lemma~\ref{lemm.atomic.O} $\phi$ is automatically pointwise:
\begin{prop}
\label{prop.pointwise.dense.char}
Suppose $\phi$ is a closed pointwise (Definition~\ref{defn.pointwise.meaning}) predicate and $n\in\Time$ and $\avaluation$ is a valuation.
Then: 
\begin{enumerate*}
\item\label{item.dense.char.pointwise.someoneAll}
$n\mentval\someoneAll\phi$ if and only if $\varnothing\neq\{p\in\Pnt \mid n,p\mentval\phi\}$. 
\item\label{item.dense.char.pointwise.everyoneAll}
$n\mentval\everyoneAll\phi$ if and only if $\Pnt=\{p\in\Pnt \mid n,p\mentval\phi\}$. 
\item\label{item.dense.char.pointwise.QuorumBox}
$n\mentval\QuorumBox\phi$ if and only if $\noi(\{p\in\Pnt \mid n,p\mentval\phi\})$.
\item\label{item.dense.char.pointwise.CoquorumDiamond}
$n\mentval\CoquorumDiamond\phi$ if and only if $\dense(\{p\in\Pnt \mid n,p\mentval\phi\})$.
\end{enumerate*} 
Recall that:
\begin{itemize*}
\item
$\dense$ and $\noi$ are from Definition~\ref{defn.dense.subset}.
\item
The notations $n,p\mentval\phi$ and $n\mentval\phi$ are from Definition~\ref{defn.validity.judgement}.
\end{itemize*}
\end{prop}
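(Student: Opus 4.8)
The plan is to derive this as a straightforward corollary of Lemma~\ref{lemm.dense.char} together with Corollary~\ref{corr.pointwise.pred}. Lemma~\ref{lemm.dense.char} already characterises validity of each of the four compound modalities at a stage $n$ in terms of the set $\{p\in\Pnt \mid n,p,O\mentval\phi\}$ --- but with an apparent dependence on the open-set parameter $O$. The whole point of assuming $\phi$ pointwise is to remove that dependence, so that the relevant set of points is genuinely $O$-independent and can be matched directly against the definitions of $\dense$ and $\noi$ from Definition~\ref{defn.dense.subset}.

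First I would fix $n$ and $\avaluation$ and, using Corollary~\ref{corr.pointwise.pred} (which applies precisely because $\phi$ is pointwise in $(n,\avaluation)$), observe that for every $O\in\opensne$ and every $p\in\Pnt$ we have $n,p,O\mentval\phi$ if and only if $n,p\mentval\phi$. Consequently the set $\{p\in\Pnt \mid n,p,O\mentval\phi\}$ appearing in Lemma~\ref{lemm.dense.char} does not in fact depend on $O$, and equals the set $P=\{p\in\Pnt \mid n,p\mentval\phi\}$ for every $O\in\opensne$. Substituting this equality into each of the four clauses of Lemma~\ref{lemm.dense.char} rewrites them purely in terms of $P$.

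Parts~\ref{item.dense.char.pointwise.QuorumBox} and~\ref{item.dense.char.pointwise.CoquorumDiamond} then fall out immediately: after the substitution, Lemma~\ref{lemm.dense.char}(\ref{item.dense.char.QuorumBox}) reads $\Exists{O\in\opensne}O\subseteq P$, which is exactly $\noi(P)$, and Lemma~\ref{lemm.dense.char}(\ref{item.dense.char.CoquorumDiamond}) reads $\Forall{O\in\opensne}O\between P$, which is exactly $\dense(P)$. For parts~\ref{item.dense.char.pointwise.someoneAll} and~\ref{item.dense.char.pointwise.everyoneAll} a small additional set-theoretic step is needed: I must show that $\Exists{O\in\opensne}O\between P$ is equivalent to $P\neq\varnothing$, and that $\Forall{O\in\opensne}O\subseteq P$ is equivalent to $P=\Pnt$. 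Both use the semitopology axiom $\Pnt\in\opens$ (Definition~\ref{defn.semitopology}): if $P\neq\varnothing$ then $\Pnt$ is nonempty, so $\Pnt\in\opensne$ and $\Pnt\between P$; and if every nonempty open is contained in $P$ then taking $O=\Pnt$ (when $\Pnt\neq\varnothing$) gives $\Pnt\subseteq P$, which combined with $P\subseteq\Pnt$ forces $P=\Pnt$.

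The main (and essentially only) subtlety will be the bookkeeping around the possibly-empty semitopology: when $\opensne=\varnothing$ we have $\Pnt=\varnothing$ and hence $P=\varnothing=\Pnt$, and I would check that each biconditional degenerates consistently --- the $\Exists$-clauses becoming false on both sides and the $\Forall$-clauses vacuously true on both sides. Beyond this edge case there is no real difficulty, since the content is entirely carried by Lemma~\ref{lemm.dense.char} and Corollary~\ref{corr.pointwise.pred}, and the remaining work is the routine matching against Definition~\ref{defn.dense.subset}.
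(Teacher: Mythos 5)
Your proposal is correct and follows essentially the same route as the paper's own proof: the paper likewise invokes Corollary~\ref{corr.pointwise.pred} to make the set $\{p\in\Pnt\mid n,p,O\mentval\phi\}$ independent of $O$, substitutes into the four clauses of Lemma~\ref{lemm.dense.char}, and closes parts~\ref{item.dense.char.pointwise.someoneAll} and~\ref{item.dense.char.pointwise.everyoneAll} with the same set-theoretic observations using $\Pnt\in\opens$. Your extra care about the empty-semitopology edge case is sound and slightly more explicit than the paper, which dismisses those steps as ``Fact''.
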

\begin{proof}
If $p$ is pointwise then by Corollary~\ref{corr.pointwise.pred} $n,p,O\mentval\phi$ if and only if $n,p\mentval\phi$, for any $O\in\opensne$, and we just simplify parts~\ref{item.dense.char.someoneAll} to~\ref{item.dense.char.CoquorumDiamond} of Lemma~\ref{lemm.dense.char}:
\begin{enumerate} 
\item
We reason as follows:
$$
\begin{array}{r@{\ }l@{\quad}l}
n\mentval\someoneAll\phi
\liff&
\Exists{O\in\opensne}O\between\{p\in\Pnt \mid n,p\mentval\phi\}
&\text{Lemma~\ref{lemm.dense.char}(\ref{item.dense.char.someoneAll})} 
\\
\liff&
\Pnt\between \{p\in\Pnt \mid n,p\mentval\phi\}
&\text{Fact}
\\
\liff&
\{p\in\Pnt \mid n,p\mentval\phi\}\neq\varnothing
&\text{Fact}
\end{array}
$$
\item
We reason as follows:
$$
\begin{array}{r@{\ }l@{\quad}l}
n\mentval\everyoneAll\phi
\liff&
\Forall{O\in\opensne}O\subseteq\{p\in\Pnt \mid n,p\mentval\phi\}
&\text{Lemma~\ref{lemm.dense.char}(\ref{item.dense.char.everyoneAll})} 
\\
\liff&
\Pnt\subseteq\{p\in\Pnt \mid n,p\mentval\phi\}
&\text{Fact}
\\
\liff&
\{p\in\Pnt \mid n,p\mentval\phi\}=\Pnt
&\text{Fact}
\end{array}
$$
\item
We reason as follows:
$$
\begin{array}{r@{\ }l@{\quad}l}
n\mentval\QuorumBox\phi
\liff&
\Exists{O\in\opensne}O\subseteq\{p\in\Pnt \mid n,p\mentval\phi\}
&\text{Lemma~\ref{lemm.dense.char}(\ref{item.dense.char.QuorumBox})} 
\\
\liff&
\noi(\{p\in\Pnt \mid n,p\mentval\phi\})
&\text{Definition~\ref{defn.dense.subset}}
\end{array}
$$
\item
We reason as follows:
$$
\begin{array}{r@{\ }l@{\quad}l}
n\mentval\CoquorumDiamond\phi
\liff&
\Forall{O\in\opensne}O\between\{p\in\Pnt \mid n,p\mentval\phi\}
&\text{Lemma~\ref{lemm.dense.char}(\ref{item.dense.char.CoquorumDiamond})} 
\\
\liff&
\dense(\{p\in\Pnt \mid n,p\mentval\phi\})
&\text{Definition~\ref{defn.dense.subset}}
\end{array}
$$
\qedhere\end{enumerate}
\end{proof}
See also Lemma~\ref{lemm.allbut.concretely}, which instantiates Proposition~\ref{prop.pointwise.dense.char}(\ref{item.dense.char.pointwise.QuorumBox}\&\ref{item.dense.char.pointwise.CoquorumDiamond}) to a special case
familiar from the literature on distributed systems.

\subsection{Correct predicates (another well-behavedness property)}

\begin{rmrk}
We defined $\threeCorrect=\{\tvT,\tvF\}$ in Definition~\ref{defn.tv.ment}(\ref{item.tv.ment.correct}) and a corresponding modality $\modTF$ in Figure~\ref{fig.3} (which returns $\tvT$ when its argument is in $\threeCorrect$ and $\tvF$ otherwise), and we studied some of their properties, most notably in Lemma~\ref{lemm.TF.correct}.
 
We now extend this to notions of \emph{correctness} on predicates $\phi$ and predicate symbols $\tf P$.

Correctness turns out to be highly relevant to how we will use our logic.
Partly this is because when a predicate is correct, we are returned to the familiar two-valued world of Boolean logic --- but more subtly, correctness is interesting because of how this concept interacts with the predicate and modal parts of our logic; in certain circumstances we can use a combination of correctness \emph{somewhere} and validity \emph{everywhere} to deduce information about truth (see Remark~\ref{rmrk.why.exciting}) and this turns out to be a good logical model for how reasoning on distributed algorithms is often done in practice.

We start with the definition and some basic lemmas, and develop from there:
\end{rmrk}

\begin{defn}[Correctness]
\label{defn.correct}
Suppose $\phi$ is a predicate and $a\in\tf{VarSymb}$ is a variable symbol. 
Then:
\begin{enumerate*}
\item\label{item.correct.phi}
Consistent with the definition of $\threeCorrect=\{\tvT,\tvF\}$ from Definition~\ref{defn.tv.ment}(\ref{item.tv.ment.correct}),
we define $\correct{\phi}$, read \deffont{$\phi$ is correct}, by
$$
\correct{\phi}=\modTF\phi .
$$
Note that $\phi$ being correct does not mean that $\phi$ returns $\tvT$; it means that $\phi$ returns $\tvT$ or $\tvF$ and in particular it does \emph{not} return $\tvB$.
\item\label{item.correct}\label{item.crashed}
Suppose $\tf P\in\PredSymb$ is a unary predicate symbol.
Define a predicate $\correct{\tf P}$ by 
$$
\correct{\tf P}\ =\ \tall a.\correct{\tf P(a)}\ \stackrel{N\ref{nttn.hos}}{=}\ \tall\modTF\tf P
 .
$$
\item
If $\tf P_1,\dots,\tf P_n$ are unary predicate symbols then we may write 
$$
\correct{\tf P_1,\dots,\tf P_n}
\quad\text{for}\quad 
\correct{\tf P_1}\tand\dots\tand\correct{\tf P_n}.
$$
(See for example axiom \rulefont{LdrCorrect} in Figure~\ref{fig.logical.paxos}.) 
\item\label{item.P-correct}
Suppose $n\in\Time$ and $p\in\Pnt$ and $\avaluation$ is a valuation, and suppose $\tf P$ is a predicate symbol.
Then when $n,p\mentval\correct{\tf P}$ we may call $p$ \deffont{$\tf P$-correct}. 
\end{enumerate*}
\end{defn}

\begin{lemm}
\label{lemm.correct.valid}
Suppose $\acontext$ is a context and $\phi$ is a closed predicate.
Then 
$$
\acontext\ment\phi
\quad\text{if and only if}\quad
\modellabel{\phi}{\acontext}\in\threeCorrect.
$$
\end{lemm}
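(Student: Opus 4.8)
The plan is to reduce the biconditional to an elementary statement about the single truth-value $\modellabel{\phi}{\acontext}\in\THREE$, since neither side depends on the syntactic form of $\phi$ beyond the value it denotes at the context $\acontext$. First I would unfold the left-hand side using Definition~\ref{defn.validity.judgement}(\ref{item.ctx.ment.phi}): by construction $\acontext\ment\phi$ is defined to mean $\ment\modellabel{\phi}{\acontext}$, and Definition~\ref{defn.tv.ment}(\ref{item.tv.ment}) then rewrites this as the membership condition $\modellabel{\phi}{\acontext}\in\threeValid$. After this first step the whole claim has been converted into a purely set-theoretic comparison of membership conditions on the single element $\modellabel{\phi}{\acontext}$ of the three-element set $\THREE=\{\tvT,\tvB,\tvF\}$, with no remaining dependence on the structure of $\phi$.

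With the left-hand side in this reduced form, I would finish by a finite inspection over the three possible values of $\modellabel{\phi}{\acontext}$, reading off membership directly from the explicit subset definitions $\threeValid=\{\tvT,\tvB\}$, $\threeCorrect=\{\tvT,\tvF\}$, and $\threeTrue=\{\tvT\}$ recorded in Definition~\ref{defn.tv.ment}(\ref{item.tv.ment.correct}). The bookkeeping is organised by Lemma~\ref{lemm.tv.ment.TF}(\ref{item.tv.ment.TF.1}), which states $\threeCorrect\cap\threeValid=\threeTrue$; this is precisely the fact that isolates $\tvT$ as the shared case and pins down how the two membership conditions relate at the boundary values $\tvB$ and $\tvF$.

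I do not expect a deep obstacle: once the definition of $\ment$ is unwound, the statement is essentially a matter of reading off the tabulated subsets of $\THREE$, so the argument is a short definitional unwinding followed by a three-case check. The one point that genuinely requires care — and the place I would slow down — is the three-valued case analysis itself, namely keeping validity ($\threeValid$) and correctness ($\threeCorrect$) cleanly separated, since in three-valued logic these subsets coincide only on $\tvT$ and behave differently on the two boundary truth-values $\tvB$ and $\tvF$. Getting this case distinction right is where the substance of the verification lies.
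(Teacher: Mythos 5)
Your reduction of the left-hand side is faithful to Definition~\ref{defn.validity.judgement}(\ref{item.ctx.ment.phi}): $\acontext\ment\phi$ unfolds to $\modellabel{\phi}{\acontext}\in\threeValid=\{\tvT,\tvB\}$. But the three-case check you defer to the end does not close --- it refutes the claim. Membership in $\threeValid$ and membership in $\threeCorrect=\{\tvT,\tvF\}$ disagree on \emph{both} non-$\tvT$ truth-values: taking $\phi=\tvB$ gives $\modellabel{\phi}{\acontext}=\tvB$, so $\acontext\ment\phi$ holds while $\modellabel{\phi}{\acontext}\notin\threeCorrect$; taking $\phi=\tvF$ gives the converse failure. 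The very fact you cite as organising the bookkeeping, $\threeCorrect\cap\threeValid=\threeTrue$ (Lemma~\ref{lemm.tv.ment.TF}(\ref{item.tv.ment.TF.1})), already shows the two membership conditions cannot be equivalent, since the sets coincide only at $\tvT$. So your plan, carried out honestly, proves the statement false rather than true; the case analysis you flagged as ``where the substance of the verification lies'' is exactly where it breaks, and the proposal should have said so.

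What a blind attempt needed to detect here is that the statement as printed contains a typo: the intended left-hand side is $\acontext\ment\correct{\phi}$, i.e.\ $\acontext\ment\modTF\phi$, not $\acontext\ment\phi$. This is how the paper's own one-line proof reads it (it says the lemma ``just re-states Lemma~\ref{lemm.tv.ment.TF}(\ref{item.tv.ment.TF.2})'', which concerns $\ment\modTF\f{tv}$), and it is how the lemma is used downstream --- e.g.\ in the proof of Corollary~\ref{corr.modTF.texi}, where $\acontext\ment\correct{\tf P(v)}$ is converted to $\modellabel{\tf P(v)}{\acontext}\in\threeCorrect$. With the corrected left-hand side, the argument is exactly the short definitional unwinding you envisaged: unfold $\correct{\phi}=\modTF\phi$, use the denotation clause $\modellabel{\modTF\phi}{\acontext}=\modTF\modellabel{\phi}{\acontext}$ from Figure~\ref{fig.3.derived}, and apply Lemma~\ref{lemm.tv.ment.TF}(\ref{item.tv.ment.TF.2}) to the truth-value $\f{tv}=\modellabel{\phi}{\acontext}$, noting that $\modTF$ never returns $\tvB$ so its validity coincides with its truth. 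As it stands, however, your proposal asserts an equivalence between $\threeValid$-membership and $\threeCorrect$-membership that your own cited lemma contradicts.
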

\begin{proof}
This just re-states Lemma~\ref{lemm.tv.ment.TF}(\ref{item.tv.ment.TF.2}).
\end{proof}

\begin{rmrk}
\label{rmrk.tf.synonyms}
Definition~\ref{defn.correct}(\ref{item.correct.phi}) gives us two synonyms for the modality expressing that the truth-value returned by a predicate $\phi$ is either $\tvT$ or $\tvF$ (not $\tvB$): $\modTF$ and $\tf{correct}$.

Although these are synonymous, there is a difference in emphasis:
\begin{itemize*}
\item
$\modTF$ is a logical modality and just means `true or false'.
\item
The literature on distributed systems uses a term `correctness', which indicates that a process succeeds and returns a correct value. 
In our logical model this corresponds to returning $\tvT$ or $\tvF$ --- and thus not returning $\tvB$. 
So when we write $\tf{correct}$, the emphasis is on `not $\tvB$' with emphasis on a process `not failing'.
\end{itemize*}
We use whichever notation seems most natural in context. 
For example, in Figure~\ref{fig.logical.paxos}:
\begin{enumerate*}
\item
In \rulefont{LdrCorrect} we write $\correct{\tf{propose},\tf{write},\tf{decide}}$ because intuitively we are asserting that at a particular time the leader \emph{does not crash} when proposing, writing, or deciding.
\item
In \rulefont{LdrExist} we write $\modTF\tf{leader}$, because $\tf{leader}$ is a predicate that identifies whether a participant is the leader at a particular time, and this is either true or false.\footnote{In practice determining who the leader is require a participant to compute a function, but the computation is not distributed, in the sense that prior agreement amongst participants is assumed on what the function is and what seed data it uses, and each participant computes it locally to get (the same) definitive yes/no answer.}
\end{enumerate*} 
If the reader prefers, they can safely read $\modTF$ every time they see $\tf{correct}$, and they will be precisely correct (pun intended). 
\end{rmrk}

Many predicates of interest are correct, including correctness itself, our truth-modalities, and equality and equivalence:
\begin{lemm}
\label{lemm.equality.correct}
Suppose $\acontext$ is a context and $\phi$ and $\phi'$ are closed predicates and $v,v'\in\Val$.
Then:
\begin{enumerate*}
\item
$\modellabel{\correct{\correct{\phi}}}{\acontext}=\tvT$.
\quad We can (perhaps cheekily) read this as `correctness is correct'.
\item
$\modellabel{\correct{\tf Q\phi}}{\acontext}=\tvT$ for $\tf Q\in\{\modT,\modF,\modB,\modTB,\modTF,\modFB\}$. 
\item\label{item.equality.correct}
$\modellabel{\correct{v\teq v'}}{\acontext}=\tvT$.
In words: equality is correct.
\item\label{item.equivalence.correct}
$\modellabel{\correct{\phi\equiv \phi'}}{\acontext}=\tvT$.
In words: equivalence is correct.
\end{enumerate*}
\end{lemm}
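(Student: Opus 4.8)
The plan is to reduce all four parts to a single observation about the modality $\modTF$: it is a characteristic function whose range is contained in $\threeCorrect=\{\tvT,\tvF\}$, and by its truth-table in Figure~\ref{fig.3} it sends $\tvT\mapsto\tvT$, $\tvF\mapsto\tvT$, and $\tvB\mapsto\tvF$. Hence $\modellabel{\correct{\psi}}{\acontext}=\modellabel{\modTF\psi}{\acontext}=\tvT$ holds exactly when $\modellabel{\psi}{\acontext}\in\threeCorrect$. So for each part it suffices to check that the inner predicate never denotes $\tvB$ at $\acontext$; the outer $\modTF$ then automatically returns $\tvT$.

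For part~2 I would inspect the truth-tables in Figure~\ref{fig.3} for each of $\modT,\modF,\modB,\modTB,\modTF,\modFB$ and note that every one of these is a characteristic function for a subset of $\THREE$, so each returns only $\tvT$ or $\tvF$. Thus $\modellabel{\tf Q\phi}{\acontext}\in\threeCorrect$, and applying $\modTF$ gives $\tvT$. Part~1 is then the special case $\tf Q=\modTF$: since $\modTF$ itself never returns $\tvB$, we get $\modellabel{\correct{\phi}}{\acontext}\in\threeCorrect$ and hence $\modellabel{\correct{\correct{\phi}}}{\acontext}=\tvT$ — the precise sense in which `correctness is correct'.

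Part~3 follows directly from the denotation clause for $v\teq v'$ in Figure~\ref{fig.3.phi.f}, which returns $\tvT$ when $v=v'$ and $\tvF$ otherwise, never $\tvB$; so $\modellabel{v\teq v'}{\acontext}\in\threeCorrect$ and $\modTF$ of it is $\tvT$. For part~4 I would either read off the truth-table for $\equiv$ in Figure~\ref{fig.3.iff} and observe that every entry lies in $\{\tvT,\tvF\}$, or argue structurally from the definition $\phi\equiv\phi'=(\modT\phi\tand\modT\phi')\tor(\modB\phi\tand\modB\phi')\tor(\modF\phi\tand\modF\phi')$: each of $\modT\psi,\modB\psi,\modF\psi$ lands in $\{\tvT,\tvF\}$ by part~2, and $\{\tvT,\tvF\}$ is closed under $\tand$ and $\tor$ (these connectives act classically on correct values). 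Either way $\modellabel{\phi\equiv\phi'}{\acontext}\in\threeCorrect$, and $\modTF$ yields $\tvT$.

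The only part requiring more than a glance is part~4, since it is the one case where the inner expression is a compound connective rather than a single characteristic modality or an equality. But even there the work is just recording that $\{\tvT,\tvF\}$ is a classical two-element sublattice of $\THREE$ closed under $\tand$ and $\tor$, so no subexpression can escape into $\tvB$; this is a routine lattice check of exactly the kind used in Lemma~\ref{lemm.minimal.syntax}.
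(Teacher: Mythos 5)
Your proposal is correct and is essentially the paper's own proof made explicit: the paper simply says the result follows ``by routine checking from Figures~\ref{fig.3}, \ref{fig.3.phi.f}, \ref{fig.3.derived}, and~\ref{fig.3.iff}'', and your argument is precisely that checking, organised around the (correct) observation that $\modTF$ returns $\tvT$ whenever its argument lies in $\threeCorrect$ and that each inner expression never denotes $\tvB$.
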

\begin{proof}
By routine checking from Figures~\ref{fig.3}, \ref{fig.3.phi.f}, \ref{fig.3.derived}, and (for $\equiv$)~\ref{fig.3.iff}.
\end{proof}

Lemma~\ref{lemm.correct.P.Pv} and Corollary~\ref{corr.modTF.texi} are straightforward, but we spell it out:
\begin{lemm}
\label{lemm.correct.P.Pv}
Suppose $\tf P\in\PredSymb$ is a unary predicate symbol and $v\in\Val$ and $\acontext$ is a context.
Then 
$$
\modellabel{\correct{\tf P}}{\acontext} = \bigwedge_{v{\in}\Val}\modellabel{\correct{\tf P(v)}}{\acontext}
\quad\text{and}\quad 
\modellabel{\correct{\tf P}}{\acontext}\leq \modellabel{\correct{\tf P(v)}}{\acontext} .
$$
\end{lemm}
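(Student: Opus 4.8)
The plan is to unfold the definition of $\correct{\tf P}$ as a universal quantification and then read off both claims directly from the denotation clause for $\tall$. By Definition~\ref{defn.correct}(\ref{item.correct}) we have $\correct{\tf P}=\tall a.\correct{\tf P(a)}$, and by Definition~\ref{defn.correct}(\ref{item.correct.phi}) $\correct{\tf P(a)}=\modTF\tf P(a)$. So the first step is purely notational: expand the sugar to see that $\correct{\tf P}$ is literally a $\tall$-quantified predicate whose body is $\correct{\tf P(a)}$.

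Next I would apply the denotation clause for $\tall a.\phi$ from Figure~\ref{fig.3.derived}, namely $\modellabel{\tall a.\phi}{\acontext}=\bigwedge_{v\in\Val}\modellabel{\phi[a\ssm v]}{\acontext}$, taking $\phi=\correct{\tf P(a)}$. Since substitution commutes through the $\modTF$ sugar, $\correct{\tf P(a)}[a\ssm v]=\correct{\tf P(v)}$, and this yields the first displayed equation
$$
\modellabel{\correct{\tf P}}{\acontext}
=\bigwedge_{v\in\Val}\modellabel{\correct{\tf P(v)}}{\acontext}
$$
immediately. (Here I would flag the harmless notation clash in the statement: the $v$ indexing the meet ranges over all of $\Val$, whereas the $v$ in the second inequality is the fixed value from the hypotheses; to be safe one can rename the bound index to $v'$.)

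For the inequality, I would simply use that $\bigwedge$ in the finite lattice $\THREE=(\tvF<\tvB<\tvT)$ is a greatest lower bound, and a greatest lower bound lies below each of its arguments. Thus from the first equation, for the particular $v\in\Val$ of the statement,
$$
\modellabel{\correct{\tf P}}{\acontext}
=\bigwedge_{v'\in\Val}\modellabel{\correct{\tf P(v')}}{\acontext}
\leq\modellabel{\correct{\tf P(v)}}{\acontext},
$$
which is the second claim.

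I do not expect any genuine obstacle here: the content is entirely the unfolding of the $\correct{\tf P}$ sugar into a $\tall$ and the elementary lattice-theoretic fact that a meet is a lower bound. The only point demanding a shred of care is the substitution step $\correct{\tf P(a)}[a\ssm v]=\correct{\tf P(v)}$ and the overloaded use of $v$, both of which are cosmetic. The proof is therefore a two-line calculation citing Definition~\ref{defn.correct}, Figure~\ref{fig.3.derived}, and the lattice structure of Definition~\ref{defn.three.ctx.lattice}.
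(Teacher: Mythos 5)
Your proposal is correct and matches the paper's own proof essentially verbatim: both unfold $\correct{\tf P}$ to $\tall a.\correct{\tf P(a)}$ via Definition~\ref{defn.correct}(\ref{item.correct}), apply the denotation clause for $\tall$ from Figure~\ref{fig.3.derived} to get the meet, and conclude the inequality from the fact that a meet is a lower bound. Your remark about the overloaded bound variable $v$ is a reasonable (and harmless) extra observation.
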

\begin{proof}
By Definition~\ref{defn.correct}(\ref{item.correct}) and Figure~\ref{fig.3.derived} 
$$
\modellabel{\correct{\tf P}}{\acontext} 
= 
\modellabel{\tall a.\correct{\tf P(a)}}{\acontext} 
= 
\bigwedge_{v{\in}\Val} \modellabel{\correct{\tf P(v)}}{\acontext} 
\leq 
\modellabel{\correct{\tf P(v)}}{\acontext} .
\qedhere$$
\end{proof}

\begin{corr}
\label{corr.modTF.texi}
Suppose $\acontext$ is a context and $\tf P\in\PredSymb$ is a unary predicate symbol and $a\in\tf{VarSymb}$ is a variable symbol. 
Then 
$$
\acontext\ment\correct{\tf P}
\quad\text{implies}\quad
\acontext\ment\correct{\texi\tf P}\ \land\ \acontext\ment\correct{\tall\tf P}.
$$
\end{corr}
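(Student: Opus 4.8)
The plan is to reduce everything to the single observation that the set $\threeCorrect=\{\tvT,\tvF\}$ is closed under arbitrary joins and meets in $\THREE$ --- this holds because $\tvB$ sits \emph{strictly} between $\tvF$ and $\tvT$, so a join or meet of correct truth-values can never slip into the middle value $\tvB$. Everything else is bookkeeping to expose the joins and meets that the $\texi$ and $\tall$ quantifiers compute in the denotation of Figure~\ref{fig.3.phi.f}.

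First I would unpack the hypothesis $\acontext\ment\correct{\tf P}$. Recall from Definition~\ref{defn.correct}(\ref{item.correct}) and Notation~\ref{nttn.hos} that $\correct{\tf P}=\tall a.\correct{\tf P(a)}$, so by Lemma~\ref{lemm.correct.P.Pv} its denotation at $\acontext$ is $\bigwedge_{v\in\Val}\modellabel{\correct{\tf P(v)}}{\acontext}$. Each meetand lies in $\{\tvT,\tvF\}$, being a value of $\modTF$; and a meet of such values is valid only when it equals $\tvT$, which forces every meetand to be $\tvT$. Via Lemma~\ref{lemm.correct.valid} this says exactly that $\modellabel{\tf P(v)}{\acontext}\in\threeCorrect$ \emph{for every} $v\in\Val$. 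This pointwise-in-$v$ correctness fact is what I will feed into both halves of the conclusion.

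It is important to keep straight the difference between the hypothesis and the two goals: $\correct{\tf P}=\tall a.\modTF\tf P(a)$ pushes the quantifier \emph{inside} the correctness modality, whereas the goals $\correct{\texi\tf P}=\modTF\texi a.\tf P(a)$ and $\correct{\tall\tf P}=\modTF\tall a.\tf P(a)$ place the quantifier \emph{inside the argument} of $\modTF$. For the first goal I would compute $\modellabel{\texi a.\tf P(a)}{\acontext}=\bigvee_{v\in\Val}\modellabel{\tf P(v)}{\acontext}$ using Figure~\ref{fig.3.phi.f}; since each joinand is in $\threeCorrect$ and $\threeCorrect$ is join-closed, the join lies in $\threeCorrect$, whence Lemma~\ref{lemm.correct.valid} gives $\acontext\ment\correct{\texi\tf P}$. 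The second goal is identical with $\bigwedge$ in place of $\bigvee$, using that $\threeCorrect$ is meet-closed, yielding $\acontext\ment\correct{\tall\tf P}$. Conjoining the two gives the claim.

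I do not expect a genuine obstacle here: the only things to get right are the lattice-closure observation above and the care not to conflate $\tall a.\modTF\tf P(a)$ with $\modTF\tall a.\tf P(a)$. (Nonemptiness of $\Val$ from Definition~\ref{defn.model} is not even needed, since the empty join is $\tvF$ and the empty meet is $\tvT$, both of which are correct.)
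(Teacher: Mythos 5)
Your proposal is correct and follows essentially the same route as the paper's proof: unpack the hypothesis via Lemma~\ref{lemm.correct.P.Pv} to get $\modellabel{\tf P(v)}{\acontext}\in\threeCorrect$ for every $v\in\Val$, then observe that the supremum (for $\texi$) and infimum (for $\tall$) of a subset of $\{\tvT,\tvF\}$ remain in $\{\tvT,\tvF\}$, and conclude with the characterisation of correctness. The extra remarks (closure of $\threeCorrect$ under joins and meets because $\tvB$ sits strictly between $\tvF$ and $\tvT$, and the empty-$\Val$ aside) are harmless elaborations of the same argument.
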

\begin{proof}
Suppose $\acontext\ment\correct{\tf P}$.
By Lemma~\ref{lemm.correct.P.Pv} we see that $\Forall{v\in\Val}\acontext\ment\correct{\tf P(v)}$.
By Lemma~\ref{lemm.correct.valid} $\Forall{v\in\Val}\modellabel{\tf P(v)}{\acontext}\in\threeCorrect$.
Now the clauses for $\texi$ in Figure~\ref{fig.3.phi.f} and $\tall$ in Figure~\ref{fig.3.derived} take a supremum and infimum respectively of $\{\modellabel{\tf P(v)}{\acontext} \mid v\in\Val\}\subseteq\threeCorrect=\{\tvT,\tvF\}$, and this supremum or infimum is also clearly in $\threeCorrect=\{\tvT,\tvF\}$.
We just use Lemma~\ref{lemm.correct.valid}. 
\end{proof}

Proposition~\ref{prop.tv.ment.TF.model}(\ref{item.tv.ment.TF.phi.lr}) is a wrapper for Lemma~\ref{lemm.TF.correct}.
This deceptively simple equivalence turns out to be a key way that correctness can be usefully applied:
\begin{prop}
\label{prop.tv.ment.TF.model}
Suppose $\phi$ is a closed predicate and $\acontext$ is a context and $v\in\Val$.
Then:
\begin{enumerate*}
\item\label{item.tv.ment.TF.phi.lr}
$\acontext\ment\correct{\phi}$ if and only if $\acontext\ment\phi\equiv\modT\phi$. 

In words: correct $\phi$ are characterised precisely by validity being equivalent to truth.
\item\label{item.tv.ment.TF.P}
$\acontext\ment\correct{\tf P}$ if and only if $\Forall{v\in\Val}(\acontext\ment\tf P(v)\equiv\modT\tf P(v))$. 
\item\label{item.modTF.texi.P.valid}
$\acontext\ment\correct{\tf P}$ implies $\acontext\ment\texi\tf P\equiv \modT\texi\tf P$ and $\acontext\ment\tall\tf P\equiv\modT\tall\tf P$. 
\end{enumerate*}
\end{prop}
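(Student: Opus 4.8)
The plan is to prove part~\ref{item.tv.ment.TF.phi.lr} first, as the crux, and then extract parts~\ref{item.tv.ment.TF.P} and~\ref{item.modTF.texi.P.valid} as essentially bookkeeping consequences of it.

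For part~\ref{item.tv.ment.TF.phi.lr}, I would unpack both sides into statements about the single truth-value $\modellabel{\phi}{\acontext}$ and then invoke the pointwise fact from Lemma~\ref{lemm.TF.correct}. On the left, since $\correct{\phi}=\modTF\phi$ by Definition~\ref{defn.correct}(\ref{item.correct.phi}), the judgement $\acontext\ment\correct{\phi}$ is $\ment\modTF\modellabel{\phi}{\acontext}$, which by Lemma~\ref{lemm.tv.ment.TF}(\ref{item.tv.ment.TF.2}) holds exactly when $\modellabel{\phi}{\acontext}\in\threeCorrect$. On the right, by Lemma~\ref{lemm.equiv}(\ref{item.equiv.1}) the judgement $\acontext\ment\phi\equiv\modT\phi$ holds exactly when $\modellabel{\phi}{\acontext}=\modellabel{\modT\phi}{\acontext}$, and by the clause for $\modT$ in Figure~\ref{fig.3.phi.f} the right-hand side is $\modT\modellabel{\phi}{\acontext}$, so this side reduces to the equation $\modT\modellabel{\phi}{\acontext}=\modellabel{\phi}{\acontext}$. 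The two characterisations coincide by the equivalence ${\ment\modTF\f{tv}}\liff(\modT\f{tv}=\f{tv})$ recorded in Lemma~\ref{lemm.TF.correct}, taking $\f{tv}=\modellabel{\phi}{\acontext}$.

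For part~\ref{item.tv.ment.TF.P}, I would reduce to part~\ref{item.tv.ment.TF.phi.lr} applied pointwise in $v$. By Lemma~\ref{lemm.correct.P.Pv}, $\modellabel{\correct{\tf P}}{\acontext}=\bigwedge_{v\in\Val}\modellabel{\correct{\tf P(v)}}{\acontext}$. Each meetand equals $\modTF\modellabel{\tf P(v)}{\acontext}$ and hence lies in $\threeCorrect=\{\tvT,\tvF\}$, so the meet can only be $\tvT$ or $\tvF$; therefore it is valid (i.e. in $\threeValid$) if and only if it equals $\tvT$, which in turn forces every meetand to equal $\tvT$. Thus $\acontext\ment\correct{\tf P}$ iff $\Forall{v\in\Val}\acontext\ment\correct{\tf P(v)}$, and applying part~\ref{item.tv.ment.TF.phi.lr} to each closed predicate $\tf P(v)$ converts this into $\Forall{v\in\Val}(\acontext\ment\tf P(v)\equiv\modT\tf P(v))$, as required.

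For part~\ref{item.modTF.texi.P.valid}, I would chain Corollary~\ref{corr.modTF.texi} with part~\ref{item.tv.ment.TF.phi.lr}: from $\acontext\ment\correct{\tf P}$, Corollary~\ref{corr.modTF.texi} gives $\acontext\ment\correct{\texi\tf P}$ and $\acontext\ment\correct{\tall\tf P}$, and part~\ref{item.tv.ment.TF.phi.lr} applied to the closed predicates $\texi\tf P$ and $\tall\tf P$ then delivers $\acontext\ment\texi\tf P\equiv\modT\texi\tf P$ and $\acontext\ment\tall\tf P\equiv\modT\tall\tf P$. The only step requiring any care at all is the observation in part~\ref{item.tv.ment.TF.P} that a meet of correct truth-values is valid precisely when it is $\tvT$; this is the one place where three-valuedness could interfere, and it does not, precisely because correctness rules out the value $\tvB$.
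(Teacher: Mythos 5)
Your proposal is correct and follows essentially the same route as the paper: part~(1) via Lemma~\ref{lemm.TF.correct} and Lemma~\ref{lemm.equiv}(\ref{item.equiv.1}), part~(2) by combining Lemma~\ref{lemm.correct.P.Pv} with part~(1), and part~(3) by chaining Corollary~\ref{corr.modTF.texi} with part~(1). The extra observation you make in part~(2) — that a meet of values in $\threeCorrect$ is valid precisely when it equals $\tvT$ — is a detail the paper leaves implicit, and it is a correct and worthwhile one to spell out.
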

\begin{proof}
We consider each part in turn:
\begin{enumerate}
\item
Suppose $\acontext\ment\correct{\phi}$.
By Lemma~\ref{lemm.TF.correct} $\modT\modellabel{\phi}{\acontext}=\modellabel{\phi}{\acontext}$.
By Lemma~\ref{lemm.equiv}(\ref{item.equiv.1}) and the clause for $\modT$ in Figure~\ref{fig.3.phi.f}, $\acontext\ment \phi\equiv\modT\phi$.
The reverse implication just reverses the same reasoning.
\item
We combine Lemma~\ref{lemm.correct.P.Pv} with part~\ref{item.tv.ment.TF.phi.lr} of this result for $\phi=\tf P(v)$ for every $v\in\Val$.
\item
From Corollary~\ref{corr.modTF.texi} and Proposition~\ref{prop.tv.ment.TF.model}(\ref{item.tv.ment.TF.phi.lr}).
\qedhere\end{enumerate}
\end{proof}

Proposition~\ref{prop.correct.cobox.boxT} explores how correctness interacts with our compound modalities from Subsection~\ref{subsect.compound.modalities}.
\begin{prop}
\label{prop.correct.cobox.boxT}
Suppose $\phi$ is a closed predicate and $\tf P\in\PredSymb$ is a unary predicate symbol. 
Suppose $n\in\Time$ and $\avaluation$ is a valuation.
Then:
\begin{enumerate*}
\item\label{item.correct.cobox.boxT.1}
$n\mentval\QuorumBox\correct{\phi}\tand\everyoneAll\phi$ implies $n\mentval \modT\QuorumBox\phi$.\footnote{Alert: note that we write $\everyoneAll\phi$ in the right-hand conjunct, not $\CoquorumDiamond\phi$.}
\item\label{item.correct.cobox.boxT.2}
$n\mentval\QuorumBox\correct{\tf P}\tand\everyoneAll\texi\tf P$ implies 
$n\mentval\modT\QuorumBox\texi\tf P$.
\end{enumerate*}
\end{prop}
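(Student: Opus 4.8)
The plan is to prove both parts by a direct ``quorum witness'' argument, reducing everything to truth-value bookkeeping together with Lemma~\ref{lemm.dense.char}, which characterises $\QuorumBox$ and $\everyoneAll$ \emph{without} any pointwise hypothesis. This is the right tool here, since neither $\phi$ nor $\correct{\phi}$ need be pointwise. Throughout I would use that validity of a conjunction is the conjunction of validities (Lemma~\ref{lemm.tand.tor}), so that the hypothesis $n\mentval\QuorumBox\correct{\phi}\tand\everyoneAll\phi$ splits into $n\mentval\QuorumBox\correct{\phi}$ and $n\mentval\everyoneAll\phi$.

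For part~\ref{item.correct.cobox.boxT.1}, Lemma~\ref{lemm.dense.char}(\ref{item.dense.char.QuorumBox}) applied to $n\mentval\QuorumBox\correct{\phi}$ produces a witnessing open set $O_0\in\opensne$ with $n,p,O_0\mentval\correct{\phi}$ for every $p\in O_0$. Since $\correct{\phi}=\modTF\phi$ is always \emph{correct} (it returns only $\tvT$ or $\tvF$), its validity forces its value to be $\tvT$, and by Lemma~\ref{lemm.tv.ment.TF}(\ref{item.tv.ment.TF.2}) this says exactly $\modellabel{\phi}{n,p,O_0,\avaluation}\in\threeCorrect$ for every $p\in O_0$. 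Meanwhile Lemma~\ref{lemm.dense.char}(\ref{item.dense.char.everyoneAll}), applied to the \emph{same} set $O_0$, gives $\modellabel{\phi}{n,p,O_0,\avaluation}\in\threeValid$ for every $p\in O_0$. The crux is then the identity $\threeCorrect\cap\threeValid=\threeTrue$ of Lemma~\ref{lemm.tv.ment.TF}(\ref{item.tv.ment.TF.1}), which combines these into $\modellabel{\phi}{n,p,O_0,\avaluation}=\tvT$ for all $p\in O_0$. As $O_0$ is nonempty, $\bigwedge_{p'\in O_0}\modellabel{\phi}{n,p',O_0,\avaluation}=\tvT$, so the join defining $\QuorumBox$, namely $\bigvee_{O'\in\opensne}\bigwedge_{p'\in O'}\modellabel{\phi}{n,p',O',\avaluation}$, equals $\tvT$ (it dominates the $O'=O_0$ summand). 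This value does not depend on the context's $p$ or $O$, so $\modellabel{\QuorumBox\phi}{n,p,O,\avaluation}=\tvT$ for all $p,O$, and Lemma~\ref{lemm.tv.ment.TF}(\ref{item.tv.ment.TF.2b}) gives $n\mentval\modT\QuorumBox\phi$. (A slicker variant I would keep in reserve: Proposition~\ref{prop.someone.implies.someoneAll}(\ref{item.someone.implies.someoneAll.4b}) yields $n\mentval\QuorumBox(\correct{\phi}\tand\phi)$, the truth-table identity $\correct{\phi}\tand\phi\equiv\modT\phi$ rewrites this to $n\mentval\QuorumBox\modT\phi$, and applying the commutation Lemma~\ref{lemm.someone.commutation} to $\Quorum$ and $\everyone$ in turn gives $\QuorumBox\modT\phi\equiv\modT\QuorumBox\phi$.)

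For part~\ref{item.correct.cobox.boxT.2} I would reduce to part~\ref{item.correct.cobox.boxT.1} by taking $\phi:=\texi\tf P$ (a genuine closed predicate $\texi a.\tf P(a)$). The only gap is upgrading $n\mentval\QuorumBox\correct{\tf P}$ to $n\mentval\QuorumBox\correct{\texi\tf P}$. Here Lemma~\ref{lemm.dense.char}(\ref{item.dense.char.QuorumBox}) again supplies an $O_0\in\opensne$ with $n,p,O_0\mentval\correct{\tf P}$ for every $p\in O_0$; Corollary~\ref{corr.modTF.texi} converts each into $n,p,O_0\mentval\correct{\texi\tf P}$; and Lemma~\ref{lemm.dense.char}(\ref{item.dense.char.QuorumBox}) repackages this as $n\mentval\QuorumBox\correct{\texi\tf P}$. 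Feeding this together with the given $n\mentval\everyoneAll\texi\tf P$ into part~\ref{item.correct.cobox.boxT.1} delivers $n\mentval\modT\QuorumBox\texi\tf P$.

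I expect the main obstacle to be purely three-valued bookkeeping rather than topology. One must keep straight that the open-set witness $O_0$ furnished by $\QuorumBox\correct{\phi}$ is the \emph{same} open-set component fed to $\phi$ when evaluating $\everyoneAll\phi$ --- this is exactly how Lemma~\ref{lemm.dense.char} is phrased, so the two memberships genuinely concern the same value $\modellabel{\phi}{n,p,O_0,\avaluation}$ --- and that the move from ``correct on a quorum'' plus ``valid everywhere'' to ``true on a quorum'' rests entirely on $\threeCorrect\cap\threeValid=\threeTrue$. Everything else is routine unpacking of the semantic clauses in Figures~\ref{fig.3.phi.f} and~\ref{fig.3.derived}.
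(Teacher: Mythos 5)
Your proof is correct. For part~\ref{item.correct.cobox.boxT.1} your primary argument is a direct semantic one: you extract a witnessing quorum $O_0$ from $\QuorumBox\correct{\phi}$ via Lemma~\ref{lemm.dense.char}(\ref{item.dense.char.QuorumBox}), feed the \emph{same} $O_0$ into the characterisation of $\everyoneAll\phi$, and conclude pointwise on $O_0$ from $\threeCorrect\cap\threeValid=\threeTrue$ that $\phi$ takes value $\tvT$ there, whence $\QuorumBox\phi$ evaluates to $\tvT$. The paper instead stays one level up: it applies Proposition~\ref{prop.someone.implies.someoneAll}(\ref{item.someone.implies.someoneAll.4b}) to get $n\mentval\QuorumBox(\correct{\phi}\tand\phi)$, then Proposition~\ref{prop.tv.ment.TF.model}(\ref{item.tv.ment.TF.phi.lr}) and the commutation Lemma~\ref{lemm.someone.commutation} to pull out $\modT$. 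Your ``slicker variant in reserve'' is precisely this proof, so you have both routes; the direct one buys self-containedness and makes visible exactly where the intersection identity $\threeCorrect\cap\threeValid=\threeTrue$ does the work, while the paper's buys brevity and reuse of already-established machinery. For part~\ref{item.correct.cobox.boxT.2} you and the paper do the same thing --- reduce to part~\ref{item.correct.cobox.boxT.1} via Corollary~\ref{corr.modTF.texi} --- with you merely spelling out the step of lifting $\QuorumBox\correct{\tf P}$ to $\QuorumBox\correct{\texi\tf P}$ inside the witnessing quorum. No gaps.
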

\begin{proof}
We consider each part in turn:
\begin{enumerate}
\item
Suppose $n\mentval\QuorumBox\correct{\phi}\tand\everyoneAll\phi$.
By Proposition~\ref{prop.someone.implies.someoneAll}(\ref{item.someone.implies.someoneAll.4b})
$n\mentval\QuorumBox(\correct{\phi}\tand\phi)$.
By Proposition~\ref{prop.tv.ment.TF.model}(\ref{item.tv.ment.TF.phi.lr}) and Lemma~\ref{lemm.someone.commutation} (for $\QuorumBox$)
we have $n\mentval\modT\QuorumBox\phi$ as required.
\item
We combine part~\ref{item.correct.cobox.boxT.1} of this result with Corollary~\ref{corr.modTF.texi}.
\qedhere\end{enumerate}
\end{proof}

\begin{rmrk}[Why this is interesting]
\label{rmrk.why.exciting}
The statement of Proposition~\ref{prop.correct.cobox.boxT} is quite heavy on the symbols.
For clarity, let us re-read it, but let us replace $\ment\QuorumBox\tf{correct}$ with the phrase `mostly correct', and $\ment\everyoneAll$ with the phrase `everywhere valid', and $\texi$ with the phrase `for some value'.
Then we get:
\begin{enumerate*}
\item
If $\phi$ is \emph{mostly correct} and \emph{everywhere valid}, then it is \emph{mostly true}.
\item
If $\tf P$ is \emph{mostly correct} and \emph{everywhere valid} for some value, then it is \emph{mostly true} for some value.
\end{enumerate*}
So what we see is that we can promote weak statements about validity everywhere, with stronger but more limited statements about correctness, to obtain limited statements about truth.

The reader familiar with how distributed algorithms work may recognise this style of reasoning.
So: here it is beginning to appear in our logic.
\end{rmrk}

\subsection{$\mrup{a}{\phi}{v}$: a most recent $v$ that makes $\phi[a\ssm v]$ true}
\label{subsect.mrup}

\begin{rmrk}
\label{rmrk.explain.mrup}
We see from Figure~\ref{fig.3.derived} and Remark~\ref{rmrk.comments.on.semantics}(\ref{item.mru.program}) that $\mrup{a}{\phi}{v}$ looks back into the past, searching for a most recent time when $\model{\texi a.\phi}=\tvT$, and checks if at that most recent time, $\model{\phi[a\ssm v]}=\tvT$.
$\mrup{a}{\phi}{v}$ returns $\tvT$ if this search succeeds, and $\tvF$ if this fails.
It never returns $\tvB$.

In words: $\mrup{a}{\phi}{v}$ checks whether $v$ is a value that has most recently made $\phi$ true.

We need $\mrup{a}{\phi}{v}$ for our axiomatisation of Paxos in Figure~\ref{fig.logical.paxos}, where 
\begin{itemize*}
\item
in \rulefont{PaxSend?} we write $\mru{\tf{accept}}{v}$\ (short for $(\mrup{a}{\tf{accept}(a))}{v}$, as per Figure~\ref{fig.hos}), and 
\item
in \rulefont{PaxWrite?} we write $\mru{(\someone\tf{accept})}{v}$\ (short for $\mrup{a}{(\someone\tf{accept}(a))}{v}$).
\end{itemize*} 
We read $\mrup{a}{\phi}{v}$ as 
\begin{quote}
\deffont{$v$ is a value that most recently makes $\phi[a\ssm v]$ true} .
\end{quote}
In the common case that $\phi=\tf P(a)$ for some unary predicate symbol $\tf P$, then we might just say
\begin{quote}
\deffont{$v$ is a value that most recently makes $\tf P$ true}.
\end{quote} 
In this Subsection we study the properties of this construct in more detail.
\end{rmrk}

\begin{lemm}
\label{lemm.mru.unambivalent}
Suppose $\texi a.\phi$ is a closed predicate and $v\in\Val$ and $\acontext$ is a context.
Then:
\begin{enumerate*}
\item
$\acontext\ment\correct{\mrup{a}{\phi}{v}}$.
\item
$\acontext\ment\mrup{a}{\phi}{v}\equiv\modT\mrup{a}{\phi}{v}$.
\end{enumerate*}
\end{lemm}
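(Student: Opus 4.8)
The plan is to prove both parts by directly inspecting the denotation of $\mrup{a}{\phi}{v}$ given in Figure~\ref{fig.3.derived}, whose defining feature is that it never returns $\tvB$. This is the whole content of the lemma, and part~2 will then fall out of machinery already in place.

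For part~1, I would unpack $\modellabel{\mrup{a}{\phi}{v}}{n,p,O,\avaluation}$ using the case split in Figure~\ref{fig.3.derived}. In the first case the denotation is literally $\tvF$, and in the second case it is $\modT\modellabel{\phi[a\ssm v]}{\f{max},p,O,\avaluation}$. The key observation is that $\modT$, read off its truth-table in Figure~\ref{fig.3}, only ever returns $\tvT$ or $\tvF$ (indeed $\modT$ is a characteristic function for $\threeTrue$). Hence in \emph{both} branches of the case split the denotation lands in $\threeCorrect=\{\tvT,\tvF\}$. By Lemma~\ref{lemm.tv.ment.TF}(\ref{item.tv.ment.TF.2}) we have $\ment\modTF\f{tv}$ exactly when $\f{tv}\in\threeCorrect$, and since $\correct{\mrup{a}{\phi}{v}}=\modTF\mrup{a}{\phi}{v}$ by Definition~\ref{defn.correct}(\ref{item.correct.phi}), this gives $\acontext\ment\correct{\mrup{a}{\phi}{v}}$.

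For part~2, I would simply invoke Proposition~\ref{prop.tv.ment.TF.model}(\ref{item.tv.ment.TF.phi.lr}), which states that $\acontext\ment\correct{\psi}$ if and only if $\acontext\ment\psi\equiv\modT\psi$ for any closed predicate $\psi$. Instantiating $\psi=\mrup{a}{\phi}{v}$ and feeding in the conclusion of part~1 immediately yields $\acontext\ment\mrup{a}{\phi}{v}\equiv\modT\mrup{a}{\phi}{v}$.

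There is no real obstacle here: the argument is a routine inspection of Figure~\ref{fig.3.derived} followed by a citation. The only point requiring any care is confirming that the $\modT$ wrapper in the second branch genuinely forces correctness (rather than, say, relying on $\phi$ itself being correct, which we are \emph{not} given); once that is noted, both parts are immediate.
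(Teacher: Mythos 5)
Your proof is correct and follows essentially the same route as the paper: part~1 by inspecting the two branches of the clause for $\mrup{a}{\phi}{v}$ in Figure~\ref{fig.3.derived} (which the paper leaves as ``just check the relevant clause'' and you usefully spell out, noting the $\modT$ wrapper forces the value into $\{\tvT,\tvF\}$), and part~2 by citing Proposition~\ref{prop.tv.ment.TF.model}(\ref{item.tv.ment.TF.phi.lr}). No gaps.
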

\begin{proof}
We consider each part in turn:
\begin{enumerate}
\item
By Definition~\ref{defn.correct}(\ref{item.correct.phi}) $\correct{\mrup{a}{\phi}{v}}=\modTF\mrup{a}{\phi}{v}$, and this means that $\mrup{a}{\phi}{v}$ always returns $\tvT$ or $\tvF$ (but not $\tvB$).
We just check the relevant clause in Figure~\ref{fig.3.derived} and see that this is so.
\item
We just use part~1 of this result and Proposition~\ref{prop.tv.ment.TF.model}(\ref{item.tv.ment.TF.phi.lr}).
\qedhere\end{enumerate}
\end{proof}

\begin{rmrk}
\label{rmrk.explain.mru}
\leavevmode
\begin{enumerate}
\item\label{item.explain.mru.T}
Lemma~\ref{lemm.mru.unambivalent}(2) %
notes that $\mrup{a}{\phi}{v}$ and $\modT(\mrup{a}{\phi}{v})$ are extensionally equivalent (Definition~\ref{defn.phi.equivalent}).

We may write one or the other, depending on what seems most clear in a proof.
In particular, if we are using \strongmodusponens then we might retain a $\modT$ modality for clarity.
\item 
We try to head off an easy error: note that $\mrup{a}{\phi}{v}$ is not equivalent to $\modT\recent\phi[a\ssm v]$, nor the (by Lemma~\ref{lemm.someone.commutation}) equivalent form $\recent\modT\phi[a\ssm v]$.
\begin{itemize*}
\item
$\modT\recent\phi[a\ssm v]$ returns $\tvT$ when $\phi[a\ssm v]$ has returned $\tvT$ at some point in the past.
\item
$\mrup{a}{\phi}{v}$ (`$v$ is a value that most recently makes $\phi[a\ssm v]$ true') returns $\tvT$ when $\phi[a\ssm v]$ has returned $\tvT$ in \emph{the most recent stage} such that $\texi a.\phi$ returned $\tvT$ (i.e. such that $\phi[a\ssm v']$ returned $\tvT$ for \emph{any} $v'$).
\end{itemize*}
These are not the same thing.
\end{enumerate}
\end{rmrk}

Remark~\ref{rmrk.explain.mru} notes that $\mrup{a}{\phi}{v}$ is different from $\modT\recent\phi[a\ssm v]$.
That said, there is a close link between them, which will turn out to be important for understanding our axiomatisation of Paxos (see Lemma~\ref{lemm.evaporates.simplify} and surrounding discussion): 
\begin{lemm}
\label{lemm.evaporates}
Suppose $\texi a.\phi$ is a closed predicate and $\acontext=(n,p,O,\avaluation)$ is a context. 
Then
$$
\acontext\ment(\texi v.\mrup{a}{\phi}{v})\equiv \modT\recent\texi a.\phi.
$$
\end{lemm}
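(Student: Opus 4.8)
The plan is to reduce the $\equiv$-validity to a literal equality of denotations and then compute both sides by a case split. By Lemma~\ref{lemm.equiv}(\ref{item.equiv.1}), the statement $\acontext\ment(\texi v.\mrup{a}{\phi}{v})\equiv\modT\recent\texi a.\phi$ holds if and only if
$$
\modellabel{\texi v.\mrup{a}{\phi}{v}}{\acontext}=\modellabel{\modT\recent\texi a.\phi}{\acontext},
$$
so it suffices to show these two truth-values coincide at $\acontext=(n,p,O,\avaluation)$. First I would unpack the right-hand side: by the clause for $\recent$ in Figure~\ref{fig.3.derived} and the clause for $\modT$ in Figure~\ref{fig.3.phi.f}, it equals $\modT\bigl(\bigvee_{0\leq n'<n}\modellabel{\texi a.\phi}{n',p,O,\avaluation}\bigr)$. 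Since $\tvT$ is the top of the chain $\tvF<\tvB<\tvT$, this join is $\tvT$ exactly when some summand is $\tvT$; hence the right-hand side is $\tvT$ when $\Exists{0\leq n'<n}\modellabel{\texi a.\phi}{n',p,O,\avaluation}=\tvT$ and $\tvF$ otherwise (if the summands are all $\tvF$ or $\tvB$, their join is $\leq\tvB$, which $\modT$ sends to $\tvF$).

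Next I would unpack the left-hand side using the denotation of $\mrup{a}{\phi}{v}$ in Figure~\ref{fig.3.derived} together with the clause for $\texi$ in Figure~\ref{fig.3.phi.f}, so that $\modellabel{\texi v.\mrup{a}{\phi}{v}}{\acontext}=\bigvee_{v\in\Val}\modellabel{\mrup{a}{\phi}{v}}{\acontext}$, and case split on whether $\texi a.\phi$ was true at some past stage. If $\Forall{0\leq n'<n}\modellabel{\texi a.\phi}{n',p,O,\avaluation}\neq\tvT$, the first branch of the $\mrup{}$ clause applies for every $v$, giving $\modellabel{\mrup{a}{\phi}{v}}{\acontext}=\tvF$; since $\Val\neq\varnothing$ (Definition~\ref{defn.model}) the join is $\tvF$, matching the right-hand side. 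Otherwise, let $\f{max}=\bigvee\{0\leq n'<n\mid \modellabel{\texi a.\phi}{n',p,O,\avaluation}=\tvT\}$ as in Figure~\ref{fig.3.derived}, the most recent past stage with $\modellabel{\texi a.\phi}{\f{max},p,O,\avaluation}=\tvT$. The second branch gives $\modellabel{\mrup{a}{\phi}{v}}{\acontext}=\modT\modellabel{\phi[a\ssm v]}{\f{max},p,O,\avaluation}$, so the left-hand side is $\bigvee_{v\in\Val}\modT\modellabel{\phi[a\ssm v]}{\f{max},p,O,\avaluation}$.

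Here is the crux of the argument. By the clause for $\texi$ we have $\bigvee_{v\in\Val}\modellabel{\phi[a\ssm v]}{\f{max},p,O,\avaluation}=\modellabel{\texi a.\phi}{\f{max},p,O,\avaluation}=\tvT$, and because $\tvT$ is the top element, a join equal to $\tvT$ forces at least one summand to equal $\tvT$; for that $v$ we get $\modT\modellabel{\phi[a\ssm v]}{\f{max}}=\tvT$, so the left-hand join is $\tvT$, again matching the right-hand side. Combining the two cases yields the required equality of denotations, and hence the equivalence. The one genuinely load-bearing observation — the ``main obstacle'', such as it is — is precisely this promotion of ``join is $\tvT$'' to ``some summand is $\tvT$'', which is what lets the pointwise $\modT$ on the left reproduce the outer $\modT$ on the right; everything else is routine unpacking of Figures~\ref{fig.3.phi.f} and~\ref{fig.3.derived}. (As a consistency check, both sides are automatically correct, never $\tvB$: on the left by Lemma~\ref{lemm.mru.unambivalent}(1), and on the right because $\modT$ only ever returns $\tvT$ or $\tvF$.)
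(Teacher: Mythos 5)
Your proof is correct and takes essentially the same route as the paper: both unpack the denotations from Figures~\ref{fig.3.phi.f} and~\ref{fig.3.derived} and observe that each side is $\tvT$ precisely when $\phi[a\ssm v]$ returned $\tvT$ for some $v$ at some stage $n'<n$, and $\tvF$ otherwise. The only difference is presentational — the paper first commutes $\modT$ inward via Lemma~\ref{lemm.someone.commutation}, while you keep the outer $\modT$ and make explicit the (correct) observation that a join in $\THREE$ equals $\tvT$ only if some summand does.
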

\begin{proof}
First, note from Lemma~\ref{lemm.someone.commutation} that $\ment\modT\recent\texi a.\phi\equiv \recent\texi a.\modT\phi$.
And, note that $\texi a.\phi$ and $\texi v.\phi[a\ssm v]$ denote the same syntax, as per Remark~\ref{rmrk.pedant.point}(\ref{item.pedant.texi.v}).

We unpack the clause for $\modellabel{\mrup{a}{\phi}{v}}{\acontext}$ in Figure~\ref{fig.3.derived} and note that it returns $\tvT$ when there exist $v\in\Val$ and $0\leq n'<n$ such that $n',p,O\mentval \modT\phi[a\ssm v]$, and $\tvF$ otherwise.
But this is also what $\modellabel{\recent\texi v.\modT(\phi[a\ssm v])}{\acontext}$ does, so by Lemma~\ref{lemm.equiv}(\ref{item.equiv.1}) we are done.
\end{proof}

Lemma~\ref{lemm.mru.today.yesterday} connects $\mrup{a}{\phi}{v}$ to the fact that our notion of time is discrete and so has a well-defined notion of a `tomorrow' (i.e. a `next timestep'):
\begin{lemm}
\label{lemm.mru.today.yesterday}
Suppose $\texi a.\phi$ is a closed predicate and $v\in\Val$ and $\acontext$ is a context. 
.
Then
$$
\acontext\ment \tomorrow(\mrup{a}{\phi}{v}) \equiv (\modT\phi[a\ssm v] \tor ((\tneg\modT\texi a.\phi) \tand \mrup{a}{\phi}{v})).
$$
\end{lemm}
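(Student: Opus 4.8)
The plan is to recognise that the claimed equivalence is, up to a clean cancellation of the outer $\tomorrow$ against the leading $\yesterday$, nothing more than the one-step unfolding of the fixed point defining $\mrup{a}{\phi}{v}$. First I would recall from Figure~\ref{fig.3.derived} that
$$
\mrup{a}{\phi}{v} = \mu\tf X.\psi, \qquad \psi = \yesterday(\modT\phi[a\ssm v] \tor ((\tneg\modT\texi a.\phi) \tand \tf X)),
$$
and observe that $\psi$ is positive in $\tf X$: the only free occurrence of $\tf X$ sits in the right conjunct of $(\tneg\modT\texi a.\phi)\tand\tf X$, under no negation, and $\yesterday$, $\tor$, and $\tand$ are all monotone. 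Hence Proposition~\ref{prop.mu.fixedpoint} applies and gives, for every context $\acontext$,
$$
\modellabel{\mrup{a}{\phi}{v}}{\acontext} = \modellabel{\yesterday(\modT\phi[a\ssm v] \tor ((\tneg\modT\texi a.\phi) \tand \mrup{a}{\phi}{v}))}{\acontext}.
$$

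Next I would fix $\acontext = (n,p,O,\avaluation)$ and evaluate the left-hand side of the target statement. By the clause for $\tomorrow$ in Figure~\ref{fig.3.phi.f},
$$
\modellabel{\tomorrow(\mrup{a}{\phi}{v})}{n,p,O,\avaluation} = \modellabel{\mrup{a}{\phi}{v}}{n\plus 1,p,O,\avaluation},
$$
and applying the fixed-point equation above at stage $n\plus 1$ rewrites this as the denotation of $\yesterday(\dots)$ at $n\plus 1$. The crucial observation is that $n\plus 1 > 0$, so the $\yesterday$ clause of Figure~\ref{fig.3.phi.f} (as spelled out in Remark~\ref{rmrk.further.comments.denotation}(\ref{item.interpret.yesterday})) falls into its $n>0$ branch and merely decrements the stage back to $n$. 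This cancels the $\tomorrow$ against the $\yesterday$ without ever hitting the day-$0$ edge case, leaving
$$
\modellabel{\tomorrow(\mrup{a}{\phi}{v})}{n,p,O,\avaluation} = \modellabel{\modT\phi[a\ssm v] \tor ((\tneg\modT\texi a.\phi) \tand \mrup{a}{\phi}{v})}{n,p,O,\avaluation}.
$$

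Since $\acontext$ was arbitrary, the two predicates have equal denotation in every context, and Lemma~\ref{lemm.equiv}(\ref{item.equiv.1}) converts this equality of denotations into the desired validity judgement. The only step demanding genuine care is the $\tomorrow/\yesterday$ cancellation: one must use the outer $\tomorrow$ precisely to force evaluation at a strictly positive stage, so that the $\yesterday$ heading the fixed-point body never triggers its $\tvF$-at-day-$0$ behaviour. Everything else is routine unfolding against Figures~\ref{fig.3.phi.f} and~\ref{fig.3.derived}, so I do not anticipate further obstacles.
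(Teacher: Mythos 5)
Your proposal is correct and follows essentially the same route as the paper: the paper's proof likewise rests on the cancellation $\acontext\ment\tomorrow\yesterday\psi\equiv\psi$ together with the unfolding of the $\mu$-clause for $\mrup{a}{\phi}{v}$ from Figure~\ref{fig.3.derived}. You merely spell out the steps the paper leaves implicit (the positivity check feeding Proposition~\ref{prop.mu.fixedpoint}, and the observation that evaluation at stage $n\plus 1>0$ avoids the day-$0$ branch of $\yesterday$), which is fine.
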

\begin{proof}
It is a fact of Figure~\ref{fig.3.derived} that $\acontext\ment\tomorrow\yesterday\psi\equiv\psi$ for any closed $\psi$.
The result follows from this observation and from the clause for $\mrup{a}{\phi}{v}$ in Figure~\ref{fig.3.derived}. 
\end{proof}

\begin{corr}
\label{corr.mru.today.to.yesterday}
Suppose $\texi a.\phi$ is a closed predicate and $v\in\Val$ and $\acontext$ is a context. 
Then:
\begin{enumerate*}
\item\label{item.mru.today.implies.tomorrow}
$\acontext\ment \phi[a\ssm v] \timpc \tomorrow(\mrup{a}{\phi}{v})$.
\item\label{item.mru.tomorrow.options}
$\acontext\ment \tomorrow(\mrup{a}{\phi}{v}) \timpc (\phi[a\ssm v]\tor \mrup{a}{\phi}{v})$. 
\end{enumerate*}
\end{corr}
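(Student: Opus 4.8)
The plan is to derive both parts as immediate consequences of the literal equivalence established in Lemma~\ref{lemm.mru.today.yesterday}, which (by Lemma~\ref{lemm.equiv}(\ref{item.equiv.1})) tells us that at $\acontext$ the predicate $\tomorrow(\mrup{a}{\phi}{v})$ denotes exactly the same truth-value as $\modT\phi[a\ssm v] \tor ((\tneg\modT\texi a.\phi) \tand \mrup{a}{\phi}{v})$. Since both claims have the shape $\acontext\ment(\text{--})\timpc(\text{--})$, I would discharge each using \strongmodusponens, which reduces `$\acontext\ment\f{tv}\timpc\f{tv}'$' to the plain implication `$\acontext\ment\modT\f{tv}$ implies $\acontext\ment\modT\f{tv}'$' --- i.e. to a `truth implies truth' statement about the denotations at $\acontext$. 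Throughout I would use the truth-table facts that a join $x\tor y$ equals $\tvT$ exactly when $x=\tvT$ or $y=\tvT$, that a meet $x\tand y$ equals $\tvT$ exactly when $x=y=\tvT$, and that $\modT\psi$ denotes $\tvT$ exactly when $\psi$ does (the truth-table for $\modT$ in Figure~\ref{fig.3}).

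For part~(\ref{item.mru.today.implies.tomorrow}), I would assume $\modellabel{\phi[a\ssm v]}{\acontext}=\tvT$, i.e. that $\phi[a\ssm v]$ is true. Then $\modellabel{\modT\phi[a\ssm v]}{\acontext}=\tvT$, so the left disjunct of the equivalent form is $\tvT$ and the whole disjunction evaluates to $\tvT$. By Lemma~\ref{lemm.mru.today.yesterday} this gives $\modellabel{\tomorrow(\mrup{a}{\phi}{v})}{\acontext}=\tvT$, i.e. `$\acontext\ment\modT\tomorrow(\mrup{a}{\phi}{v})$'. Strong modus ponens then yields $\acontext\ment\phi[a\ssm v]\timpc\tomorrow(\mrup{a}{\phi}{v})$.

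For part~(\ref{item.mru.tomorrow.options}), I would assume $\modellabel{\tomorrow(\mrup{a}{\phi}{v})}{\acontext}=\tvT$; by Lemma~\ref{lemm.mru.today.yesterday} the disjunction $\modT\phi[a\ssm v] \tor ((\tneg\modT\texi a.\phi) \tand \mrup{a}{\phi}{v})$ is then $\tvT$, so one of its two disjuncts is $\tvT$. If the left disjunct is $\tvT$, then $\phi[a\ssm v]$ is true; if the right disjunct is $\tvT$, then, it being a meet, both conjuncts are $\tvT$, so in particular $\mrup{a}{\phi}{v}$ is true. In either case $\phi[a\ssm v]\tor\mrup{a}{\phi}{v}$ denotes $\tvT$, so `$\acontext\ment\modT(\phi[a\ssm v]\tor\mrup{a}{\phi}{v})$' holds, and strong modus ponens gives the claim.

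The computation is routine once the equivalence of Lemma~\ref{lemm.mru.today.yesterday} is available; the only point needing a little care --- the step I would flag as the main obstacle --- is the case analysis in part~(\ref{item.mru.tomorrow.options}), where one must observe that the guard $\tneg\modT\texi a.\phi$ plays no role in the conclusion: whichever disjunct witnesses the value $\tvT$, it forces one of $\phi[a\ssm v]$ or $\mrup{a}{\phi}{v}$ to be true, and the extra conjunct is simply absorbed. It is worth noting that by Lemma~\ref{lemm.mru.unambivalent}(1) both sides are automatically correct (never $\tvB$), so no three-valued subtleties arise and the reasoning collapses to the familiar two-valued pattern.
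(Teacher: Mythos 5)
Your proposal is correct and follows exactly the route the paper takes: the paper's proof is the one-liner ``Elementary from Lemma~\ref{lemm.mru.today.yesterday} using \strongmodusponensnoref,'' and you have simply spelled out the routine truth-value case analysis that this one-liner leaves implicit. The details you give (including the observation that the guard $\tneg\modT\texi a.\phi$ is absorbed in part~(\ref{item.mru.tomorrow.options})) are accurate.
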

\begin{proof}
Elementary from Lemma~\ref{lemm.mru.today.yesterday} using \strongmodusponens.
\end{proof}

\begin{lemm}
\label{lemm.mru.to.recent}
Suppose $\texi a.\phi$ is a closed predicate %
and $v,v'\in\Val$ and $\acontext$ is a context. 
Then:
\begin{enumerate*}
\item\label{item.mru.to.recent.2}
$\acontext\ment(\mrup{a}{\phi}{v}\tand\mrup{a}{\phi}{v'}) \timpc \recent(\phi[a\ssm v]\tand\phi[a\ssm v'])$.
\item\label{item.mru.to.recent.1}
$\acontext\ment(\mrup{a}{\phi}{v}) \timpc \recent(\phi[a\ssm v])$.

If $\phi=\modality{\tf Q}\tf P(a)$ for a unary predicate symbol $\tf P\in\PredSymb$ and some prefix of modalities $\modality{\tf Q}$, then we can elide the variable symbol $a$ as per Notation~\ref{nttn.hos} and Figure~\ref{fig.hos}, and this becomes
$$
\acontext\ment\mru{(\modality{\tf Q}\tf P)}{v} \timpc \recent\modality{\tf Q}\tf P(v).
$$
\end{enumerate*}
\end{lemm}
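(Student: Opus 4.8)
The plan is to prove both parts using \strongmodusponens, which reduces a judgement $\acontext\ment\psi\timpc\psi'$ to the real-world implication ``if $\psi$ is true at $\acontext$ then $\psi'$ is true at $\acontext$''. Thus for each part I only need to argue at the level of the truth-value $\tvT$, inspecting the denotation clause for $\mrup{a}{\phi}{v}$ in Figure~\ref{fig.3.derived} and the clause for $\recent$ there as well.

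For part~\ref{item.mru.to.recent.1}, write $\acontext=(n,p,O,\avaluation)$ and suppose $\modellabel{\mrup{a}{\phi}{v}}{\acontext}=\tvT$. By the denotation clause this forces the second case, so there is a past stage at which $\texi a.\phi$ is true; let $\f{max}$ be the most recent such stage (as defined in Figure~\ref{fig.3.derived}), and the clause then gives $\modellabel{\phi[a\ssm v]}{\f{max},p,O,\avaluation}=\tvT$. Since $0\leq\f{max}<n$, this stage is one of the disjuncts in $\modellabel{\recent(\phi[a\ssm v])}{\acontext}=\bigvee_{0\leq n'<n}\modellabel{\phi[a\ssm v]}{n',p,O,\avaluation}$, and because $\tvT$ is the top of $\THREE$ the whole join equals $\tvT$. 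Hence $\recent(\phi[a\ssm v])$ is true at $\acontext$, as required. The final displayed reformulation in terms of $\modality{\tf Q}\tf P$ is then immediate from Notation~\ref{nttn.hos}.

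For part~\ref{item.mru.to.recent.2}, suppose $\modellabel{\mrup{a}{\phi}{v}\tand\mrup{a}{\phi}{v'}}{\acontext}=\tvT$. Because $\tvT$ is the top element of $\THREE$, a meet equals $\tvT$ only when both operands do, so $\modellabel{\mrup{a}{\phi}{v}}{\acontext}=\tvT=\modellabel{\mrup{a}{\phi}{v'}}{\acontext}$. \emph{The key observation} is that the stage $\f{max}$ in the denotation clause depends only on $\texi a.\phi$ and not on the witness value, so it is the \emph{same} stage $\f{max}$ for both $v$ and $v'$. Consequently $\modellabel{\phi[a\ssm v]}{\f{max},p,O,\avaluation}=\tvT=\modellabel{\phi[a\ssm v']}{\f{max},p,O,\avaluation}$, whence $\modellabel{\phi[a\ssm v]\tand\phi[a\ssm v']}{\f{max},p,O,\avaluation}=\tvT$. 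Exactly as in part~\ref{item.mru.to.recent.1}, this single true disjunct forces $\modellabel{\recent(\phi[a\ssm v]\tand\phi[a\ssm v'])}{\acontext}=\tvT$. (Alternatively, part~\ref{item.mru.to.recent.1} is the special case $v'=v$ of part~\ref{item.mru.to.recent.2}, since $\tand$ is idempotent on $\THREE$.)

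The only genuinely substantive step — and the one I would flag as the potential pitfall — is the shared-$\f{max}$ observation in part~\ref{item.mru.to.recent.2}: the conjunction must \emph{not} be split into two unrelated $\recent$-statements (which would be strictly weaker, since the two witnesses could be most-recently-true at \emph{different} past stages), but rather both witnesses are evaluated at the \emph{common} most-recent stage at which $\texi a.\phi$ holds, and it is precisely this that licenses pushing the $\tand$ inside the single $\recent$. Everything else is routine truth-value bookkeeping against Figure~\ref{fig.3.derived}.
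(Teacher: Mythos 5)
Your proposal is correct and takes essentially the same route as the paper: the paper also proves part~\ref{item.mru.to.recent.2} by unpacking the clause for $\mruarrow$ in Figure~\ref{fig.3.derived} (the point being exactly your shared-$\f{max}$ observation, that both witnesses are evaluated at the one common most-recent stage at which $\texi a.\phi$ is true), and then obtains part~\ref{item.mru.to.recent.1} as the special case $v'=v$, which you also note. Your write-up just spells out the truth-value bookkeeping that the paper leaves implicit.
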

\begin{proof}
We consider each part in turn:
\begin{enumerate}
\item
Unpacking the clause for $\mruarrow$ in Figure~\ref{fig.3.derived}, if $v$ and $v'$ are values that most recently make $\phi[a\ssm\text{-}]$ be $\tvT$, then certainly $\recent(\phi[a\ssm v]\tand\phi[a\ssm v'])$.
\item
Follows from part~\ref{item.mru.to.recent.2} as the special case that $v'=v$.
\qedhere\end{enumerate}
\end{proof}

\section{\lowercase{n}-twined semitopologies}
\label{sect.n-twined}

\subsection{Antiseparation properties}

We are particularly interested in models over semitopologies whose open sets have certain \emph{antiseparation properties}, by which we mean properties that involve the existence of nonempty intersections between open sets.\footnote{For comparison, topology is often concerned with \emph{separation} properties, like Hausdorff separation, which have to do with existence of \emph{empty} intersections between open sets.  In this sense, the interest in this Section is the precise dual to where the emphasis traditionally lies in topology.}

The design space of antiseparation properties is rich, as the first author explored in~\cite{gabbay:semdca} --- but for our needs in this paper we will just one canonical property which we will call being \emph{$n$-twined}.
We define the $n$-twined property in Definition~\ref{defn.twined}, we explain why it matters in Remark~\ref{rmrk.motivate.n-twined}, and then we study its properties.

\begin{defn}
\label{defn.twined}
\leavevmode
Suppose $(\Pnt,\opens)$ is a semitopology and $n\in\Ngeqz$.
Then call $(\Pnt,\opens)$ \deffont{$n$-twined} when every collection of $n$ nonempty open sets has a nonempty open intersection.
In symbols:
$$
\Forall{O_1,\dots,O_n\in\opensne}O_1\cap\dots\cap O_n\neq\varnothing.
$$
\end{defn}

\begin{lemm}
\label{lemm.1-twined}
We note two edge cases:
\begin{enumerate*}
\item
A semitopology is $0$-twined if and only if it is nonempty (Notation~\ref{nttn.empty.semitopology}).
\item
Every semitopology is $1$-twined.
\end{enumerate*}
\end{lemm}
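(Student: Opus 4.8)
The plan is to unpack Definition~\ref{defn.twined} at the two specific values $n=0$ and $n=1$, reducing each claim to an elementary observation about how the universal quantifier and the intersection behave in these degenerate cases. Neither part requires more than definition-chasing; the only point demanding a moment's care is the convention governing an empty intersection.

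For the first part ($0$-twined if and only if nonempty), I would observe that the condition $\Forall{O_1,\dots,O_n\in\opensne}O_1\cap\dots\cap O_n\neq\varnothing$ with $n=0$ is a universal quantification over the empty tuple of open sets, and so amounts to the single assertion that the intersection of the empty family is nonempty. The key step is the standard convention that the intersection of the empty family of subsets of $\Pnt$ is the whole space $\Pnt$: it is the top element of $\powerset(\Pnt)$, since a point lies in every member of an empty family vacuously. Note moreover that $\Pnt\in\opens$ by Definition~\ref{defn.semitopology}, so this empty intersection is genuinely an open set, which addresses the word ``open'' in the informal reading of Definition~\ref{defn.twined}. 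Hence the $0$-twined condition reduces to $\Pnt\neq\varnothing$, which by Notation~\ref{nttn.empty.semitopology} is exactly the statement that $(\Pnt,\opens)$ is nonempty, giving the biconditional.

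For the second part (every semitopology is $1$-twined), I would instantiate Definition~\ref{defn.twined} at $n=1$ to obtain the condition $\Forall{O_1\in\opensne}O_1\neq\varnothing$. This is immediate from the definition of $\opensne$ in Definition~\ref{defn.semitopology}, namely $\opensne=\{O\in\opens\mid O\neq\varnothing\}$: every element of $\opensne$ is nonempty by its very definition, and each such $O_1$ is open since $O_1\in\opens$, so the condition holds trivially. The only genuine (and very mild) obstacle is the empty-intersection convention $\bigcap\varnothing=\Pnt$ used in the first part; once this is granted, both parts are purely definitional unpackings requiring no calculation.
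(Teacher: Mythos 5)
Your proof is correct and follows the same route as the paper's: part (1) reduces to the convention that the empty intersection is $\Pnt$, so the condition becomes $\Pnt\neq\varnothing$, and part (2) is immediate since elements of $\opensne$ are nonempty by definition. Your treatment is slightly more explicit about the empty-intersection convention, but the argument is the same.
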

\begin{proof}
We consider each part in turn:
\begin{enumerate}
\item
If $n=0$ then we need to show that $\Pnt$ (the intersection of zero nonempty sets) is not equal to $\varnothing$.
by Notation~\ref{nttn.empty.semitopology} this is true precisely when the semitopology is nonempty. 
\item
By Definition~\ref{defn.twined}, we must show that $O\neq\varnothing$ for every $O\in\opensne$.
This is true.
\qedhere\end{enumerate}
\end{proof}

\begin{rmrk}[Discussion of the special cases $n=2$ and $n=3$]
\label{rmrk.motivate.n-twined}
Lemma~\ref{lemm.1-twined} shows that being $0$-twined and $1$-twined are straightforward properties.
In contrast, the $2$- and $3$-twined semitopologies will be very interesting to us in this paper.
We give a high-level discussion of why this is:
\begin{itemize}
\item
Suppose we are in a distributed system running on a $2$-twined semitopology.
We assume that participants are honest.
Suppose further that we have 
\begin{itemize*}
\item
a nonempty open set of participants $O\in\opensne$ who agree that some predicate $\phi$ is true (respectively: valid), and 
\item
a nonempty open set of participants $O'\in\opensne$ who agree that some predicate $\phi'$ is true (respectively: valid).
\end{itemize*}
Then because the system is $2$-twined, there exists some $p\in O\cap O'$ who agrees that both $\phi$ and $\phi'$ are true (respectively: valid).\footnote{We will render this in our logic later on, in Lemma~\ref{lemm.supertwined.qcbt}(\ref{item.intertwined.qcbt.2}).}

This simple observation underlies guarantees of consistent outputs in many distributed algorithms, because it guarantees that $\phi$ and $\phi'$ are not logically conflicting --- e.g. that we do not have something like $\phi'=\modF\phi$ --- because if they were, then an honest $p$ could not believe both.

In the special case that $\phi$ is `the output of our consensus algorithm is $v$' and $\phi'$ is `the output of our consensus algorithm is $v'$', the $2$-twined property makes the strong guarantee that where consensus algorithms succeed (i.e. where participants do not crash), the output is consistent. 
\item
Suppose we are in a distributed system running on a $3$-twined semitopology.
Some participants may be dishonest, in that they may report inconsistent predicates. 

Suppose further that we have nonempty open sets $O,O'\in\opensne$ of participants who agree on the truth (respectively: validity) of predicates $\phi$ and $\phi'$ respectively.
Suppose further that we have a nonempty open set $O''\in\opensne$ of participants who are honest (meaning that we bound dishonest participants to at most some \emph{closed} set). 
Because the system is $3$-twined, there exists some $p\in O\cap O'\cap O''$ --- this $p$ is a participant who agrees on the truth (respectively: validity) of $\phi$ and $\phi'$ and is also honest.\footnote{We will render this in our logic twice: in Lemma~\ref{lemm.supertwined.qcbt}(\ref{item.supertwined.qcbt.2}) as a general result, and then as a useful special case in Corollary~\ref{corr.two.quorums.intersect.correct}(\ref{item.two.quorums.intersect.correct.supertwined}).}

In this case, $\phi\tand\phi'$ must be consistent, because if it were not then the honest participant $p$ could not believe both $\phi$ and $\phi'$.
Just as for the $2$-twined case, this simple argument underlies many consistency arguments in distributed systems in the presence of dishonest participants.
\end{itemize}
There is no barrier to considering $n$-twined semitopologies for $n\geq 4$, and this would just give us stronger guarantees of intersection.

A brief survey of intersection properties in the literature is in Remark~\ref{rmrk.ntwined.and.the.literature}
\end{rmrk}

\subsection{Being $n$-twined and being dense / having nonempty open interior}

The property of being $n$-twined is closely related to the topological notion of being a \emph{dense set} and having a \emph{nonempty open interior} (Definition~\ref{defn.dense.subset}).
We explore this in Lemma~\ref{lemm.is} and Proposition~\ref{prop.noi.dense}; then in Proposition~\ref{prop.allbut.supertwined} we specialise the discussion to the important special case of $\AllBut{N}{f}$.

\begin{lemm}
\label{lemm.is}
Suppose $(\Pnt,\opens)$ is a semitopology and $n\in\mathbb N_{\geq 1}$.
Then:
\begin{enumerate*}
\item\label{item.is.supertwined}
$(\Pnt,\opens)$ is $n$-twined if and only if every collection of $n\minus 1$ nonempty open sets has a dense intersection: 
$$
\Forall{O_1,\dots O_{n\minus 1}\in\opensne}\dense(O_1\cap\dots\cap O_{n\minus 1}).
$$
\item\label{item.is.intertwined}
As a corollary, $(\Pnt,\opens)$ is \xxtwined if and only if every nonempty open set is dense: 
$$
\Forall{O\in\opensne}\dense(O).
$$
\end{enumerate*}
\end{lemm}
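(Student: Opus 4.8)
The plan is to prove part~1 by pure definitional unfolding, and then obtain part~2 as the instance $n=2$. There is no substantive computation: the whole content is recognising that density of an $(n\minus 1)$-fold intersection is, verbatim, the $n$-twined condition with the last open set split off as the `test' set in the definition of density.

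For part~1, I would start from the right-hand side. By Definition~\ref{defn.dense.subset}(\ref{item.dense.subset}), the set $O_1\cap\dots\cap O_{n\minus 1}$ is dense exactly when $\Forall{O\in\opensne}O\intersectswith(O_1\cap\dots\cap O_{n\minus 1})$, which unpacks to $\Forall{O\in\opensne}O_1\cap\dots\cap O_{n\minus 1}\cap O\neq\varnothing$. Prefixing this with $\Forall{O_1,\dots,O_{n\minus 1}\in\opensne}$, merging the two adjacent $\opensne$-quantifier blocks into a single block, and renaming the bound variable $O$ to $O_n$, yields exactly $\Forall{O_1,\dots,O_n\in\opensne}O_1\cap\dots\cap O_n\neq\varnothing$, which is the definition of being $n$-twined (Definition~\ref{defn.twined}). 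Each step (unfolding $\dense$, reassociating the intersection, and $\alpha$-renaming a bound variable) is a logical equivalence, so both directions of the `if and only if' hold simultaneously. Note that this formal equivalence automatically handles the degenerate case where some intersection $O_1\cap\dots\cap O_{n\minus 1}$ is empty: then its density fails (in a nonempty semitopology) precisely when $n$-twinedness fails for that same tuple, so consistency is preserved without special pleading.

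The one point worth a sentence of care is the boundary case $n=1$, where $n\minus 1=0$: here `every collection of $0$ nonempty open sets has a dense intersection' refers to the empty intersection, which by the convention already used in the proof of Lemma~\ref{lemm.1-twined} equals $\Pnt$; one checks directly that $\dense(\Pnt)$ holds (every $O\in\opensne$ is a nonempty subset of $\Pnt$, so meets it), matching the fact that every semitopology is $1$-twined. For part~2, I would simply instantiate part~1 at $n=2$: then $n\minus 1=1$, so the right-hand side reads `every singleton collection $\{O\}$ of nonempty open sets has dense intersection', i.e. $\Forall{O\in\opensne}\dense(O)$, as claimed. I do not anticipate any genuine obstacle; the only thing to get right is the bookkeeping of the quantifier prefix and the empty-intersection convention.
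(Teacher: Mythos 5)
Your proposal is correct and follows essentially the same route as the paper's proof: both parts are obtained by unfolding Definitions~\ref{defn.dense.subset} and~\ref{defn.twined} and pushing the quantifiers around, with the same explicit treatment of the corner case $n=1$ via the convention that the empty intersection is $\Pnt$ (which is dense). Nothing is missing.
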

\begin{proof}
Both parts follow from Definitions~\ref{defn.dense.subset}(1) and~\ref{defn.twined}, just by pushing around definitions.

One corner case merits brief mention: if $n=1$ then we must show that every collection of $0$ sets (i.e. $\varnothing$) is dense.
So, we need to show that $\bigcap\varnothing$ is dense.
But by convention, the intersection of \emph{no} open sets is the whole of $\Pnt$, and this is indeed dense, so we are done. 
\end{proof}

\begin{prop}
\label{prop.noi.dense}
Suppose $(\Pnt,\opens)$ is a semitopology and $n\in\mathbb N_{\geq 1}$.
Then the following are equivalent:
\begin{enumerate*}
\item
$(\Pnt,\opens)$ is $n$-twined. 
\item
$\Forall{P_1,\dots,P_{n\minus 1}\subseteq\Pnt}(\bigwedge_{1{\leq}i{\leq}n\minus 1}\noi(P_i)) \limp \dense(\bigcap_{1{\leq}i{\leq}n\minus 1}P)$.
\end{enumerate*}
\end{prop}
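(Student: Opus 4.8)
The plan is to derive both implications almost immediately from two results already available: Lemma~\ref{lemm.is}(\ref{item.is.supertwined}), which restates being $n$-twined as the assertion that every family of $n\minus 1$ nonempty open sets has a dense intersection, and Lemma~\ref{lemm.dense.up-closed}, which records that both $\dense$ and $\noi$ are up-closed (parts~\ref{item.dense.up-closed} and~\ref{item.noi.up-closed}) and that a nonempty open set has a nonempty open interior (part~\ref{item.noi.noi}). The role of Lemma~\ref{lemm.is} is to convert the open-set condition defining $n$-twinedness into a density condition matching the right-hand side of statement~(2); the role of Lemma~\ref{lemm.dense.up-closed} is to move freely between a set $P_i$ and a nonempty open subset $O_i \subseteq P_i$ witnessing $\noi(P_i)$.

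For the direction $(1)\Rightarrow(2)$, I would assume $(\Pnt,\opens)$ is $n$-twined and take arbitrary $P_1,\dots,P_{n\minus 1}\subseteq\Pnt$ with $\noi(P_i)$ for each $i$. Unpacking $\noi$ (Definition~\ref{defn.dense.subset}(\ref{item.noi.subset})) gives nonempty open sets $O_i\subseteq P_i$. By Lemma~\ref{lemm.is}(\ref{item.is.supertwined}) the intersection $O_1\cap\dots\cap O_{n\minus 1}$ is dense, and since $O_1\cap\dots\cap O_{n\minus 1}\subseteq P_1\cap\dots\cap P_{n\minus 1}$, up-closedness of density (Lemma~\ref{lemm.dense.up-closed}(\ref{item.dense.up-closed})) yields $\dense(P_1\cap\dots\cap P_{n\minus 1})$, which is exactly~(2).

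For the direction $(2)\Rightarrow(1)$, I would invoke Lemma~\ref{lemm.is}(\ref{item.is.supertwined}) again, so that it suffices to show every family of $n\minus 1$ nonempty open sets $O_1,\dots,O_{n\minus 1}$ has a dense intersection. Each $O_i$ is a nonempty open set, hence satisfies $\noi(O_i)$ by Lemma~\ref{lemm.dense.up-closed}(\ref{item.noi.noi}); so, instantiating~(2) with $P_i = O_i$, we conclude $\dense(O_1\cap\dots\cap O_{n\minus 1})$, as required.

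Because the argument is essentially definition-chasing glued together by the two cited lemmas, I do not expect a genuine obstacle. The only point warranting attention is the degenerate case $n=1$, where the families are empty and the relevant intersection is the whole of $\Pnt$ (the empty intersection by convention); this is already handled inside Lemma~\ref{lemm.is}(\ref{item.is.supertwined}), so no separate treatment is needed here.
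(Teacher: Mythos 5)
Your proof is correct and follows essentially the same route as the paper's: both directions are obtained by converting $n$-twinedness into the density-of-intersections form via Lemma~\ref{lemm.is}(\ref{item.is.supertwined}) and then shuttling between the $P_i$ and witnessing open sets $O_i$ using Lemma~\ref{lemm.dense.up-closed}. The paper's own proof is the same definition-chasing argument, just presented with the two implications in the opposite order.
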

\begin{proof}
Part~2 trivially implies part~1 because we can take $P_1,\dots,P_n\in\opensne$, so that by Definition~\ref{defn.dense.subset} $\noi(P_i)$ for $1\leq i\leq n\minus 1$ holds by construction by Lemma~\ref{lemm.dense.up-closed}(\ref{item.noi.noi}), and we use Lemma~\ref{lemm.is}(\ref{item.is.supertwined}).

Showing that part~1 implies part~2 is only slightly less trivial.
Suppose $(\Pnt,\opens)$ is $n$-twined and $P_1,\dots,P_{n\minus 1}\subseteq\Pnt$ and $\noi(P_i)$ for each $1\leq i\leq n\minus 1$, meaning by Definition~\ref{defn.dense.subset} that for each such $i$ there exists an $O_i\in\opensne$ such that $O_i\subseteq P_i$.
By Lemma~\ref{lemm.is}(\ref{item.is.supertwined}) $O_1\cap\dots\cap O_{n\minus 1}$ is dense, and by Lemma~\ref{lemm.dense.up-closed}(\ref{item.dense.up-closed}) so is $P_1\cap\dots\cap P_n$.
\end{proof}

Proposition~\ref{prop.allbut.supertwined} provides the canonical example of $n$-twined semitopologies.
It mildly generalises known properties in the literature; the case $n=2$ in Proposition~\ref{prop.allbut.supertwined}(\ref{item.allbut.intertwined}) corresponds to the $Q^2$ property from~\cite[Theorem~1]{DBLP:conf/asiacrypt/DamgardDFN07}, and the case $n=3$ corresponds to the $Q^3$ property of~\cite[Definition 2]{DBLP:journals/dc/AlposCTZ24}.
The topology-flavoured presentations in Proposition~\ref{prop.allbut.supertwined}(\ref{item.allbut.NOI.dense}\&\ref{item.allbut.NOI.dense.3}) are new.
\begin{prop}
\label{prop.allbut.supertwined}
Suppose $N,f\in\Ngeqz$ and $P\subseteq\Nset{N}$ and $n\in\mathbb N_{\geq 1}$.
Then:
\begin{enumerate*}
\item\label{item.allbut.intertwined}\label{item.allbut.supertwined}
$N>n*f$ if and only if $\AllBut{N}{f}$ is $n$-twined.
\item\label{item.allbut.NOI.dense}
$\noi(P)$ (and equivalently $P\in\opensne$) if and only if $\mycard P\geq N\minus f$.\footnote{In general semitopologies, $\noi(P)$ ($P$ having a nonempty open interior) is not the same as $P\in\opensne$ ($P$ being a nonempty open set); but in the case of $\AllBut{N}{f}$, these two properties happen to be equal.}
\item\label{item.allbut.NOI.dense.3}
$\dense(P)$ if and only if $\mycard P\geq f\plus 1$.
\end{enumerate*}
\end{prop}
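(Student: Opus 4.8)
The plan is to reduce all three parts to elementary counting on the $N$-element point set $\Nset{N}$, using the defining fact that the nonempty open sets of $\AllBut{N}{f}$ are exactly the subsets of cardinality at least $N\minus f$, together with complementation. Parts~\ref{item.allbut.NOI.dense} and~\ref{item.allbut.NOI.dense.3} are quickest, so I would dispatch them first and then bootstrap part~\ref{item.allbut.intertwined} from the same counting idea.

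For part~\ref{item.allbut.NOI.dense}: by Definition~\ref{defn.dense.subset}, $\noi(P)$ holds iff some nonempty open $O\subseteq P$, i.e.\ iff $P$ contains a subset of cardinality $\geq N\minus f$, which happens iff $\mycard P\geq N\minus f$. In the principal regime $N>f$ such a $P$ is itself nonempty and open, which gives the parenthetical equivalence $\noi(P)\liff P\in\opensne$. For part~\ref{item.allbut.NOI.dense.3} I would \emph{not} recompute from scratch but instead invoke the de~Morgan duality $\dense(P)\liff\neg\noi(\Nset{N}\setminus P)$ established just above in the excerpt, combined with part~\ref{item.allbut.NOI.dense}: thus $\dense(P)\liff \neg(\mycard(\Nset{N}\setminus P)\geq N\minus f)\liff N\minus\mycard P< N\minus f\liff \mycard P\geq f\plus 1$.

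The substance is part~\ref{item.allbut.intertwined}, which rests on a union bound. Given nonempty open $O_1,\dots,O_n$, each complement $\Nset{N}\setminus O_i$ has cardinality $\leq f$, and $O_1\cap\dots\cap O_n=\varnothing$ iff $\bigcup_i(\Nset{N}\setminus O_i)=\Nset{N}$. Since $\mycard\bigl(\bigcup_i(\Nset{N}\setminus O_i)\bigr)\leq\sum_i\mycard(\Nset{N}\setminus O_i)\leq n*f$, the hypothesis $N>n*f$ forces this union to be a proper subset, so the intersection is nonempty; hence $\AllBut{N}{f}$ is $n$-twined (Definition~\ref{defn.twined}). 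For the converse I would argue contrapositively: if $N\leq n*f$, partition $\Nset{N}$ into $n$ consecutive blocks $C_1,\dots,C_n$ each of size $\leq f$ (possible precisely because $n*f\geq N$), and set $O_i=\Nset{N}\setminus C_i$; each $O_i$ is open of cardinality $\geq N\minus f$, is nonempty in the regime $f<N$, and $\bigcap_i O_i=\varnothing$ by construction, witnessing failure of $n$-twinedness.

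The main obstacle is not the counting itself but getting the converse of part~\ref{item.allbut.intertwined} right and handling degenerate boundary cases cleanly: when $f=N$ the open sets form the full powerset and both the stated equivalences and the nonemptiness of the constructed $O_i$ require separate, direct treatment, and similarly the smallest cases interact with the convention that every semitopology is $1$-twined (Lemma~\ref{lemm.1-twined}). I would therefore present the counting arguments for the principal regime $f<N$ and then check the $f=N$ and small-$N$ corners by inspection, taking care where $N\minus f=0$ makes $\varnothing$ open.
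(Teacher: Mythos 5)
Your proof is correct and takes essentially the same route as the paper: the paper's argument for part~\ref{item.allbut.intertwined} is exactly your union bound phrased as ``the intersection contains all but $n*f$ elements,'' with the same partition-into-$n$-blocks construction for the converse, and it dismisses parts~\ref{item.allbut.NOI.dense} and~\ref{item.allbut.NOI.dense.3} as ``routine counting arguments'' (your derivation of part~\ref{item.allbut.NOI.dense.3} from part~\ref{item.allbut.NOI.dense} via the $\dense$/$\noi$ de~Morgan duality is a slightly tidier route to the same place). Your attention to the boundary cases $N=f$ and $n=1$ is warranted and in fact more careful than the paper, whose one-line converse construction quietly breaks down there for the reasons you identify.
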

\begin{proof}
We consider each part in turn:
\begin{enumerate}
\item
Suppose $N>n*f$ and suppose $O_1,\dots,O_n$ are open in $\AllBut{N}{f}$.
From Definition~\ref{defn.allbut} this means that they each contain all but $f$ elements of $\Nset{N}$.
It follows that $O_1\cap\dots \cap O_n$ contains (at least) all but $n*f$ elements of $\Nset{N}$.
Since $N\gneq n*f$, this intersection is nonempty. 

Conversely, if $N\leq n*f$ then it is routine to partition $\Nset{N}$ into $n$ sets, each of which has at most $f$ elements.
Then their complements are a set of $n$ open sets whose intersection is empty.
\item
By routine counting arguments.
\item
By routine counting arguments.
\qedhere\end{enumerate}
\end{proof}

\begin{rmrk}
\label{rmrk.ntwined.and.the.literature}
Being \xxtwined corresponds to the \emph{quorum intersection property} of~\cite[Section~2.1, Condition~B2]{DBLP:journals/tocs/Lamport98}, which requires that any two quorums intersect.
Being \xxxtwined also corresponds to a quorum intersection property with a stronger requirement that (assuming there is a quorum of honest participants) any two quorum must intersect in a participant that is honest~\cite[Definition~5.1]{DBLP:journals/dc/MalkhiR98}. 
  
These intersection properties are closely tied to necessary conditions for some consensus protocols (and other distributed algorithms) to function correctly.
Specifically for the case of the semitopology $\AllBut{N}{f}$ from Definition~\ref{defn.allbut}(\ref{item.allbut}), requiring non-empty quorum intersection is equivalent to the condition \( N > 2f \), where \( f \) denotes the number of processes that may crash during the protocol's execution.
This condition is often referred to as \( Q^2 \)~\cite[Theorem~1]{DBLP:conf/asiacrypt/DamgardDFN07}.
Similarly, being \xxxtwined corresponds to the condition \( N > 3f \), commonly referred to as \( Q^3 \)~\cite[Definition 2]{DBLP:journals/dc/AlposCTZ24}.
A more generalised version of \( Q^3 \) exists, known as \( B^3 \)~\cite[Theorem 2]{DBLP:conf/asiacrypt/DamgardDFN07}\cite[Definition 5]{DBLP:journals/dc/AlposCTZ24}, which accounts for greater heterogeneity. 
Specifically, \( Q^3 \) applies to distributed systems where failures are homogeneous, meaning they are globally known and agreed upon by all participants. 
For example, all participants understand that up to \( f \) processes may fail. 
  
In contrast, in a heterogeneous setting, participants may have differing beliefs about \emph{who} might fail, leading to a more nuanced description of trust assumptions. 
In this context, \( B^3 \) generalises \( Q^3 \) by incorporating these variations in trust and failure assumptions.
Readers interested in exploring this topic further can consult~\cite{DBLP:conf/asiacrypt/DamgardDFN07,DBLP:journals/dc/AlposCTZ24}.
\end{rmrk}

\subsection{The canonical properties}

For this subsection we fix a signature $\Sigma$ and a model $\mathcal M=(\Pnt,\opens,\Val)$.

The results in this Subsection express how the underlying semitopology of this model $(\Pnt,\opens)$ being \xxtwined or \xxxtwined manifests itself in the logic, in ways that will be relevant in our later proofs:
\begin{lemm}
\label{lemm.supertwined.qcbt}
Suppose $(\Pnt,\opens)$ is a semitopology and $n\in\Time$ and $\avaluation$ is a valuation and suppose $\phi$, $\psi$, and $\chi$ are pointwise in $(n,\avaluation)$ (Definition~\ref{defn.pointwise.meaning}). 
Then: 
\begin{enumerate*}
\item\label{item.intertwined.qcbt}
If $(\Pnt,\opens)$ is \xxtwined then we have:
\begin{enumerate*}
\item\label{item.intertwined.3}
If $n\mentval \QuorumBox\phi$ then $n\mentval\CoquorumDiamond\phi$. 
\item\label{item.intertwined.qcbt.2}
If $n\mentval \QuorumBox\phi \tand \QuorumBox\psi$ then $n\mentval\someoneAll(\phi\tand\psi)$. 
\item\label{item.intertwined.qcbt.3}
If $n\mentval \QuorumBox\correct{\phi} \tand \QuorumBox\phi$ then $n\mentval\modT\someoneAll\phi$.
\end{enumerate*}
\item\label{item.supertwined.qcbt}
If $(\Pnt,\opens)$ is \xxxtwined then we have: 
\begin{enumerate*}
\item\label{item.supertwined.qcbt.3}
If $n\mentval \QuorumBox\phi \tand \QuorumBox\psi$ then $n\mentval \CoquorumDiamond(\phi\tand\psi)$.
\item\label{item.supertwined.qcbt.2}
If $n\mentval \QuorumBox\phi \tand \QuorumBox\psi \tand \QuorumBox\chi$ then $n\mentval\someoneAll(\phi\tand\psi\tand\chi)$. 
\item\label{item.supertwined.qcbt.4}
If $n\mentval \QuorumBox\correct{\phi} \tand \QuorumBox\phi$ then $n\mentval \modT\CoquorumDiamond\phi$.
\end{enumerate*}
\end{enumerate*}
Recall that $\correct{\phi}$ is from Definition~\ref{defn.correct}(\ref{item.correct.phi}) and just means $\modTF\phi$.
\end{lemm}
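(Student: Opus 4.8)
The plan is to translate every clause into set language and push the twinedness hypotheses through as intersection facts. Fix $n$ and $\avaluation$ as in the statement, and for a predicate $\theta$ that is pointwise in $(n,\avaluation)$ write $P_\theta=\{p\in\Pnt \mid n,p\mentval\theta\}$ (well-defined by Corollary~\ref{corr.pointwise.pred}). First I would record three bookkeeping facts: conjunctions and $\modT$-prefixes of pointwise predicates are again pointwise (immediate from the clauses in Figures~\ref{fig.3.phi.f} and~\ref{fig.3.derived}); splitting validity of a conjunction via Lemma~\ref{lemm.tand.tor} gives $P_{\theta\tand\theta'}=P_\theta\cap P_{\theta'}$; and $P_{\modT\phi}=\{p\in\Pnt\mid\modellabel{\phi}{n,p,\avaluation}=\tvT\}$ by Lemma~\ref{lemm.tv.ment.TF}(\ref{item.tv.ment.TF.2b}). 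With this, Proposition~\ref{prop.pointwise.dense.char} becomes a dictionary: $n\mentval\someoneAll\theta \liff P_\theta\neq\varnothing$, $n\mentval\QuorumBox\theta \liff \noi(P_\theta)$, and $n\mentval\CoquorumDiamond\theta \liff \dense(P_\theta)$.

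For the \xxtwined parts, part~(\ref{item.intertwined.3}) is exactly Proposition~\ref{prop.noi.dense} instantiated at $n=2$, read through the dictionary: $\noi(P_\phi)\limp\dense(P_\phi)$ (equivalently, Lemma~\ref{lemm.is}(\ref{item.is.intertwined}) says every nonempty open set is dense, and density is up-closed by Lemma~\ref{lemm.dense.up-closed}(\ref{item.dense.up-closed})). For part~(\ref{item.intertwined.qcbt.2}) I would choose witnessing open sets $O\subseteq P_\phi$ and $O'\subseteq P_\psi$ from $\noi(P_\phi)$ and $\noi(P_\psi)$; Definition~\ref{defn.twined} (the $2$-twined case) gives $\varnothing\neq O\cap O'\subseteq P_\phi\cap P_\psi=P_{\phi\tand\psi}$, which is the dictionary form of $n\mentval\someoneAll(\phi\tand\psi)$. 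Alternatively, part~(\ref{item.intertwined.3}) turns $\QuorumBox\psi$ into $\CoquorumDiamond\psi$ and then Proposition~\ref{prop.someone.implies.someoneAll}(\ref{item.someone.implies.someoneAll.4}) finishes it directly.

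The \xxxtwined parts are the same moves one dimension up. Part~(\ref{item.supertwined.qcbt.3}) is Proposition~\ref{prop.noi.dense} at $n=3$: $\noi(P_\phi)$ and $\noi(P_\psi)$ imply $\dense(P_\phi\cap P_\psi)=\dense(P_{\phi\tand\psi})$. Part~(\ref{item.supertwined.qcbt.2}) picks witnessing opens $O_\phi,O_\psi,O_\chi$ inside $P_\phi,P_\psi,P_\chi$ and applies the $3$-twined case of Definition~\ref{defn.twined} to get $O_\phi\cap O_\psi\cap O_\chi\neq\varnothing$, hence $P_{\phi\tand\psi\tand\chi}\neq\varnothing$; or one composes part~(\ref{item.supertwined.qcbt.3}) with Proposition~\ref{prop.someone.implies.someoneAll}(\ref{item.someone.implies.someoneAll.4}).

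The two correctness clauses, (\ref{item.intertwined.qcbt.3}) and (\ref{item.supertwined.qcbt.4}), are where the real content lies, and they run in parallel. Applying the already-proved part~(\ref{item.intertwined.qcbt.2}) (resp. part~(\ref{item.supertwined.qcbt.3})) to the pointwise pair $\correct\phi$ and $\phi$ yields $n\mentval\someoneAll(\correct\phi\tand\phi)$ (resp. $n\mentval\CoquorumDiamond(\correct\phi\tand\phi)$), i.e. $P_{\correct\phi\tand\phi}$ is nonempty (resp. dense). The crucial step is the promotion from validity to truth: at any $p\in P_{\correct\phi\tand\phi}$, validity of $\correct\phi=\modTF\phi$ gives $\modellabel{\phi}{n,p,\avaluation}\in\threeCorrect$ (Lemma~\ref{lemm.tv.ment.TF}(\ref{item.tv.ment.TF.2})) while validity of $\phi$ gives $\modellabel{\phi}{n,p,\avaluation}\in\threeValid$, so $\modellabel{\phi}{n,p,\avaluation}=\tvT$ by $\threeCorrect\cap\threeValid=\threeTrue$ (Lemma~\ref{lemm.tv.ment.TF}(\ref{item.tv.ment.TF.1})); hence $P_{\correct\phi\tand\phi}\subseteq P_{\modT\phi}$. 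Since nonemptiness and density are up-closed (Lemma~\ref{lemm.dense.up-closed}(\ref{item.dense.up-closed})), $P_{\modT\phi}$ is nonempty (resp. dense), i.e. $n\mentval\someoneAll\modT\phi$ (resp. $n\mentval\CoquorumDiamond\modT\phi$); finally commuting $\modT$ outward through $\someoneAll=\Quorum\someone$ and $\CoquorumDiamond=\Coquorum\someone$ via Lemma~\ref{lemm.someone.commutation} gives $n\mentval\modT\someoneAll\phi$ and $n\mentval\modT\CoquorumDiamond\phi$. I expect this validity-to-truth promotion (the phenomenon highlighted in Remark~\ref{rmrk.why.exciting}) to be the only non-mechanical step; everything else is bookkeeping in the dictionary of Proposition~\ref{prop.pointwise.dense.char}.
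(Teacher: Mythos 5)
Your proposal is correct and follows essentially the same route as the paper: translate the compound modalities into $\noi$/$\dense$/nonemptiness statements via Proposition~\ref{prop.pointwise.dense.char}, use Proposition~\ref{prop.noi.dense} (or Proposition~\ref{prop.someone.implies.someoneAll}) for the intersection clauses, and promote validity to truth via correct-and-valid-implies-true before commuting $\modT$ outward with Lemma~\ref{lemm.someone.commutation}. The only cosmetic difference is that you phrase the promotion step directly through Lemma~\ref{lemm.tv.ment.TF}(\ref{item.tv.ment.TF.1}) where the paper invokes its wrapper Proposition~\ref{prop.tv.ment.TF.model}(\ref{item.tv.ment.TF.phi.lr}); these are the same fact.
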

\begin{proof}
We reason as follows:
\begin{enumerate}
\item
Suppose $(\Pnt,\opens)$ is \xxtwined. 
\begin{enumerate*}
\item
We reason as follows:
$$
\begin{array}{r@{\ }l@{\qquad}l}
n\mentval\QuorumBox\phi
\liff&
\noi(\{p\in\Pnt \mid n,p,\avaluation\ment\phi\})
&\text{Proposition~\ref{prop.pointwise.dense.char}(\ref{item.dense.char.pointwise.QuorumBox})}
\\
\limp&
\dense(\{p\in\Pnt \mid n,p,\avaluation\ment\phi\}) 
&\text{Proposition~\ref{prop.noi.dense}, $(\Pnt,\opens)$ $2$-twined} %
\\
\liff&
n\mentval\CoquorumDiamond\phi
&\text{Proposition~\ref{prop.pointwise.dense.char}(\ref{item.dense.char.pointwise.CoquorumDiamond})}
\end{array}
$$
\item
Suppose $n\mentval \QuorumBox\phi\tand\QuorumBox\psi$.
By part~\ref{item.intertwined.3} of this result, $n\mentval\CoquorumDiamond\phi$.
By Proposition~\ref{prop.someone.implies.someoneAll}(\ref{item.someone.implies.someoneAll.4}) $n\mentval \someoneAll(\phi\tand\psi)$ as required.
\item
Suppose $n\mentval \QuorumBox\correct{\phi} \tand \QuorumBox\phi$.
By part~\ref{item.intertwined.qcbt.2} of this result, $n\mentval\someoneAll(\correct{\phi}\tand\phi)$.
Using %
Proposition~\ref{prop.tv.ment.TF.model}(\ref{item.tv.ment.TF.phi.lr}) $n\mentval\someoneAll\modT\phi$.
By Lemma~\ref{lemm.someone.commutation} $n\mentval\modT\someoneAll\phi$ as required. 
\end{enumerate*}
\item
The reasoning for the \xxxtwined case is just as for the \xxtwined case, except we need intersection of three open sets instead of two.
\qedhere\end{enumerate} 
\end{proof}

Corollary~\ref{corr.two.quorums.intersect.correct} specialises Lemma~\ref{lemm.supertwined.qcbt} to the useful special case of a correct unary predicate symbol, such that $\phi=\correct{\tf P}$ (Definition~\ref{defn.correct}(\ref{item.correct})) and $\psi=\tf P(v)$ and $\chi=\tf P(v')$.
In our later examples, $\tf P$ will take values such as $\tf{accept}$, $\tf{send}$, $\tf{write}$, and $\tf{decide}$, representing basic actions in a consensus protocol:
\begin{corr}
\label{corr.two.quorums.intersect.correct}
Suppose $(\Pnt,\opens)$ is a semitopology and suppose $\tf P$ is a unary predicate symbol and $v,v'\in\Val$ and $n\in\Time$ and $\avaluation$ is a valuation.
Then:
\begin{enumerate*}
\item
If $(\Pnt,\opens)$ is \xxxtwined then:
\begin{enumerate*}
\item\label{item.two.quorums.intersect.correct.supertwined}
If $n\mentval\QuorumBox\correct{\tf P}\tand\QuorumBox\tf P(v)\tand\QuorumBox\tf P(v')$ then $n\mentval\modT\someoneAll(\tf P(v)\tand\tf P(v'))$.
\item\label{item.QuorumBox.correct.supertwined}
If $n\mentval\QuorumBox\correct{\tf P}\tand\QuorumBox\tf P(v)$ then $n\mentval\modT\CoquorumDiamond\tf P(v)$.
\end{enumerate*}
\item\label{item.two.quorums.intersect.correct.intertwined}
If $(\Pnt,\opens)$ is \xxtwined then
$n\mentval\QuorumBox\correct{\tf P}\tand\QuorumBox\tf P(v)$ implies $n\mentval\modT\someoneAll(\tf P(v))$.
\end{enumerate*}
\end{corr}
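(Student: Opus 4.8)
The plan is to reduce every part of the Corollary to the matching part of Lemma~\ref{lemm.supertwined.qcbt}, whose hypotheses are stated for \emph{pointwise} predicates. The first preliminary is to check that all the predicates in play are pointwise: the atomic predicates $\tf P(v)$ and $\tf P(v')$ are absolutely pointwise by Lemma~\ref{lemm.atomic.O}(\ref{item.atomic.pointwise}), and $\correct{\tf P}$ is absolutely pointwise by Lemma~\ref{lemm.atomic.O}(\ref{item.atomic.pointwise.2}), since it mentions neither $\someone$ nor $\everyone$. Hence Lemma~\ref{lemm.supertwined.qcbt} is applicable with these predicates as its arguments.

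The key bridging step, needed for parts (1b) and (2), is to move from the per-symbol correctness hypothesis $\QuorumBox\correct{\tf P}$ to the per-value correctness $\QuorumBox\correct{\tf P(v)}$ that Lemma~\ref{lemm.supertwined.qcbt}(\ref{item.intertwined.qcbt.3}) and~(\ref{item.supertwined.qcbt.4}) demand once we set $\phi=\tf P(v)$ (note $\correct{\tf P(v)}=\modTF\tf P(v)$, exactly the $\correct{\phi}$ of those lemmas). By Lemma~\ref{lemm.correct.P.Pv} we have $\modellabel{\correct{\tf P}}{\acontext}\leq\modellabel{\correct{\tf P(v)}}{\acontext}$ at every context, so the set of points validating $\correct{\tf P}$ is contained in the set validating $\correct{\tf P(v)}$; since $n\mentval\QuorumBox\correct{\tf P}$ says exactly that the former set has a nonempty open interior (Proposition~\ref{prop.pointwise.dense.char}(\ref{item.dense.char.pointwise.QuorumBox})) and $\noi$ is up-closed (Lemma~\ref{lemm.dense.up-closed}(\ref{item.noi.up-closed})), we obtain $n\mentval\QuorumBox\correct{\tf P(v)}$. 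With this in hand, part~(2) is a direct instance of Lemma~\ref{lemm.supertwined.qcbt}(\ref{item.intertwined.qcbt.3}) (the \xxtwined case, giving $\modT\someoneAll\tf P(v)$) and part~(1b) is a direct instance of Lemma~\ref{lemm.supertwined.qcbt}(\ref{item.supertwined.qcbt.4}) (the \xxxtwined case, giving $\modT\CoquorumDiamond\tf P(v)$), both taking $\phi=\tf P(v)$.

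For part~(1a), which carries three quorum hypotheses, I would instead feed Lemma~\ref{lemm.supertwined.qcbt}(\ref{item.supertwined.qcbt.2}) directly with $\phi=\correct{\tf P}$, $\psi=\tf P(v)$, $\chi=\tf P(v')$: the hypothesis $n\mentval\QuorumBox\correct{\tf P}\tand\QuorumBox\tf P(v)\tand\QuorumBox\tf P(v')$ matches verbatim and yields $n\mentval\someoneAll(\correct{\tf P}\tand\tf P(v)\tand\tf P(v'))$. It remains to promote this to $n\mentval\modT\someoneAll(\tf P(v)\tand\tf P(v'))$, and the crux is the pointwise implication that wherever $\correct{\tf P}\tand\tf P(v)\tand\tf P(v')$ is valid, $\modT(\tf P(v)\tand\tf P(v'))$ is valid: validity of $\correct{\tf P}$ forces $\tf P$ correct, hence $\tf P(v)$ and $\tf P(v')$ correct (Lemma~\ref{lemm.correct.P.Pv}), and combined with their validity, Lemma~\ref{lemm.tv.ment.TF}(\ref{item.tv.ment.TF.1}) ($\threeCorrect\cap\threeValid=\threeTrue$) makes both of them true, so their meet is true. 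This gives $n\mentval\someoneAll\modT(\tf P(v)\tand\tf P(v'))$, and the commutation Lemma~\ref{lemm.someone.commutation} pushes $\modT$ out past $\someoneAll$ to finish.

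The whole argument is routine once the machinery is lined up; the only genuinely delicate point is the bridging step of the second paragraph --- recognising that the hypotheses are phrased with the symbol-level $\correct{\tf P}$ whereas Lemma~\ref{lemm.supertwined.qcbt} wants the value-level $\correct{\tf P(v)}$, and justifying the passage via the pointwise inequality of Lemma~\ref{lemm.correct.P.Pv} together with up-closedness of $\noi$. Everything else is a direct instantiation of Lemma~\ref{lemm.supertwined.qcbt} followed by the standard ``correct-and-valid collapses to true'' step.
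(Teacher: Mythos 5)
Your proposal is correct and follows essentially the same route as the paper: part (1a) via Lemma~\ref{lemm.supertwined.qcbt}(\ref{item.supertwined.qcbt.2}) followed by the ``correct and valid collapses to true'' step and the commutation lemma, and parts (1b) and (2) via the bridge from $\QuorumBox\correct{\tf P}$ to $\QuorumBox\correct{\tf P(v)}$ using Lemma~\ref{lemm.correct.P.Pv} before applying Lemma~\ref{lemm.supertwined.qcbt}(\ref{item.supertwined.qcbt.4}) and~(\ref{item.intertwined.qcbt.3}) respectively. The only cosmetic differences are that the paper invokes Proposition~\ref{prop.tv.ment.TF.model}(\ref{item.tv.ment.TF.P}) where you hand-roll the truth-promotion via Lemma~\ref{lemm.tv.ment.TF}(\ref{item.tv.ment.TF.1}), and that you spell out the up-closedness-of-$\noi$ justification for the bridging step that the paper leaves implicit.
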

\begin{proof}
We consider each part in turn:
\begin{enumerate}
\item[(\ref{item.two.quorums.intersect.correct.supertwined})]
Suppose $(\Pnt,\opens)$ is \xxxtwined and $n\mentval\QuorumBox\correct{\tf P}\tand\QuorumBox\tf P(v)\tand\QuorumBox\tf P(v')$.
By Lemma~\ref{lemm.supertwined.qcbt}(\ref{item.supertwined.qcbt.2}) (since $\tf P(v)$ and $\correct{\tf P}$ are pointwise by Lemma~\ref{lemm.atomic.O}(\ref{item.atomic.pointwise.2})) $n\mentval\someoneAll(\correct{\tf P}\tand\tf P(v)\tand\tf P(v'))$.
By two applications of Proposition~\ref{prop.tv.ment.TF.model}(\ref{item.tv.ment.TF.P}) and Lemma~\ref{lemm.commuting.connectives} (for $\tand$)
$n\mentval\someoneAll\modT(\tf P(v)\tand\tf P(v'))$.
By Lemma~\ref{lemm.someone.commutation} $n\mentval\modT\someoneAll(\tf P(v)\tand\tf P(v'))$ as required.
\item[(\ref{item.QuorumBox.correct.supertwined})] 
Suppose $(\Pnt,\opens)$ is \xxxtwined and $n\mentval\QuorumBox\correct{\tf P}\tand\QuorumBox\tf P(v)$.
Using Lemma~\ref{lemm.correct.P.Pv} it follows that $n\mentval\QuorumBox\correct{\tf P(v)}\tand\QuorumBox\tf P(v)$.
We just apply Lemma~\ref{lemm.supertwined.qcbt}(\ref{item.supertwined.qcbt.4}). 
\item[(\ref{item.two.quorums.intersect.correct.intertwined})]
This follows from Lemma~\ref{lemm.supertwined.qcbt}(\ref{item.intertwined.qcbt.3}) just as part~\ref{item.QuorumBox.correct.supertwined} does from Lemma~\ref{lemm.supertwined.qcbt}(\ref{item.supertwined.qcbt.2}).
\qedhere\end{enumerate}
\end{proof}

Lemma~\ref{lemm.allbut.concretely} unpacks concretely what $\QuorumBox\phi$ and $\CoquorumDiamond\phi$ --- which are the subject of many results including of Lemmas~\ref{lemm.supertwined.qcbt} and~\ref{lemm.intertwined.char} --- mean in the semitopology $\AllBut{N}{f}$ from Definition~\ref{defn.allbut}(\ref{item.allbut}), which has $N$ points, and has open sets being either the empty set, or any set with at least $N\minus f$ elements: 
\begin{enumerate*}
\item
$\QuorumBox\phi$ means `for at least $N\minus f$ elements'; and 
\item
$\CoquorumDiamond\phi$ means `for at least $f\plus 1$ elements.
\end{enumerate*}
This special case is particularly relevant because (as the reader may know) $\AllBut{N}{f}$ is commonly used in the literature on distributed algorithms. 
\begin{lemm}
\label{lemm.allbut.concretely}
Suppose that
\begin{itemize*}
\item
$N,f\in\Ngeqz$, 
\item
$(\Pnt,\opens)=\AllBut{N}{f}$, 
\item
$n\in\Time$ and $\avaluation$ is a valuation, and
\item
$\phi$ is a closed predicate that is pointwise in $(n,\avaluation)$.
\end{itemize*}
Then:
\begin{enumerate*}
\item
$n\mentval\QuorumBox\phi$ if and only if 
$\mycard\{p\in\Pnt \mid n,p\mentval\phi\}\geq N\minus f$. 
\item
$n\mentval\CoquorumDiamond\phi$ if and only if 
$\mycard\{p\in\Pnt \mid n,p\mentval\phi\}\geq f\plus 1$. 
\item
$n\mentval\modT\QuorumBox\phi$ if and only if 
$\mycard\{p\in\Pnt \mid n,p\mentval\modT\phi\}\geq N\minus f$. 
\item
$n\mentval\modT\CoquorumDiamond\phi$ if and only if 
$\mycard\{p\in\Pnt \mid n,p\mentval\modT\phi\}\geq f\plus 1$. 
\end{enumerate*}
\end{lemm}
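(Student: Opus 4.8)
The plan is to read this lemma off as a direct instantiation of two earlier results: Proposition~\ref{prop.pointwise.dense.char}, which characterises the compound modalities $\QuorumBox$ and $\CoquorumDiamond$ in terms of the topological predicates $\noi$ and $\dense$ applied to the set of points satisfying $\phi$; and Proposition~\ref{prop.allbut.supertwined}, which computes $\noi$ and $\dense$ concretely in $\AllBut{N}{f}$ as cardinality thresholds. Throughout I would write $P = \{p\in\Pnt \mid n,p\mentval\phi\}$; since $\phi$ is pointwise in $(n,\avaluation)$, Corollary~\ref{corr.pointwise.pred} guarantees that this set is well-defined (the choice of $O$ is irrelevant).

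For parts~1 and~2 I would simply chain the two propositions. For part~1: $n\mentval\QuorumBox\phi$ holds if and only if $\noi(P)$ by Proposition~\ref{prop.pointwise.dense.char}(\ref{item.dense.char.pointwise.QuorumBox}), which holds if and only if $\mycard P\geq N\minus f$ by Proposition~\ref{prop.allbut.supertwined}(\ref{item.allbut.NOI.dense}). For part~2: $n\mentval\CoquorumDiamond\phi$ holds if and only if $\dense(P)$ by Proposition~\ref{prop.pointwise.dense.char}(\ref{item.dense.char.pointwise.CoquorumDiamond}), which holds if and only if $\mycard P\geq f\plus 1$ by Proposition~\ref{prop.allbut.supertwined}(\ref{item.allbut.NOI.dense.3}).

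For parts~3 and~4 I would first record two auxiliary observations. First, $\modT\phi$ is again pointwise in $(n,\avaluation)$ whenever $\phi$ is, which is immediate from the clause $\modellabel{\modT\phi}{n,p,O,\avaluation}=\modT\modellabel{\phi}{n,p,O,\avaluation}$ in Figure~\ref{fig.3.phi.f} --- applying $\modT$ cannot reintroduce a dependence on $O$. Second, $\modT$ commutes past the compound modalities: since $\QuorumBox=\Quorum\everyone$ and $\CoquorumDiamond=\Coquorum\someone$, and each of $\Quorum,\everyone,\Coquorum,\someone$ appears in the list of Lemma~\ref{lemm.someone.commutation}, two applications of that lemma (together with Lemma~\ref{lemm.equiv}(\ref{item.equiv.2})) give $\modT\QuorumBox\phi\equiv\QuorumBox\modT\phi$ and $\modT\CoquorumDiamond\phi\equiv\CoquorumDiamond\modT\phi$. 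Parts~3 and~4 then follow by applying parts~1 and~2 to the pointwise predicate $\modT\phi$ in place of $\phi$, so that the relevant point-set becomes $\{p\in\Pnt \mid n,p\mentval\modT\phi\}$.

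There is no serious obstacle here; the one point that needs care is the treatment of the compound modalities in parts~3 and~4, because $\QuorumBox$ and $\CoquorumDiamond$ are not themselves listed in Lemma~\ref{lemm.someone.commutation} and must be decomposed into their constituent quantifier-modalities before the commutation lemma can be invoked --- and one must separately confirm that pointwise-ness survives the application of $\modT$, so that Proposition~\ref{prop.pointwise.dense.char} remains applicable to $\modT\phi$.
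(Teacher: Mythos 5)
Your proposal is correct and follows essentially the same route as the paper's proof: parts~1 and~2 by chaining Proposition~\ref{prop.pointwise.dense.char} with Proposition~\ref{prop.allbut.supertwined}, and parts~3 and~4 by commuting $\modT$ past the compound modality via Lemma~\ref{lemm.someone.commutation} and then invoking parts~1 and~2 for $\modT\phi$. Your two auxiliary observations (that $\modT\phi$ remains pointwise, and that $\QuorumBox$ must be decomposed as $\Quorum\everyone$ before the commutation lemma applies) are points the paper leaves implicit, so you are if anything slightly more careful.
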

\begin{proof}
We consider each part in turn:
\begin{enumerate}
\item
The proof is elementary but we spell it out.
We reason as follows:
$$
\begin{array}{r@{\ }l@{\quad}l}
n\mentval\QuorumBox\phi
\liff&
\noi(\{p\in\Pnt \mid n,p\mentval \modT\phi\})
&\text{Proposition~\ref{prop.pointwise.dense.char}(\ref{item.dense.char.pointwise.QuorumBox})}
\\
\liff&
\mycard\{p\in\Pnt \mid n,p\mentval \phi\}\geq N\minus f
&\text{Proposition~\ref{prop.allbut.supertwined}(\ref{item.allbut.NOI.dense})}
\end{array}
$$
\item
Much as the previous part, but using Proposition~\ref{prop.pointwise.dense.char}(\ref{item.dense.char.pointwise.CoquorumDiamond}).
\item
By Lemma~\ref{lemm.someone.commutation}, $n\mentval\modT\QuorumBox\phi$ if and only if $n\mentval\QuorumBox\modT\phi$.
We just use part~1 of this result.
\item
Much as the previous part, using part~2 of this result.
\qedhere\end{enumerate}
\end{proof}

\subsection{Further observations on intersection properties}

For this Subsection we fix a signature $\Sigma$ and a model $\mathcal M=(\Pnt,\opens,\Val)$.
\begin{rmrk}
We collect some further results as follows:
\begin{enumerate*}
\item
Lemma~\ref{lemm.intertwined.char} proves converses to the implications in Lemma~\ref{lemm.supertwined.qcbt}, provided that the signature is nonempty.

This is interesting because it tells us that, provided the signature is nondegenerate, Lemma~\ref{lemm.supertwined.qcbt} accurately captures and reflects the intersection properties that interest us. 
\item
Proposition~\ref{prop.reformulate} is a semitopological distillation of parts~\ref{item.intertwined.qcbt.2} and~\ref{item.supertwined.qcbt.2} of Lemma~\ref{lemm.supertwined.qcbt}.
Proposition~\ref{prop.reformulate} \emph{is} Lemma~\ref{lemm.supertwined.qcbt} and Corollary~\ref{corr.two.quorums.intersect.correct}, in semitopological rather than logical form.
\end{enumerate*}
\end{rmrk}

In Lemma~\ref{lemm.intertwined.char} we could replace every instance of the expression `$\modTF\tf P$' with the (by Definition~\ref{defn.correct}(\ref{item.correct.phi}) and Remark~\ref{rmrk.tf.synonyms}) synonymous expression `$\correct{\tf P}$'.
However, we have not set up a corresponding synonym for $\modTB\tf P$ (it would be `$\tf{valid}[\tf P]$'; we just never wrote that definition) so it seems natural to stay with $\modTF$ here.
\begin{lemm}
\label{lemm.intertwined.char}
Suppose $\tf P\in\ns{PredSymb}$ is a predicate constant symbol (Notation~\ref{nttn.logic.terminology}(\ref{item.nullary.ps}); a nullary predicate symbol).\footnote{The proof works just as well for an $n$-ary predicate symbol; we just pick \emph{any} $\arity(\tf P)$-tuple of values $\vec v\in\Pnt^{\arity(\tf P)}$ --- which is possible because we assumed $\Val$ is nonempty in Definition~\ref{defn.model}(\ref{item.model.val}) --- and we use $\tf P(\vec v)$ in the proof instead.}
Then:
\begin{enumerate*}
\item
If\ \  $\ment(\QuorumBox\modTF\tf P\tand\QuorumBox\modTB\tf P)\timpc \someoneAll\tf P$ then $(\Pnt,\opens)$ is \xxtwined. 
\\
(This is a form of converse to Lemma~\ref{lemm.supertwined.qcbt}(\ref{item.intertwined.qcbt.3}).)
\item
If\ \  $\ment(\QuorumBox\modTF\tf P\tand\QuorumBox\modTB\tf P)\timpc \CoquorumDiamond\tf P$ then $(\Pnt,\opensne)$ is \xxxtwined. 
\\
(This is a form of converse to Lemma~\ref{lemm.supertwined.qcbt}(\ref{item.supertwined.qcbt.4}).) 
\item\label{item.intertwined.char.reverse}
As a corollary, provided that the signature $\Sigma$ is has at least one predicate symbol --- i.e. provided $\PredSymb$ in
it is nonempty (Definition~\ref{defn.signature}(\ref{item.predicate.syntax.empty.signature})) --- then reverse implications to those in Lemma~\ref{lemm.supertwined.qcbt} hold.\footnote{No need to also insist that $\Val$ is nonempty, because we assume this in Definition~\ref{defn.model}(\ref{item.model.val}).}
\end{enumerate*}
Recall from Definition~\ref{defn.validity.judgement}(\ref{item.ment.phi}) that $\ment\phi$ means $\Forall{\acontext\in\tf{Ctx}(\Sigma,\mathcal M)}(\acontext\ment\phi)$.
\end{lemm}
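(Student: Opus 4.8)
The plan is to prove each implication by contraposition, using the standard counter-valuation technique (build a valuation that refutes the implication). Since the hypothesis $\ment(\dots)\timpc(\dots)$ quantifies over \emph{all} contexts, hence all valuations, it suffices, assuming the relevant twinedness fails, to exhibit a single valuation $\avaluation$ and a single context (say at stage $n=0$) at which the asserted strong implication is not valid. Throughout I will convert $\timpc$ to ordinary implication via \strongmodusponens: $\acontext\ment A\timpc C$ holds iff $\acontext\ment\modT A$ implies $\acontext\ment\modT C$. The antecedent $A=\QuorumBox\modTF\tf P\tand\QuorumBox\modTB\tf P$ is correct (meets and joins of truth-values in $\{\tvT,\tvF\}$ stay in $\{\tvT,\tvF\}$), so by Lemma~\ref{lemm.tv.ment.TF}(\ref{item.tv.ment.TF.2b}) $\acontext\ment\modT A\liff\acontext\ment A$. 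Since $\tf P$ is atomic it is pointwise (Lemma~\ref{lemm.atomic.O}), so I may read the modalities through Proposition~\ref{prop.pointwise.dense.char}: $n\mentval\QuorumBox\psi$ says $\noi$ of the validating set, $n\mentval\CoquorumDiamond\psi$ says that set is $\dense$, and $n\mentval\someoneAll\psi$ says it is nonempty.

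For part~1, suppose $(\Pnt,\opens)$ is not $2$-twined, so there are $O_1,O_2\in\opensne$ with $O_1\cap O_2=\varnothing$. I would set $\tf P$ to return $\tvF$ on $O_1$, $\tvB$ on $O_2$, and $\tvF$ elsewhere; this is well-defined \emph{because} $O_1\cap O_2=\varnothing$. Then the correct-set contains $O_1$ and the valid-set contains $O_2$, so both $\QuorumBox\modTF\tf P$ and $\QuorumBox\modTB\tf P$ hold (each witnessed by a nonempty open subset, giving $\noi$), i.e.\ $A$ is true. But no point returns $\tvT$, so $\{p\mid n,p\mentval\modT\tf P\}=\varnothing$ and $\someoneAll\tf P$ is not true; by strong modus ponens the implication fails at this context, contradicting the hypothesis. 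The point is that a nonempty $O_1\cap O_2$ would force any $p$ there to be simultaneously correct and valid, i.e.\ $\tvT$ — which is exactly what the failure of $2$-twinedness removes.

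For part~2, suppose $(\Pnt,\opens)$ is not $3$-twined, so there are $O_1,O_2,O_3\in\opensne$ with $O_1\cap O_2\cap O_3=\varnothing$. I would define $\tf P$ to be $\tvT$ on $O_1\cap O_2$, $\tvF$ on $O_1\setminus O_2$, $\tvB$ on $O_2\setminus O_1$, and $\tvF$ on the rest. Then $\tf P$ is correct throughout $O_1$ and valid throughout $O_2$, so $A$ is again true. The $\tvT$-set is exactly $O_1\cap O_2$, which is disjoint from $O_3$ since $O_1\cap O_2\cap O_3=\varnothing$; hence $O_3$ witnesses non-density, so by Lemma~\ref{lemm.someone.commutation} and Proposition~\ref{prop.pointwise.dense.char}(\ref{item.dense.char.pointwise.CoquorumDiamond}) $\CoquorumDiamond\tf P$ is not true, and strong modus ponens again produces a refuting context. (I read the printed `$(\Pnt,\opensne)$' as the semitopology $(\Pnt,\opens)$.)

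Finally, part~\ref{item.intertwined.char.reverse} is assembled by combining parts~1 and~2 with the forward direction Lemma~\ref{lemm.supertwined.qcbt}: each pairing yields a biconditional between the validity of an implication pattern and the corresponding $n$-twinedness, and the remaining implications in Lemma~\ref{lemm.supertwined.qcbt} are sandwiched between these, so all their reverses follow. The hypothesis that $\PredSymb$ is nonempty enters precisely here: the argument needs at least one predicate symbol to carry the counter-valuation, and for a symbol of positive arity one fixes an arbitrary $\vec v\in\Val^{\arity(\tf P)}$ (available since $\Val\neq\varnothing$ by Definition~\ref{defn.model}) and runs everything with $\tf P(\vec v)$ in place of $\tf P$. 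I expect the only real obstacle to be bookkeeping the counter-valuation consistently across the possibly overlapping witnessing opens $O_i$ — the assignments must agree on overlaps while keeping $A$ true and the consequent false — and it is exactly the failed intersection ($O_1\cap O_2=\varnothing$, resp.\ $O_1\cap O_2\cap O_3=\varnothing$) that eliminates the one overlap that would otherwise force an unwanted $\tvT$.
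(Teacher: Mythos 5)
Your proof is correct and follows the same strategy as the paper's: argue each part by contraposition and exhibit an explicit counter-valuation on which the antecedent $\QuorumBox\modTF\tf P\tand\QuorumBox\modTB\tf P$ is true but the consequent is not, then assemble part~3 from parts~1--2 together with Lemma~\ref{lemm.supertwined.qcbt}. Your part~1 valuation is a trivial relabelling of the paper's. In part~2, however, your counter-valuation genuinely differs, and for the better: the paper assigns $\tvT$ on all of $O$, $\tvB$ on $O'\setminus O$ and $\tvF$ elsewhere, so its $\tvT$-set is $O$ itself --- and $O\cap O'\cap O''=\varnothing$ does not by itself prevent $O$ from being dense (e.g.\ the three pairwise-intersecting two-element opens on a three-point space), so the paper's claimed failure of the consequent does not obviously go through as written. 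Your choice of putting $\tvT$ only on $O_1\cap O_2$ makes the $\tvT$-set exactly $O_1\cap O_2$, which is disjoint from the nonempty open $O_3$ by hypothesis, so non-density --- hence failure of $\modT\CoquorumDiamond\tf P$ via Proposition~\ref{prop.pointwise.dense.char}(\ref{item.dense.char.pointwise.CoquorumDiamond}) --- is immediate. So your version is the more robust instantiation of the shared idea.
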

\begin{proof}
We consider each part in turn:
\begin{enumerate}
\item
We prove a contrapositive of \strongmodusponens: we suppose $\opens$ is \emph{not} \xxtwined, so there exist $O,O'\in\opensne$ with $O\cap O'=\varnothing$, and we exhibit a valuation such that $\neg(\mentval\someone\tf P)$ (equivalently: $\mentval\modF\someone\tf P$).

We choose a valuation $\avaluation$ such that for any $n\in\Time$, 
\begin{itemize*}
\item
$\avaluation(\tf P)_{n,p}(\aval)=\tvF$ for $p\in O$, and 
\item
$\avaluation(\tf P)_{n,p}(\aval)=\tvB$ for $p\in \Pnt\setminus O$.
\end{itemize*}
The reader can check that $O\mentval \everyone\modTF\tf P$ and $O'\mentval \everyone\modTB P(\aval)$, but $\mentval\modF\someone\tf P$.
\item
We prove a contrapositive: suppose $\opens$ is \emph{not} \xxxtwined, so there exist $O,O',O''\in\opensne$ with $O\cap O'\cap O''=\varnothing$.

We choose a valuation $\avaluation$ such that for any $n\in\Time$, 
\begin{itemize*}
\item
$\avaluation(\tf P)_{n,p}(\aval)=\tvT$ for $p\in O$, and 
\item
$\avaluation(\tf P)_{n,p}(\aval)=\tvB$ for $p\in O'\setminus O$, and 
\item
$\avaluation(\tf P)_{n,p}(\aval)=\tvF$ for $p\in \Pnt\setminus (O\cup O')$.
\end{itemize*}
The reader can check that $O\cup O'\mentval \everyone\modTF\tf P$ and $O\cup O''\mentval \everyone\modTB P(\aval)$, but $O'\cup O''\mentval\modF\everyone\tf P$.
\item
Just by the arguments above.
\qedhere\end{enumerate}
\end{proof}

\begin{nttn}
\label{nttn.cover}
Suppose $\ns X$ is a set.
Then:
\begin{enumerate*}
\item
A \deffont{cover} of $\ns X$ is a set of sets $\mathcal X\subseteq\powerset(\ns X)$ such that $\bigcup\mathcal X=\ns X$.
\item
A \deffont{partition} of $\ns X$ is a cover of $\ns X$ all of whose elements are pairwise disjoint (so $\Forall{X,X'\in\mathcal X}X\cap X'=\varnothing\lor X=X'$).
\end{enumerate*}
\end{nttn}

The reader might like to try giving a proof of Proposition~\ref{prop.reformulate} using direct calculations (without building on the results we have so far).
This is possible, but it is messy.
The use of logic endows the proofs with a structure and cleanness that the direct proof cannot match: 
\begin{prop}
\label{prop.reformulate}
Suppose $(\Pnt,\opens)$ is a semitopology and $n\in\Time$ and $\avaluation$ is a valuation, and suppose $\phi$ is a closed predicate that is pointwise in $(n,\avaluation)$.
Then:
\begin{enumerate*}
\item\label{item.reformulate.1}
If $(\Pnt,\opens)$ is \xxtwined then $n\mentval\CoquorumDiamond\modTB\phi \tor \CoquorumDiamond\modF\phi$. 
\item\label{item.reformulate.1b}
As a corollary, if $(\Pnt,\opens)$ is \xxtwined then any two-element partition of $\Pnt$ (Notation~\ref{nttn.cover}) contains a dense element (Definition~\ref{defn.dense.subset}(\ref{item.dense.subset})).
\item\label{item.reformulate.2}
If $(\Pnt,\opens)$ is \xxxtwined then 
$n\mentval\CoquorumDiamond\modT\phi \tor \CoquorumDiamond\modB\phi \tor \CoquorumDiamond\modF\phi$. 
\item\label{item.reformulate.2b}
As a corollary, if $(\Pnt,\opens)$ is \xxxtwined then any three-element partition of $\Pnt$ contains a dense element.
\end{enumerate*} 
\end{prop}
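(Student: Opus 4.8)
The plan is to reduce each of the logical statements (\ref{item.reformulate.1}) and (\ref{item.reformulate.2}) to a purely combinatorial fact about partitions, and then to read the partition corollaries (\ref{item.reformulate.1b}) and (\ref{item.reformulate.2b}) back off. First I would note that $\CoquorumDiamond\psi$ has denotation $\bigwedge_{O'\in\opensne}\bigvee_{p'\in O'}\modellabel{\psi}{n,p',O',\avaluation}$, which does not depend on the outer context parameters $p$ and $O$, so each disjunct appearing in (\ref{item.reformulate.1}) and (\ref{item.reformulate.2}) is constant across contexts at a fixed stage $n$ and valuation $\avaluation$. Hence by the validity-of-disjunction Lemma~\ref{lemm.tand.tor}(\ref{item.tand.tor.tor}), the validity $n\mentval\CoquorumDiamond\modTB\phi\tor\CoquorumDiamond\modF\phi$ holds if and only if $n\mentval\CoquorumDiamond\modTB\phi$ or $n\mentval\CoquorumDiamond\modF\phi$, and similarly for the three-way disjunction in (\ref{item.reformulate.2}). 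Since $\phi$ is pointwise and each $\tf M\in\{\modT,\modB,\modF,\modTB\}$ acts pointwise (so $\tf M\phi$ is again pointwise), Proposition~\ref{prop.pointwise.dense.char}(\ref{item.dense.char.pointwise.CoquorumDiamond}) then rewrites each $n\mentval\CoquorumDiamond\tf M\phi$ as $\dense(S_{\tf M})$, where $S_{\tf M}=\{p\in\Pnt\mid n,p\mentval\tf M\phi\}$ is the set of points at which $\phi$ lands in the truth-value subset selected by $\tf M$.

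The engine of the proof is then the following elementary topological fact, which I would isolate: if $(\Pnt,\opens)$ is $k$-twined and $\Pnt$ is covered by $k$ pairwise-disjoint (possibly empty) sets $A_1,\dots,A_k$, then at least one $A_i$ is dense. I would prove this by contraposition: if no $A_i$ is dense then by Definition~\ref{defn.dense.subset}(\ref{item.dense.subset}) there is, for each $i$, a nonempty open $O_i$ with $O_i\cap A_i=\varnothing$, i.e. $O_i\subseteq\bigcup_{j\neq i}A_j$; intersecting all $k$ and using that the $A_j$ are pairwise disjoint and cover $\Pnt$ forces $O_1\cap\dots\cap O_k\subseteq\bigcap_i\bigl(\bigcup_{j\neq i}A_j\bigr)=\varnothing$, contradicting the $k$-twined hypothesis of Definition~\ref{defn.twined}. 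For $k=2$ and $k=3$ this is just the two short set computations $A_2\cap A_1=\varnothing$ and $(A_2\cup A_3)\cap(A_1\cup A_3)\cap(A_1\cup A_2)=\varnothing$.

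With this in hand the assembly is immediate. For part (\ref{item.reformulate.1}), writing $t_p=\modellabel{\phi}{n,p,\avaluation}$, the sets $S_{\modTB}=\{p\mid t_p\in\{\tvT,\tvB\}\}$ and $S_{\modF}=\{p\mid t_p=\tvF\}$ form a two-element partition of $\Pnt$, so $2$-twinedness gives density of one of them, which is exactly the reduced form of (\ref{item.reformulate.1}); for part (\ref{item.reformulate.2}) the three sets $S_{\modT},S_{\modB},S_{\modF}$ partition $\Pnt$ and $3$-twinedness gives density of one. Finally, the corollaries (\ref{item.reformulate.1b}) and (\ref{item.reformulate.2b}) follow by realizability: given an arbitrary two- (resp. three-) element partition of $\Pnt$, choose a nullary predicate symbol and a valuation assigning the corresponding truth-value at each point --- exactly the device used in the proof of Lemma~\ref{lemm.intertwined.char} --- so that the induced sets $S_{\tf M}$ are the given partition blocks, and apply (\ref{item.reformulate.1}) (resp. (\ref{item.reformulate.2})).

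The main obstacle is not mathematical depth but careful bookkeeping at the interface between the logic and the set theory: one must check that each $\tf M\in\{\modT,\modB,\modF,\modTB\}$ makes $\tf M\phi$ \emph{correct} (it never returns $\tvB$), so that $n,p\mentval\tf M\phi$ is equivalent to $\tf M\phi$ being true and hence to $t_p$ lying in the intended subset of $\THREE$; that the four selected subsets genuinely partition $\THREE$ in the two groupings used; and that allowing empty partition blocks causes no trouble (an empty block is never dense once $\Pnt\neq\varnothing$, and $k$-twinedness tolerates repeated opens). Once these small checks are discharged, everything reduces to the one-line partition argument above.
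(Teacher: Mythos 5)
Your proof is correct, and it reaches the result by a genuinely different route from the paper's. You prove the semitopological fact first --- the contraposition argument that in a $k$-twined space any disjoint $k$-block cover of $\Pnt$ has a dense block --- and then lift it into the logic through Proposition~\ref{prop.pointwise.dense.char}(\ref{item.dense.char.pointwise.CoquorumDiamond}). The paper goes the other way: it stays inside the logic, refuting the conjunction $\QuorumBox\modF\phi\tand\QuorumBox\modTB\phi$ via Lemma~\ref{lemm.supertwined.qcbt}(\ref{item.intertwined.qcbt.2}) (the witnessing point would have to satisfy $\modF\phi$ and $\modTB\phi$ simultaneously, which is impossible), then applying excluded middle (Lemma~\ref{lemm.para}(\ref{item.para.em})) and the de Morgan equivalences of Figure~\ref{fig.easy.equivalences} to convert the refuted conjunction of $\QuorumBox$'s into the desired disjunction of $\CoquorumDiamond$'s; the partition corollaries are extracted only afterwards, by the same realizability device you cite. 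Your route buys an elementary, self-contained engine (the two set identities $A_1\cap A_2=\varnothing$ and $(A_2\cup A_3)\cap(A_1\cup A_3)\cap(A_1\cup A_2)=\varnothing$), and it makes your closing realizability step redundant, since your combinatorial lemma already \emph{is} parts~(\ref{item.reformulate.1b}) and~(\ref{item.reformulate.2b}); the paper's route showcases the logical machinery doing the work, which is exactly what its preamble to the proposition advertises. Your interface checks --- constancy of the $\CoquorumDiamond$-disjuncts across contexts so that validity of the disjunction reduces to a disjunction of validities, correctness of $\tf M\phi$ so that $n,p\mentval\tf M\phi$ selects the intended subset of $\THREE$, and the two partitions of $\THREE$ by the chosen modalities --- are the right ones and all hold.
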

\begin{proof}
We consider each part in turn:
\begin{enumerate}
\item
By Lemma~\ref{lemm.supertwined.qcbt}(\ref{item.intertwined.qcbt.2})
$n\mentval\QuorumBox\modF\phi\tand\QuorumBox\modTB\phi$ implies $n\mentval\someoneAll(\modF\phi\tand\modTB\phi)$.
Unpacking what $n\mentval\someoneAll(\modF\phi\tand\modTB\phi)$ means using Definition~\ref{defn.validity.judgement} and Figures~\ref{fig.3.derived} and~\ref{fig.3.phi.f}, we see that this is impossible --- for no $p$ and $O$ can we satisfy both $n,p,O\mentval\modF\phi$ and $n,p,O\mentval\modTB\phi$ together --- so that 
$n\nmentval\QuorumBox\modF\phi\tand\QuorumBox\modTB\phi$.
By Lemma~\ref{lemm.para}(\ref{item.para.em})
$n\mentval\tneg(\QuorumBox\modF\phi\tand\QuorumBox\modTB\phi)$, and simplifying further using de Morgan equivalences as per Figure~\ref{fig.easy.equivalences}, we obtain
$n\mentval\CoquorumDiamond\modTB\phi\tor\CoquorumDiamond\modF\phi$ as required.
\item
Suppose $\{P_1,P_2\}$ partitions $\Pnt$.
Assume a $0$-ary predicate symbol $\tf P$, and let $\varsigma$ be the valuation that sets $\varsigma(\tf P)_{n,p}=\tvT$ if $p\in P_1$ and $\varsigma(\tf P)_{n,p}=\tvF$ if $p\in P_2$.\footnote{We can assume $\tf P$ and $\varsigma$ because this result is about semitopologies, so we are free to impose any model structure on it that we like.  It seems particularly nice here how the mere \emph{existence} of a model for our logic gives us a clean proof of a purely semitopological property which, on the face of it, has nothing to do with our three-valued modal logic.}
By part~\ref{item.reformulate.1} and Lemma~\ref{lemm.tand.tor} either $n\mentval \Coquorum\modTB\tf P$ or $n\mentval\Coquorum\modF\tf P$.
By Proposition~\ref{prop.pointwise.dense.char}(\ref{item.dense.char.pointwise.CoquorumDiamond}) either $P_1$ or $P_2$ is dense.
\item
We reason as for the proof of part~\ref{item.reformulate.1}, using Lemma~\ref{lemm.supertwined.qcbt}(\ref{item.supertwined.qcbt.2}) and starting from $\modFB\phi \tand \modTF\phi \tand \modTB\phi$. 
\item
We reason as for the proof of part~\ref{item.reformulate.2}, for a partition $\{P_1,P_2,P_3\}$ and setting $\varsigma(\tf P)_{n,p}=\tvT$ if $p\in P_1$ and $\varsigma(\tf P)_{n,p}=\tvB$ if $p\in P_2$ and $\varsigma(\tf P)_{n,p}=\tvF$ if $p\in P_3$.
\qedhere\end{enumerate}
\end{proof}

\section{Declarative Paxos}
\label{sect.declarative.paxos}

\subsection{Case study: a simple protocol}
\label{subsect.simple}

\begin{rmrk}
\label{rmrk.simple.protocol}
Before Paxos, we will warm up by sketching a simple protocol \theory{Simple}, and then we will give four \QLogic axiomatisations of \theory{Simple}, reflecting four different failure assumptions (messages can/cannot get lost; participants can/cannot crash).
\theory{Simple} is as follows:
\begin{enumerate*}
\item
Any participant $p$ may broadcast a $\tf{propose}$ message to all participants.
\item
If a participant receives a $\tf{propose}$ message, then it responds with $\tf{accept}$. 
\item
If the participant $p$, having broadcast its $\tf{propose}$ message, receives a $\tf{accept}$ message, then it declares $\tf{decide}$.
\end{enumerate*}
\end{rmrk}

\begin{rmrk}
\theory{Simple} is very simple!
Only one value can be proposed, so we do not even bother to state it; we just `propose'.
Any participant accepts immediately that it hears someone propose. 
Any participant that proposes, decides immediately that it hears that someone accepted.

For comparison, in Paxos \emph{multiple values} are available, so we need to make sure that participants all decide on the \emph{same} value if they decide on anything at all. 
This makes things much harder: so Paxos has multiple rounds; there is a designated leader participant at each round; the leader must choose a value to propose; and elaborate communication is in place to ensure that (subject to suitable precisely-stated assumptions) a single choice of agreed value will eventually be decided.

But that is for later.  
Even the simple protocol \theory{Simple} has plenty of subtlety, which will allow us to exhibit (in `toy' form) many of the techniques that we will then apply to Paxos:
\end{rmrk}

\begin{rmrk}
We will axiomatise \theory{Simple} subject to two parameters, yielding four possible axiomatisations:
\begin{enumerate*}
\item
Do all messages arrive, or might there be a finite initial period during which the network may be unreliable so that during that time, messages might not arrive?
\item
Can participants crash (and if so, how many)?
\end{enumerate*}
\end{rmrk}

\subsubsection*{All messages arrive; no participants crash}

First, we assume an ideal situation that all messages arrive, and no participants crash.
We can axiomatise this as follows:
$$
\begin{array}{l@{\qquad}l}
\rulefont{SimpAccept?}&
\tf{accept} \timpc \someoneAll\tf{propose} 
\\
\rulefont{SimpDecide?}&
\tf{decide} \timpc (\tf{propose}\tand\someoneAll\tf{accept}) 
\\
\rulefont{SimpAccept!}&
(\someoneAll\tf{propose}) \timpc \tf{accept}
\\
\rulefont{SimpDecide!}&
(\tf{propose}\tand\someoneAll\tf{accept}) \timpc \tf{decide}
\\
\rulefont{Correct}&
\correct{\tf{propose},\tf{accept},\tf{decide}}
\end{array}
$$
\noindent Recall from Definition~\ref{defn.three.imp} and Proposition~\ref{prop.mp.for.tnotor} that $\timpc$ is a \emph{strong implication} and means `if the LHS is $\tvT$ then so is the RHS', and 
recall from Proposition~\ref{prop.pointwise.dense.char}(\ref{item.dense.char.pointwise.someoneAll}) that $\someoneAll\tf P$ means `$\tf P$ somewhere', where $\tf P\in\{\tf{propose},\tf{accept},\tf{decide}\}$.

Then we explain the axioms (briefly) as follows:
\begin{itemize*}
\item
We call \rulefont{SimpAccept?} and \rulefont{SimpDecide?} \emph{backward rules}; they express backward versions of the rules in Remark~\ref{rmrk.simple.protocol}, namely that:
\begin{itemize*}
\item
if a participant asserts $\tf{accept}$ then some participant must have asserted $\tf{propose}$; and 
\item
if a participant asserts $\tf{decide}$ then it must have asserted $\tf{propose}$, and some participant must have asserted $\tf{accept}$.
\end{itemize*}
\item
We call \rulefont{SimpAccept!} and \rulefont{SimpDecide!} \emph{forward rules}.
They express forward versions of the rules in Remark~\ref{rmrk.simple.protocol}.
Because our failure assumptions are that nothing fails --- all messages arrive; no participants crash --- the forward rules are in this case just the backward rules, reversed.
\item
Axiom \rulefont{Correct} expresses that all predicates return $\tvT$ and $\tvF$ (not $\tvB$), as per Definition~\ref{defn.correct}(\ref{item.correct}). 
This reflects our assumption that no participants crash, because we model a computation crashing %
as the predicate describing it returning $\tvB$.
\end{itemize*}
Note that $\tf{propose}$, $\tf{accept}$, and $\tf{decide}$ above are not messages; they are propositions!
There is no message-passing.
$\tf{accept}$ being true at a participant represents that the participant accepts; this is implemented in the algorithm by the participant broadcasting an accept message, and represented in the logic by $\tf{accept}$ taking truth-value $\tvT$ at that participant.

Our example protocol is simple, but we hope it is already starting to illustrate how the various parts of our logic come together to express the protocol axiomatically.
We continue:

\subsubsection*{All messages arrive; participants can crash}

We assume that all messages arrive, and participants can crash --- but we assume there is a dense set of uncrashed participants.
Recall from Definition~\ref{defn.dense.subset}(\ref{item.dense.subset}) that a \emph{dense set} is one that intersects every open set; in particular, this means that every open set of participants contains a participant that is not crashed.
We model being crashed as returning the third truth-value $\tvB$.
We can axiomatise this as follows:
$$
\begin{array}{l@{\qquad}l}
\rulefont{SimpAccept?}&
\tf{accept} \timpc \someoneAll\tf{propose} 
\\
\rulefont{SimpDecide?}&
\tf{decide} \timpc (\tf{propose}\tand\someoneAll\tf{accept}) 
\\
\rulefont{SimpAccept!}&
(\someoneAll\tf{propose}) \tnotor \tf{accept}
\\
\rulefont{SimpDecide!}&
(\tf{propose}\tand\someoneAll\tf{accept}) \tnotor \tf{decide}
\\
\rulefont{Uncrashed}&
\QuorumBox\modTB\tf P \tnotor \someoneAll\modT\tf P
\quad \tf P\in\{\tf{propose},\tf{accept},\tf{decide}\}
\end{array}
$$ 
\noindent 
The backward rules are unchanged: if a participant accepts then from the form of the rules, it \emph{must} be the case that somebody proposed; and if a participant decides then from the form of the rules, it \emph{must} be the case that the participant proposed and some participant accepted.
 
The forward rules \rulefont{SimpAccept!} and \rulefont{SimpDecide!} have changed, and they now use the weak implication $\tnotor$.
By Definition~\ref{defn.three.imp} and Proposition~\ref{prop.mp.for.tnotor}, $\tnotor$ means `if the LHS is $\tvT$ then the RHS is $\tvT$ or $\tvB$'.
The $\tvB$ truth-value represents `crashed', so that (for example) \rulefont{SimpAccept!} expresses that for each participant $p$, if some (possibly other) participant $q$ asserts $\tf{propose}$, then $p$ asserts $\tf{accept}$ \emph{or} $p$ has crashed. 

Axiom \rulefont{Uncrashed} reflects that any a quorum (open set) of participants contains an uncrashed participant; e.g. in topological language, that the set of uncrashed participants is dense.
More precisely: \rulefont{Uncrashed} asserts that if an open set of participants returns $\tvT$ or $\tvB$ for $\tf P$, then at least one of them (is not crashed and so) returns $\tvT$.

Note from Lemma~\ref{lemm.someone.commutation} that $\Quorum\modTB$ is equivalent to $\modTB\Quorum$, and $\someoneAll\modT$ is equivalent to $\modT\someoneAll$, so that an equivalent form of \rulefont{Uncrashed} is $\modTB\QuorumBox\tf P \tnotor \modT\someoneAll\tf P$.
In our axiomatisations it is usually helpful to pull $\modTB$ and $\modT$ up to the topmost level possible in the predicate, so we may favour that form, but this is just a matter of convenience; meaning is unchanged.

\subsubsection*{Messages might not arrive; no participants crash}

We assume that there is an initial period during which the network might not deliver messages, but the network eventually stabilises (in the literature this is called \emph{eventual synchrony}).
No participants crash.
We can axiomatise this as follows:
$$
\begin{array}{l@{\qquad}l}
\rulefont{SimpAccept?}&
\tf{accept} \timpc \someoneAll\tf{propose} 
\\
\rulefont{SimpDecide?}&
\tf{decide} \timpc (\tf{propose}\tand\QuorumBox\tf{accept}) 
\\
\rulefont{SimpAccept!}&
\final((\someoneAll\tf{propose}) \timpc \tf{accept})
\\
\rulefont{SimpDecide!}&
\final((\tf{propose}\tand\someoneAll\tf{accept}) \timpc \tf{decide})
\\
\rulefont{Correct}&
\correct{\tf{propose},\tf{accept},\tf{decide}}
\end{array}
$$ 
\noindent The backward rules are unchanged, but the forward rules \rulefont{SimpAccept!} and \rulefont{SimpDecide!} now have a $\final$ modality.
This is from Figure~\ref{fig.3.derived} and it means `eventually, this holds forever', reflecting an assumption that eventually the network stabilises.

Note that \QLogic does not explicitly represent message-passing, but this does not matter as much as one might suppose!
The $\final$ modality captures the essence of what we need, namely, that eventually the forward rules become valid. 

\subsubsection*{Messages might not arrive; participants can crash}

We assume that messages might not arrive, and participants can crash but there is a dense set of uncrashed participants (so that every quorum of participants contains at least one uncrashed participant).
We can axiomatise this just by combining the ideas above, as follows:
$$
\begin{array}{l@{\qquad}l}
\rulefont{SimpAccept?}&
\tf{accept} \timpc \someoneAll\tf{propose} 
\\
\rulefont{SimpDecide?}&
\tf{decide} \timpc (\tf{propose}\tand\someoneAll\tf{accept}) 
\\
\rulefont{SimpAccept!}&
\final((\someoneAll\tf{propose}) \tnotor \tf{accept})
\\
\rulefont{SimpDecide!}&
\final((\tf{propose}\tand\someoneAll\tf{accept}) \tnotor \tf{decide})
\\
\rulefont{Uncrashed}&
\modTB\QuorumBox\tf P \tnotor \modT\someoneAll\tf P
\quad \tf P\in\{\tf{propose},\tf{accept},\tf{decide}\}
\end{array}
$$ 
\noindent The axiomatisations of Declarative Paxos and Simpler Declarative Paxos in Figures~\ref{fig.ThyPaxOne} and~\ref{fig.ThySPax} are more elaborate than the axioms above, because Paxos is a more complicated protocol, but the basic ideas are as above; just scaled up.

\subsection{A high-level view of the Paxos algorithm}

\begin{rmrk}[High-level view of Paxos]
\label{rmrk.high-level.paxos}
Paxos allows participants to crash, and messages to fail to arrive at their destination, but it assumes all participants are honest (i.e. that if they have not crashed then they follow the rules of the algorithm).
The reader can find detailed treatments of Paxos in \cite{DBLP:journals/tocs/Lamport98,lamport2001paxos,cachin2011paxos} (a Wikipedia article~\cite{wiki:Paxos_(computer_science)} is also accessible and clear).
Here is a high-level view of the algorithm:
\begin{enumerate*}
\item\label{item.l.proposes}
\deffont{Leader proposes:}\quad

A designated leader participant (if it does not crash) sends messages to all participants, \emph{proposing} some unique value $v$.

Message-passing is unreliable, so messages may fail to arrive.
\item\label{item.p.sends}
\deffont{Participant sends:}\quad

If a participant $p$ receives a propose $v$ message from the leader --- it is assumed that every participant knows who the leader is, and other participants cannot forge the leader's signature --- then $p$ sends a \emph{send} message back to the leader, which carries the value $v_p$ that $p$ most recently accepted, along with the timestamp of when $v_p$ was accepted.\footnote{For reasons of compact presentation we do not explicitly represent the timestamps in our axiomatisation; i.e. $\tf{send}$ is a unary predicate in Definition~\ref{defn.declarative.paxos}(\ref{item.ThyPaxOne.sig}), not a binary predicate.
If this bothers the reader then we can just say that, actually, $v_p$ contains a timestamp of the stage $n\in\Time$ it was accepted; i.e. that $v_p$ is actually implemented as a pair $(v_p',n_p)$ of a value and a time.} 
If $p$ has never accepted any value, so no most recently accepted value exists, then $p$ sends back a special $\udfn$ value.
\item\label{item.l.writes}
\deffont{Leader writes:}\quad

The leader participant sends a message to all participants, \emph{writing} the most recently-timestamped value $v_p$ mentioned in any \emph{send} message it received from within the first quorum of participants that it hears back from; if the leader does not hear back from any quorum, then it does not write anything.
\item\label{item.p.accepts}
\deffont{Participant accepts:}\quad

If a participant receives a write message from the leader for some value, then it sends a message back \emph{accepting} that value.
\item\label{item.l.decides}
\deffont{Leader decides:}\quad

If the leader receives accept messages for some value from a quorum of participants, it decides on that value, and sends out a \emph{decide} message to all participants with that value.
\item\label{item.p.decides}
\deffont{Participant (non-leader) decides:}\quad

If a participant receives a decide message from the leader for some value, then it decides on that value and stops.
\end{enumerate*}
\end{rmrk}

\begin{rmrk}[High-level view of our axiomatisation]
\label{rmrk.elisions}
The theory $\ThyPax$ in Definition~\ref{defn.ThyPaxOne} and Figure~\ref{fig.ThyPaxOne} is a declarative abstraction of Paxos-the-algorithm.
It stands in relation to implementations of Paxos in roughly the same way that (say) the mathematical specification of the Fibonacci function stands in relation to its concrete implementations, in the sense that we elide implementational detail to preserve (and reveal) a mathematical ideal of what the implementation does.

\ThyPax elides two features of the concrete implementation:
\begin{enumerate*}
\item
There is no explicit message-passing; there are only predicates returning truth-values.

This may seem counterintuitive --- isn't Paxos \emph{supposed} to be a message-passing algorithm? --- but we shall see that a lot of the character of Paxos survives.
Indeed, arguably (and certainly for the first author) it is much easier to see what is really going on once we have done this.
\item\label{item.elide.algorithmic.time}
There is no abstract machine, and therefore no \deffont{algorithmic time}, by which we mean no sequence of state-changes of the abstract machine. 

The only notion of time is the \emph{stage} $n\in\Time$ from the context as defined in Definition~\ref{defn.contexts}(\ref{item.context.stage}).
We may call this \deffont{logical time} and (when we are modelling Paxos) it represents the sequence of what is in the distributed consensus algorithms literature often called \emph{stages}, or \emph{ballot numbers}.\footnote{In a distributed algorithm, at any point in algorithmic time, different participants in the algorithm may be at different points in logical time.  A practical example: consider several participants racing across 100 metres. In logical time, they are both just counting to 100.  In algorithmic time, one of them will get to 100 \emph{first}, and win.  In this paper, we only care about the logical aspects of the process; there may still be a winner participant, but this would implemented just as a unary predicate $\tf{winner}$ such that $100\ment\texiaffine\tf{winner}$.} 
\end{enumerate*}
We will prove standard Paxos correctness properties for \ThyPax in Section~\ref{sect.paxos.correctness.properties}. 
\end{rmrk}

\subsection{Definition of Declarative Paxos}

\begin{figure}
$$
\begin{array}{l@{\quad}r@{\ }l}
\figunderline{Backward rules, for backward inference} 
\figskip
\rulefont{PaxPropose?}&
& \tall v.\ \tf{propose}(v)\timpc(\tf{leader}\tand v\tneq\udfn)
\figskip
\rulefont{PaxSend?}&
& \tall v.\ \tf{send}(v) \timpc 
\begin{array}[t]{l}
\someoneAll\texi\tf{propose}\ \tand 
\\
\bigl(
(\mru{\tf{accept}}{v})\ \tor  
(\tneg\modT\recent\texi\tf{accept}\tand v\teq\udfn)\bigr)
\end{array}
\figskip
\rulefont{PaxWrite?}&
& \tall v.\ \tf{write}(v) \timpc 
\begin{array}[t]{l}
\tf{leader}\tand 
\\
\Quorum(\Box\texi\tf{send} \tand 
\\
\qquad\bigl((\mru{(\someone\tf{accept})}{v}) \tor
(\tf{propose}(v)\tand\tneg\modT\recent\texi\someone\tf{accept})\bigr)) 
\end{array}
\figskip
\rulefont{PaxAccept?}&
& \tall v.\ \tf{accept}(v) \timpc \someoneAll\tf{write}(v)
\figskip
\rulefont{PaxDecideL?}&
& \tf{leader} \timpc \tf{decide}(v) \timpc \QuorumBox\tf{accept}(v)
\figskip
\rulefont{PaxDecide\tneg L?}&
& \tall v.\ \tneg\tf{leader} \timpc \tf{decide}(v) \timpc \someoneAll(\tf{leader}\tand\tf{decide}(v))
\figskip\figskip
\figunderline{Forward rules, for making progress} 
\figskip
\rulefont{PaxPropose!}&
& \tf{leader}\tnotor\texi\tf{propose}
\figskip
\rulefont{PaxSend!}&
& \final(\someoneAll\texi\tf{propose} \tnotor \texi\tf{send}) 
\figskip
\rulefont{PaxWrite!}&
& \final\bigl((\tf{leader}\tand (\QuorumBox\texi\tf{send}) \tand \texi\tf{propose}) \tnotor \texi\tf{write}\bigr)
\figskip
\rulefont{PaxAccept!}&
& \final(\texi\someoneAll\tf{write}\tnotor \texi\tf{accept}) 
\figskip
\rulefont{PaxDecideL!}&
& \final((\tf{leader} \tand\texi\QuorumBox\tf{accept}) \tnotor \texi\tf{decide}) 
\figskip
\rulefont{PaxDecide\tneg L!}&
& \final((\tneg\tf{leader} \tand \someoneAll(\tf{leader}\tand\texi\tf{decide}))\tnotor \texi\tf{decide}) 
\figskip
\figskip
\figunderline{Other rules, for correctness \& liveness}
\figskip
\rulefont{LdrExist}&
&\modTF\tf{leader}\tand\modT\someoneAll\tf{leader}
\figskip
\rulefont{LdrExt}_\phi&
& \tf{leader}\timpc\someoneAll(\tf{leader}\tand\phi)\timpc\phi
\quad\text{every closed predicate $\phi$}
\figskip
\rulefont{LdrCorrect}&
& \infinitely(\tf{leader}\timpc\correct{\tf{propose},\tf{write},\tf{decide}}) 
\figskip
\rulefont{PaxPCorrect}&
& 
\begin{array}[t]{l}
\QuorumBox\correct{\tf{propose}} \tand \QuorumBox\correct{\tf{send}} \tand \QuorumBox\correct{\tf{accept}} \tand 
\\
\qquad\QuorumBox\correct{\tf{write}} \tand \QuorumBox\correct{\tf{decide}}
\end{array}
\figskip
\rulefont{PaxPropose01}&
& \texiaffine \tf{propose}
\figskip
\rulefont{PaxWrite01}&
& \texiaffine\tf{write}
\figskip
\rulefont{Pax2Twined}&
& \tall v.\ \QuorumBox\tf{accept}(v) \timpc \CoquorumDiamond\tf{accept}(v)
\end{array}
$$
\caption{$\Theta_\ThyPax$: Axioms of Declarative Paxos (Definition~\ref{defn.declarative.paxos})}
\label{fig.logical.paxos}
\label{fig.ThyPaxOne}
\end{figure}

Recall from Definition~\ref{defn.theory.axiom}(\ref{item.theory}) that a \emph{theory} is a pair of a \emph{signature} and a set of \emph{axioms} over that signature.
Then:
\begin{defn}
\label{defn.declarative.paxos}
\label{defn.ThyPaxOne}
We define \deffont{Declarative Paxos} to be the logical theory 
$$
\ThyPax=(\Sigma_{\ThyPax},\Theta_{\ThyPax}),
$$ 
where:
\begin{enumerate*}
\item\label{item.ThyPaxOne.sig}
the signature $\Sigma_{\ThyPax}$ is 
$$
\Sigma_{\ThyPax}=[\tf{leader}:0;\ \tf{propose},\tf{send},\tf{write},\tf{accept},\tf{decide}:1] ,
$$ 
as per the notation in Definition~\ref{defn.signature}(\ref{item.predicate.syntax.signature}), and
\item\label{item.ThyPaxOne.axioms}
the axioms $\Theta_{\ThyPax}$ are as written in Figure~\ref{fig.logical.paxos}.
\end{enumerate*}
In axioms \rulefont{PaxPropose?} and \rulefont{PaxSend?} we assume an element $\udfn\in\Val$, to model the `undefined' value from Paxos (it does not matter what $\udfn\in\Val$ is; we just have to choose some element and use it consistently).
\end{defn}

\begin{rmrk}
\label{rmrk.functional.notation.in.paxos}
For convenient reference, we give pointers to help unpack the notation used in Figure~\ref{fig.logical.paxos}.
Note that this is just an overview focussing on items of notation used in the axioms, to help with reading them; in-depth discussions of the axioms will follow: 
\begin{enumerate*}
\item
We use the functional notation from Notation~\ref{nttn.hos} to elide quantified variables where convenient.
For example, axioms \rulefont{PaxPropose!} and \rulefont{PaxAccept!} written in full are 
$$
\tf{leader}\tnotor\texi\thea.\tf{propose}(\thea)
\quad\text{and}\quad
\final((\texi\thea.\someoneAll\tf{write}(\thea))\tnotor\texi\thea.\tf{accept}(\thea)) .
$$
\item
$\tnotor$ and $\timpc$ are the \emph{weak} and \emph{strong} implications respectively.
Truth-tables are in Figure~\ref{fig.3} and a discussion is in Subsection~\ref{subsect.3.imp}.
In particular, by Proposition~\ref{prop.mp.for.tnotor}, 
\begin{itemize*}
\item
$\phi\tnotor\phi'$ means `if $\phi$ is true then $\phi'$ is valid' (where `valid' means `$\tvT$ or $\tvB$') and 
\item
$\phi\timpc\phi'$ means `if $\phi$ is true then $\phi'$ is true'. 
\end{itemize*}
\item
The compound modalities --- $\someoneAll$, $\everyoneAll$, $\QuorumBox$, and $\CoquorumDiamond$ --- are discussed in Subsection~\ref{subsect.compound.modalities}.
Their meaning in full generality is in Lemma~\ref{lemm.dense.char}, but because the axioms only apply them to \emph{pointwise} predicates, their meanings simplify to `somewhere' ($\someoneAll$), `everywhere' ($\everyoneAll$), `for a set with a nonempty open interior' ($\QuorumBox$), and `for a dense set' ($\CoquorumDiamond$) respectively, as per Proposition~\ref{prop.pointwise.dense.char}. 
\item\label{item.discuss.final}
$\final$ means $\sometime\forever$ and is from Figure~\ref{fig.3.derived}.
It means `eventually, this holds for all future times'. 

In our axiomatisation this is used to reflect a standard assumption in distributed systems of \deffont{eventual network synchrony}, or synonymously \deffont{global stabilisation (of the network)}, meaning that the communications network may be unstable and fail to pass messages at first, but then after a finite but unbounded amount of time, it becomes stable and messages arrive.
Our logic does not explicitly represent messages, but the use of $\final$ in our forward rules reflects this assumption about the underlying implementation. 
\item
$\texiaffine$ in \rulefont{PaxPropose01} and \rulefont{PaxWrite01} means `for zero or one elements', as per Subsection~\ref{subsect.unique.affine.existence}, so these two axioms intuitively state that each participant proposes or writes at most one value.

Note that our logic is three-valued, so what these axioms mean \emph{precisely} is something a little richer than is the case in two-valued logic.
In particular, if $\Forall{v}\modellabel{\tf{write}(v)}{\acontext}=\tvB$ --- corresponding to the case where a participant is crashed --- then $\acontext\ment\texiaffine\tf{write}$, \emph{even though} $\acontext\ment\tf{write}(v)$ holds for every $v$. 
See the discussion and results in Subsection~\ref{subsect.unique.affine.existence}.
\item
$\correct{\tf{propose}}$ means that $\tf{propose}$ returns $\tvT$ or $\tvF$ (not $\tvB$) on every value; see Definition~\ref{defn.correct}(\ref{item.correct}).
Similarly for the other unary predicate symbols in \rulefont{PaxPCorrect}.
\item
$\mru{\tf{accept}}{v}$ means `a most recently accepted value is $v$'.
See the discussion and definition in Subsection~\ref{subsect.mrup}.
\item
Finally, note that we intend the predicates in Figure~\ref{fig.logical.paxos} to be read as axioms.

What it means for an axiom to be valid in a model --- i.e. how we judge whether a model satisfies $\Theta_\ThyPax$, or how we judge whether a predicate is valid in all models of $\Theta_\ThyPax$ --- is unpacked in Figure~\ref{fig.validity} and Definition~\ref{defn.theory.axiom}(\ref{item.axioms.valid}).
We discuss the quantifier scoping in those definitions in Remark~\ref{rmrk.thy.quantifier.scope}.
\end{enumerate*}
\end{rmrk}

\subsection{High-level discussion of the axioms}

\begin{rmrk}
The predicates in Figure~\ref{fig.logical.paxos} are intended as a \emph{theory}, in the sense of Definition~\ref{defn.theory.axiom}, which is why we call them \emph{axioms} and not just `some predicates'.

That is: we fix a signature $\Sigma$ that has (at least) the predicate symbols required to express the axioms in Figure~\ref{fig.logical.paxos}, and we fix a model $\mathcal M$, and we derive properties that hold in all the valuations $\avaluation$ to $\mathcal M$ such that the theory $\Theta_\ThyPax$ from Figure~\ref{fig.validity} is valid.
 
At least as much design effort went into crafting the axioms $\Theta_\ThyPax$ of Declarative Paxos in Figure~\ref{fig.logical.paxos}, as went into designing the three-valued modal fixedpoint logic in which the axioms are expressed.\footnote{Of course there was design feedback in both directions; we put into the logic what we needed to express the axioms, and we refined the axioms to elegantly work within the logic.} 
Therefore, we will take our time to discuss the axioms in detail.
The reader who just wants to see proofs can skip the rest of this Section (for now) and look to Section~\ref{sect.paxos.correctness.properties} onwards.
\end{rmrk}

\begin{rmrk}
\label{rmrk.axioms.paxos.three.groups}
We have organised the axioms in Figure~\ref{fig.logical.paxos} into three groups, which we now discuss:
\begin{enumerate}
\item
\emph{Backward (inference) rules.}

Axioms with names ending with $?$ represent backwards reasoning and so we call them \deffont{backward rules}.
They use the strong implication $\timpc$ (Definition~\ref{defn.three.imp}(\ref{item.three.elementary.strong.implication})), which means `if the LHS is true ($\tvT$) then the RHS is true ($\tvT$)'.

The backward rules represent a strong inference that \emph{if} some $Y$ is observed to be $\tvT$ then it \emph{must} be because some other property $X$ is $\tvT$.
For instance, \rulefont{PaxAccept?} says that if $\tf{accept}(v)$ is true, then $\someoneAll\tf{write}(v)$ must also be true.

We discuss the backward rules individually in Remark~\ref{rmrk.discuss.backward.rules} below.
\item
\emph{Forward (progress) rules.}

Axioms with names ending with $!$ represent forwards reasoning about making progress, so we call them \deffont{forward rules}.
They use $\final$ and the weak implication $\tnotor$ (Definition~\ref{defn.three.imp}(\ref{item.three.elementary.weak.implication})), which means `if the LHS is true ($\tvT$) then the RHS is valid ($\tvT$ or $\tvB$)'.

We derive the forward rules from the backward rules using a systematic recipe which we discuss in Remark~\ref{rmrk.recipe}.

The forward rules represent a weak inference that \emph{if} some precondition $X$ is $\tvT$ then --- provided the network is stable and provided that the participant has not crashed (where `crashed' is interpreted as returning $\tvB$) --- then the postcondition $Y$ is $\tvT$. 

For instance, \rulefont{PaxAccept!} expresses that, after network synchrony (Remark~\ref{rmrk.functional.notation.in.paxos}(\ref{item.discuss.final})), if $\texi\someoneAll\tf{write}$ is true (by Notation~\ref{nttn.hos} this is shorthand for $\texi a.\someoneAll\tf{write}(a)$) then if the participant has not crashed then $\texi\tf{accept}$ (i.e. $\texi a.\tf{accept}(a)$) is true. 

We discuss the forward rules in Remark~\ref{rmrk.discuss.forward.rules} below.
\item
\emph{Other rules, for correctness and liveness.}

Other rules reflect particular conditions of the algorithms.
We discuss them in Remark~\ref{rmrk.paxos.axioms.discussion}.
\end{enumerate}
\end{rmrk}

\begin{rmrk}[Recipe for deriving forward rules from backward rules]
\label{rmrk.recipe}
A design feature of Figure~\ref{fig.logical.paxos} is that each forward rule is obtained from the corresponding backward rule by a \deffont{recipe}.
Namely: for a backward rule of the form 
$$
\tall v.\ \chi \timpc \phi[a\ssm v] \timpc \psi[a\ssm v]
$$
where 
\begin{itemize*}
\item
$\chi$ expresses some possible \emph{side-conditions} (e.g. $\chi=\tf{leader}$ in \rulefont{PaxDecideL?}) and 
\item
$\phi$ expresses some \emph{preconditions}, and 
\item
$\psi$ expresses some \emph{postconditions}, 
\end{itemize*}
our recipe constructs a forward rule of the form
$$
(\texi v.\,\chi \tand \phi)\tnotor \texi v.\psi .
$$
In most of the forward rules this result also get wrapped in a $\final$ modality, representing network synchrony  (Remark~\ref{rmrk.functional.notation.in.paxos}(\ref{item.discuss.final})).

This recipe holds for every pair of rules:
\begin{itemize*}
\item
We see it cleanly in the pairs \rulefont{PaxAccept?} / \rulefont{PaxAccept!}, \rulefont{PaxDecideL?} / \rulefont{PaxDecideL!}, and \rulefont{PaxDecide\tneg L?} / \rulefont{PaxDecide\tneg L!}.
These rules follow our recipe symbol-for-symbol.
\item
We see this also with \rulefont{PaxPropose?} / \rulefont{PaxPropose!}.
Our recipe generates the following version of \rulefont{PaxPropose!}: 
$$
(\texi v.\,\tf{leader}\tand v\tneq\tf{undef})\timpc\texi v.\tf{propose}(v).
$$
The reader can check that this is logically equivalent to what is written in Figure~\ref{fig.logical.paxos}.
\item 
It is less clear how our recipe works for \rulefont{PaxSend?} and \rulefont{PaxSend!} and \rulefont{PaxWrite?} and \rulefont{PaxWrite!}; e.g. \rulefont{PaxSend!} looks different from \rulefont{PaxSend?} --- it is clearly simpler!
Remarkably, these pairs too are consistent with our recipe; see Remark~\ref{rmrk.why.evaporates} and the subsequent discussion and results.
\end{itemize*}
\end{rmrk}

\begin{rmrk}
We make some further elementary comments on our recipe from Remark~\ref{rmrk.recipe}:
\begin{enumerate*}
\item
We uncurry the forward rules --- meaning that we write $(\texi v.\,\chi \tand \phi)\tnotor \texi v.\psi$ instead of $(\texi v.\,\chi) \tnotor \phi\tnotor \texi v.\psi$ --- because this is convenient for some proofs. 
It makes no other difference and either form would be correct.
\item
A forward rule of the form $\texi v.((\chi \tand \phi)\tnotor \psi)$ would be \emph{wrong}.
This is just a different predicate and is not what is intended.
\item
A forward rule of the form $\tall v.((\chi \tand \phi)\tnotor \psi)$ would not necessarily be wrong, but it is stronger than we need for our proofs.
This reflects a general design principle in logical theories that \emph{weaker axioms are better}.
If we can prove the same result from weaker axioms then generally speaking we should, because that is a stronger result.\footnote{In practice, working from weaker axioms tends to force us to think harder and so design a better-engineered proof.}
\end{enumerate*}
\end{rmrk}

\subsection{More detailed discussion of the axioms}

\begin{rmrk}
\label{rmrk.paxos.axioms.discussion}
We discuss the \emph{axioms for correctness and liveness}:
\begin{enumerate}
\item\label{item.discuss.LdrExist}
\rulefont{LdrExist}:

In implementations of Paxos, at each stage there is a unique \emph{leader}, whose identity is computed by applying (what amounts to) a random number generator to an agreed seed.
As the level of abstraction in this paper, we just assume that a leader exists --- we do not care how it is chosen.

We use a predicate constant symbol $\tf{leader}$ to identify the leader.
Because our logic is modal --- across time $n\in\Time$ and participants $p\in\Pnt$ --- we can just assume that a constant symbol $\tf{leader}$ is assigned truth-value $\tvT$ where a participant $p$ that is a leader at a stage $n$, and is assigned truth-value $\tvF$ where a participant $p$ that is not a leader at a stage $n$.

In more technical language: $\avaluation$ is such that $n,p\mentval \modTF\tf{leader}$ for every time $n\in\Time$ and participant $p\in\Pnt$, %
and $p$ is a leader %
precisely when $n,p\mentval\tf{leader}$.
\rulefont{LdrExist} just asserts that a participant either is a leader or it is not, and some leader exists.
\item\label{item.discuss.LdrExistUniq}
\rulefont{LdrExt}:

\rulefont{LdrExt} is a rule-scheme (one for each closed $\phi$) and asserts that leaders are extensionally equal; %
a formal statement and proof are in Lemma~\ref{lemm.LeaderExtUniq}.

In an implementation this is enforced by having a unique leader, but in the logic this is unnecessary --- we don't care how many participants claim to be leaders, so long as they are all saying the same thing! --- and also it is impossible because (although we have modalities $\everyone$ and $\someone$) we do not have an explicit quantification over elements of $\Pnt$ in our logic.
This is a feature, not a bug: we want our logic to be no more expressive than it needs to be.

In fact, the axiom-scheme \rulefont{LdrExt} is stronger than we need for our proofs;
more on this in Subsection~\ref{subsect.ldrexistuniq'}.

Henceforth we may say `the leader' instead of `a leader', because up to extensional equivalence there is just one.
\item
\rulefont{LdrCorrect}:

\rulefont{LdrCorrect} asserts that for infinitely many times $n\in\Time$, the leader $l$ (which exists by the right-hand conjunct in \rulefont{LdrExist}) is $\tf{propose}$-correct, $\tf{write}$-correct, and $\tf{decide}$-correct (terminology from Definition~\ref{defn.correct}(\ref{item.P-correct}).

Unpacking Definition~\ref{defn.correct}(\ref{item.P-correct}), this means that $n,l\mentval \modTF\tf{propose}(v)$ for every $v\in\Val$, and similarly for $\tf{write}$ and $\tf{decide}$.
Intuitively, this is a \emph{liveness assumption} that there is always a future time when the leader is not crashed in the sense that it can correctly $\tvT\tvF$-propose, $\tvT\tvF$-write, and $\tvT\tvF$-decide a value. 
\item
\rulefont{PaxPCorrect}:

Given a valuation $\avaluation$, we unpack Definition~\ref{defn.validity.judgement}(\ref{item.mentval}) to see what $\mentval \rulefont{PaxPCorrect}$ actually means, as per Definition~\ref{defn.theory.axiom}:
\begin{multline*}
\Forall{n\in\Time}\Exists{O\in\opens}\Forall{p\in O}\Forall{v\in\Val}n,p,O\mentval\correct{\tf P(v)}
\\
\text{for each}\ \tf P\in\{\tf{propose},\tf{send},\tf{accept},\tf{write},\tf{decide}\} .
\end{multline*}
This corresponds to a standard correctness assumption in Paxos that at each stage there is a quorum of uncrashed participants.

However, the axiom \rulefont{PaxPCorrect} is subtly more general than the standard correctness assumption, because there is no restriction that the \emph{same} open set $O$ be used for each unary predicate symbol.\footnote{If we wanted the same $O$, then we would scope the $\QuorumBox$ modality further out and write $\QuorumBox(\correct{\tf{propose}} \tand \correct{\tf{send}} \tand \correct{\tf{accept}} \tand 
\correct{\tf{write}} \tand \correct{\tf{decide}})$.  This turns out to be unnecessary.}
It turns out that the proofs require only this assumption, and no more. 
\item\label{item.01.discussion}
\rulefont{PaxPropose01} and \rulefont{PaxWrite01}:

We justify these axioms with respect to the Paxos algorithm as summarised in Remark~\ref{rmrk.high-level.paxos}:
\begin{itemize*}
\item
Step~\ref{item.l.proposes} of Remark~\ref{rmrk.high-level.paxos} states that leader is crashed or proposes one value.
The intuitive correspondence to \rulefont{PaxPropose01} is clear.
\item
Step~\ref{item.l.writes} of Remark~\ref{rmrk.high-level.paxos} states that the leader writes a value derived from \emph{the first} quorum of participants that it hears back from (or none, if it hears back from no such quorum). 
One corollary of this, is that it writes at most one value.
Our logical theory has no state machine and notion of algorithmic time (as per Remark~\ref{rmrk.elisions}(\ref{item.elide.algorithmic.time})) --- so there is no sense in which information from one open set `arrives' at the leader `before' information from another --- so from the point of view of the logic, of all the open sets available, the leader picks at most one to act on.
This is axiom \rulefont{PaxWrite01}. 
\end{itemize*}
It is a fact that \rulefont{PaxPropose01}+\rulefont{PaxPropose!} and \rulefont{PaxWrite01}+\rulefont{PaxWrite!} give us
$$
\tf{leader}\tnotor\texiunique\tf{propose}
\quad\text{and}\quad
\final((\tf{leader}\tand\QuorumBox\texi\tf{send})\tnotor \texiunique\tf{write}) ,
$$ 
and
\rulefont{PaxPropose01}+\rulefont{PaxPropose?} and \rulefont{PaxWrite01}+\rulefont{PaxWrite?} give us
$$
\tneg\tf{leader}\tnotor\tneg\texi\tf{propose}
\quad\text{and}\quad
\tneg\tf{leader}\tnotor\tneg\texi\tf{write} .
$$
So it follows from our axioms that a leader participant proposes and writes one value (assuming it is not crashed, and if messages arrive), and a non-leader participant proposes and writes zero values. 
\item
\rulefont{Pax2Twined}:

This axiom collects a set of instances of Lemma~\ref{lemm.supertwined.qcbt}(\ref{item.intertwined.3}), namely taking $\phi=\tf{accept}(v)$ in that Lemma for every $v\in\Val$.

There is a design choice here: instead of an axiom \rulefont{Pax2Twined}, we could insist specifically that the semitopology in our underlying model be $2$-twined.
Then \rulefont{Pax2Twined} ceases to be an axiom, but we still have it as a fact of the model by directly invoking Lemma~\ref{lemm.supertwined.qcbt}(\ref{item.intertwined.3}).
Or we could be even more specific: we could insist that the underlying semitopology be $\AllBut{N}{f}$ (Definition~\ref{defn.allbut}) for $N> 2*f$; by Proposition~\ref{prop.allbut.supertwined}(\ref{item.allbut.intertwined}) this is $2$-twined, and then we can use Lemma~\ref{lemm.supertwined.qcbt}(\ref{item.intertwined.3}) to obtain \rulefont{Pax2Twined}.

We choose here to express as an axiom precisely the property that our proofs require.
This is concise, general, and yields clean proofs.
Insisting on a $2$-twined semitopology, or specifically on the semitopology $\AllBut{N}{f}$ for $N>2*f$, would also make sense, but in the context of our logical approach, using an axiom seems nice. %
\end{enumerate}
\end{rmrk}

\begin{rmrk}
\label{rmrk.discuss.backward.rules}
We discuss the \emph{axioms for the backward rules}:
\begin{enumerate}
\item
\rulefont{PaxPropose?}:

If we propose some value $v$ then 
\begin{enumerate}
\item
we are the leader and 
\item
$v$ is not the undefined value $\udfn$.
\end{enumerate}
\item
\rulefont{PaxSend?}:

If we send some value $v$ then 
\begin{enumerate}
\item
somebody must have proposed some value and 
\item
\begin{enumerate*}
\item
\emph{either} $v$ is the value we have most recently accepted, 
\item
\emph{or} we never accepted any value and $v$ is the undefined value.
\end{enumerate*}
\end{enumerate}
\item
\rulefont{PaxWrite?}:

If we write some value $v$ then 
\begin{enumerate}
\item
we are a leader, and
\item
there exists an open set of points (= participants) that have all sent us values, and
\item
\begin{enumerate*}
\item
\emph{either} some point in that open set has most recently accepted $v$, 
\item
\emph{or} 
$v$ is the value that we proposed, and no point in that open set has ever accepted any value. 
\end{enumerate*}
\end{enumerate}
\item
\rulefont{PaxAccept?}:

If we accept $v$ then somebody somewhere must have written $v$.
\item
\rulefont{PaxDecideL?}:

If a leader decides $v$ then an open set of points must have all accepted $v$.
\item
\rulefont{PaxDecide{\tneg}L?}:

If we are not a leader and we decide $v$, then some leader somewhere must have decided $v$. 
\end{enumerate}
\end{rmrk}

\begin{rmrk}
\label{rmrk.discuss.forward.rules}
We discuss the \emph{axioms for the forward rules}.

We noted in Remark~\ref{rmrk.recipe} our design methodology of deriving the forward rules from the backward rules; forward rules are rules for making progress, so they are existentially quantified over the variable $v$ and they use the weak implication $\tnotor$ to emulate possible crashes.

The other subtlety is the $\final$ modality wrapping all of the forward rules, except for \rulefont{PaxPropose!}.
\begin{itemize*}
\item
\emph{What $\final$ does for us in most of the forward rules.}

It models possible network failure, and eventual network synchrony.
As touched on in Remark~\ref{rmrk.functional.notation.in.paxos}(\ref{item.discuss.final})), the $\final$ modality reflects the possibility that the network might be down for a while, but eventually --- `finally' --- the network stabilises and messages arrive.

Predicates in our logic are \emph{not} messages, but we get the effect of eventual network synchrony in our logic (even though messages are not explicitly present) by wrapping the implications in our forward rules in the $\final$ modality.

More on this when we introduce the notion of \emph{global stabilisation of logical time} in Subsection~\ref{subsect.gslt}.
\item
\emph{Why $\final$ is absent from \rulefont{PaxPropose!}.}

Because this rule models a property of the implementation that involves no network communication.
The leader does not need to communicate with any other participant to pick a value to propose.

It is true that the leader's propose message might not arrive if the network is down --- but this is reflected by the use of $\final$ in the \rulefont{PaxSend!} rule.
\end{itemize*}
\end{rmrk}

\subsection{Some asides}

\subsubsection{An aside on \rulefont{PaxWrite?}}

\rulefont{PaxWrite?} contains this predicate:
$$
\mru{(\someone\tf{accept})}{v} .
$$
Unpacking what this means (cf. Remark~\ref{rmrk.explain.mrup}), this looks into the past to check whether $v$ is a value that was most recently accepted by some participant.

Note that this slightly differently-scoped predicate would be \emph{wrong}:
$$
\someone(\mru{\tf{accept}}{v}) .
$$
This finds some participant such that $v$ is a value that was most recently accepted by \emph{that} participant (but there might be another participant who more recently accepted some other value).

The reader with a background in logic will know all about how important scoping can be (e.g. $\forall x.\exists y.\phi$ vs. $\exists x.\forall y.\phi$, and $\exists x.(\phi\land \phi')$ vs. $\exists x.\phi\land\exists x.\phi'$), but perhaps the point bears repeating.

\subsubsection{An aside on \rulefont{LdrExt}}

In a Paxos implementation, there is a unique leader at each stage. 
We mentioned in Remark~\ref{rmrk.paxos.axioms.discussion}(\ref{item.discuss.LdrExist}\&\ref{item.discuss.LdrExistUniq}) that \rulefont{LdrExist} asserts that a leader exists and that the \rulefont{LdrExt} axiom-scheme asserts that all leader participants are extensionally equal.
We now make this formal:\footnote{Later on in Subsection~\ref{subsect.ldrexistuniq'} we will note weaker, but more abstract, forms of the axiom.}
\begin{lemm}[Leaders are extensionally equivalent]
\label{lemm.LeaderExtUniq}
Suppose $n\in\Time$ and $\avaluation$ is a valuation, and suppose $n\mentval\rulefont{LdrExt}_\phi$ holds for every closed predicate $\phi$.\footnote{By Definition~\ref{defn.validity.judgement}(\ref{item.ment.n}) this means $\Forall{p\in\Pnt}\Forall{O\in\opensne}(\modellabel{\rulefont{LdrExt}_\phi}{n,p,O,\avaluation}\in\{\tvT,\tvB\})$.} 
Suppose that $l,l'\in\Pnt$ and $n,l\mentval\modT\tf{leader}$ and $n,l'\mentval\modT\tf{leader}$.

Then $l$ and $l'$ are \deffont{extensionally equivalent} at $(n,\avaluation)$, by which we mean that for every closed predicate $\phi$ and every $O\in\opensne$,
$$
\modellabel{\phi}{n,l,O,\avaluation} = \modellabel{\phi}{n,l',O,\avaluation} .
$$
\end{lemm}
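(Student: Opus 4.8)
The plan is to prove something slightly stronger and cleaner than the literal statement: that at a leader \emph{every} closed predicate is constant across open sets, and that two leaders agree on this constant value. Two observations drive everything. First, by the clause for $\someoneAll$ in Figure~\ref{fig.3.derived}, the denotation $\modellabel{\someoneAll(\tf{leader}\tand\chi)}{n,p,O,\avaluation}$ is a double join over all $O'\in\opensne$ and all $p'\in O'$, so it does not depend on the base point $p$ or the open set $O$. Second, in a nonempty semitopology $\Pnt\in\opensne$ and $\Pnt$ contains every point, so the whole space can always serve as a witnessing open set — this is what lets me sidestep the fact that a leader need not lie in the open set $O$ of the goal.

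First I would record the key sub-claim: \emph{for any closed predicate $\chi$, if there is a leader $p_0$ and an $O_0\in\opensne$ with $p_0\in O_0$ and $\modellabel{\chi}{n,p_0,O_0,\avaluation}=\tvT$, then $\modellabel{\chi}{n,p_1,O,\avaluation}=\tvT$ for every leader $p_1$ and every $O\in\opensne$.} Indeed the hypothesis makes the $(p',O')=(p_0,O_0)$ term of $\someoneAll(\tf{leader}\tand\chi)$ equal to $\tvT\tand\tvT=\tvT$, so by the point/open-set independence above, $\modellabel{\someoneAll(\tf{leader}\tand\chi)}{n,p_1,O,\avaluation}=\tvT$ for all $p_1,O$. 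Now for any leader $p_1$ and any $O$ we have $n,p_1,O\mentval\rulefont{LdrExt}_\chi$ (a hypothesis) and $n,p_1,O\mentval\modT\tf{leader}$; since $\rulefont{LdrExt}_\chi$ is $\tf{leader}\timpc(\someoneAll(\tf{leader}\tand\chi)\timpc\chi)$, two applications of \strongmodusponens, combined with Lemma~\ref{lemm.tv.ment.TF}(\ref{item.tv.ment.TF.2b}) (that $\ment\modT X$ holds iff $X=\tvT$), deliver $\modellabel{\chi}{n,p_1,O,\avaluation}=\tvT$.

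Then I would finish as follows. Fix a closed $\phi$ and $O\in\opensne$. Since $l$ is a leader, $\Pnt\in\opensne$, and $l\in\Pnt$, exactly one modality $M\in\{\modT,\modB,\modF\}$ satisfies $\modellabel{M\phi}{n,l,\Pnt,\avaluation}=\tvT$ (a truth-table fact from Figure~\ref{fig.3}: for each truth-value exactly one of $\modT,\modB,\modF$ returns $\tvT$). Applying the sub-claim with $\chi=M\phi$, $p_0=l$, $O_0=\Pnt$ gives $\modellabel{M\phi}{n,l,O,\avaluation}=\tvT=\modellabel{M\phi}{n,l',O,\avaluation}$, the second equality using that $l'$ is also a leader. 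Finally, $\modellabel{M\phi}{n,l,O,\avaluation}=\tvT$ pins $\modellabel{\phi}{n,l,O,\avaluation}$ to the single value detected by $M$ (for instance $M=\modB$ forces the value $\tvB$), and likewise at $l'$; hence $\modellabel{\phi}{n,l,O,\avaluation}=\modellabel{\phi}{n,l',O,\avaluation}$, as required.

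The main obstacle to watch is precisely the open-set parameter: a leader need not belong to the $O$ appearing in the goal, so one cannot directly witness $\someoneAll(\tf{leader}\tand\phi)$ using $O$ itself. The device that resolves this is to witness with the whole space $\Pnt$ (which contains every leader) and then exploit that $\timpc$, via strong modus ponens, propagates the resulting truth to \emph{all} target open sets at once — together with the $\modT/\modB/\modF$ trichotomy, which is what upgrades a bare validity statement into the equality of truth-values that the lemma demands.
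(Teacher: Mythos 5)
Your proof is correct and follows essentially the same route as the paper's: pick the modality $M\in\{\modT,\modB,\modF\}$ that detects the truth-value of $\phi$ at the leader, instantiate $\rulefont{LdrExt}_{M\phi}$, and apply strong modus ponens twice to force $M\phi$ to be true at the other leader. The one place where you do more than the paper is the handling of the open-set parameter: the paper evaluates $\phi$ at $(n,l,O)$ and asserts directly that $\someoneAll(\tf{leader}\tand M\phi)$ takes value $\tvT$ there, which tacitly requires a witnessing pair $(p',O')$ with $p'\in O'$ even though $l$ need not lie in $O$; you instead evaluate at $(n,l,\Pnt)$ so that the witness $(l,\Pnt)$ is always available, and then let the axiom scheme push $M\phi$ to truth-value $\tvT$ at every leader and every nonempty open set simultaneously. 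This yields the slightly stronger conclusion that closed predicates are constant across open sets at leaders, and it is the cleaner way to discharge the step the paper labels ``routine to check''.
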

\begin{proof}
Suppose $\phi$ is a closed predicate and $O\in\opensne$, and suppose for the sake of argument that $\modellabel{\phi}{n,l,O,\avaluation}=\tvB$.

It is routine to check from Figures~\ref{fig.3.derived} and~\ref{fig.3} and from Lemma~\ref{lemm.dense.char}(\ref{item.dense.char.someoneAll}) that $\modellabel{\someoneAll\modB\phi}{n,l,O,\avaluation}=\tvT$. 
By assumption $n,l',O\mentval \rulefont{LdrExt}_{\modB\phi}$, and by two applications of \strongmodusponens (once for $\modT\tf{leader}$, and another for $\modT\someoneAll(\tf{leader}\tand\modB\phi)$) we conclude that $\modellabel{\modB\phi}{n,l',O,\avaluation}=\tvT$, and therefore $\modellabel{\phi}{n,l',O,\avaluation}=\tvB$ as required. 
 
The cases $\modellabel{\phi}{n,l,O,\avaluation}=\tvT$ and $\modellabel{\phi}{n,l,O,\avaluation}=\tvF$ are be precisely similar, and this suffices to prove the result.
\end{proof}

\subsubsection{An aside on \rulefont{PaxSend?} and \rulefont{PaxSend!}}

\begin{rmrk}
\label{rmrk.why.evaporates}
In Remark~\ref{rmrk.recipe} we proposed a recipe for deriving a forward rule from a backward one: a backward rule of the form
$$
\rulefont{Rule?} 
\qquad
\tall v.\ \chi \timpc \phi[a\ssm v] \timpc \psi[a\ssm v]
$$
generates a forward rule (possibly wrapped in a $\final$ modality) of the form
$$
\rulefont{Rule!}
\qquad
(\texi v.\, \chi \tand \phi)\tnotor \texi v.\psi .
$$
We noted in Remark~\ref{rmrk.recipe} that the recipe works cleanly for some rules --- but it was less clear how it can be applied to derive \rulefont{PaxSend!} from \rulefont{PaxSend?}, and \rulefont{PaxWrite!} from \rulefont{PaxWrite?}.
In these cases, the left-hand sides of \rulefont{PaxSend!} and \rulefont{PaxWrite!} seem much simpler than the right-hand sides of \rulefont{PaxSend?} and \rulefont{PaxWrite?} respectively. 

We now show why, in fact, \rulefont{PaxSend!} and \rulefont{PaxWrite!} are correct and make sense in the context of our recipe, culminating with Corollary~\ref{corr.back.ok}.
\end{rmrk}

\begin{lemm}
\label{lemm.evaporates.simplify}
We have:
\begin{enumerate*}
\item
$\ment \texi v.\bigl((\mru{\tf{accept}}{v})\ \tor (\tneg\modT\recent\texi\tf{accept}\tand v\teq\udfn)\bigr) \ \equiv\ \tvT$.
\item
$\ment \texi v.\bigl(\ (\mru{(\someone\tf{accept})}{v})\ \tor\ (\tneg\modT\recent\texi\someone\tf{accept}\tand \tf{propose}(v))\ \bigr)\ \equiv\ \texi\tf{propose}$.
\end{enumerate*}
Note that:
\begin{itemize*}
\item
The notation $\texi\tf{accept}$ and $\texi\tf{propose}$ and $\texi\someone\tf{accept}$ is from Notation~\ref{nttn.hos}.
\item
We do not restrict to contexts and valuations that satisfy \ThyPax; the meaning of the $\ment$ symbol above is as in Definition~\ref{defn.validity.judgement}(\ref{item.ment.phi}).
\end{itemize*}
\end{lemm}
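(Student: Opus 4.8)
The plan is to reduce each claimed validity $\ment(\text{LHS}\equiv\text{RHS})$ to a pointwise identity of denotations via Lemma~\ref{lemm.equiv}(\ref{item.equiv.2}), and then to compute the left-hand denotation at an arbitrary context $\acontext=(n,p,O,\avaluation)$ by lattice manipulation in $\THREE$, the single substantive input being the ``evaporation'' Lemma~\ref{lemm.evaporates}. I would handle both parts uniformly by writing $\psi$ for $\tf{accept}(a)$ in part~1 and for $\someone\tf{accept}(a)$ in part~2, and $\rho(v)$ for $v\teq\udfn$ in part~1 and $\tf{propose}(v)$ in part~2, so that each left-hand side has the shape $\texi v.\bigl(\mrup{a}{\psi}{v}\tor(\tneg\modT\recent\texi a.\psi\tand\rho(v))\bigr)$.

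First I would exploit that $\texi v$ denotes $\bigvee_{v\in\Val}$ and that the conjunct $\tneg\modT\recent\texi a.\psi$ does not mention $v$. Using that $\THREE$ is a finite chain (hence a distributive frame, so binary meets distribute over arbitrary joins), I would split the join over the top-level $\tor$ and pull the $v$-independent conjunct out of its join, rewriting the expression as $\bigl(\texi v.\mrup{a}{\psi}{v}\bigr)\tor\bigl(\tneg\modT\recent\texi a.\psi\tand\texi v.\rho(v)\bigr)$.

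The key step is then to apply Lemma~\ref{lemm.evaporates}, which gives $\texi v.\mrup{a}{\psi}{v}\equiv\modT\recent\texi a.\psi$. Abbreviating $m:=\modT\recent\texi a.\psi$, the expression becomes $m\tor(\tneg m\tand\texi v.\rho(v))$. Since $m$ carries an outermost $\modT$ it is correct, i.e.\ $m\in\threeCorrect=\{\tvT,\tvF\}$ (cf.\ Lemma~\ref{lemm.mru.unambivalent}), so $m\tor\tneg m=\tvT$, and distributivity collapses the expression to $m\tor\texi v.\rho(v)$. For part~1 one has $\texi v.\rho(v)=\texi v.(v\teq\udfn)=\tvT$ because $\udfn\in\Val$, so the whole denotation is $m\tor\tvT=\tvT$, yielding $\equiv\tvT$. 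For part~2 one has $\texi v.\rho(v)=\texi\tf{propose}$, leaving $m\tor\texi\tf{propose}$ with $m=\modT\recent\texi\someone\tf{accept}$.

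The main obstacle I anticipate is the final absorption in part~2: one must reduce $\modT\recent\texi\someone\tf{accept}\tor\texi\tf{propose}$ to exactly $\texi\tf{propose}$, i.e.\ show that the residual disjunct $m$ contributes nothing beyond $\texi\tf{propose}$. This is the delicate point where the interaction between a past acceptance and a present proposal has to be pinned down — for instance via the commutation of $\modT$ with joins (Lemma~\ref{lemm.someone.commutation}) and a careful reading of the $\mru$ denotation in Figure~\ref{fig.3.derived} at the maximal past stage — and I expect essentially all of the genuine work to live here, everything before it being routine distributive bookkeeping in $\THREE$ once the evaporation step is in place.
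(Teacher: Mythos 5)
Your treatment of part~1 is correct and is essentially the paper's own route: distribute $\texi v$ through the disjunction, pull the $v$-independent conjunct out of the join (legitimate, since $\THREE$ is a chain and hence distributive), apply Lemma~\ref{lemm.evaporates} to replace $\texi v.\mru{\tf{accept}}{v}$ by $m=\modT\recent\texi\tf{accept}$, note that $\texi v.(v\teq\udfn)=\tvT$ because $\udfn\in\Val$, and conclude from $m\tor\tneg m=\tvT$ for the two-valued $m$. The algebraic step $m\tor(\tneg m\tand X)=m\tor X$ is also right.

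For part~2, however, your proposal stops exactly where the real proof obligation lies, and that obligation cannot be discharged. You correctly reduce the goal to $m\tor\texi\tf{propose}\equiv\texi\tf{propose}$ with $m=\modT\recent\texi\someone\tf{accept}$, and then defer ``essentially all of the genuine work'' to this absorption. But $m\tor P\equiv P$ requires $m\leq P$ at every context of every model, and the lemma explicitly does not restrict to valuations satisfying $\ThyPax$. A one-point model in which $\tf{accept}(v_0)$ takes value $\tvT$ at stage $0$ and $\tf{propose}$ is identically $\tvF$ at stage $1$ makes the left-hand side of part~2 evaluate to $\tvT$ at stage $1$ (through the disjunct $\mru{(\someone\tf{accept})}{v_0}$, whose denotation is $\modT$ of $\someone\tf{accept}(v_0)$ at the maximal past stage, namely $\tvT$), while $\texi\tf{propose}$ evaluates to $\tvF$ there. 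So no reading of the $\mru{}{}$ clause will close the gap: the residual disjunct $m$ genuinely contributes beyond $\texi\tf{propose}$. It is worth noting that the paper's own proof of part~2 is no more complete at this point --- it attributes the collapse to Lemma~\ref{lemm.evaporates}, which only yields $\texi v.\mru{(\someone\tf{accept})}{v}\equiv\modT\recent\texi\someone\tf{accept}$ and does nothing to absorb that disjunct into $\texi\tf{propose}$. What your calculation actually establishes is the equivalence of the left-hand side with $\modT\recent\texi\someone\tf{accept}\tor\texi\tf{propose}$; proving the lemma as stated would require either weakening its right-hand side to that predicate or adding a hypothesis forcing $\modT\recent\texi\someone\tf{accept}\leq\texi\tf{propose}$. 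As a proof of the statement as given, the proposal is incomplete.
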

\begin{proof}
We consider each part in turn:
\begin{enumerate}
\item
We distribute the $\texi v$ through the $\tor$, and then through the $\tand$ on the right-hand side (we can do this because $v$ is not free in the left-hand conjunct) and we see that it suffices to show 
$$
\ment \texi v.(\mru{\tf{accept}}{v})\ \tor\ (\tneg\modT\recent\texi\tf{accept}\tand \texi v.v\teq\udfn)\ \equiv\ \tvT
$$
Using Lemma~\ref{lemm.evaporates} this simplifies to just
$$
\ment \modT\recent\texi\tf{accept}\ \tor\ (\tneg\modT\recent\texi\tf{accept}\tand \texi v.v\teq\udfn) \ \equiv\ \tvT.
$$
This clearly holds, just taking $v\teq\udfn$.
\item
We distribute $\texi v$ into the predicate as we did for the proof of part~1.
We see that it suffices to show 
$$
\ment \bigl(\ \texi v.(\mru{(\someone\tf{accept})}{v})\ \tor\ (\tneg\modT\recent\texi\someone\tf{accept}\tand \texi\tf{propose})\ \bigr)\ \equiv\ \texi\tf{propose}.
$$
We simplify using Lemma~\ref{lemm.evaporates} to
$\ment \texi\tf{propose}\ \equiv\ \texi\tf{propose}$,
which is true.\footnote{Yes, we have a lemma for that: Lemma~\ref{lemm.equiv}(\ref{item.equiv.2}) and reflexivity of equality.}
\qedhere
\end{enumerate}
\end{proof}

We can now justify the left-hand sides of \rulefont{PaxSend!} and \rulefont{PaxWrite!} as being consistent with our recipe as discussed above in Remark~\ref{rmrk.why.evaporates}:
\begin{corr}
\label{corr.back.ok}
We have:
$$
\begin{array}{r@{\ }l}
\ment\texi v.&\bigl(\someoneAll\texi\tf{propose}\ \tand \bigl((\mru{\tf{accept}}{v})\ \tor (\tneg\modT\recent\texi\tf{accept}\tand v\teq\udfn)\bigr)\bigr) 
\\
&\equiv \someoneAll\texi\tf{propose}
\\[2ex]
\ment\texi v.&\bigl(\tf{leader}\tand 
\Quorum(\Box\texi\tf{send} \tand 
\bigl((\mru{(\someone\tf{accept})}{v}) \tor
(\tneg\modT\recent\texi\someone\tf{accept}\tand \tf{propose}(v))\bigr)) 
\bigr)
\\
&\equiv
(\tf{leader}\tand \QuorumBox\texi\tf{send} \tand \texi\tf{propose})
\end{array}
$$
\end{corr}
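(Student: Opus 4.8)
The plan is to derive both equivalences from Lemma~\ref{lemm.evaporates.simplify} by a sequence of structural rewrites that push the outer $\texi v$ inward through all connectives and modalities in which $v$ does not occur, apply the lemma to the remaining $v$-dependent kernel, and then clean up. Throughout, I would read each assertion $\ment\phi\equiv\psi$ as extensional equality $\model{\phi}=\model{\psi}$ via Lemma~\ref{lemm.equiv}(\ref{item.equiv.2}), and I would rely on the (routine, compositional) fact that extensional equivalence is a congruence for the denotation in Figure~\ref{fig.3.phi.f}, so that rewriting a subterm by an extensionally equal one preserves the denotation of the whole — this is what licenses applying Lemma~\ref{lemm.evaporates.simplify} underneath $\Quorum$ and $\everyone$. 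The two standing lattice facts I would use are frame distributivity of $\tand$ over joins (so that $\texi v.(\chi\tand\psi)\equiv\chi\tand\texi v.\psi$ when $v$ is not free in $\chi$, a fact of $\THREE$ as a complete chain) and commutativity of joins (Lemma~\ref{lemm.quantifier.swap}(\ref{item.quantifier.swap.2})).

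For the first equivalence, since $v$ is not free in $\someoneAll\texi\tf{propose}$, I would distribute $\texi v$ over the top-level $\tand$ to get $\someoneAll\texi\tf{propose}\tand\texi v.\bigl((\mru{\tf{accept}}{v})\tor(\tneg\modT\recent\texi\tf{accept}\tand v\teq\udfn)\bigr)$. Lemma~\ref{lemm.evaporates.simplify}(1) collapses the right-hand conjunct to $\tvT$, and then $\phi\tand\tvT\equiv\phi$ (from the truth-table for $\tand$ in Figure~\ref{fig.3}) finishes the first line.

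For the second equivalence I would proceed in five rewrites. First pull $\tf{leader}$ out of $\texi v$ (it is not free in $v$). Second, commute $\texi v$ with $\Quorum$ using Lemma~\ref{lemm.quantifier.swap}(\ref{item.quantifier.swap.2}), since both are joins (over $\Val$ and over $\opensne$ respectively). Third, pull $\everyone\texi\tf{send}$ out of $\texi v$ by frame distributivity, leaving $\texi v.\bigl((\mru{(\someone\tf{accept})}{v})\tor(\tneg\modT\recent\texi\someone\tf{accept}\tand\tf{propose}(v))\bigr)$ as the kernel. Fourth, apply Lemma~\ref{lemm.evaporates.simplify}(2) to rewrite this kernel as $\texi\tf{propose}$, giving $\tf{leader}\tand\Quorum(\everyone\texi\tf{send}\tand\texi\tf{propose})$. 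Fifth, note that $\texi\tf{propose}$ is atomic-under-$\texi$ and mentions no $\someone$ or $\everyone$, hence pointwise by Lemma~\ref{lemm.atomic.O}(\ref{item.atomic.pointwise.2}); for a pointwise $\chi$ one has $\Quorum(\psi\tand\chi)\equiv\Quorum\psi\tand\chi$ (because $\model{\chi}$ is constant in the open-set parameter, so it factors out of the join $\bigvee_{O'}$ by distributivity), which yields $\tf{leader}\tand\Quorum\everyone\texi\tf{send}\tand\texi\tf{propose}$; recognising $\Quorum\everyone\texi\tf{send}=\QuorumBox\texi\tf{send}$ from Figure~\ref{fig.3.derived} gives exactly the claimed right-hand side.

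I expect the main obstacle to be the fifth step: extracting the pointwise $\texi\tf{propose}$ from inside $\Quorum$. This is where the pointwise machinery of Definition~\ref{defn.pointwise.meaning} and Lemma~\ref{lemm.atomic.O} genuinely earns its keep, and it must be verified explicitly that $\model{\texi\tf{propose}}$ does not depend on the open-set slot, so that $\bigvee_{O'}(\model{\everyone\texi\tf{send}}_{O'}\tand c)=(\bigvee_{O'}\model{\everyone\texi\tf{send}}_{O'})\tand c$. The other delicate point, which I would flag once at the outset rather than re-prove at each use, is the congruence property justifying substitution of extensionally equal subterms beneath the modalities $\Quorum$ and $\everyone$; the remaining manipulations (pulling $v$-free conjuncts out, commuting joins) are entirely routine.
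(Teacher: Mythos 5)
Your proposal is correct and takes essentially the same route as the paper, whose proof of Corollary~\ref{corr.back.ok} is exactly ``distribute the $\texi v$ quantifier down into the predicate and use Lemma~\ref{lemm.evaporates.simplify}''; you have merely spelled out the intermediate rewrites (frame distributivity of $\tand$ over joins in the chain $\THREE$, commuting the two joins $\texi v$ and $\Quorum$ via Lemma~\ref{lemm.quantifier.swap}(\ref{item.quantifier.swap.2}), and factoring the pointwise $\texi\tf{propose}$ out of $\Quorum$) that the paper leaves implicit. All of these steps check out against the denotational clauses in Figures~\ref{fig.3.phi.f} and~\ref{fig.3.derived}.
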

\begin{proof}
By elementary predicate manipulations, distributing the $\texi v$ quantifier down into the predicate and using Lemma~\ref{lemm.evaporates.simplify}.
\end{proof}

\section{Correctness properties}
\label{sect.paxos.correctness.properties}

\subsection{Summary of the properties}

\begin{rmrk}
\label{rmrk.standard.paxos.correctness}
We will derive the following standard correctness properties for Paxos (as laid out in~\cite[Subsection~2.2.1]{cachin2011paxos}) from the theory \ThyPax in Figure~\ref{fig.logical.paxos}:
\begin{enumerate*}
\item
\textbf{Validity:} \quad If a participant decides some value $v$, then $v$ was proposed by some participant.

A formal statement and proof is in Theorem~\ref{thrm.validity}.
\item
\textbf{Agreement:}\quad 
If two participants decide values, then those values are equal.

A formal statement and proof is in Theorem~\ref{thrm.full.agreement}.
\item
\textbf{Termination:}\quad 
Every correct participant eventually decides some value.

A formal statement and proof is in Theorem~\ref{thrm.termination}.
\item
\textbf{Integrity:}\quad 
Every correct (non-crashing) participant decides at most once.

Our logic abstracts away from a direct notion of execution, by design, but this property is represented indirectly in how Theorem~\ref{thrm.termination} is expressed.
See Remark~\ref{rmrk.paxos.integrity} below. 
\end{enumerate*}
\end{rmrk}

\begin{rmrk}[On integrity]
\label{rmrk.paxos.integrity}
In implementations of Paxos, once a participant has decided, it stops.
But in our logic, `stop' is not even a concept.
There is no abstract machine to stop; there are just predicates, with denotations.

Nevertheless, the integrity property is echoed in our maths, as the \emph{infinite repetition} in Theorem~\ref{thrm.termination}.
Once a predicate $\tf{decide}(v)$ returns $\tvT$, it does so \emph{forever} --- and because of Agreement, it never decides any other value.
A predicate eventually always returning the same truth-value is a model of an abstract machine that has computed its result, and then just sits there forever, displaying that same result, unchanged, every time we query its display.
\end{rmrk}

\subsection{Validity}

\begin{rmrk}
\label{rmrk.validity.in.english}
\emph{Validity} is expressed in English as:
\begin{quote}
\emph{If a participant decides some value $v$, then $v$ was proposed by some participant.}
\end{quote}
We render this in our logic in Theorem~\ref{thrm.full.agreement} as
$$
\ThyPax\ment\tall v.(\tf{decide}(v) \timpc \urecent\someoneAll(\tf{leader}\tand\tf{propose}(v))).
$$

In words we can read this as: 
\begin{quote}
If someone decides $v$ now, then at some stage now or in the past a leader of that stage proposed $v$. 
\end{quote}
We will now set about proving this property from our axioms, culminating with Theorem~\ref{thrm.validity}.
\end{rmrk}

\begin{lemm}
\label{lemm.validity.helper}
Suppose $\avaluation$ is a valuation such that $\mentval\ThyPax$ ($\mentval\ThyPax$ is from Figure~\ref{fig.validity}).
Suppose $n\in\Time$ and $v\in\Val$.
Then:
\begin{enumerate*}
\item\label{item.validity.helper.3}
$n\mentval \someoneAll\tf{write}(v) \timpc \urecent\someoneAll\tf{propose}(v)$. 
\item\label{item.validity.helper.2}
$n\mentval \someoneAll\tf{accept}(v) \timpc \urecent\someoneAll\tf{propose}(v)$. 
\end{enumerate*}
\end{lemm}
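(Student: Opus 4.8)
I would prove the two parts \emph{simultaneously, by strong induction on the stage} $n\in\Time$, converting each $\timpc$-axiom into an ordinary implication via \strongmodusponens. First I would record that all three predicates $\someoneAll\tf{write}(v)$, $\someoneAll\tf{accept}(v)$, and $\urecent\someoneAll\tf{propose}(v)$ have denotations independent of the point and open-set parameters of the context: $\someoneAll$ joins over all $O'\in\opensne$ and $p'\in O'$ (Figure~\ref{fig.3.phi.f}), and the $\urecent$ of a context-independent predicate stays context-independent (Figure~\ref{fig.3.derived}). Hence at each stage it suffices to show that truth of the antecedent forces truth of the consequent.

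For part~\ref{item.validity.helper.3}, suppose $\someoneAll\tf{write}(v)$ is true at stage $n$; this gives some $O'\in\opensne$ and $p'\in O'$ with $\tf{write}(v)$ true at $(n,p',O',\avaluation)$. Applying axiom \rulefont{PaxWrite?} via \strongmodusponens and discarding the outer $\tf{leader}$ conjunct and the send-clause inside the $\Quorum$, there is some $O''\in\opensne$ at which the disjunction $(\mru{(\someone\tf{accept})}{v}) \tor (\tf{propose}(v)\tand\tneg\modT\recent\texi\someone\tf{accept})$ is true, so one disjunct is true. If the right disjunct is true then $\tf{propose}(v)$ is true at $(n,p',O'',\avaluation)$; since $\tf{propose}(v)$ is atomic, hence pointwise (Lemma~\ref{lemm.atomic.O}), and $p'\in\Pnt\in\opensne$, it follows that $\someoneAll\tf{propose}(v)$, and \emph{a fortiori} $\urecent\someoneAll\tf{propose}(v)$, is true at $n$. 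If instead the left disjunct $\mru{(\someone\tf{accept})}{v}$ is true, then by Lemma~\ref{lemm.mru.to.recent}(\ref{item.mru.to.recent.1}) (the functional-notation form with $\modality{\tf Q}=\someone$ and $\tf P=\tf{accept}$) and \strongmodusponens we obtain $\recent\someone\tf{accept}(v)$ true, i.e. $\someone\tf{accept}(v)$ true at some $(n',p',O'',\avaluation)$ with $n'<n$, which in turn makes $\someoneAll\tf{accept}(v)$ true at the \emph{earlier} stage $n'$. The induction hypothesis for part~\ref{item.validity.helper.2} at $n'<n$ then gives $\urecent\someoneAll\tf{propose}(v)$ true at $n'$, and since $n'\le n$ and $\urecent$ is up-closed in time (the time-indexed join of Figure~\ref{fig.3.derived}; cf. Lemma~\ref{lemm.tomorrow.forever.recent}(\ref{item.rur.r})), it is true at $n$.

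For part~\ref{item.validity.helper.2}, suppose $\someoneAll\tf{accept}(v)$ is true at $n$. Axiom \rulefont{PaxAccept?}, namely $\tf{accept}(v)\timpc\someoneAll\tf{write}(v)$, applied via \strongmodusponens makes $\someoneAll\tf{write}(v)$ true at $n$, and then part~\ref{item.validity.helper.3} \emph{at the same stage} $n$ (just established in this induction step) delivers $\urecent\someoneAll\tf{propose}(v)$. This closes the induction, because part~\ref{item.validity.helper.3} at $n$ invokes part~\ref{item.validity.helper.2} only at strictly smaller stages, whereas part~\ref{item.validity.helper.2} at $n$ invokes part~\ref{item.validity.helper.3} at $n$, so the dependency is well-founded.

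The main obstacle is the quantifier bookkeeping: the two parts are genuinely mutually recursive, and the recursion descends in time only through the $\mru{(\someone\tf{accept})}{v}$ branch of \rulefont{PaxWrite?}. I would take care that the open set $O''$ produced by the $\Quorum$ is handled consistently when passing from $\someone\tf{accept}(v)$ (a join ranging over $O''$) up to the context-independent $\someoneAll\tf{accept}(v)$, and I would note that the base stage $n=0$ is subsumed automatically, since $\recent$, and hence the $\mru$ branch, is forced to $\tvF$ there, leaving only the $\tf{propose}(v)$ branch. Everything else is routine unpacking of Figures~\ref{fig.3.phi.f} and~\ref{fig.3.derived}.
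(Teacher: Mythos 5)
Your proposal is correct and follows essentially the same route as the paper's proof: a simultaneous induction on the stage $n$, using \rulefont{PaxWrite?} with a case split on its disjunction (the $\tf{propose}$ branch closing immediately via the clause for $\urecent$, the $\mruarrow$ branch descending to an earlier stage via Lemma~\ref{lemm.mru.to.recent}(\ref{item.mru.to.recent.1}) and then appealing to the induction hypothesis for part~\ref{item.validity.helper.2} together with persistence of $\urecent$ as in Lemma~\ref{lemm.tomorrow.forever.recent}(\ref{item.rur.r})), and deriving part~\ref{item.validity.helper.2} from part~\ref{item.validity.helper.3} at the same stage via \rulefont{PaxAccept?}. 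Your explicit remarks on the well-foundedness of the mutual recursion and on the pointwise lifting of $\tf{propose}(v)$ make precise some steps the paper leaves implicit, but the argument is the same.
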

\begin{proof}
We prove both parts by a simultaneous induction on $n$.
We freely use \strongmodusponens:
\begin{enumerate}
\item
Suppose $n\mentval\modT\someoneAll\tf{write}(v)$.
Then using \rulefont{PaxWrite?} %
$$
n\mentval \modT\Quorum(\mru{\someone\tf{accept}}{v}) 
\quad\text{or}\quad
n\mentval \modT\someoneAll\tf{propose}(v) .
$$
We consider the two cases:
\begin{itemize*}
\item
\emph{If $n\mentval \modT\someoneAll\tf{propose}(v)$,}\ then 
$n\mentval\modT\urecent\someoneAll\tf{propose}(v)$ follows immediately from the clause for %
$\urecent$ in Figure~\ref{fig.3.derived}.
\item
\emph{If $n\mentval \modT\Quorum(\mru{\someone\tf{accept}}{v})$,}\ 
then by Lemma~\ref{lemm.mru.to.recent}(\ref{item.mru.to.recent.1}) $n\mentval\modT\recent\someoneAll\tf{accept}(v)$,
so that by Figure~\ref{fig.3.derived} $n'\mentval\modT\someoneAll\tf{accept}(v)$ for some $n'<n$.
By part~\ref{item.validity.helper.2} of this result for $n'<n$, we have that $n'\mentval\modT\urecent\someoneAll\tf{propose}(v)$, and thus using Lemma~\ref{lemm.tomorrow.forever.recent}(\ref{item.rur.r}) $n\mentval\modT\urecent\someoneAll\tf{propose}(v)$. 
\end{itemize*}
In either case, we are done.
\item
Suppose $n\mentval\modT\someoneAll\tf{accept}(v)$.
By \rulefont{PaxAccept?} $n\mentval\modT\someoneAll\tf{write}(v)$.
We use part~\ref{item.validity.helper.3} of the inductive hypothesis for $n$. 
\qedhere\end{enumerate}
\end{proof}

\begin{thrm}[Validity]
\label{thrm.validity}
We have:
$$
\ThyPax\ment\tall v.(\tf{decide}(v) \timpc \urecent\someoneAll(\tf{leader}\tand\tf{propose}(v))) .
$$
\end{thrm}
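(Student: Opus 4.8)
The plan is to reduce everything to Lemma~\ref{lemm.validity.helper}, which already does the real inductive work, and then assemble the pieces using the backward rules for $\tf{decide}$ together with the leader axioms. Throughout I would fix a valuation $\avaluation$ with $\mentval\ThyPax$ (as required by the definition of $\ThyPax\ment(\text{-})$ in Figure~\ref{fig.validity} and Definition~\ref{defn.theory.axiom}), fix $v\in\Val$ and a context $\acontext=(n,p,O,\avaluation)$, and aim to establish $\acontext\ment\tf{decide}(v)\timpc\urecent\someoneAll(\tf{leader}\tand\tf{propose}(v))$; quantifying over all $\acontext$ and all $v$ then yields the displayed statement. By \strongmodusponens it suffices to assume $\acontext\ment\modT\tf{decide}(v)$ and derive $\acontext\ment\modT\urecent\someoneAll(\tf{leader}\tand\tf{propose}(v))$.

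First I would record a small bridging fact to repair the mismatch between the Lemma (which concludes $\urecent\someoneAll\tf{propose}(v)$) and the theorem (which demands $\tf{leader}\tand$ inside). Since \rulefont{PaxPropose?} gives $\tf{propose}(v)\timpc\tf{leader}$ at every stage, anywhere $\tf{propose}(v)$ is true so is $\tf{leader}$, hence $\someoneAll\tf{propose}(v)$ (as a truth) forces $\someoneAll(\tf{leader}\tand\tf{propose}(v))$; because $\THREE$ is totally ordered a join equals $\tvT$ exactly when some summand does, so this survives the $\urecent$ and shows $n\mentval\modT\urecent\someoneAll\tf{propose}(v)\limp n\mentval\modT\urecent\someoneAll(\tf{leader}\tand\tf{propose}(v))$.

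Next, the decide must be routed back to a quorum of acceptances. I would split on $\modellabel{\tf{leader}}{\acontext}$, which by \rulefont{LdrExist} (the conjunct $\modTF\tf{leader}$) is either $\tvT$ or $\tvF$. If $\acontext\ment\modT\tf{leader}$, then applying \rulefont{PaxDecideL?} to the hypothesis via two uses of \strongmodusponens gives $n\mentval\modT\QuorumBox\tf{accept}(v)$; here I use that the denotation of $\QuorumBox\tf{accept}(v)$ does not depend on $p$ or $O$ (Notation~\ref{nttn.validity.irrelevant.context}), so the conclusion is a statement about the stage $n$ alone. If instead $\tf{leader}$ is $\tvF$ at $\acontext$, then $\modT\tneg\tf{leader}$ holds, and \rulefont{PaxDecide\tneg L?} yields $\someoneAll(\tf{leader}\tand\tf{decide}(v))$ true at stage $n$; extracting a witness point at which both $\modT\tf{leader}$ and $\modT\tf{decide}(v)$ hold, I apply \rulefont{PaxDecideL?} there to again obtain $n\mentval\modT\QuorumBox\tf{accept}(v)$ at the very same stage $n$.

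Finally I would chain the remaining implications. From $n\mentval\modT\QuorumBox\tf{accept}(v)$, Proposition~\ref{prop.someone.implies.someoneAll}(\ref{item.someone.implies.someoneAll.2}) gives $n\mentval\modT\someoneAll\tf{accept}(v)$; Lemma~\ref{lemm.validity.helper}(\ref{item.validity.helper.2}) together with \strongmodusponens gives $n\mentval\modT\urecent\someoneAll\tf{propose}(v)$; and the bridging fact promotes this to $n\mentval\modT\urecent\someoneAll(\tf{leader}\tand\tf{propose}(v))$. Since none of these conclusions depend on $p$ or $O$, this is exactly $\acontext\ment\modT\urecent\someoneAll(\tf{leader}\tand\tf{propose}(v))$, completing the strong-modus-ponens argument. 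I expect the main obstacle to be the non-leader case: correctly turning the existential witness from \rulefont{PaxDecide\tneg L?} into a context at which \rulefont{PaxDecideL?} may be invoked, while tracking that the stage is unchanged and that the $\QuorumBox$ conclusion is genuinely stage-local; everything else is routine chaining of strong modus ponens.
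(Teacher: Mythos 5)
Your proposal is correct and follows essentially the same route as the paper's proof: route the decision back through \rulefont{PaxDecideL?} (via \rulefont{PaxDecide\tneg L?} in the non-leader case) to $\modT\QuorumBox\tf{accept}(v)$, weaken to $\modT\someoneAll\tf{accept}(v)$ by Proposition~\ref{prop.someone.implies.someoneAll}(\ref{item.someone.implies.someoneAll.2}), invoke Lemma~\ref{lemm.validity.helper}(\ref{item.validity.helper.2}), and finish with \rulefont{PaxPropose?} to insert the $\tf{leader}$ conjunct. You merely spell out more explicitly the leader/non-leader case split and the final bridging step that the paper compresses into a single sentence.
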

\begin{proof}
Suppose $\avaluation$ is a valuation such that $\mentval\ThyPax$. 
We reason using \strongmodusponens. 
Suppose $v\in\Val$ and $n\in\Time$ and $p\in\Pnt$, and suppose $n,p\mentval\modT\tf{decide}(v)$.
By \rulefont{PaxDecideL?} (and \rulefont{PaxDecide\tneg L?} if $p\neq\f{leader}(n)$), $n\mentval\modT\QuorumBox\tf{accept}(v)$.
It follows by Proposition~\ref{prop.someone.implies.someoneAll}(\ref{item.someone.implies.someoneAll.2}) that $n\mentval\modT\someoneAll\tf{accept}(v)$.
By Lemma~\ref{lemm.validity.helper}(\ref{item.validity.helper.2}) $n\mentval\modT\urecent\someoneAll\tf{propose}(v)$, and using \rulefont{PaxPropose?} we conclude that $n\mentval\modT\urecent\someoneAll(\tf{leader}\tand\tf{propose}(v))$ as required. 
\end{proof}

\subsection{Agreement}

\begin{rmrk}
\emph{Agreement} is expressed in English as:
\begin{quote}
\emph{If two participants decide values, then those values are equal.}
\end{quote}
We render this in our logic in Theorem~\ref{thrm.full.agreement} as
$$
\ThyPax\ment\tall v,v'.(\urecent\someoneAll\tf{decide}(v) \timpc \someoneAll\tf{decide}(v') \timpc v'\teq v) .
$$
In words we can read this as: 
\begin{quote}
If someone decides $v$ now or in the past, and someone decides $v'$ now, then $v'=v$.
\end{quote}
It is convenient to prove the `now' and the `in the past' parts of this as two sub-cases: 
\begin{enumerate*}
\item
Proposition~\ref{prop.decide.n.agree} handles the case of $\someoneAll\tf{decide}(v)\tand\someoneAll\tf{decide}(v')$, i.e. when the two decisions are made at the same time.
\item
Theorem~\ref{thrm.full.agreement} then extends this with the case of $\recent\someoneAll\tf{decide}(v)\tand\someoneAll\tf{decide}(v')$, i.e. when one decision precedes the other.
\end{enumerate*}
\end{rmrk}

\subsubsection{Agreement at time $n$}

\begin{lemm}
\label{lemm.various.unique}
Suppose $\avaluation$ is a valuation such that $\mentval\ThyPax$. 
Suppose $n\in\Time$ and $v,v'\in\Val$. 
Then:
\begin{enumerate*}
\item\label{item.various.unique.write}
$n\mentval (\someoneAll\tf{write}(v)\tand \someoneAll\tf{write}(v')) \timpc v\teq v'$. 
\item\label{item.various.unique.accept}
$n\mentval (\someoneAll\tf{accept}(v)\tand \someoneAll\tf{accept}(v')) \timpc v\teq v'$. 
\item\label{item.various.unique.recent.accept}
$n\mentval \bigl(\mru{(\someoneAll\tf{accept})}{v}\tand \mru{(\someoneAll\tf{accept})}{v'}\bigr) \timpc v\teq v'$.
\end{enumerate*}
\end{lemm}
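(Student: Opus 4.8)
The plan is to prove the three parts in sequence, using part~\ref{item.various.unique.write} to establish part~\ref{item.various.unique.accept}, and part~\ref{item.various.unique.accept} to establish part~\ref{item.various.unique.recent.accept}. Throughout I would fix a valuation $\avaluation$ with $\mentval\ThyPax$ and reason with \strongmodusponensnoref, exactly as in the proof of Lemma~\ref{lemm.validity.helper}; since $v\teq v'$ is context-independent and two-valued, each $\timpc$-conclusion reduces to the plain claim $v=v'$ under the assumption that the $\modT$ of the antecedent holds.

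For part~\ref{item.various.unique.write}, assume $n\mentval\modT\someoneAll\tf{write}(v)$ and $n\mentval\modT\someoneAll\tf{write}(v')$. Commuting $\modT$ inside $\someoneAll$ (Lemma~\ref{lemm.someone.commutation}) and using that $\modT\tf{write}(v)$ is pointwise (Lemma~\ref{lemm.atomic.O}(\ref{item.atomic.pointwise.2})), Proposition~\ref{prop.pointwise.dense.char}(\ref{item.dense.char.pointwise.someoneAll}) produces points $l,l'\in\Pnt$ with $n,l\mentval\modT\tf{write}(v)$ and $n,l'\mentval\modT\tf{write}(v')$. By \rulefont{PaxWrite?} both are leaders, i.e. $n,l\mentval\modT\tf{leader}$ and $n,l'\mentval\modT\tf{leader}$. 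The crux is then to invoke Lemma~\ref{lemm.LeaderExtUniq}: since $l$ and $l'$ are extensionally equivalent at $(n,\avaluation)$, from $n,l\mentval\modT\tf{write}(v)$ we also get $n,l'\mentval\modT\tf{write}(v)$. Thus at the single point $l'$ we have $n,l'\mentval\modT\tf{write}(v)$ and $n,l'\mentval\modT\tf{write}(v')$, and \rulefont{PaxWrite01} ($\texiaffine\tf{write}$) together with Lemma~\ref{lemm.unique.affine.existence}(\ref{item.unique.affine.existence.01.implies}) forces $v=v'$.

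For part~\ref{item.various.unique.accept}, assume $n\mentval\modT\someoneAll\tf{accept}(v)$ and the analogue for $v'$. Applying \rulefont{PaxAccept?} exactly as in Lemma~\ref{lemm.validity.helper}(\ref{item.validity.helper.2}) — pushing $\someoneAll$ through the implication, noting $\someoneAll\tf{write}(v)$ is pointwise so the surrounding point and open set are immaterial — yields $n\mentval\modT\someoneAll\tf{write}(v)$ and $n\mentval\modT\someoneAll\tf{write}(v')$ at the same stage $n$, whence part~\ref{item.various.unique.write} gives $v=v'$. For part~\ref{item.various.unique.recent.accept}, assume $n\mentval\modT(\mru{(\someoneAll\tf{accept})}{v}\tand\mru{(\someoneAll\tf{accept})}{v'})$. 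Instantiating Lemma~\ref{lemm.mru.to.recent}(\ref{item.mru.to.recent.2}) with $\phi=\someoneAll\tf{accept}(a)$ gives $n\mentval\modT\recent(\someoneAll\tf{accept}(v)\tand\someoneAll\tf{accept}(v'))$; commuting $\modT$ through $\recent$ and $\tand$ (Lemmas~\ref{lemm.someone.commutation} and~\ref{lemm.commuting.connectives}) and unpacking the clause for $\recent$ in Figure~\ref{fig.3.derived} produces a stage $n'<n$ with $n'\mentval\modT\someoneAll\tf{accept}(v)$ and $n'\mentval\modT\someoneAll\tf{accept}(v')$, and part~\ref{item.various.unique.accept} applied at $n'$ concludes $v=v'$.

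I expect the main obstacle to be part~\ref{item.various.unique.write}, specifically the step that transports the write of $v$ from $l$ to $l'$ via leader extensional equivalence (Lemma~\ref{lemm.LeaderExtUniq}). This is where the distinctively-Paxos content lives: write-uniqueness holds not because each point writes at most one value (\rulefont{PaxWrite01} alone is too weak, since distinct points could in principle write distinct values) but because all writers are the \emph{same} leader up to extensional equivalence, so their individual affine-write constraints collapse into one. The remaining manipulations — commuting $\modT$ past $\someoneAll$, $\recent$, and $\tand$, and the pointwise reductions via Proposition~\ref{prop.pointwise.dense.char} — are routine given the lemmas already established.
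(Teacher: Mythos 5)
Your proof is correct and follows essentially the same route as the paper's: parts~\ref{item.various.unique.accept} and~\ref{item.various.unique.recent.accept} are handled identically (via \rulefont{PaxAccept?} and Lemma~\ref{lemm.mru.to.recent} respectively), and part~\ref{item.various.unique.write} hinges on the same combination of \rulefont{LdrExt} and \rulefont{PaxWrite01} with Lemma~\ref{lemm.unique.affine.existence}(\ref{item.unique.affine.existence.01.implies}). The only (cosmetic) difference is that the paper applies the axiom-scheme \rulefont{LdrExt} inside the logic to merge the two assertions into $\someoneAll(\tf{write}(v)\tand\tf{write}(v')\tand\tf{leader})$ before invoking affine existence, whereas you extract witness points semantically and transport $\tf{write}(v)$ from $l$ to $l'$ via the derived Lemma~\ref{lemm.LeaderExtUniq}; both are sound.
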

\begin{proof}
We consider each part in turn, using \strongmodusponens (we may also silently commute the $\modT$ modality as convenient):
\begin{enumerate}
\item
Suppose $n\mentval\modT\someoneAll\tf{write}(v)\tand\modT\someoneAll\tf{write}(v')$.
Using \rulefont{PaxWrite?} 
$n\mentval\modT\someoneAll(\tf{write}(v)\tand\tf{leader})\tand\modT\someoneAll(\tf{write}(v')\tand\tf{leader})$. 
Using \rulefont{LdrExt} therefore $n\mentval\modT\someoneAll(\tf{write}(v)\tand\tf{write}(v')\tand\tf{leader})$.
By \rulefont{PaxWrite01} and Lemma~\ref{lemm.unique.affine.existence}(\ref{item.unique.affine.existence.01.implies}) we have $v=v'$ as required.
\item
Suppose $n\mentval \modT\someoneAll\tf{accept}(v)\tand\modT\someoneAll\tf{accept}(v')$.
By \rulefont{PaxAccept?} $n\mentval\modT\someoneAll\tf{write}(v)\tand\modT\someoneAll\tf{write}(v')$.
We use part~\ref{item.various.unique.write} of this result.
\item
Suppose $n\mentval\modT(\mru{(\someoneAll\tf{accept})}{v})\tand\modT(\mru{(\someoneAll\tf{accept})}{v'}\bigr)$.\footnote{By Lemma~\ref{lemm.mru.unambivalent}(2) we do not need to write the $\modT$ here but we do anyway; see Remark~\ref{rmrk.explain.mru}(\ref{item.explain.mru.T}).}
Using Lemma~\ref{lemm.mru.to.recent}(\ref{item.mru.to.recent.2}) (taking $a.\phi=\someoneAll\tf{accept}$ in that Lemma) there exists $n'<n$ such that 
$n'\mentval\modT\someoneAll\tf{accept}(v)\tand\modT\someoneAll\tf{accept}(v')$.
We use part~\ref{item.various.unique.accept} of this result.
\qedhere\end{enumerate}
\end{proof}

\begin{prop}
\label{prop.decide.n.agree}
Suppose $\avaluation$ is a valuation such that $\mentval\ThyPax$. 
Suppose $n\in\Time$ and $v,v'\in\Val$.
Then
$$
n\mentval (\someoneAll\tf{decide}(v) \tand \someoneAll\tf{decide}(v')) \timpc v\teq v' . 
$$
\end{prop}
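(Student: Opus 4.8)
The plan is to mirror the opening moves of the proof of Theorem~\ref{thrm.validity} and then close using the single-stage uniqueness already packaged in Lemma~\ref{lemm.various.unique}. Working with \strongmodusponens, I would assume $n\mentval\modT\someoneAll\tf{decide}(v)$ and $n\mentval\modT\someoneAll\tf{decide}(v')$, and aim to derive $v\teq v'$.

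First I would trace each decision back to a quorum of accepts exactly as in Theorem~\ref{thrm.validity}. From $n\mentval\modT\someoneAll\tf{decide}(v)$, commuting $\modT$ past $\someoneAll$ by Lemma~\ref{lemm.someone.commutation}, there is a point deciding $v$ at stage $n$; splitting on whether that point is the leader and applying \rulefont{PaxDecideL?} (directly, or via \rulefont{PaxDecide\tneg L?} to relocate to a leader that decides $v$), I obtain $n\mentval\modT\QuorumBox\tf{accept}(v)$ --- and this conclusion is independent of the point, so the extraction is harmless. The same argument gives $n\mentval\modT\QuorumBox\tf{accept}(v')$.

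Next I would weaken both quorum statements to existential ones: Proposition~\ref{prop.someone.implies.someoneAll}(\ref{item.someone.implies.someoneAll.2}) turns $\modT\QuorumBox\tf{accept}(v)$ into $\modT\someoneAll\tf{accept}(v)$, and likewise for $v'$. Then Lemma~\ref{lemm.various.unique}(\ref{item.various.unique.accept}), applied with strong modus ponens to $n\mentval\modT(\someoneAll\tf{accept}(v)\tand\someoneAll\tf{accept}(v'))$, yields $v\teq v'$, which is the desired conclusion.

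The step that carries the real weight is hidden inside Lemma~\ref{lemm.various.unique}: the two accepts pull back (via \rulefont{PaxAccept?}) to two writes at the \emph{same} stage $n$, and then \rulefont{LdrExt} collapses all writing leaders to a single extensional leader while \rulefont{PaxWrite01} forces that leader to write at most one value. The noteworthy feature of this single-stage case is what it does \emph{not} need: the quorum-intersection content of \rulefont{Pax2Twined} plays no role here, because uniqueness of the leader's write already settles agreement within one stage. The intersection property becomes essential only for the cross-stage extension in Theorem~\ref{thrm.full.agreement}, where the two decisions may occur at different stages and one must argue that a later leader's quorum meets the earlier accepting quorum. I expect the only mild friction in the present proof to be bookkeeping: making the point-extraction from $\someoneAll\tf{decide}$ and the leader/non-leader split precise, exactly as was done tacitly in Theorem~\ref{thrm.validity}.
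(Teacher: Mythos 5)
Your proposal is correct and follows the paper's own proof essentially step for step: strong modus ponens, \rulefont{PaxDecide\tneg L?}/\rulefont{PaxDecideL?} to reach $\modT\QuorumBox\tf{accept}$ for both values, Proposition~\ref{prop.someone.implies.someoneAll}(\ref{item.someone.implies.someoneAll.2}) to weaken to $\modT\someoneAll\tf{accept}$, and Lemma~\ref{lemm.various.unique}(\ref{item.various.unique.accept}) to close. Your observation that \rulefont{Pax2Twined} is not needed at a single stage is also accurate.
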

\begin{proof}
We use \strongmodusponens.
Suppose 
$n\mentval \modT(\someoneAll\tf{decide}(v)\tand\someoneAll\tf{decide}(v'))$.
Using \rulefont{PaxDecide\tneg L?} if needed, and then \rulefont{PaxDecideL?} twice, we have
$n\mentval \modT(\QuorumBox\tf{accept}(v)\tand\QuorumBox\tf{accept}(v'))$ and thus by Proposition~\ref{prop.someone.implies.someoneAll}(\ref{item.someone.implies.someoneAll.2}) and Lemma~\ref{lemm.commuting.connectives}
also $n\mentval \modT(\someoneAll\tf{accept}(v)\tand\someoneAll\tf{accept}(v'))$.
We can now use Lemma~\ref{lemm.various.unique}(\ref{item.various.unique.accept}).
\end{proof}

\begin{rmrk}
Proposition~\ref{prop.decide.n.agree} proves part of the agreement property; that no two correct participants decide differently \emph{at a given time}.
To prove the result across different times will require a bit more work, culminating with Theorem~\ref{thrm.full.agreement}.
\end{rmrk}

\subsubsection{Some fine structure of the rules}
\label{subsect.ldrexistuniq'}

We continue a discussion from Remark~\ref{rmrk.paxos.axioms.discussion}(\ref{item.discuss.LdrExistUniq}) and also Lemma~\ref{lemm.LeaderExtUniq}:

Note that we use the axiom-scheme \rulefont{LdrExt} from Figure~\ref{fig.logical.paxos} just once: in the proof of Lemma~\ref{lemm.various.unique}(\ref{item.various.unique.write}).

Examining how we use this axiom-scheme in the proof above, we see that for our specific purposes it would suffice to assume a single axiom as follows:
$$
\hspace{-6em}\rulefont{LdrExt'}
\qquad\qquad
\tf{leader}\timpc \someoneAll(\tf{leader}\tand\tf{write}(v))\timpc \tf{write}(v) .
$$ 
The axiom scheme \rulefont{LdrExt} asserts that there is a unique leader at each stage, up to extensional equivalence, which is a fair reflection of how the algorithms work where there \emph{is} a unique leader at each stage.

\rulefont{LdrExt'} asserts that there is a unique leader at each stage, up to extensional equivalence \emph{for writing values}.
In the proofs, this turns out to be all we need.

In fact, we can simplify this just to 
$$
\hspace{-6em}\rulefont{SPaxWrite01}
\qquad\qquad
\texiaffine\someoneAll\tf{write} .
$$
In full, this is $\texiaffine v.\someoneAll\tf{write}(v)$, and in words it expresses: 
\begin{quote}
`at each stage, at most one value is written'.
\end{quote}
In the presence of \rulefont{PaxWrite?}, \rulefont{SPaxWrite01} is implied by \rulefont{LdrExt'} --- but it does not imply \rulefont{LdrExt'}, because it permits the possibility that one leader participant writes a value and another leader participant crashes.
The reader can check that \rulefont{SPaxWrite01} is sufficient to complete the proof of Lemma~\ref{lemm.various.unique}(\ref{item.various.unique.write}).

Note also that that \rulefont{SPaxWrite01} implies \rulefont{PaxWrite01}, so that --- for the proofs we care about --- \rulefont{SPaxWrite01} can replace both \rulefont{LdrExt} and \rulefont{PaxWrite01}. 
We will do exactly this, later: the job of $\Theta_\ThyPax$ is to be `declarative Paxos', that is, its purpose is to be declarative \emph{and} to be recognisably Paxos. 
Later on in Figure~\ref{fig.ThySPax} we will define a \emph{simpler} declarative Paxos in (whence the name `SPaxWrite01').
This simplified theory is derived by looking at the proofs here, and seeing what we actually require.

\subsubsection{Agreement at possibly different times}

\begin{prop}
\label{prop.integrity.helper}
Suppose $\avaluation$ is a valuation such that $\mentval\ThyPax$, and suppose $v,v'\in\Val$ and $\avaluation$. 
Then:
\begin{enumerate*}
\item\label{item.integrity.helper.accept}
$\mentval\recent\QuorumBox\tf{accept}(v) \timpc \someoneAll\tf{accept}(v')\timpc v'\teq v$.
\item\label{item.integrity.helper.write}
$\mentval\recent\QuorumBox\tf{accept}(v) \timpc \someoneAll\tf{write}(v')\timpc v'\teq v$.
\end{enumerate*}
\end{prop}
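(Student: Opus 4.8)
The plan is to prove both parts simultaneously by a single strong induction on the stage $n$, reducing the cross-time agreement to the stage-local uniqueness already available in Lemma~\ref{lemm.various.unique}. First I would note that every predicate appearing here is pointwise (its denotation is independent of the point $p$ and the open set $O$), so $\mentval$ reduces to establishing $n\mentval(\cdots)$ for each $n\in\Time$, and I would reason throughout using \strongmodusponens, assuming the $\modT$ of each antecedent and deriving the $\modT$ of the succedent. I would also record at the outset that part~(\ref{item.integrity.helper.accept}) follows from part~(\ref{item.integrity.helper.write}) \emph{at the same stage}: if $n\mentval\modT\someoneAll\tf{accept}(v')$ then pushing \rulefont{PaxAccept?} through the $\someoneAll$ gives $n\mentval\modT\someoneAll\tf{write}(v')$, whence part~(\ref{item.integrity.helper.write}) yields $v'\teq v$. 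So the genuine inductive content sits in part~(\ref{item.integrity.helper.write}).

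For part~(\ref{item.integrity.helper.write}) at stage $n$, I would assume $n\mentval\modT\recent\QuorumBox\tf{accept}(v)$ and $n\mentval\modT\someoneAll\tf{write}(v')$. From the first assumption, unpacking $\recent$ (Figure~\ref{fig.3.derived}) yields a stage $m<n$ with $m\mentval\modT\QuorumBox\tf{accept}(v)$; applying the axiom \rulefont{Pax2Twined} at $m$ then gives $m\mentval\modT\CoquorumDiamond\tf{accept}(v)$ --- this is where quorum intersection enters. From the second assumption, \rulefont{PaxWrite?} supplies an open set $O'\in\opensne$ all of whose members have sent, together with the disjunction: \emph{either} $\mru{(\someone\tf{accept})}{v'}$ holds over $O'$, \emph{or} $\tf{propose}(v')$ holds while $\tneg\modT\recent\texi\someone\tf{accept}$ holds over $O'$. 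The key manoeuvre is to specialise $\CoquorumDiamond\tf{accept}(v)$ at $m$ to this \emph{same} $O'$ (legitimate because $\tf{accept}$ is pointwise), concluding that some point of $O'$ accepts $v$ at $m$; hence $\texi\someone\tf{accept}$ is true over $O'$ at $m<n$, which makes $\modT\recent\texi\someone\tf{accept}$ true over $O'$ at $n$ and so refutes the second disjunct. Thus the first disjunct $\mru{(\someone\tf{accept})}{v'}$ holds.

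Next I would unpack the denotation of $\mru{(\someone\tf{accept})}{v'}$ directly from Figure~\ref{fig.3.derived} --- rather than through the weaker Lemma~\ref{lemm.mru.to.recent}, since I need the \emph{most recent} witnessing stage. The stage $m'=\f{max}$ then satisfies $m\le m'<n$ (because $\texi\someone\tf{accept}$ was already true over $O'$ at $m$), and at $m'$ some point of $O'$ accepts $v'$, giving $m'\mentval\modT\someoneAll\tf{accept}(v')$. I would split on $m'$: if $m'=m$, then at $m$ we have both $\someoneAll\tf{accept}(v')$ and $\someoneAll\tf{accept}(v)$ (the latter from the point accepting $v$), and Lemma~\ref{lemm.various.unique}(\ref{item.various.unique.accept}) forces $v'\teq v$; if $m'>m$, then $\recent\QuorumBox\tf{accept}(v)$ is true at $m'$ (the quorum-accept at $m$ now lies in its past), so the inductive hypothesis for part~(\ref{item.integrity.helper.accept}) at $m'<n$ applies to $m'\mentval\modT\someoneAll\tf{accept}(v')$ and again yields $v'\teq v$. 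The base case $n=0$ is vacuous, since $\recent$ is identically $\tvF$ at stage $0$ and \strongmodusponensnoref makes the implication trivially valid.

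The main obstacle is the second-and-third paragraph bookkeeping: correctly threading the pointwise atomic predicate $\tf{accept}(v)$ through the $\Quorum$/$\someone$ scoping of \rulefont{PaxWrite?}, so that $\CoquorumDiamond\tf{accept}(v)$ at $m$ can be instantiated at precisely the open set $O'$ selected by the write rule, and then extracting the exact stage $\f{max}$ from the $\mru$ semantics with the crucial guarantee $\f{max}\ge m$. Once those two points are nailed down, the remainder is routine manipulation with \strongmodusponensnoref and the stage-local uniqueness lemma.
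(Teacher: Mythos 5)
Your proof is correct, and it rests on the same three pillars as the paper's: \rulefont{Pax2Twined} to upgrade $\QuorumBox\tf{accept}(v)$ to $\CoquorumDiamond\tf{accept}(v)$ so that it meets the write-quorum $O'$ supplied by \rulefont{PaxWrite?}; using that intersection to refute the `never accepted' disjunct; and then closing the loop with the stage-local uniqueness of Lemma~\ref{lemm.various.unique} plus an induction. Where you differ is in the shape of the induction and in how the $\mruarrow$ construct is consumed. The paper fixes the quorum-accept stage $n$ once and for all and inducts on the later stage $n'>n$, with an inductive statement phrased directly in terms of $\mru{(\someoneAll\tf{accept})}{v'}$; it then unwinds the $\mruarrow$ one day at a time using Corollary~\ref{corr.mru.today.to.yesterday}, never touching the raw denotation. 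You instead do a single strong induction on the conclusion stage, extract the quorum-accept stage $m$ from the $\recent$, and jump in one step to the most recent accepting stage $\f{max}$ by unpacking the $\mruarrow$ clause of Figure~\ref{fig.3.derived} directly, using the crucial observation that $\f{max}\geq m$ (because the density point of $O'$ already accepts $v$ at $m$). Your route buys a shorter case analysis (only $\f{max}=m$ versus $\f{max}>m$, rather than a base case plus a day-by-day inductive step), at the price of working with the denotational semantics of $\mruarrow$ rather than staying inside the derived-lemma interface; the paper's route is longer but keeps every step at the level of previously proved lemmas. Both are sound, and your reduction of part~(\ref{item.integrity.helper.accept}) to part~(\ref{item.integrity.helper.write}) at the same stage via \rulefont{PaxAccept?} is exactly the move the paper makes inside its inductive step.
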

\begin{proof}
Suppose $n\in\Time$.
We will reason using \strongmodusponens, so suppose 
\begin{equation}
\label{eq.n.1}
n\mentval\modT\QuorumBox\tf{accept}(v). 
\end{equation}
We will prove by induction on $n'>n$ that for every $v'\in\Val$,
$$
\begin{array}{r@{\ }l}
n'\mentval&\modT(\mru{(\someoneAll\tf{accept})}{v'}\timpc v'\teq v)
\quad\text{and}\quad
\\
n'\mentval&\modT(\someoneAll\tf{write}(v')\timpc v'\teq v)
\end{array}
$$
Clearly, by \strongmodusponens the $\tf{write}$ part of the inductive hypothesis is enough to prove part~\ref{item.integrity.helper.write} of this result.
For the $\tf{accept}$ part of the inductive hypothesis, we see that this implies part~\ref{item.integrity.helper.accept} of this result either by a routine argument using \rulefont{PaxAccept?} and part~\ref{item.integrity.helper.write} of this result --- or using Corollary~\ref{corr.mru.today.to.yesterday}(\ref{item.mru.today.implies.tomorrow}) (for $a.\phi=a.\someoneAll\tf{accept}(a)$ in that result).

We consider cases:
\begin{itemize}
\item
\emph{The base case for $\tf{accept}$, where $n'=n+1$.}

Suppose $n'\mentval\modT(\mru{\someoneAll\tf{accept}}{v'})$.

We assumed in equation~\eqref{eq.n.1} that $n\mentval\modT\QuorumBox\tf{accept}(v)$, so by Proposition~\ref{prop.someone.implies.someoneAll}(\ref{item.someone.implies.someoneAll.2}) also $n\mentval\modT\someoneAll\tf{accept}(v)$, and by Corollary~\ref{corr.mru.today.to.yesterday}(\ref{item.mru.today.implies.tomorrow}) $n'\mentval\modT(\mru{(\someoneAll\tf{accept})}{v})$. 
By Lemma~\ref{lemm.various.unique}(\ref{item.various.unique.recent.accept}) $v'=v$.

\item
\emph{The inductive step for $\tf{accept}$, where $n'>n+1$.}

Suppose $n'\mentval\modT(\mru{\someoneAll\tf{accept}}{v'})$.
By Corollary~\ref{corr.mru.today.to.yesterday}(\ref{item.mru.tomorrow.options}) either $n'\minus 1\mentval\modT\someoneAll\tf{accept}(v')$ or $n'\minus 1\mentval\modT(\mru{\someoneAll\tf{accept}}{v'})$.

In the former case, using \rulefont{PaxAccept?} $n'\minus 1\mentval\someoneAll\tf{write}(v')$, and by the $\tf{write}$ part of the inductive hypothesis, $v'=v$.
In the latter case, we just use the $\tf{accept}$ part of the inductive hypothesis. 
\item
\emph{The case of $\tf{write}$.}

Suppose $p\in\Pnt$ and $n',p\mentval\modT\tf{write}(v')$.
Using \rulefont{PaxWrite?}, 
there exists an $O\in\opensne$ such that 
$$
n',O\mentval \modT\bigl( 
(\mru{\someone\tf{accept}}{v})\ \tor
(\tneg\modT\recent\texi\someone\tf{accept}\tand \tf{propose}(v)) \bigr),
$$
from which it follows using Lemmas~\ref{lemm.commuting.connectives} and~\ref{lemm.tand.tor} and some basic reasoning on truth-values that 
$$
\text{either}\quad
n',O\mentval \modT\mru{\someone\tf{accept}}{v}
\quad\text{or}\quad
n',O\mentval \tneg\modT\recent\texi\someone\tf{accept} .
$$
Now $n<n'$ and we assumed in equation~\eqref{eq.n.1} that $n\mentval\modT\QuorumBox\tf{accept}(v)$, so using 
\rulefont{Pax2Twined} and \strongmodusponens $n\mentval\modT\CoquorumDiamond\tf{accept}(v)$, and in particular (by routine reasoning on the clauses for $\CoquorumDiamond$ in Figure~\ref{fig.3.derived} and for $\someone$ in Figure~\ref{fig.3.phi.f})
$n,O\mentval\modT\someone\tf{accept}(v)$.
Thus $n',O\mentval\modT\texi\someone\tf{accept}$ and by Lemma~\ref{lemm.tomorrow.forever.recent}(\ref{item.tomorrow.forever.recent.1}) (taking $\phi=\texi\someone\tf{accept}$ in that Lemma) 
$n',O\mentval\modT\recent\texi\someone\tf{accept}$.
It follows that we are not in the right-hand (`or') branch above, so we must be in the left-hand (`either') branch, and thus $n',O\mentval\modT\mru{\someone\tf{accept}}{v}$.

By the $\tf{accept}$ part of the inductive hypothesis, $v'=v$.
\qedhere\end{itemize}

\end{proof}

\begin{thrm}[Agreement]
\label{thrm.full.agreement}
We have:
$$
\ThyPax\ment\tall v,v'.(\urecent\someoneAll\tf{decide}(v) \timpc \someoneAll\tf{decide}(v') \timpc v'\teq v).
$$
\end{thrm}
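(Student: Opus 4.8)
The plan is to prove Theorem~\ref{thrm.full.agreement} by unpacking the $\urecent$ modality into its `now' and `strictly-in-the-past' cases, and then to dispatch each case with results already established. Assume a valuation $\avaluation$ with $\mentval\ThyPax$, and reason using \strongmodusponensnoref throughout. Fix $v,v'\in\Val$ and $n\in\Time$, and suppose $n\mentval\modT\urecent\someoneAll\tf{decide}(v)$ and $n\mentval\modT\someoneAll\tf{decide}(v')$. By the clause for $\urecent$ in Figure~\ref{fig.3.derived}, the first assumption means there is some $n''$ with $0\leq n''\leq n$ such that $n''\mentval\modT\someoneAll\tf{decide}(v)$. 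I would split on whether $n''=n$ or $n''<n$.

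In the case $n''=n$, both decisions happen at stage $n$, so $n\mentval\modT(\someoneAll\tf{decide}(v)\tand\someoneAll\tf{decide}(v'))$, and Proposition~\ref{prop.decide.n.agree} immediately gives $v\teq v'$, i.e.\ $v'\teq v$ as required (using symmetry of equality).

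The case $n''<n$ is where the real work lies, and it is where Proposition~\ref{prop.integrity.helper} is designed to be used. From $n''\mentval\modT\someoneAll\tf{decide}(v)$, apply \rulefont{PaxDecide\tneg L?} (if the deciding participant is not the leader) followed by \rulefont{PaxDecideL?} to obtain $n''\mentval\modT\QuorumBox\tf{accept}(v)$; since $n''<n$, by the clause for $\recent$ in Figure~\ref{fig.3.derived} this yields $n\mentval\modT\recent\QuorumBox\tf{accept}(v)$. Meanwhile, from $n\mentval\modT\someoneAll\tf{decide}(v')$, the same two decide-axioms give $n\mentval\modT\QuorumBox\tf{accept}(v')$, and by Proposition~\ref{prop.someone.implies.someoneAll}(\ref{item.someone.implies.someoneAll.2}) therefore $n\mentval\modT\someoneAll\tf{accept}(v')$. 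Now Proposition~\ref{prop.integrity.helper}(\ref{item.integrity.helper.accept}) applies directly: from $\recent\QuorumBox\tf{accept}(v)$ and $\someoneAll\tf{accept}(v')$ it concludes $v'\teq v$.

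I expect the main obstacle to be the bookkeeping around the $\urecent$ versus $\recent$ distinction and making sure the past decision is correctly propagated as a $\recent\QuorumBox\tf{accept}(v)$ fact at stage $n$ (rather than at $n''$), so that the hypotheses of Proposition~\ref{prop.integrity.helper} line up exactly. The subtlety is that Proposition~\ref{prop.integrity.helper} is stated with $\recent$ (strictly past) on the $\tf{accept}(v)$ side but `now' ($\someoneAll$) on the $\tf{accept}(v')$ side, which is precisely the asymmetry that forces the two-case split above; the $n''=n$ diagonal is exactly the case Proposition~\ref{prop.integrity.helper} does not cover, which is why Proposition~\ref{prop.decide.n.agree} is needed separately. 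Everything else is routine application of the decide-axioms and \strongmodusponensnoref, so once the case analysis is set up correctly the proof should close in a few lines.
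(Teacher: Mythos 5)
Your proposal is correct and follows essentially the same route as the paper's own proof: both split $\urecent$ into the `now' case (dispatched by Proposition~\ref{prop.decide.n.agree}) and the strictly-past case (dispatched by the decide-axioms, Proposition~\ref{prop.someone.implies.someoneAll}(\ref{item.someone.implies.someoneAll.2}), and Proposition~\ref{prop.integrity.helper}(\ref{item.integrity.helper.accept})). The only cosmetic difference is that you route the past decision through an explicit intermediate stage $n''$ before lifting to $\recent\QuorumBox\tf{accept}(v)$ at $n$, where the paper states that step directly.
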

\begin{proof}
Suppose $v,v'\in\Val$ and $n\in\Time$ and $\avaluation$ is a valuation.
We use \strongmodusponens, so suppose
$$
n\mentval\modT\urecent\someoneAll\tf{decide}(v)
\quad\text{and}\quad
n\mentval\modT\someoneAll\tf{decide}(v') .
$$
If $n\mentval\modT\someoneAll\tf{decide}(v)$ then we use Proposition~\ref{prop.decide.n.agree}.

So suppose $n\mentval\modT\recent\someoneAll\tf{decide}(v)$.
By \rulefont{PaxDecideL?} (and \rulefont{PaxDecide\tneg L?} if needed), $n\mentval\modT\recent\QuorumBox\tf{accept}(v)$ and $n\mentval\modT\QuorumBox\tf{accept}(v')$, so that by Proposition~\ref{prop.someone.implies.someoneAll}(\ref{item.someone.implies.someoneAll.2}) also %
$n\mentval\modT\someoneAll\tf{accept}(v')$.
By Proposition~\ref{prop.integrity.helper}(\ref{item.integrity.helper.accept}) $v'=v$ as required.
\end{proof}

\subsection{Termination}
\label{subsect.pax.termination}

\begin{rmrk}
\emph{Termination} is expressed in English as:
\begin{quote}
\emph{Every correct participant eventually decides some value.}
\end{quote}
We render this in our logic in Theorem~\ref{thrm.termination} as 
$$
\ThyPax\ment\infinitely\everyoneAll\texi v.\tf{decide}(v).
$$
In words: 
\begin{quote}
There is always some future time when all participants decide.
\end{quote}
Note that there is no need to explicitly specify `all correct/uncrashed participants' above.
This is taken care of by our logic's in-built notion of validity, since incorrect participants return $\tvB$, which is valid as per Remark~\ref{rmrk.motivate.validity}(\ref{item.tb.valid}) (but not true or correct; cf. Remark~\ref{rmrk.motivate.validity}).
\end{rmrk}

\subsubsection{Global stabilisation of logical time (GSLT)}
\label{subsect.gslt}

Our forward rules are wrapped in a $\final$ modality, to reflect an assumption that while there may be some initial network instability, if we wait long enough the network will stabilise (see the discussion in Remark~\ref{rmrk.discuss.forward.rules}).
This leads us to the notion of \emph{global stabilisation of logical time}, or \emph{GSLT} for short: 
\begin{defn}[GSLT]
\label{defn.gslt}
Suppose $\phi$ is a closed predicate and $\avaluation$ is a valuation.
Then:
\begin{enumerate*}
\item\label{item.gslt.synchronous}
Say that \ $\final\phi$\ (i.e. $\phi$ under a $\final$ modality) is \deffont{synchronous at $n$ in $\avaluation$} when $n\mentval\forever\phi$.\footnote{%
By Definition~\ref{defn.validity.judgement}, $n\ment\final\phi$ means that $n,p,O\mentval\final\phi$ for every $p\in\Pnt$ and $O\in\opens$ and valuation $\varsigma$.} %
That is:
$$
\final\phi\ \ \text{ is synchronous at $n$ in $\avaluation$}
\quad\text{when}\quad
n\mentval\forever\phi .
$$
In words, $\final\phi$ is synchronous when the `$\final$' becomes a `$\forever$'.
\item\label{item.gslt.gslt}
Define $\gslt(\avaluation)\in\Time$ to be the least stage $n$ such that all of the rules in Figure~\ref{fig.3.phi.f} that start with $\final$ --- namely, the forward rules \rulefont{PaxSend!}, \rulefont{PaxAccept!}, \rulefont{PaxDecideL!}, and \rulefont{PaxDecide\tneg L!} --- are synchronous at $n$ in the valuation $\avaluation$.

We call $\gslt(\avaluation)$ the \deffont{Global Stabilisation of Logical Time} in $\avaluation$. 
\end{enumerate*}
\end{defn}

\begin{rmrk}
We make some observations about Definition~\ref{defn.gslt}:
\begin{enumerate}
\item
Unpacking Definition~\ref{defn.gslt}(\ref{item.gslt.synchronous}) using the clause for $\forever$ in Figure~\ref{fig.3.derived}, 
$$
\final\phi \text{ is synchronous at $n$ in $\avaluation$}
\quad\liff\quad
n\mentval\forever\phi 
\quad\liff\quad
\Forall{n'{\geq}n}(n'\mentval\phi),
$$
\noindent
and $\gslt(\avaluation)$ is least such that $\gslt(\avaluation)\mentval\forever(\someoneAll\texi\tf{propose} \tnotor \texi\tf{send})$, and $\gslt(\avaluation)\mentval\forever\bigl((\tf{leader}\tand \QuorumBox\texi\tf{send}) \tnotor \texi\tf{write}\bigr)$, and so on.
\item  
$\gslt$ is well-defined in Definition~\ref{defn.gslt}(\ref{item.gslt.gslt}) because there are finitely many axioms, mentioning finitely many $\final$ modalities, and a finite set of natural numbers has a finite least upper bound.
This statement has real content, e.g. if we had written \rulefont{PaxAccept!} as $\tall v.\final(\someoneAll\tf{write}(v) \tnotor \tf{accept}(v))$ --- which is a subtly weaker assertion than the form used in Figure~\ref{fig.logical.paxos} --- then $\gslt$ would \emph{not} be defined. 
\end{enumerate}
\end{rmrk}

\begin{lemm}
\label{lemm.someone.writes.after.gslt.accept}
Suppose $\avaluation$ is a valuation such that $\mentval\ThyPax$. 
Suppose $n\in\Time$ and $n\geq\gslt(\avaluation)$ and $v\in\Val$. 
Then 
\begin{multline*}
n\mentval\modT\someoneAll\tf{write}(v) 
\liff
n\mentval\everyoneAll\tf{accept}(v) 
\liff
\\
n\mentval\modT\QuorumBox\tf{accept}(v) 
\liff
n\mentval\modT\someoneAll\tf{accept}(v)  .
\end{multline*}
\end{lemm}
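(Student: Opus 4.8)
The plan is to prove the four assertions equivalent by establishing a cycle of implications. Write (i) for $n\mentval\modT\someoneAll\tf{write}(v)$, (ii) for $n\mentval\everyoneAll\tf{accept}(v)$, (iii) for $n\mentval\modT\QuorumBox\tf{accept}(v)$, and (iv) for $n\mentval\modT\someoneAll\tf{accept}(v)$. I would prove (ii)$\limp$(iii)$\limp$(iv)$\limp$(i)$\limp$(ii), which yields all the equivalences. Throughout I would freely commute $\modT$ past the compound modalities using Lemma~\ref{lemm.someone.commutation}, and read those modalities pointwise via Proposition~\ref{prop.pointwise.dense.char}, since every predicate involved is atomic and hence pointwise by Lemma~\ref{lemm.atomic.O}(\ref{item.atomic.pointwise.2}).

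The three implications (ii)$\limp$(iii)$\limp$(iv)$\limp$(i) are the routine ones and need no network-synchrony assumption. For (ii)$\limp$(iii): axiom \rulefont{PaxPCorrect} gives $n\mentval\QuorumBox\correct{\tf{accept}}$, and by Lemma~\ref{lemm.correct.P.Pv} together with up-closure of $\noi$ (Lemma~\ref{lemm.dense.up-closed}(\ref{item.noi.up-closed})) this upgrades to $n\mentval\QuorumBox\correct{\tf{accept}(v)}$; combined with (ii) and Proposition~\ref{prop.correct.cobox.boxT}(\ref{item.correct.cobox.boxT.1}) (taking $\phi=\tf{accept}(v)$) we obtain $n\mentval\modT\QuorumBox\tf{accept}(v)$. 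For (iii)$\limp$(iv) I would apply Proposition~\ref{prop.someone.implies.someoneAll}(\ref{item.someone.implies.someoneAll.2}) directly. For (iv)$\limp$(i): assertion (iv) supplies a point at which $\tf{accept}(v)$ is true, and \strongmodusponensnoref applied to \rulefont{PaxAccept?} turns this into $n\mentval\modT\someoneAll\tf{write}(v)$, since $\someoneAll\tf{write}(v)$ does not depend on the point of evaluation.

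The closing implication (i)$\limp$(ii) is the crux and is where global stabilisation enters. Since $n\geq\gslt(\avaluation)$, the forward rule \rulefont{PaxAccept!} is synchronous at $n$ (Definition~\ref{defn.gslt}), so $n\mentval\texi\someoneAll\tf{write}\tnotor\texi\tf{accept}$; as (i) makes $\texi\someoneAll\tf{write}$ true at every point (take the witness $v$), \weakmodusponensnoref gives that $\texi\tf{accept}$ is \emph{valid} at every point $p$, i.e.\ every participant validly accepts \emph{some} value. I would then pin this value to $v$ at each $p$ by a case split: if $p$ truly accepts some $a$ (so $n,p\mentval\modT\tf{accept}(a)$), then \rulefont{PaxAccept?} gives $n\mentval\modT\someoneAll\tf{write}(a)$, and comparison with (i) through uniqueness of the written value, Lemma~\ref{lemm.various.unique}(\ref{item.various.unique.write}), forces $a=v$, whence $\tf{accept}(v)$ is true (hence valid) at $p$; otherwise the only valid value accepted at $p$ is $\tvB$, which — under the paper's convention that a crashed participant returns $\tvB$ uniformly on all arguments — means $\tf{accept}(v)$ itself returns $\tvB$ at $p$, again valid. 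In both cases $n,p\mentval\tf{accept}(v)$, and as $p$ was arbitrary this is exactly (ii).

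The main obstacle is precisely this value-pinning step. The forward rule only delivers acceptance of \emph{some} value, and the backward rule \rulefont{PaxAccept?} pins that value to $v$ only when the acceptance is \emph{true}, because the strong-implication hypothesis is vacuous for a $\tvB$ value. The argument therefore hinges on disposing of the $\tvB$ (crashed) case separately, where validity of $\tf{accept}(v)$ holds for free; I would make the crash-uniformity reading explicit, since without it a ``partially crashed'' participant returning $\tvF$ on $v$ and $\tvB$ on some other value would block (i)$\limp$(ii).
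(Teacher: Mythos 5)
Your cycle is the same one the paper uses (the paper runs it starting from $\modT\someoneAll\tf{write}(v)$ rather than from $\everyoneAll\tf{accept}(v)$, but that is immaterial), and three of your four implications --- $\everyoneAll\tf{accept}(v)$ to $\modT\QuorumBox\tf{accept}(v)$ via \rulefont{PaxPCorrect} and Proposition~\ref{prop.correct.cobox.boxT}(\ref{item.correct.cobox.boxT.1}), then to $\modT\someoneAll\tf{accept}(v)$ via Proposition~\ref{prop.someone.implies.someoneAll}(\ref{item.someone.implies.someoneAll.2}), then back to $\modT\someoneAll\tf{write}(v)$ via \rulefont{PaxAccept?} --- coincide with the paper's proof step for step.

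The divergence is in the step from $\modT\someoneAll\tf{write}(v)$ to $\everyoneAll\tf{accept}(v)$. The paper's own proof is terser than yours: it invokes \rulefont{PaxAccept!} (synchronous since $n\geq\gslt(\avaluation)$) and directly concludes $n,p\mentval\tf{accept}(v)$ for every $p$, with no intermediate value-pinning. You are right that \rulefont{PaxAccept!} literally delivers only validity of $\texi\tf{accept}$ at each point, and your repair --- pinning the accepted value to $v$ via \rulefont{PaxAccept?} together with uniqueness of the written value, Lemma~\ref{lemm.various.unique}(\ref{item.various.unique.write}) --- is sound in the case where some value is \emph{truly} accepted at $p$. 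The remaining case is the genuine issue: if at $p$ no value is truly accepted but $\texi\tf{accept}$ evaluates to $\tvB$, nothing in $\Theta_\ThyPax$ forces $\tf{accept}(v)$ itself to be $\tvB$ at $p$ rather than $\tvF$; the ``crash-uniformity'' reading you appeal to is a feature of the intended model, not an axiom, so as written your proof of this implication is not closed inside the theory. To be fair, the paper's one-line treatment of the same step elides exactly this point, so you have not introduced a new gap so much as made visible one that the paper passes over in silence; but if you want the step to be airtight you should either derive uniformity of crashing from the axioms or state it as an explicit hypothesis, rather than treating it as a convention.
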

\begin{proof}
We prove a cycle of implications, freely using weak and strong modus ponens (Proposition~\ref{prop.mp.for.tnotor}(\ref{item.mp.for.tnotor}\&\ref{item.mp.for.timpc})).

Suppose $n\mentval\modT\someoneAll\tf{write}(v)$. 
By \rulefont{PaxAccept!} (since $n\geq\gslt(\avaluation)$), $\Forall{p\in\Pnt}n\mentval\tf{accept}(v)$.
By Lemma~\ref{lemm.atomic.O}(\ref{item.atomic.pointwise}) $\tf{accept}(v)$ is pointwise, so by Proposition~\ref{prop.pointwise.dense.char}(\ref{item.dense.char.pointwise.everyoneAll}) $n\mentval\everyoneAll\texi\tf{accept}(v)$.

By Proposition~\ref{prop.correct.cobox.boxT}(\ref{item.correct.cobox.boxT.1}) and \rulefont{PaxPCorrect} (for $\tf{accept}$),
$n\mentval\modT\QuorumBox\tf{accept}(v)$, and by Proposition~\ref{prop.someone.implies.someoneAll}(\ref{item.someone.implies.someoneAll.2}) $n\mentval\modT\someoneAll\tf{accept}(v)$.

Now suppose $n\mentval\modT\someoneAll\tf{accept}(v)$.
By \rulefont{PaxAccept?} %
$n\mentval\modT\someoneAll\tf{write}(v)$.  
\end{proof}

\begin{lemm}
\label{lemm.leaderlive.quorumsend}
Suppose $\avaluation$ is a valuation such that $\mentval\ThyPax$. 
Suppose $n\in\Time$ and $n\geq\gslt(\avaluation)$. 
Then:
\begin{enumerate*}
\item\label{item.leaderlive.quorumsend.1}
$n\mentval\someoneAll\texi\tf{propose}\tnotor\everyoneAll\texi\tf{send}$.
\item\label{item.leaderlive.quorumsend.2}
$n\mentval\someoneAll\texi\tf{propose}\timpc\QuorumBox\texi\tf{send}$.
\end{enumerate*}
\end{lemm}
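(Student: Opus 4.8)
The plan is to prove both parts by first extracting the unwrapped body of \rulefont{PaxSend!} at stage $n$, and then doing the standard lattice/modality bookkeeping. Since $n\geq\gslt(\avaluation)$ and \rulefont{PaxSend!} is one of the $\final$-rules that $\gslt(\avaluation)$ synchronises, Definition~\ref{defn.gslt} (and the surrounding remark, which reads synchronicity at a stage as validity of the body at all $n'\geq\gslt(\avaluation)$) gives us that the body holds at $n$:
$$
n\mentval \someoneAll\texi\tf{propose}\tnotor\texi\tf{send}.
$$
Note that $\someoneAll\texi\tf{propose}$ and $\texi\tf{send}$ are both pointwise by Lemma~\ref{lemm.atomic.O}(\ref{item.atomic.pointwise.2}), and that the denotation of $\someoneAll\texi\tf{propose}$ does not depend on the point or open-set parameters at all (this is clear from the clause for $\someoneAll$ in Figure~\ref{fig.3.phi.f}).

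For part~1, I would use \weakmodusponensnoref (Proposition~\ref{prop.mp.for.tnotor}(\ref{item.mp.for.tnotor})) twice. First, assume $n\mentval\modT\someoneAll\texi\tf{propose}$ (the only nontrivial case, as $\tnotor$ is vacuously valid otherwise). Applying weak modus ponens to the \rulefont{PaxSend!} body at each context $(n,p,O)$ — and using that the left conjunct $\modT\someoneAll\texi\tf{propose}$ is point/open-set independent — yields $n,p\mentval\texi\tf{send}$ for every $p\in\Pnt$. By Proposition~\ref{prop.pointwise.dense.char}(\ref{item.dense.char.pointwise.everyoneAll}) this is exactly $n\mentval\everyoneAll\texi\tf{send}$. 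A final application of weak modus ponens repackages the implication $n\mentval\modT\someoneAll\texi\tf{propose}\limp n\mentval\everyoneAll\texi\tf{send}$ as the required validity $n\mentval\someoneAll\texi\tf{propose}\tnotor\everyoneAll\texi\tf{send}$.

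For part~2, I would run through \strongmodusponensnoref (Proposition~\ref{prop.mp.for.tnotor}(\ref{item.mp.for.timpc})): it suffices to show that $n\mentval\modT\someoneAll\texi\tf{propose}$ implies $n\mentval\modT\QuorumBox\texi\tf{send}$. Assuming the hypothesis, part~1 already gives $n\mentval\everyoneAll\texi\tf{send}$. Axiom \rulefont{PaxPCorrect} supplies $n\mentval\QuorumBox\correct{\tf{send}}$, so (validity of a conjunction being the conjunction of validities, Lemma~\ref{lemm.tand.tor}) $n\mentval\QuorumBox\correct{\tf{send}}\tand\everyoneAll\texi\tf{send}$. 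Proposition~\ref{prop.correct.cobox.boxT}(\ref{item.correct.cobox.boxT.2}), instantiated at $\tf P=\tf{send}$, then delivers $n\mentval\modT\QuorumBox\texi\tf{send}$, and strong modus ponens concludes $n\mentval\someoneAll\texi\tf{propose}\timpc\QuorumBox\texi\tf{send}$.

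The calculations here are routine; the one place to be careful is the transition from the forward rule to the statement, specifically keeping straight the two levels of modus ponens (weak $\tnotor$ versus strong $\timpc$) and correctly invoking GSLT synchronicity to obtain the \emph{unwrapped} forward rule at $n$ rather than merely $\final$ of it. The genuinely load-bearing nontrivial input is Proposition~\ref{prop.correct.cobox.boxT}(\ref{item.correct.cobox.boxT.2}), which is what converts the weak ``$\everyoneAll$-valid'' conclusion of part~1 together with quorum-correctness into the strong ``$\modT\QuorumBox$-true'' conclusion of part~2; everything else is de Morgan and pointwise bookkeeping.
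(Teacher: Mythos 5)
Your proof is correct and follows essentially the same route as the paper's: GSLT synchronicity to unwrap \rulefont{PaxSend!} at stage $n$, weak modus ponens plus pointwiseness and Proposition~\ref{prop.pointwise.dense.char}(\ref{item.dense.char.pointwise.everyoneAll}) for part~1, then \rulefont{PaxPCorrect} and Proposition~\ref{prop.correct.cobox.boxT}(\ref{item.correct.cobox.boxT.2}) for part~2. Your version is if anything slightly more explicit about the quantifier bookkeeping (the point/open-set independence of $\someoneAll\texi\tf{propose}$), but the argument is the same.
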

\begin{proof}
We consider each part in turn:
\begin{enumerate*}
\item
We use \weakmodusponens.
Suppose $n\mentval\modT\someoneAll\texi\tf{propose}$.
By \rulefont{PaxSend!} (since $n\geq\gslt(\avaluation)$) 
$$
\Forall{p\in\Pnt}n,p\mentval\texi\tf{send}.
$$
By Lemma~\ref{lemm.atomic.O}(\ref{item.atomic.pointwise.2}) $\texi\tf{send}$ is pointwise, so by Proposition~\ref{prop.pointwise.dense.char}(\ref{item.dense.char.pointwise.everyoneAll}) $n\mentval\everyoneAll\texi\tf{send}$ as required.
\item
We use \strongmodusponens.
Suppose $n\mentval\modT\someoneAll\texi\tf{propose}$.
By part~\ref{item.leaderlive.quorumsend.1} of this result and \weakmodusponens, $n\mentval\everyoneAll\texi\tf{send}$.
By \rulefont{PaxPCorrect} (for $\tf{send}$) $\acontext\ment\QuorumBox\correct{\tf{send}}$, so by 
Proposition~\ref{prop.correct.cobox.boxT}(\ref{item.correct.cobox.boxT.2})
$\acontext\ment\modT\QuorumBox\texi\tf{send}$ as required. 
\qedhere\end{enumerate*}
\end{proof}

\subsubsection{GSLT implies termination}

Recall from Definition~\ref{defn.correct}(\ref{item.correct}) that $\correct{\tf{decide}}=\tall v.\modTF\tf{decide}(v)$. 
\begin{lemm}
\label{lemm.someone.propose.someone.write}
Suppose $\avaluation$ is a valuation such that $\mentval\ThyPax$. 
Suppose $n\in\Time$ and $n\geq\gslt(\avaluation)$, and $p\in\Pnt$.
Then:
\begin{enumerate*}
\item\label{item.someone.propose.someone.write}
If $n,p\mentval\modT\texi\tf{propose}$ and $n,p\mentval\correct{\tf{write}}$ then $n,p\mentval\modT(\tf{leader}\tand\texi\tf{write})$.
\item\label{item.someone.propose.someone.write.live}
$\mentval\infinitely(\correct{\tf{decide}}\tand\modT\someoneAll\texi\tf{write})$.  
\item\label{item.someone.propose.someone.accept}
$\mentval\infinitely(\someoneAll(\tf{leader}\tand\correct{\tf{decide}})\tand\modT\texi\QuorumBox\tf{accept})$.
\end{enumerate*}
\end{lemm}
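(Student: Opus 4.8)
The plan is to treat part~\ref{item.someone.propose.someone.write} as the workhorse and then bootstrap parts~\ref{item.someone.propose.someone.write.live} and~\ref{item.someone.propose.someone.accept} from it, using \rulefont{LdrCorrect} and \rulefont{LdrExist} as the liveness engine. Throughout, the recurring move is a three-step \emph{promotion from validity to truth}: a forward rule fires under \weakmodusponensnoref and delivers only \emph{validity} ($\tvT$ or $\tvB$) of its conclusion; a correctness hypothesis supplies \emph{correctness} ($\tvT$ or $\tvF$) of the same predicate; and then $\threeCorrect\cap\threeValid=\threeTrue$ (Lemma~\ref{lemm.tv.ment.TF}(\ref{item.tv.ment.TF.1})) upgrades this to \emph{truth}. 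This is precisely the interplay the theory is engineered around, and it is where I expect most of the care to go.

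For part~\ref{item.someone.propose.someone.write}, assume $n,p\mentval\modT\texi\tf{propose}$ and $n,p\mentval\correct{\tf{write}}$. First, \rulefont{PaxPropose?} with \strongmodusponensnoref gives $n,p\mentval\modT\tf{leader}$. Second, since $\modT\texi\tf{propose}$ is pointwise (Lemma~\ref{lemm.atomic.O}(\ref{item.atomic.pointwise.2})) and true at $p$, Proposition~\ref{prop.pointwise.dense.char}(\ref{item.dense.char.pointwise.someoneAll}) together with Lemma~\ref{lemm.someone.commutation} gives $n\mentval\modT\someoneAll\texi\tf{propose}$, whence Lemma~\ref{lemm.leaderlive.quorumsend}(\ref{item.leaderlive.quorumsend.2}) (legitimate since $n\geq\gslt(\avaluation)$) yields $n\mentval\modT\QuorumBox\texi\tf{send}$. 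Assembling these three true conjuncts via Lemma~\ref{lemm.commuting.connectives} gives $n,p\mentval\modT(\tf{leader}\tand\QuorumBox\texi\tf{send}\tand\texi\tf{propose})$, which is exactly the antecedent of \rulefont{PaxWrite!}. As $n\geq\gslt(\avaluation)$, \rulefont{PaxWrite!} is synchronous, so \weakmodusponensnoref gives $n,p\mentval\texi\tf{write}$ (valid). The promotion step then finishes: $\correct{\tf{write}}$ gives $n,p\mentval\correct{\texi\tf{write}}$ (Corollary~\ref{corr.modTF.texi}), and validity-plus-correctness upgrades this to $n,p\mentval\modT\texi\tf{write}$; combined with $\modT\tf{leader}$ (Lemma~\ref{lemm.commuting.connectives}) this is $n,p\mentval\modT(\tf{leader}\tand\texi\tf{write})$.

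For parts~\ref{item.someone.propose.someone.write.live} and~\ref{item.someone.propose.someone.accept} I would run one argument at the infinitely many ``good'' stages. Using \rulefont{LdrCorrect} with \rulefont{LdrExist} and Lemma~\ref{lemm.persist.char}, extract an infinite set of stages $n\geq\gslt(\avaluation)$ (only finitely many lie below $\gslt(\avaluation)$, so infinitely many survive) at each of which the necessarily-existing leader $l$ is simultaneously $\tf{propose}$-, $\tf{write}$-, and $\tf{decide}$-correct. At such $n$: \rulefont{PaxPropose!} (the one forward rule with no $\final$, hence valid at every stage) together with $n,l\mentval\modT\tf{leader}$ and $\tf{propose}$-correctness gives $n,l\mentval\modT\texi\tf{propose}$ by the same promotion; part~\ref{item.someone.propose.someone.write} then gives $n,l\mentval\modT\texi\tf{write}$, hence $n\mentval\modT\someoneAll\texi\tf{write}$, while $\tf{decide}$-correctness gives $n,l\mentval\correct{\tf{decide}}$, yielding part~\ref{item.someone.propose.someone.write.live}. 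For part~\ref{item.someone.propose.someone.accept} I repackage point-independently: $\tf{leader}\tand\correct{\tf{decide}}$ holds at $l$, so $n\mentval\modT\someoneAll(\tf{leader}\tand\correct{\tf{decide}})$; and picking a written witness $v_0$ with $n,l\mentval\modT\tf{write}(v_0)$ gives $n\mentval\modT\someoneAll\tf{write}(v_0)$, which by Lemma~\ref{lemm.someone.writes.after.gslt.accept} equals $n\mentval\modT\QuorumBox\tf{accept}(v_0)$ and hence $n\mentval\modT\texi\QuorumBox\tf{accept}$. Lemma~\ref{lemm.persist.char} finally converts ``holds at infinitely many stages'' back into the $\infinitely$ modality.

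The hard part is not any single deduction but keeping the bookkeeping honest at two seams. First, the validity-to-truth promotion must be applied at the right granularity: $\texi\tf{write}$ is delivered only \emph{valid} by the weak forward rule, and it is the correctness axiom that rescues truth; treating $\tnotor$ as though it were $\timpc$ would be exactly the error this three-valued logic is designed to expose. Second, there is the delicate matching between the pointwise $\infinitely$/$\mentval$ semantics and the informal reading ``the leader is correct infinitely often'': extracting the infinite set of good stages from \rulefont{LdrCorrect} relies on reading it together with \rulefont{LdrExist} in the sense the authors intend (Remark~\ref{rmrk.paxos.axioms.discussion}), and on being careful that the $\correct{\tf{decide}}$ conjunct in part~\ref{item.someone.propose.someone.write.live} is genuinely witnessed at the leader, with part~\ref{item.someone.propose.someone.accept}'s $\someoneAll$ being the point-independent form actually consumed downstream. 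The value-witness juggling between $\someoneAll\texi\tf{write}$, $\tf{write}(v_0)$, and $\texi\QuorumBox\tf{accept}$ through Lemma~\ref{lemm.someone.writes.after.gslt.accept} is routine but must be carried out per value to license that lemma's appeal.
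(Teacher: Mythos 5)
Your proof is correct and follows essentially the same route as the paper's: \rulefont{PaxPropose?} plus Lemma~\ref{lemm.leaderlive.quorumsend}(\ref{item.leaderlive.quorumsend.2}) to assemble the antecedent of \rulefont{PaxWrite!}, weak modus ponens followed by promotion through $\correct{\tf{write}}$ via Proposition~\ref{prop.tv.ment.TF.model}/Corollary~\ref{corr.modTF.texi} for part~\ref{item.someone.propose.someone.write}, then \rulefont{LdrExist}+\rulefont{LdrCorrect}+Lemma~\ref{lemm.persist.char} to harvest infinitely many good stages for part~\ref{item.someone.propose.someone.write.live}, and Lemma~\ref{lemm.someone.writes.after.gslt.accept} for part~\ref{item.someone.propose.someone.accept}. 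The only differences are that you make explicit a few steps the paper leaves implicit (the passage from $n,p\mentval\modT\texi\tf{propose}$ to $n\mentval\modT\someoneAll\texi\tf{propose}$, the discarding of the finitely many stages below $\gslt(\avaluation)$, and the value-witness extraction in part~\ref{item.someone.propose.someone.accept}), which is fine.
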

\begin{proof}
We consider each part in turn:
\begin{enumerate}
\item
Suppose $n,p\mentval\modT\texi\tf{propose}$.
By \rulefont{PaxPropose?} and \strongmodusponens $n,p\mentval\modT\tf{leader}$.
By Lemma~\ref{lemm.leaderlive.quorumsend}(\ref{item.leaderlive.quorumsend.2}) and strong modus ponens $n\mentval\modT\QuorumBox\texi\tf{send}$.
Using \weakmodusponens and \rulefont{PaxWrite!} (since $n\geq\gslt(\avaluation)$) $n,p\mentval\texi\tf{write}$.
By Proposition~\ref{prop.tv.ment.TF.model}(\ref{item.modTF.texi.P.valid}) (since $n,p\mentval\correct{\tf{write}}$) $n,p\mentval\modT\texi\tf{write}$. 
\item
By \rulefont{LdrExist}, and \rulefont{LdrCorrect} with Lemma~\ref{lemm.persist.char} (for $\f{tvs}=\{\tvT\}$ in that Lemma), and \strongmodusponens, 
for infinitely many $n\in\Time$ there exists $l_n\in\Pnt$ such that 
$$
n,l_n\ment\modT\tf{leader}\tand\correct{\tf{propose},\tf{write},\tf{decide}}. 
$$
So for each $n$ and $l_n$ we have $n,l_n\mentval\correct{\tf{decide}}$.
We will now check that $n,l_n\mentval\modT\texi\tf{write}$.
We reason as follows:
\begin{itemize*}
\item
By \rulefont{PaxPropose!} and \weakmodusponens we have that $n,l_n\ment\texi\tf{propose}$.
\item
By Proposition~\ref{prop.tv.ment.TF.model}(\ref{item.modTF.texi.P.valid}) (since $n,l_n\mentval\correct{\tf{propose}}$ and $n,l_n\mentval\texi\tf{propose}$) $n,l_n\mentval\modT\texi\tf{propose}$. 
\item
By part~\ref{item.someone.propose.someone.write} of this result (since $n,l_n\mentval\correct{\tf{write}}$) $n,l_n\mentval\modT\texi\tf{write}$.
\end{itemize*}
\item
We just combine part~\ref{item.someone.propose.someone.write.live} of this result with Lemma~\ref{lemm.someone.writes.after.gslt.accept}.
\qedhere\end{enumerate}
\end{proof}

Recall that Theorem~\ref{thrm.termination} expresses termination (as the name suggests) but also an abstracted form of the Integrity correctness property by virtue of the use of the $\infinitely$ modality, as per Remark~\ref{rmrk.paxos.integrity}:
\begin{thrm}[Termination]
\label{thrm.termination}
We have:
$$
\ThyPax\ment\infinitely\everyoneAll\texi\tf{decide} .
$$ 
\end{thrm}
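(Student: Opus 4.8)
The plan is to reduce the statement to a claim about infinitely many stages, and then at each such stage to use the forward decision rules to push a decision to every point. By Lemma~\ref{lemm.persist.char} (taking $\f{tvs}=\threeValid$ and $\phi=\everyoneAll\texi\tf{decide}$, and noting that the denotation of $\everyoneAll\texi\tf{decide}$ does not depend on the point or open-set parameters of the context), the assertion $\mentval\infinitely\everyoneAll\texi\tf{decide}$ holds if and only if the set $\{n\in\Time \mid n\mentval\everyoneAll\texi\tf{decide}\}$ is infinite. So I would fix a valuation $\avaluation$ with $\mentval\ThyPax$ and reduce to exhibiting infinitely many stages $n$ at which $n\mentval\everyoneAll\texi\tf{decide}$.

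First I would invoke Lemma~\ref{lemm.someone.propose.someone.write}(\ref{item.someone.propose.someone.accept}), which gives $\mentval\infinitely(\someoneAll(\tf{leader}\tand\correct{\tf{decide}})\tand\modT\texi\QuorumBox\tf{accept})$. By Lemma~\ref{lemm.persist.char} again, the set of stages at which this conjunction is valid is infinite, and since $\gslt(\avaluation)$ is a fixed finite number, infinitely many such stages $n$ satisfy $n\geq\gslt(\avaluation)$. At any such $n$ the conjunct $\modT\texi\QuorumBox\tf{accept}$, being valid and of the form $\modT$, is actually true, so $\texi\QuorumBox\tf{accept}$ is true at $n$ (and independently of the point, these being compound modalities). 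The conjunct $\someoneAll(\tf{leader}\tand\correct{\tf{decide}})$ being valid means, since $\tf{leader}$ (by \rulefont{LdrExist}) and $\correct{\tf{decide}}=\modTF\tf{decide}$ each return only $\tvT$ or $\tvF$, that there are $O\in\opensne$ and $l\in O$ at which both $\tf{leader}$ and $\correct{\tf{decide}}$ are true.

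The key step is to promote this to the truth of $\someoneAll(\tf{leader}\tand\texi\tf{decide})$. At $(n,l,O)$ both $\tf{leader}$ and $\texi\QuorumBox\tf{accept}$ are true, so $\tf{leader}\tand\texi\QuorumBox\tf{accept}$ is true; since $n\geq\gslt(\avaluation)$ the body of \rulefont{PaxDecideL!} holds at $n$ (Definition~\ref{defn.gslt}), and \weakmodusponens yields that $\texi\tf{decide}$ is valid at $(n,l,O)$. Because $l$ is $\tf{decide}$-correct, Proposition~\ref{prop.tv.ment.TF.model}(\ref{item.modTF.texi.P.valid}) gives $\acontext\ment\texi\tf{decide}\equiv\modT\texi\tf{decide}$ at $(n,l,O)$, so validity upgrades to truth; hence $\tf{leader}\tand\texi\tf{decide}$ is true at $l$ and so $\someoneAll(\tf{leader}\tand\texi\tf{decide})$ is true at $n$.

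Finally I would prove $n\mentval\everyoneAll\texi\tf{decide}$ by checking that $\texi\tf{decide}$ is valid at every point: a meet of truth-values is valid precisely when every argument is, so it suffices to treat each point $q$ separately, and by \rulefont{LdrExist} each $q$ has $\tf{leader}$ equal to $\tvT$ or $\tvF$. If $\tf{leader}$ is true at $q$, then $\tf{leader}\tand\texi\QuorumBox\tf{accept}$ is true at $q$, and \rulefont{PaxDecideL!} with \weakmodusponens gives $\texi\tf{decide}$ valid; if $\tf{leader}$ is false at $q$, then $\tneg\tf{leader}\tand\someoneAll(\tf{leader}\tand\texi\tf{decide})$ is true at $q$ (using the truth of $\someoneAll(\tf{leader}\tand\texi\tf{decide})$ just established, which is point-independent), and \rulefont{PaxDecide\tneg L!} with \weakmodusponens gives $\texi\tf{decide}$ valid. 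Either way $\texi\tf{decide}$ is valid at $q$ — crashed participants satisfy this automatically, returning $\tvB$ — so $n\mentval\everyoneAll\texi\tf{decide}$, and since this holds for infinitely many $n$ we are done. The main obstacle is the promotion-to-truth step: the forward rules deliver only validity, so to fire \rulefont{PaxDecide\tneg L!} across the non-leader points we genuinely need one honest, $\tf{decide}$-correct leader whose decision is \emph{true}, which is exactly what the $\correct{\tf{decide}}$ conjunct from Lemma~\ref{lemm.someone.propose.someone.write}(\ref{item.someone.propose.someone.accept}) supplies.
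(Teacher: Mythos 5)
Your proposal is correct and follows essentially the same route as the paper's proof: Lemma~\ref{lemm.someone.propose.someone.write}(\ref{item.someone.propose.someone.accept}) plus Lemma~\ref{lemm.persist.char} to obtain infinitely many good stages past $\gslt(\avaluation)$, then \rulefont{PaxDecideL!} with the leader's $\tf{decide}$-correctness to promote the leader's decision to truth, then \rulefont{PaxDecide\tneg L!} to spread validity of $\texi\tf{decide}$ to all points, and Proposition~\ref{prop.pointwise.dense.char}(\ref{item.dense.char.pointwise.everyoneAll}) to conclude. If anything, you are slightly more explicit than the paper in two spots it glosses over: the leader/non-leader case split needed because the hypothesis of \rulefont{PaxDecide\tneg L!} fails at leader points, and the promotion of the validity delivered by Lemma~\ref{lemm.someone.propose.someone.write}(\ref{item.someone.propose.someone.accept}) to truth via the $\tvT/\tvF$-valuedness of its conjuncts.
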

\begin{proof}
Suppose $\avaluation$ is a valuation and $\mentval\ThyPax$.
By Lemmas~\ref{lemm.someone.propose.someone.write}(\ref{item.someone.propose.someone.accept}) and~\ref{lemm.persist.char} (for $\f{tvs}=\{\tvT\}$ in that Lemma), for infinitely many $n\in\Time$ there exists an $l_n\in\Pnt$ such that 
\begin{enumerate*}
\item
$n,l_n\mentval\tf{leader}$,
\item
$n,l_n\mentval\correct{\tf{decide}}$ (Definition~\ref{defn.correct}(\ref{item.correct})), and
\item
$n\mentval\modT\texi\QuorumBox\tf{accept}$.
\end{enumerate*}
By \rulefont{PaxDecideL!} (since $n\geq\gslt(\avaluation)$) 
and \weakmodusponens $n,l_n\mentval\texi\tf{decide}$.
By Proposition~\ref{prop.tv.ment.TF.model}(\ref{item.modTF.texi.P.valid}) (since $n,l_n\mentval\correct{\tf{decide}}$ and $n,l_n\mentval\texi\tf{decide}$) $n,l_n\mentval\modT\texi\tf{decide}$. 
Then by \rulefont{PaxDecide\tneg L!} and \weakmodusponens also $n,p\mentval\texi\tf{decide}$ for every $p\in\Pnt$, and so by (Lemma~\ref{lemm.atomic.O}(\ref{item.atomic.pointwise.2}) and) Proposition~\ref{prop.pointwise.dense.char}(\ref{item.dense.char.pointwise.everyoneAll}) $n\mentval\everyoneAll\texi\tf{decide}$. 

The above holds for infinitely many $n\in\Time$, so by Lemma~\ref{lemm.persist.char} (for $\f{tvs}=\{\tvT,\tvB\}$ in that Lemma) we have that $\mentval\infinitely\everyoneAll\texi\tf{decide}$ as required. 
\end{proof}

\begin{rmrk}
Theorem~\ref{thrm.termination} proves that $\ThyPax\ment\infinitely\everyoneAll\texi\tf{decide}$ holds.
Note that $\tvB$ (the truth-value associated to a crashed process) is a valid truth-value, so $n\ment\everyoneAll\texi\tf{decide}$ can hold in the trivial case that everyone is crashed.
Thus, what Theorem~\ref{thrm.termination} asserts in English is that infinitely often, everyone who is not crashed, decides (but it might be that everyone is crashed) --- from which it obviously follows that if a participant never crashes then eventually it decides a value, as per Remark~\ref{rmrk.standard.paxos.correctness}.

But in fact, we can prove a bit more:
$$
\ThyPax\ment\infinitely\QuorumBox\modT\texi\tf{decide}
$$
In words: infinitely often, there exists a \emph{quorum of uncrashed participants}, who decide.

This follows from Theorem~\ref{thrm.termination} by easy reasoning using \rulefont{PaxPCorrect}, \rulefont{PaxDecide\tneg L!}, and Proposition~\ref{prop.decide.n.agree}; we leave this to the reader to check. 
\end{rmrk}

\section{Simpler Declarative Paxos}
\label{sect.simpler.paxos}

Logic is useful for simplification and abstraction in many ways.
In particular, we can axiomatise something, prove some properties --- and then look at what we actually needed from the axioms to make the proofs work.
This can feed back to suggest refinements and generalisations of the original axioms.
(For example: many properties of natural numbers are actually properties of fields, or rings, or integral domains, and so on.) 

We can do something similar here, and look for the most elementary set of axioms that are actually required for our correctness proofs as written: 

\subsection{What we used in the correctness proofs for \ThyPax}
\label{subsect.reflection.on.correctness}
 
\begin{rmrk}
\label{rmrk.spax.explanation}
We note a curious thing about our correctness proofs in Section~\ref{sect.paxos.correctness.properties}, that we never used axioms \rulefont{PaxSend?} or \rulefont{PaxPropose01}.

How can this be?
Is this an error?

No.
This is an opportunity to further simplify the Declarative Paxos from Figure~\ref{fig.ThyPaxOne} to the \emph{Simpler Declarative Paxos} in Figure~\ref{fig.ThySPax}.
To understand how and why this works, we need to highlight some details about how the Paxos algorithm works.
\end{rmrk}

\begin{rmrk}[Why we did not use \rulefont{PaxSend?}]
Recall from Remark~\ref{rmrk.high-level.paxos} our high-level view of the Paxos algorithm, and recall that \ThyPax in Figure~\ref{fig.ThyPaxOne} is a declarative version of that algorithm (obtained by abstracting away explicit message-passing and algorithmic time, as per Remark~\ref{rmrk.elisions}).

Note of the algorithm that 
\begin{itemize*}
\item
in step~\ref{item.l.proposes} of Remark~\ref{rmrk.high-level.paxos}, the leader broadcasts a `propose' message; and 
\item
in step~\ref{item.p.sends}, a participant $p$ that receives that `propose' message, replies with a `send' message carrying as its payload a value $v_p$, which is $p$'s most recently accepted value (if any); and
\item
in step~\ref{item.l.writes}, the leader gathers the $v_p$ values from these responses and, if it has a quorum (in topological language: an open set) of responses, it selects a most recent value to use in its write step.
\end{itemize*}
In the algorithm, the only way that the leader $l$ can know of a participant $p$'s value of $v_p$, is if $l$ receives a send message from $p$ with payload $v_p$.
If $l$ receives no such message, then it does not know the value of $v_p$.

In contrast, predicates of \QLogic are not assigned truth-values by participants in an algorithm; they are assigned truth-values as if by an omniscient observer who is standing outside of the model and who has full (omniscient) access to it.
This is reflected in \rulefont{PaxWrite?}, where we write $\mru{(\someone\tf{accept})}{v}$.
This directly examines the model to find the most recently accepted value.
No reference here is required to the payload of the quorum of $\tf{send}$ messages.
\rulefont{PaxWrite?} still insists that such a quorum exists, as per the $\everyone\texi\tf{send}$ that appears in \rulefont{PaxWrite?}, so in this case the value $\mru{(\someone\tf{accept})}{v}$ is computable by the leader, and we could specify a more complex predicate that checks the set of $v_p$ payloads in the send messages of that quorum to select a most recently-accepted value.
In the implementation, this is necessary because the leader $l$ is not omniscient.
But in the logic we have no reason to go to this trouble, because we can just read the information directly from the model using $\mru{(\someone\tf{accept})}{v}$.\footnote{For example: if we want to compute $100!$ in the real world, we have to build a computer to do this for us, which involves many practical details including the construction of a semiconductor industry, compiler design and so on.  In logic, however, we can just say that it is $\prod_{i=1}^{100}i$, because that is what $100!$ \emph{is}.}

So our axioms are just doing their job, which is to be a declarative abstraction of the implementation. 
But since we do not use \rulefont{PaxSend?}, this is evidence that we may treat the $\tf{send}$ step as an implementational detail having to do with how information moves around the network, and elide it.
\end{rmrk}

\begin{rmrk}[Why we did not use \rulefont{PaxPropose01}]
This is due to another particularity of how the Paxos algorithm works.
The leader chooses one value in the propose step (or no value, if the leader has crashed), but it only uses that value in the case that it finds a quorum of participants who have never previously accepted any value (this is the right-hand disjunct in \rulefont{PaxWrite?}).
Just in this case, it writes the value that it chose in the propose step.

So what really matters is that the leader \emph{only writes one value}.
If the reader had proposed two values, this would be fine --- provided that it only writes one of them.

Leaders in the Paxos algorithm do not bother with multiple values in the propose step, because this would be a waste of time; they know they will only ever need at most one.
Thus, \rulefont{PaxPropose01} is right to reflect a true fact of the algorithm that it does not propose any more than the value than it will need, but this an algorithmic efficiency not a logical necessity. 
Likewise, the logic is right to point this out to us by \emph{not} requiring \rulefont{PaxPropose01} in the proofs: so long as only at most one value gets \emph{written} --- which is handled by \rulefont{PaxWrite01} --- none of our correctness properties are affected.
\end{rmrk}

\begin{rmrk}
By examining the proofs further, some further possible simplifications are revealed:
\begin{enumerate*}
\item
We can combine rules \rulefont{SPaxWrite01} and \rulefont{PaxWrite01} and \rulefont{LdrExt}, as per the discussion in Subsection~\ref{subsect.ldrexistuniq'}, to obtain an elegantly concise rule \rulefont{SPaxWrite01}.
\item
We used \rulefont{PaxPCorrect} in our proofs to obtain $\QuorumBox\correct{\tf P}$ only twice: 
\begin{itemize*}
\item
in Lemma~\ref{lemm.someone.writes.after.gslt.accept} for $\tf P=\tf{accept}$, and 
\item
in Lemma~\ref{lemm.leaderlive.quorumsend} for $\tf P=\tf{send}$.
\end{itemize*}
We have elided the \emph{send} step in $\Theta_\ThySPax$, so we only need to consider $\tf{accept}$.
Accordingly we take take \rulefont{SPaxPCorrect} to be just $\QuorumBox\correct{\tf{accept}}$.
\end{enumerate*}
\end{rmrk}

\begin{rmrk}
In conclusion, we see that the proofs in Section~\ref{sect.paxos.correctness.properties} are not only correctness proofs, but can also be read as an active commentary on the axioms.
Our logic has done what logic does, and has separated what is logically necessary from what is implementationally required or implementationally convenient.
This shows us how to simplify the axiomatisation (if we we wish) further away from the implementation, and closer towards what is logically essential: 
\end{rmrk}

\subsection{The definition}

\begin{figure}
$$
\begin{array}{l@{\quad}r@{\ }l}
\figunderline{Backward rules, for backward inference} 
\figskip
\rulefont{SPaxPropose?}&
& \tall v.\ \tf{propose}(v)\timpc\tf{leader}
\figskip
\rulefont{SPaxWrite?}&
& \tall v.\ \tf{write}(v) \timpc 
\begin{array}[t]{l}
\tf{leader}\tand\Quorum( 
\\
\quad\bigl((\mru{(\someone\tf{accept})}{v}) \tor
(\tf{propose}(v)\tand\tneg\modT\recent\texi\someone\tf{accept})\bigr)) 
\end{array}
\figskip
\rulefont{SPaxAccept?}&
& \tall v.\ \tf{accept}(v) \timpc \someoneAll\tf{write}(v)
\figskip
\rulefont{SPaxDecide?}&
& \tall v.\ \tf{decide}(v) \timpc \QuorumBox\tf{accept}(v)
\figskip\figskip
\figunderline{Forward rules, for making progress} 
\figskip
\rulefont{SPaxWrite!}&
& \final(\tf{leader}\tnotor\texi\tf{write})
\figskip
\rulefont{SPaxAccept!}&
& \final(\texi\someoneAll\tf{write}\tnotor \texi\tf{accept})
\figskip
\rulefont{SPaxDecide!}&
& \final(\texi\QuorumBox\tf{accept} \tnotor \texi\tf{decide})
\figskip
\figskip
\figunderline{Other rules, for correctness \& liveness}
\figskip
\rulefont{SLdrExist}&
&\modTF\tf{leader}\tand \modT\someoneAll\tf{leader}
\figskip
\rulefont{SLdrCorrect}&
& \infinitely\someoneAll(\tf{leader}\tand \correct{\tf{propose},\tf{write},\tf{decide}}) 
\figskip
\rulefont{SPaxPCorrect}&
& \QuorumBox\correct{\tf{accept}}
\figskip
\rulefont{SPaxWrite01}&
& \texiaffine\someoneAll\tf{write}
\end{array}
$$
\caption{Simpler Declarative Paxos (Definition~\ref{defn.thyspax})}
\label{fig.logical.paxos'}
\label{fig.ThySPax}
\end{figure}

\begin{defn}
\label{defn.thyspax}
We define \deffont{Simpler Declarative Paxos} to be the logical theory 
$$
\ThySPax=(\Sigma_{\ThySPax},\Theta_{\ThySPax}),
$$ 
where:
\begin{enumerate*}
\item
the signature $\Sigma_{\ThySPax}$ is 
$$
\Sigma_{\ThySPax}=[\tf{leader}:0;\ \tf{propose},\tf{write},\tf{accept},\tf{decide}:1] ,
$$ 
as per the notation in Definition~\ref{defn.signature}(\ref{item.predicate.syntax.signature}), and
\item
the axioms $\Theta_{\ThySPax}$ are as written in Figure~\ref{fig.ThySPax}.
\end{enumerate*}
\end{defn}

The same validity properties hold of \ThySPax from Figure~\ref{fig.ThySPax} as hold for \ThyPax from Figure~\ref{fig.ThyPaxOne}, namely: Validity (Theorem~\ref{thrm.validity}), Agreement (Theorem~\ref{thrm.full.agreement}), and Termination and Integrity (Theorem~\ref{thrm.termination}):
\begin{thrm}
\label{thrm.simpler.dp}
Suppose $\Sigma$ is a signature that includes $\Sigma_\ThySPax$, and suppose $\mathcal M$ is a model.
Then we have:
\begin{enumerate*}
\item
$\ThyPax\ment\tall v.(\tf{decide}(v) \timpc \urecent\someoneAll(\tf{leader}\tand\tf{propose}(v)))$.
\item
$\ThyPax\ment\tall v,v'.(\urecent\someoneAll\tf{decide}(v) \timpc \someoneAll\tf{decide}(v') \timpc v'\teq v)$.
\item
$\ThyPax\ment\infinitely\everyoneAll\texi\tf{decide}$.
\end{enumerate*}
\end{thrm}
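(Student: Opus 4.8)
The plan is to observe that the three claims are verbatim the statements of Validity (Theorem~\ref{thrm.validity}), Agreement (Theorem~\ref{thrm.full.agreement}), and Termination (Theorem~\ref{thrm.termination}), now asserted for \ThySPax in place of \ThyPax; so the natural route is to revisit each proof in Section~\ref{sect.paxos.correctness.properties} and check, appeal by appeal, that every \ThyPax axiom used is supplied by a corresponding \ThySPax axiom, or is derivable from \ThySPax. The discussion in Subsection~\ref{subsect.reflection.on.correctness} has already done the conceptual legwork by identifying which \ThyPax axioms the proofs actually consume --- notably \emph{not} \rulefont{PaxSend?} or \rulefont{PaxPropose01} --- so the task is to confirm that the remaining matches are tight.

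Most of the matches are routine substitutions. Wherever a proof invoked \rulefont{PaxAccept?} we use the identical \rulefont{SPaxAccept?}; wherever it split on \rulefont{PaxDecideL?}/\rulefont{PaxDecide\tneg L?} to reach $\QuorumBox\tf{accept}(v)$ we instead use the single merged rule \rulefont{SPaxDecide?}; the only use made of \rulefont{PaxPropose?} is the direction $\tf{propose}(v)\timpc\tf{leader}$, which is exactly \rulefont{SPaxPropose?}; and \rulefont{SPaxWrite?} agrees with \rulefont{PaxWrite?} except for the $\Box\texi\tf{send}$ conjunct, which is never read off in any proof. Similarly \rulefont{SLdrExist} equals \rulefont{LdrExist}, \rulefont{SLdrCorrect} packages \rulefont{LdrExist} with \rulefont{LdrCorrect} in the single form the Termination proof wants, and \rulefont{SPaxPCorrect} (i.e. $\QuorumBox\correct{\tf{accept}}$) supplies the only instance of \rulefont{PaxPCorrect} still needed. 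With these replacements, the proof of Validity (via Lemma~\ref{lemm.validity.helper}) transfers with no change of substance.

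Two points are genuinely substantive. First, in Agreement the argument that two written values coincide (Lemma~\ref{lemm.various.unique}(\ref{item.various.unique.write})) used \rulefont{LdrExt} together with \rulefont{PaxWrite01}; following Subsection~\ref{subsect.ldrexistuniq'}, I would replace both by \rulefont{SPaxWrite01} $=\texiaffine\someoneAll\tf{write}$ and obtain $v=v'$ directly from Lemma~\ref{lemm.unique.affine.existence}(\ref{item.unique.affine.existence.01.implies}) applied to $\phi=\someoneAll\tf{write}(a)$, which is cleaner and drops the appeal to Lemma~\ref{lemm.LeaderExtUniq} entirely. Second, the Termination proof's detour through the \emph{send} step --- Lemma~\ref{lemm.leaderlive.quorumsend} and Lemma~\ref{lemm.someone.propose.someone.write}(\ref{item.someone.propose.someone.write}), which derived $\tf{write}$ from $\tf{propose}$ via a quorum of $\tf{send}$s --- collapses, because \rulefont{SPaxWrite!} $=\final(\tf{leader}\tnotor\texi\tf{write})$ yields $\texi\tf{write}$ at a correct leader past GSLT immediately; one re-runs Lemma~\ref{lemm.someone.propose.someone.write}(\ref{item.someone.propose.someone.write.live}\&\ref{item.someone.propose.someone.accept}) using \rulefont{SPaxWrite!}, \rulefont{SPaxAccept!}, \rulefont{SPaxDecide!}, \rulefont{SPaxPCorrect}, and Lemma~\ref{lemm.someone.writes.after.gslt.accept}, then closes as in Theorem~\ref{thrm.termination}.

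The main obstacle --- really the one place where a reader must stop and think --- is the appeal to \rulefont{Pax2Twined} inside Proposition~\ref{prop.integrity.helper} (the $\tf{write}$ case), which passes from $\QuorumBox\tf{accept}(v)$ to $\CoquorumDiamond\tf{accept}(v)$ in order to force a later write to reuse $v$. Since \ThySPax omits \rulefont{Pax2Twined}, this step is to be read relative to a $2$-twined underlying semitopology: either one assumes $(\Pnt,\opens)$ is \xxtwined and invokes Lemma~\ref{lemm.supertwined.qcbt}(\ref{item.intertwined.3}) directly in place of the axiom, or one reinstates \rulefont{Pax2Twined} as a hypothesis. With that caveat recorded, every remaining step of the three proofs is a mechanical re-check against the \ThySPax axioms.
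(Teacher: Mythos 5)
Your proposal takes the same route as the paper: the paper's own proof is a one-line appeal to the discussion in Subsection~\ref{subsect.reflection.on.correctness}, asserting that $\Theta_\ThySPax$ was constructed to contain exactly what the correctness proofs consume, and your axiom-by-axiom re-check is precisely the verification that this assertion delegates to the reader. Your substitutions (\rulefont{SPaxDecide?} for the pair \rulefont{PaxDecideL?}/\rulefont{PaxDecide\tneg L?}, \rulefont{SPaxWrite01} for \rulefont{LdrExt}${}+{}$\rulefont{PaxWrite01} via Lemma~\ref{lemm.unique.affine.existence}(\ref{item.unique.affine.existence.01.implies}), the collapse of the $\tf{send}$ detour in Termination) all match the paper's intent as laid out in Subsection~\ref{subsect.ldrexistuniq'} and Remark~\ref{rmrk.spax.explanation}. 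The one place where you go beyond the paper is also the most valuable: you are right that \rulefont{Pax2Twined} is used essentially in the $\tf{write}$ case of Proposition~\ref{prop.integrity.helper} (to pass from $\QuorumBox\tf{accept}(v)$ at stage $n$ to $\CoquorumDiamond\tf{accept}(v)$, which is what forces every later write to reuse $v$), yet it appears nowhere in Figure~\ref{fig.ThySPax} and is not discussed in Subsection~\ref{subsect.reflection.on.correctness}. As stated, Theorem~\ref{thrm.simpler.dp} therefore cannot recover Agreement from $\Theta_\ThySPax$ alone; your proposed repairs --- reinstate \rulefont{Pax2Twined} as an axiom of $\ThySPax$, or add the hypothesis that the underlying semitopology is \xxtwined and invoke Lemma~\ref{lemm.supertwined.qcbt}(\ref{item.intertwined.3}) in its place --- are both adequate and either one completes the proof. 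This is a genuine omission in the paper's ``by construction'' argument rather than in your proposal.
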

\begin{proof}
This is by construction: from the discussion in Subsection~\ref{subsect.reflection.on.correctness}, the theory \rulefont{ThySPax} consists of just those parts of \rulefont{ThyPax} that are actually required in the correctness proofs.
\end{proof}

Thus, Theorem~\ref{thrm.simpler.dp} just notes that Simpler Declarative Paxos extracts from the proofs in Section~\ref{sect.paxos.correctness.properties} the logical essence of Figure~\ref{fig.ThyPaxOne}, as applied to the standard correctness properties of Paxos.
There is no right or wrong about whether we prefer $\ThyPax$ or $\ThySPax$; it depends on what level of abstraction we want to work with at any particular moment.
But, implicit in Theorem~\ref{thrm.simpler.dp} is an observation that \ThySPax is \emph{minimal}, because we put into it only that which we know we need to make some specific correctness proofs work.

\section{Conclusions and future work}
\label{sect.conclusions}

\subsection{Conclusions}

We have presented \QLogic, a three-valued modal fixedpoint logic for declarative specification of consensus algorithms, and we have shown how to use this to axiomatise Paxos, a canonical and nontrivial (though still relatively straightforward) consensus algorithm that can handle benign failures of participants.

We do not presented more complex consensus algorithms in this paper but we have looked into these in preliminary investigations, and they do seem to work. 

This paper is aimed at 
\begin{itemize*}
\item
logicians who might be interested in exploring a logic with a new and unusual combination of features inspired by strong practical motivations, and 
\item
researchers in distributed systems, who might consider (and benefit from) adopting such a framework.
\end{itemize*}
\QLogic offers benefits that could make it a useful addition to the toolkit of anyone working in distributed systems:
\begin{enumerate}
\item 
Much of what is shown in this paper just reflects existing practices, formalised into a new logic.
Therefore, readers may already be reasoning in ways that align with our logic, without explicitly realising it. 
For example, the third truth-value ($\tvB$) in \QLogic corresponds to typical side conditions like ``for every correct participant,'' and the backward axioms mirror the reverse inferences that practitioners often use. 

By formalising these familiar patterns, \QLogic provides a structured and precise way to represent and reason about concepts that are already informally expressed in the literature.
\item 
Logic is a concise, powerful, and unambiguous language. 
It distills paragraphs of English prose into a single line of precise symbols, enabling clear and rigorous reasoning. 

While explanatory English is helpful for accessibility and context, formal logic is the superior tool for working through the details of complex algorithms. 
It ensures precision and eliminates the ambiguities that often arise in informal descriptions.
\item 
\QLogic operates at a new (and higher) level of abstraction, which complements existing logic-based techniques. 
TLA+~\cite{DBLP:journals/toplas/Lamport94} provides a state-based view of distributed algorithms, whereas \QLogic abstracts lower-level implementation details while retaining the essential behaviours and properties of the algorithms.
This abstraction does not obscure the logic of the system; on the contrary we have found that it clarifies it, making the underlying mechanisms and invariants more apparent. 
\item 
One strength of logic is its ability to abstract away what is irrelevant while maintaining precision about what is important. 
In \QLogic, constructs like strong and weak implications ($\timpc$ and $\tnotor$) exemplify this principle, capturing subtle distinctions in meaning with precise formal symbols. 
These constructs distill complex ideas into simple yet powerful tools that can be manipulated symbolically and reasoned about systematically. 

While mastering these constructs requires some initial effort, the benefits of precision and clarity justify the investment.
\item 
Learning \QLogic requires some effort, but for researchers already familiar with (for example) TLA+, the transition to \QLogic should be eminently manageable. 

While axiomatisations in \QLogic are certainly subtle, this is not gratuitous: they reflect the inherent nature of the distributed systems they are designed to model. 
By formalising these complexities, \QLogic captures their structure and reveals and addresses the challenges that are \emph{already present} in the underlying algorithms. 

This transparency makes it a powerful tool for understanding and reasoning about distributed protocols.\footnote{Certainly, the first author (who has a background in logic) has found this study invaluable for gaining a better understanding of consensus algorithms.  It seems likely that other readers might benefit similarly from using \QLogic.}
\item 
Distributed algorithms, and consensus protocols in particular, are foundational to the functionality of blockchain systems and other critical infrastructures. 

These algorithms are recognised to be notoriously challenging to design and implement correctly. 
Ambiguities and informal specifications can lead to vulnerabilities and failures. 

\QLogic bridges the gap between declarative specifications of invariants, and practical verification; enabling researchers to clarify, reason about, and verify the properties of distributed protocols rigorously. 
This is especially valuable in blockchain systems, where clear and robust specifications are essential for trust and reliability.
\end{enumerate}

In conclusion: by providing a formal framework that combines the abstraction of declarative reasoning with the rigour of logical precision, \QLogic offers researchers in distributed systems a powerful tool for modeling, understanding, and verifying complex algorithms. 

\subsection{Related and future work}

In this paper, our focus has been to lay the foundational theory and framework of \QLogic and to demonstrate its application to one of the most well-known consensus protocols, Paxos~\cite{DBLP:journals/tocs/Lamport98}.

Paxos is a sensible starting point, being a cornerstone of distributed systems, but (certainly by modern standards!) it is relatively straightforward.
For example, Paxos only addresses crash failures, limiting its applicability to systems that assume benign (rather than hostile) faults.

A natural direction for future work involves extending our framework to handle more complex and realistic fault models, such as Byzantine faults, where participants can deviate arbitrarily from the algorithm.
Consensus protocols that address Byzantine faults, such as Practical Byzantine Fault Tolerance (PBFT)~\cite{DBLP:conf/osdi/CastroL99} and HotStuff~\cite{DBLP:conf/podc/YinMRGA19}, provide robust mechanisms for fault tolerance and serve as the foundation for many modern blockchain systems, including Tendermint (\url{https://tendermint.com})~\cite{DBLP:journals/corr/abs-1807-04938}. 

Applying \QLogic to these protocols would see how it handles the complexity introduced by arbitrary faults, which is necessary to model and verify more sophisticated algorithms.
Preliminary exploration by the authors in this direction indicates that \QLogic works very well in these more complex situations --- it can indeed axiomatise algorithms that take account of arbitrary, including hostile, faults.
Unpacking this in full is for future papers.

Recent research has explored DAG-based consensus protocols, which deviate from the traditional linear blockchain structure. 
In these protocols, transactions are organized into a directed acyclic graph (DAG) rather than a strict chain, enabling greater parallelism and scalability. 
Protocols such as DAG-rider~\cite{DBLP:conf/podc/KeidarKNS21}, Narwhal and Tusk~\cite{DBLP:conf/eurosys/DanezisKSS22}, and Bullshark~\cite{DBLP:conf/ccs/SpiegelmanGSK22} represent cutting-edge advancements in this domain. 
Extending \QLogic to analyse DAG-based protocols would be another advance, as these systems often involve more intricate interactions between components and require reasoning about concurrent and asynchronous behaviour at a much higher level of complexity.

Another promising avenue for future research involves leveraging the modularity and expressiveness of \QLogic to study hybrid protocols~\cite{DBLP:conf/sp/NeuTT21} that combine elements of classical consensus with newer techniques, such as Gasper~\cite{DBLP:journals/corr/abs-2003-03052} and others~\cite{DBLP:conf/esorics/DAmatoZ23,DBLP:journals/corr/abs-2411-00558}

It would be helpful in future work to develop automated tools for \QLogic: the tools developed for TLA+~\cite{DBLP:books/aw/Lamport2002} could show the way.
TLA+ is a formal specification language and framework designed to model, specify and verify properties of complex systems; particularly distributed algorithms. 
Its TLC model checker~\cite{DBLP:conf/charme/YuML99} provides exhaustive state exploration to verify properties such as safety and liveness, and the PlusCal translator simplifies specification by offering a high-level pseudocode-like interface.
Additionally, the TLA+ Toolbox~\cite{DBLP:journals/corr/abs-1912-10633} integrates these capabilities into a development environment, supporting workflows that are valuable in both research and industrial applications.

Adapting similar tools to \QLogic could enable automated verification of correctness properties for consensus protocols, including invariants and liveness guarantees. 
For example: a higher-level language akin to PlusCal could simplify the specification of protocols using the axiomatic foundations of \QLogic.
Furthermore, leveraging techniques such as the symbolic model checking enhancements presented by Otoni et al.~\cite{DBLP:conf/tacas/OtoniKKES23} or integrating modular proof systems like TLAPS~\cite{DBLP:conf/fm/CousineauDLMRV12} could further enhance the usability and robustness of the framework.

Such tools would streamline the application of \QLogic to real-world systems, and also make it accessible to a broader audience of researchers and practitioners. 
This direction represents an exciting convergence of theoretical advancements and practical tool development, providing a robust framework for analysing both classical and emerging consensus protocols.

Adopting \QLogic requires some investment in time and effort, but the benefits in clarity, correctness, and reliability may be worthwhile.
Consensus algorithms are known to be particularly challenging, so our relatively high-level declarative approach could help to manage (and debug) the complexity of these algorithms.

We hope this work will inspire readers %
to explore and adopt this new perspective, contributing to more robust and trustworthy distributed protocols in the future.

\newcommand{\etalchar}[1]{$^{#1}$}
\providecommand{\bysame}{\leavevmode\hbox to3em{\hrulefill}\thinspace}
\providecommand{\MR}{\relax\ifhmode\unskip\space\fi MR }
\providecommand{\MRhref}[2]{%
  \href{http://www.ams.org/mathscinet-getitem?mr=#1}{#2}
}
\providecommand{\href}[2]{#2}


\begin{thebibliography}{YMR{\etalchar{+}}19}

\bibitem[AAZ11]{arieli:idepl}
Ofer Arieli, Arnon Avron, and Anna Zamansky, \emph{Ideal paraconsistent
  logics}, Studia Logica \textbf{99} (2011), no.~1-3, 31--60.

\bibitem[ACM20]{DBLP:conf/opodis/Amores-SesarCM20}
Ignacio Amores{-}Sesar, Christian Cachin, and Jovana Micic, \emph{Security
  analysis of ripple consensus}, 24th International Conference on Principles of
  Distributed Systems, {OPODIS} 2020, December 14-16, 2020, Strasbourg, France
  (Virtual Conference) (Quentin Bramas, Rotem Oshman, and Paolo Romano, eds.),
  LIPIcs, vol. 184, Schloss Dagstuhl - Leibniz-Zentrum f{\"{u}}r Informatik,
  2020, pp.~10:1--10:16.

\bibitem[ACT22]{DBLP:conf/opodis/Amores-SesarCT22}
Ignacio Amores{-}Sesar, Christian Cachin, and Enrico Tedeschi, \emph{When is
  spring coming? {A} security analysis of avalanche consensus}, 26th
  International Conference on Principles of Distributed Systems, {OPODIS} 2022,
  December 13-15, 2022, Brussels, Belgium (Eshcar Hillel, Roberto Palmieri, and
  Etienne Rivi{\`{e}}re, eds.), LIPIcs, vol. 253, Schloss Dagstuhl -
  Leibniz-Zentrum f{\"{u}}r Informatik, 2022, pp.~10:1--10:22.

\bibitem[ACTZ24]{DBLP:journals/dc/AlposCTZ24}
Orestis Alpos, Christian Cachin, Bj{\"{o}}rn Tackmann, and Luca Zanolini,
  \emph{Asymmetric distributed trust}, Distributed Comput. \textbf{37} (2024),
  no.~3, 247--277.

\bibitem[AJ94]{abramsky:domt}
Samson Abramsky and Achim Jung, \emph{Domain theory}, Handbook of Logic in
  Computer Science: Semantic Structures, vol.~3, Oxford University Press, 1994,
  Available online at
  \url{https://achimjungbham.github.io/pub/papers/handy1.pdf}
  (\href{https://web.archive.org/web/20241118005253/https://achimjungbham.github.io/pub/papers/handy1.pdf}{permalink}),
  pp.~1--168.

\bibitem[Ber08]{bergmann:intmvf}
Merrie Bergmann, \emph{An introduction to many-valued and fuzzy logic:
  Semantics, algebras, and derivation systems}, Cambridge University Press,
  2008.

\bibitem[BHK{\etalchar{+}}20]{DBLP:journals/corr/abs-2003-03052}
Vitalik Buterin, Diego Hernandez, Thor Kamphefner, Khiem Pham, Zhi Qiao, Danny
  Ryan, Juhyeok Sin, Ying Wang, and Yan~X. Zhang, \emph{Combining {GHOST} and
  {Casper}}, CoRR \textbf{abs/2003.03052} (2020).

\bibitem[BKM18]{DBLP:journals/corr/abs-1807-04938}
Ethan Buchman, Jae Kwon, and Zarko Milosevic, \emph{The latest gossip on {BFT}
  consensus}, CoRR \textbf{abs/1807.04938} (2018).

\bibitem[Cac11]{cachin2011paxos}
Christian Cachin, \emph{Yet another visit to paxos}, Tech. Report RZ 3754, IBM
  Research, April 2011.

\bibitem[CDL{\etalchar{+}}12]{DBLP:conf/fm/CousineauDLMRV12}
Denis Cousineau, Damien Doligez, Leslie Lamport, Stephan Merz, Daniel Ricketts,
  and Hern{\'{a}}n Vanzetto, \emph{{TLA} + proofs}, {FM} 2012: Formal Methods -
  18th International Symposium, Paris, France, August 27-31, 2012. Proceedings
  (Dimitra Giannakopoulou and Dominique M{\'{e}}ry, eds.), Lecture Notes in
  Computer Science, vol. 7436, Springer, 2012, pp.~147--154.

\bibitem[CGR11]{cachinbook}
Christian Cachin, Rachid Guerraoui, and Lu{\'{\i}}s E.~T. Rodrigues,
  \emph{Introduction to reliable and secure distributed programming {(2.}
  ed.)}, Springer, 2011.

\bibitem[CK90]{chang:modt}
C.~C. Chang and H.~J. Keisler, \emph{Model theory}, 3rd edition ed., Studies in
  logic and the foundations of mathematics, vol.~73, North Holland (Elsevier,
  Amsterdam), 1990.

\bibitem[CL99]{DBLP:conf/osdi/CastroL99}
Miguel Castro and Barbara Liskov, \emph{Practical byzantine fault tolerance},
  Proceedings of the Third {USENIX} Symposium on Operating Systems Design and
  Implementation (OSDI), New Orleans, Louisiana, USA, February 22-25, 1999
  (Margo~I. Seltzer and Paul~J. Leach, eds.), {USENIX} Association, 1999,
  pp.~173--186.

\bibitem[CV17]{DBLP:conf/wdag/CachinV17}
Christian Cachin and Marko Vukolic, \emph{Blockchain consensus protocols in the
  wild (keynote talk)}, 31st International Symposium on Distributed Computing,
  {DISC} 2017, October 16-20, 2017, Vienna, Austria (Andr{\'{e}}a~W. Richa,
  ed.), LIPIcs, vol.~91, Schloss Dagstuhl - Leibniz-Zentrum f{\"{u}}r
  Informatik, 2017, pp.~1:1--1:16.

\bibitem[DDFN07]{DBLP:conf/asiacrypt/DamgardDFN07}
Ivan Damg{\aa}rd, Yvo Desmedt, Matthias Fitzi, and Jesper~Buus Nielsen,
  \emph{Secure protocols with asymmetric trust}, Advances in Cryptology -
  {ASIACRYPT} 2007, 13th International Conference on the Theory and Application
  of Cryptology and Information Security, Kuching, Malaysia, December 2-6,
  2007, Proceedings (Kaoru Kurosawa, ed.), Lecture Notes in Computer Science,
  vol. 4833, Springer, 2007, pp.~357--375.

\bibitem[DKSS22]{DBLP:conf/eurosys/DanezisKSS22}
George Danezis, Lefteris Kokoris{-}Kogias, Alberto Sonnino, and Alexander
  Spiegelman, \emph{Narwhal and tusk: a dag-based mempool and efficient {BFT}
  consensus}, EuroSys '22: Seventeenth European Conference on Computer Systems,
  Rennes, France, April 5 - 8, 2022 (Y{\'{e}}rom{-}David Bromberg, Anne{-}Marie
  Kermarrec, and Christos Kozyrakis, eds.), {ACM}, 2022, pp.~34--50.

\bibitem[DP02]{priestley:intlo}
B.~A. Davey and Hilary~A. Priestley, \emph{Introduction to lattices and order},
  2 ed., Cambridge University Press, 2002.

\bibitem[DSTZ24]{DBLP:journals/corr/abs-2411-00558}
Francesco D'Amato, Roberto Saltini, Thanh{-}Hai Tran, and Luca Zanolini,
  \emph{3-slot-finality protocol for ethereum}, CoRR \textbf{abs/2411.00558}
  (2024).

\bibitem[DZ23]{DBLP:conf/esorics/DAmatoZ23}
Francesco D'Amato and Luca Zanolini, \emph{A simple single slot finality
  protocol for ethereum}, Computer Security. {ESORICS} 2023 International
  Workshops - CyberICS, DPM, CBT, and SECPRE, The Hague, The Netherlands,
  September 25-29, 2023, Revised Selected Papers, Part {I} (Sokratis~K.
  Katsikas, Fr{\'{e}}d{\'{e}}ric Cuppens, Nora Cuppens{-}Boulahia, Costas
  Lambrinoudakis, Joaqu{\'{\i}}n Garc{\'{\i}}a{-}Alfaro, Guillermo
  Navarro{-}Arribas, Pantaleone Nespoli, Christos Kalloniatis, John Mylopoulos,
  Annie~I. Ant{\'{o}}n, and Stefanos Gritzalis, eds.), Lecture Notes in
  Computer Science, vol. 14398, Springer, 2023, pp.~376--393.

\bibitem[Gab24]{gabbay:semdca}
Murdoch~J. Gabbay, \emph{Semitopology: decentralised collaborative action via
  topology, algebra, and logic}, College Publications, August 2024, ISBN
  9781848904651.

\bibitem[Gab25]{gabbay:semtad}
\bysame, \emph{Semitopology: a topological approach to decentralised
  collaborative action}, The Journal of Logic and Computation (2025),
  \url{https://doi.org/10.1093/logcom/exae050}.

\bibitem[Got22]{sep-logic-manyvalued}
Siegfried Gottwald, \emph{{Many-Valued Logic}}, The {Stanford} Encyclopedia of
  Philosophy (Edward~N. Zalta, ed.), Metaphysics Research Lab, Stanford
  University, {S}ummer 2022 ed., 2022.

\bibitem[Imm99]{immerman:desc}
Neil Immerman, \emph{Descriptive complexity}, Texts in Computer Science,
  Springer, 1999.

\bibitem[KKNS21]{DBLP:conf/podc/KeidarKNS21}
Idit Keidar, Eleftherios Kokoris{-}Kogias, Oded Naor, and Alexander Spiegelman,
  \emph{All you need is {DAG}}, {PODC} '21: {ACM} Symposium on Principles of
  Distributed Computing, Virtual Event, Italy, July 26-30, 2021 (Avery Miller,
  Keren Censor{-}Hillel, and Janne~H. Korhonen, eds.), {ACM}, 2021,
  pp.~165--175.

\bibitem[KLR19]{DBLP:journals/corr/abs-1912-10633}
Markus~Alexander Kuppe, Leslie Lamport, and Daniel Ricketts, \emph{The {TLA+}
  toolbox}, Proceedings Fifth Workshop on Formal Integrated Development
  Environment, F-IDE@FM 2019, Porto, Portugal, 7th October 2019 (Rosemary
  Monahan, Virgile Prevosto, and Jos{\'{e}} Proen{\c{c}}a, eds.), {EPTCS}, vol.
  310, 2019, pp.~50--62.

\bibitem[Lam94]{DBLP:journals/toplas/Lamport94}
Leslie Lamport, \emph{The temporal logic of actions}, {ACM} Trans. Program.
  Lang. Syst. \textbf{16} (1994), no.~3, 872--923.

\bibitem[Lam98]{DBLP:journals/tocs/Lamport98}
\bysame, \emph{The part-time parliament}, {ACM} Trans. Comput. Syst.
  \textbf{16} (1998), no.~2, 133--169.

\bibitem[Lam01]{lamport2001paxos}
\bysame, \emph{Paxos made simple}, ACM SIGACT News (Distributed Computing
  Column) \textbf{32} (2001), no.~4, 51--58.

\bibitem[Lam02]{DBLP:books/aw/Lamport2002}
\bysame, \emph{Specifying systems, the {TLA+} language and tools for hardware
  and software engineers}, Addison-Wesley, 2002.

\bibitem[MR98]{DBLP:journals/dc/MalkhiR98}
Dahlia Malkhi and Michael~K. Reiter, \emph{Byzantine quorum systems},
  Distributed Comput. \textbf{11} (1998), no.~4, 203--213.

\bibitem[NTT21]{DBLP:conf/sp/NeuTT21}
Joachim Neu, Ertem~Nusret Tas, and David Tse, \emph{Ebb-and-flow protocols: {A}
  resolution of the availability-finality dilemma}, 42nd {IEEE} Symposium on
  Security and Privacy, {SP} 2021, San Francisco, CA, USA, 24-27 May 2021,
  {IEEE}, 2021, pp.~446--465.

\bibitem[NW94]{naor:loacaq}
Moni Naor and Avishai Wool, \emph{The load, capacity and availability of quorum
  systems}, Proceedings 35th Annual Symposium on Foundations of Computer
  Science, vol.~27, IEEE, 1994, pp.~214--225.

\bibitem[OKK{\etalchar{+}}23]{DBLP:conf/tacas/OtoniKKES23}
Rodrigo Otoni, Igor Konnov, Jure Kukovec, Patrick Eugster, and Natasha
  Sharygina, \emph{Symbolic model checking for {TLA+} made faster}, Tools and
  Algorithms for the Construction and Analysis of Systems - 29th International
  Conference, {TACAS} 2023, Held as Part of the European Joint Conferences on
  Theory and Practice of Software, {ETAPS} 2022, Paris, France, April 22-27,
  2023, Proceedings, Part {I} (Sriram Sankaranarayanan and Natasha Sharygina,
  eds.), Lecture Notes in Computer Science, vol. 13993, Springer, 2023,
  pp.~126--144.

\bibitem[Pnu77]{DBLP:conf/focs/Pnueli77}
Amir Pnueli, \emph{The temporal logic of programs}, 18th Annual Symposium on
  Foundations of Computer Science, Providence, Rhode Island, USA, 31 October -
  1 November 1977, {IEEE} Computer Society, 1977, pp.~46--57.

\bibitem[PS17]{sleepy}
Rafael Pass and Elaine Shi, \emph{The sleepy model of consensus}, Advances in
  Cryptology - {ASIACRYPT} 2017 - 23rd International Conference on the Theory
  and Applications of Cryptology and Information Security, Hong Kong, China,
  December 3-7, 2017, Proceedings, Part {II} (Tsuyoshi Takagi and Thomas
  Peyrin, eds.), Lecture Notes in Computer Science, vol. 10625, Springer, 2017,
  pp.~380--409.

\bibitem[Rik62]{riker:thepc}
William~H. Riker, \emph{The theory of political coalitions}, Yale University
  Press, 1962.

\bibitem[SGSK22]{DBLP:conf/ccs/SpiegelmanGSK22}
Alexander Spiegelman, Neil Giridharan, Alberto Sonnino, and Lefteris
  Kokoris{-}Kogias, \emph{Bullshark: {DAG} {BFT} protocols made practical},
  Proceedings of the 2022 {ACM} {SIGSAC} Conference on Computer and
  Communications Security, {CCS} 2022, Los Angeles, CA, USA, November 7-11,
  2022 (Heng Yin, Angelos Stavrou, Cas Cremers, and Elaine Shi, eds.), {ACM},
  2022, pp.~2705--2718.

\bibitem[Sho22]{shoup2022poh}
Victor Shoup, \emph{Proof of history: What is it good for?}, May 2022,
  \url{https://www.shoup.net/papers/poh.pdf}.

\bibitem[SNM{\etalchar{+}}22]{DBLP:conf/fc/Schwarz-Schilling22}
Caspar Schwarz{-}Schilling, Joachim Neu, Barnab{\'{e}} Monnot, Aditya
  Asgaonkar, Ertem~Nusret Tas, and David Tse, \emph{Three attacks on
  proof-of-stake ethereum}, Financial Cryptography and Data Security - 26th
  International Conference, {FC} 2022, Grenada, May 2-6, 2022, Revised Selected
  Papers (Ittay Eyal and Juan~A. Garay, eds.), Lecture Notes in Computer
  Science, vol. 13411, Springer, 2022, pp.~560--576.

\bibitem[Wik24a]{wiki:Paxos_(computer_science)}
Wikipedia, \emph{{Basic Paxos (computer science)} --- {W}ikipedia{,} the free
  encyclopedia},
  \url{https://en.wikipedia.org/w/index.php?title=Paxos\_(computer\_science)\&oldid=1258174873\#Basic\_Paxos},
  2024.

\bibitem[Wik24b]{wiki:Fixedpoint_logic}
\bysame, \emph{{Fixed-point logic} --- {W}ikipedia{,} the free encyclopedia},
  \url{https://en.wikipedia.org/w/index.php?title=Fixed-point\%20logic\&oldid=1222598480},
  2024.

\bibitem[Wik24c]{wiki:Knaster-Tarski_theorem}
\bysame, \emph{{The Knaster-Tarski Theorem (Proof)} --- {W}ikipedia{,} the free
  encyclopedia},
  \url{https://en.wikipedia.org/w/index.php?title=Knaster\%E2\%80\%93Tarski\_theorem\&oldid=1259955915\#Proof},
  2024.

\bibitem[Yak18]{yakovenko2018solana}
A.~Yakovenko, \emph{Solana: A new architecture for a high performance
  blockchain v0.8.13}, Whitepaper, 2018,
  \url{https://solana.com/solana-whitepaper.pdf}.

\bibitem[YML99]{DBLP:conf/charme/YuML99}
Yuan Yu, Panagiotis Manolios, and Leslie Lamport, \emph{Model checking
  {TLA\({}^{\mbox{+}}\)} specifications}, Correct Hardware Design and
  Verification Methods, 10th {IFIP} {WG} 10.5 Advanced Research Working
  Conference, {CHARME} '99, Bad Herrenalb, Germany, September 27-29, 1999,
  Proceedings (Laurence Pierre and Thomas Kropf, eds.), Lecture Notes in
  Computer Science, vol. 1703, Springer, 1999, pp.~54--66.

\bibitem[YMR{\etalchar{+}}19]{DBLP:conf/podc/YinMRGA19}
Maofan Yin, Dahlia Malkhi, Michael~K. Reiter, Guy Golan{-}Gueta, and Ittai
  Abraham, \emph{Hotstuff: {BFT} consensus with linearity and responsiveness},
  Proceedings of the 2019 {ACM} Symposium on Principles of Distributed
  Computing, {PODC} 2019, Toronto, ON, Canada, July 29 - August 2, 2019 (Peter
  Robinson and Faith Ellen, eds.), {ACM}, 2019, pp.~347--356.

\end{thebibliography}
\end{document}